\title[The Jacobi curvature operator and the gravitational field]{The Jacobi curvature operator in semi-Riemannian geometry and \\ the energy momentum stress
tensor\\
 of the gravitational field}
\author[Maurice J. Dupr\'e]{M\lowercase{aurice} J. D\lowercase{upr\'e}\\D\lowercase{epartment of} M\lowercase{athematics}
\\T\lowercase{ulane} U\lowercase{niversity}
\\N\lowercase{ew} O\lowercase{rleans}, LA 70118\\\lowercase{email:  mdupre@tulane.edu}\\30  D\lowercase{ecember} 2020}
\address{DEPARTMENT OF MATHEMATICS\\TULANE UNIVERSTIY\\NEW ORLEANS, LA 70118}
\email{mdupre@tulane.edu}
\theoremstyle{plain}
\newtheorem{proposition}{Proposition}[section]
\newtheorem{theorem}{Theorem}[section]
\newtheorem{corollary}{Corollary}[section]
\newtheorem{postulate}{Postulate}[section]
\theoremstyle{definition}
\newtheorem{definition}{Definition}[section]
\newtheorem{example}{Example}[section]
\numberwithin{equation}{section}
\renewcommand{\a}{\alpha}
\newcommand{\gt}{\tau}
\newcommand{\x}{\chi}
\newcommand{\A}{\boldsymbol{\mathcal{A}}}
\newcommand{\B}{\mathcal B}
\newcommand{\D}{\mathcal D}
\newcommand{\J}{\mathcal J}
\newcommand{\K}{\mathcal K}
\newcommand{\M}{\mathcal M}
\newcommand{\R}{\mathcal R}
\newcommand{\T}{\mathcal T}
\newcommand{\U}{\mathcal U}
\newcommand{\W}{\mathcal W}
\newcommand{\bC}{\mathbb{C}}
\newcommand{\bN}{\mathbb{N}}
\newcommand{\bR}{\mathbb{R}}
\newcommand{\ra}{\rightarrow}
\newcommand{\lra}{\longrightarrow}
\newcommand{\del}{\partial}
\newcommand{\med}{\medbreak}
\begin{document}

\maketitle

\begin{abstract}
We give a justification for generalizing the notion of energy momentum stress tensor in general relativity which results in the Jacobi curvature tensor being the embodiment of total energy momentum stress including gravity in general relativity.

\end{abstract}


\med \textbf{Mathematics Subject Classification (2000)} : 83C05,
83C40, 83C99.

\med \textbf{Keywords} : Gravity, general relativity, Einstein
equation, energy density.

\section{INTRODUCTION}

\med

The problem of characterizing the total energy momentum stress including that of gravity in general relativity goes back to the original papers of Albert Einstein, where a pseudo-tensor is used to characterize the energy momentum stress of the gravitational field.  Since that time, numerous attempts have been made to deal with the total energy problem, in fact way too many to possibly list them all.  Suffice it to say, that all have failed to adequately solve the problem.  The basic reason as noted by P. A. M. Dirac, is that one must either give up localizability or conservation \cite{D}.  More recently, E. Curiel \cite{CURIEL1} has made this assertion very precise.   In preliminary versions of this paper \cite{DUPRE}, \cite{DUPRE2}, and the improvements in \cite{DUPRE3} and \cite{DUPRE4}, this author advocated giving up energy conservation on the grounds that the gravitational energy momentum stress tensor of the gravitational field could be motivated in such a way as to easily give the Einstein equation for gravity in general relativity.  The results showed that the Ricci tensor, $Ric,$ should be proportional to the total energy momentum stress including all matter and fields as well as that of gravity, and of course $Ric$ is not conserved.  However, the arguments in \cite{DUPRE} , \cite{DUPRE2}, \cite{DUPRE3}, and \cite{DUPRE4} can be  interpreted as giving the part of the gravitational energy momentum stress of the gravitational field itself which serves as source for gravity, so should properly be relabeled as {\it gravitational SOURCE energy momentum stress}.  This means that $(1/4 \pi G)Ric$ should be regarded as the {\it total gravitational SOURCE} energy momentum stress tensor, including all matter and fields as well as that due to gravity itself.  Thus we may need to admit other forms of energy momentum stress which do not serve as gravity sources in order to maintain conservation of energy momentum stress in some sense.  In other words, we definitely need to go "outside of the box" to deal with the energy of gravity.

Let us ask ourselves what we really wish to have in our energy momentum stress tensor for everything including gravity.  First, it should be conserved in some way.  Second, it should be a tensor field in some sense-that is, it must be independent of the choice of coordinates.  Third, it should allow the gravitational source energy momentum stress to be easily extracted mathematically, in an invariant manner independent of choices of frames or coordinates.  Fourth, it should be of the second rank in some sense as is the ordinary energy momentum stress tensor.  Fifth, as a second rank object, it should be symmetric in some sense as is the ordinary energy momentum stress tensor. Sixth, it should include the entire Riemann curvature tensor in the sense that knowledge of one gives the other, since it is generally felt that if there is curvature, then there must be some sort of energy momentum stress.

We will see that we should look to the Jacobi curvature operator, $\J,$ in order to satisfy all these wishes.  It is a tensor-valued tensor field.  That is to say, at each point of spacetime its value is a second rank tensor whose values are themselves second rank tensors instead of being merely numerical valued. Here, we can look to quantum mechanics as a precedent for the idea of promoting numbers to being operators.  But in fact, we will see that the Jacobi curvature operator is a symmetric second rank tensor field whose value at each spacetime event is in fact itself a symmetric tensor in the sense that it is a self-adjoint operator valued second rank tensor on the tangent space at the event, relative to the inner product on the tangent space given by the metric tensor at the event.  Again this is similar to the idea that observable numerical quantities become promoted to be self-adjoint operators in quantum mechanics.  So, the Jacobi curvature operator is thus mathematically equivalent to a fourth rank tensor in the ordinary sense, just like the Riemann curvature operator, which in these terms is a skew adjoint operator valued second rank antisymmetric tensor.  That is to say it is a tensor valued tensor field whose value at each point is a skew adjoint operator valued second rank tensor on the tangent space relative to the inner product on the tangent space given by the metric tensor.  As numerous authors have pointed out that the energy of gravity is somehow in the curvature, it is natural to conclude that the anti-symmetry of the Riemann curvature operator has prevented its consideration for being the energy momentum stress tensor of gravity and all matter and fields.  In fact, the Jacobi curvature operator can be easily shown to have exactly the same information as the Riemann curvature operator.  However, $\J$ obeys a type of symmetry law which is a natural extension of the notion of symmetry for numerical valued tensors, that is $\J$ is a self-adjoint operator valued second rank symmetric tensor, making it equivalent to a fourth rank tensor with several symmetries which we will make clear.  In particular, for an observer with velocity $u$ in spacetime at an event $q,$ it turns out that $\J(u,u)u=0,$ so $\J(u,u)$ is in fact a self-adjoint operator on $u^{\perp}$ which is the Euclidean subspace of the tangent space at $q$ orthogonal to $u,$ so is the tangent Euclidean space for the observer at $q.$ Moreover, the trace of the Jacobi curvature operator gives the Ricci tensor and the symmetry of the Ricci tensor is an obvious consequence of the type of symmetry obeyed by $\J.$  Thus the gravitational source energy momentum stress, $(1/4 \pi G)Ric$ is simply extracted by forming the trace of $(1/4 \pi G)\J,$ in the sense of composing with the trace.  Since the Jacobi operator, $\J,$ and the Riemann curvature operator, $\R,$ have exactly the same information, the second Bianchi identity which is really the vanishing of an exterior derivative in a generalized sense (in the sense that $\R$ is a tensor valued 2-form) can be regarded as the true conservation law for total energy momentum stress.  Precisely then, $(1/4 \pi G)\J$ will be seen to satisfy all our desires, and should be regarded as being the total energy momentum stress of everything including gravity and all matter and fields serving as sources of gravity.  Since we are extracting the source gravitational energy momentum stress by taking the trace of $(1/4 \pi G)\J,$ it follows that the gravitational source energy momentum stress at any event is the sum of the diagonal components of the values of $(1/4 \pi G)\J,$ using any frame for the tangent space to spacetime at that event.  In particular, an observer sees the Jacobi Curvature operator as being a spatial self adjoint operator which he can diagonalize.  The resulting eigenvalues are the principal sectional curvatures along his timeline, and their sum is thus the source gravitational energy for that observer.  Consequently, it makes sense that each principal curvature should be considered as an energy.  That is to say, if $A+B+C$ is an energy in complete generality, then it must be that each of the summands must be energies. This also means the other off diagonal components of the values of $(1/4 \pi G)\J$ should be considered as parts of the total energy momentum stress which do not contribute to the source of gravity.  This is because a general coordinate change mixes the diagonal and off diagonal components, so it follows that all components should be considered as energies.  Thus, even though the off diagonal parts do or might not vanish in the vacuum, they do not in and of themselves produce the appearance of gravitating matter or fields in the vacuum, which of course would be paradoxical-that is, if something is gravitating in a region, then that region is not a vacuum, by definition.  However, because of the Bianchi identities, in the presence of ordinary matter and fields, some of the energy in the sum of the diagonal components as well as the  off diagonal components of $(1/4 \pi G)\J$ can be converted to (mix in or feed into) the tracial components causing gravitational source energy momentum stress to appear in the ordinary matter and fields, and thus accounting for the non-conservation of the total source energy momentum stress of everything including gravity, $(1/4 \pi G)Ric.$  Thus, the fact that $\J$ has trace zero in the vacuum is sort of a delicate energy balance which is destroyed in the presence of matter and fields other than gravity.  It is our view here that all the components of $(1/4 \pi G)\J$ as viewed by a given observer should count as energy momentum stress of everything including the gravitational field, but it is only the tracial part of this curvature operator which can serve as a source of gravity.

We will look at these tensor-valued tensors in more detail and when compared with Newtonian gravity, we will see that the Einstein equation for gravity is a natural extension of Newton's Law for gravity.  It is merely the correction to Newton's Law required by putting in the gravitational source energy momentum stress due to gravity itself and requiring that spacetime is a Lorentz manifold.  Since it is of little trouble to generalize our considerations to spacetime dimensions other than four, will will work in general spacetime dimension, and see that the mathematics dictates that spacetime dimension four is the only dimension in which the Einstein equation gives automatic conservation of the ordinary matter and fields.  That is, we will not need to assume that the energy momentum stress tensor of all matter and fields other than gravity has zero divergence for our proof of the Einstein equation.  This means that conservation of ordinary matter and fields in spacetime dimension 4 is a consequence or corollary of our theorem proving the Einstein equation from precise mathematical postulates.  In fact, much of our mathematical development will require only a semi-Riemannian manifold, so we will begin with semi-Riemannain geometry.

In the semi classical treatment of forces and force fields, the force fields also arise as curvature operators on principal Lie group bundles (gauge theory) and their associated vector bundles through representations of the Lie group.  Much work has been done trying to treat gravity as a force by treating the group of all diffeormorphisms of spacetime as the gauge group in an approach to quantum gravity.   For a detailed and precise mathematical treatment of the ordinary forces, a good reference is the little book by David Bleecker, \cite{BLEECKER}.  For a slightly more casual approach, one can consult the book by John Baez and Javier P. Muniain, \cite{BAEZ&MUNIAIN}.  As the case of gravity entails the theory of infinite dimensional Lie groups and principal Lie group bundles with manifolds and Lie groups modeled on infinite dimensional Frech\'et vector spaces, the mathematical technicalities are problematic.  An interested reader should consult the extensive article by R. S. Hamilton in the Bulletin of the American Mathematical Society, \cite{HAMILTON}.  Of course, for manifolds modeled on Banach spaces, the technicalities are manageable (see for instance \cite{DUPREGLAZE1} and \cite{DUPREGLAZE2}), but that is not the case for the diffeomorphism group of a manifold which would require modeling on an infinite dimensional Frech\'et space.  However, from the perspective of our view here, there is a very simple reason why ordinary forces are different from gravity and why gravity is special.  This is because it is ONLY on the tangent bundle of a manifold that a Riemann curvature tensor can be represented as a Jacobi curvature operator.  Thus, our treatment of gravitational energy momentum stress is unique to gravity and cannot be done for any ordinary forces or curvature operators on vector bundles associated to a Lie group principle bundle unless that associated bundle is in fact the tangent bundle with Lie group then being the general linear group or a Lie subgroup of the general linear group of all linear automorphisms of the tangent space at a point.

We have included possibly more detail than necessary for an expert mathematical physicist in the hope of reaching a wider audience of physicists and mathematicians.  In fact much of the mathematics applies in any semi-Riemannian manifold, or even more generally, in any semi-Riemannian vector bundle.  For convenience, we have also included much of the material in \cite{DUPRE}, \cite{DUPRE2}, \cite{DUPRE3}, and \cite{DUPRE4}, but modified to accomodate our view of the Jacobi curvature operator as the total energy operator for general relativity, as it is a major part of our consideration here.  Our view here in no way contradicts Einstein's Theory of General Relativity.  In fact, our developments again give a straight forward proof of the Einstein equation as a mathematical theorem in complete generality.  The only justification for the Einstein equation of comparable generality is the Hilbert argument using a calculus of variations argument involving the integral of scalar curvature.  The fundamental problem with such an argument is that there is no justification for the choice of scalar curvature as the integrand except that after the fact it gives the correct equation and as it seems to be the only choice of scalar function which presents itself.  For an equation as important as the Einstein equation for gravity, we need a better foundation than a variational argument based on such a choice.  Our physical arguments for the choice of the gravitational field's gravitational source energy momentum stress tensor will be simple and intuitive, so should be clear to any physicist.  However, like the Lagrangian arguments, we will make a reasonable assumption that gravity acts with minimal effort, in a sense which will be made intuitively clear.  It is possible that the argument could be made to fall into the usual Lagrangian framework, but we will not pursue that at this time. The rest will follow by pure mathematics, some of which may be unfamiliar to some physicists.  We have thus included the mathematical background to make our developments clear to such readers.  As well, in the interest of clarity for mathematicians, we have included possibly more detail in our physical arguments and motivations for our postulates than an experienced physicist would need.

Our view here justifies the Cooperstock hypothesis, \cite{COOPERSTOCK}, \cite{COOPERSTOCK2}, \cite{COOPERSTOCK3}, in the sense that the arguments show that the total gravitational source energy momentum stress tensor, $(1/4 \pi G)Ric,$ must vanish in the vacuum.  The other parts of the energy momentum stress contained in $(1/4 \pi G)\J$ other than the trace part must be the carrier of any energy momentum stress in the vacuum. Thus a gravity wave cannot be detected in the vacuum by gravitational effects-there can be no gravity wave energy which will gravitate in the vacuum. Of course, as soon as a detector is placed in the vacuum, you no longer have a vacuum.  As a consequence, a gravitational wave detector must be in the path of the gravitational wave to detect it, if a gravitational disturbance passes by the detector at a distance from the detector, there will be no disturbance detected, since there is no change in the gravitational source energy in the vacuum caused by the wave.  That is, if you could create a gravitational wave "laser beam", it would be undetected until it hit something, a near miss would not set off a detector. 

Having an invariant expression for the gravitational source energy momentum stress tensor as well gives an approach to the quasi-local mass problem in general relativity which we shall explore briefly.  In particular, as indicated in \cite{C&D}, \cite{C&DANNALS}, \cite{C&DFOOP},  we shall define an invariant integral as a symmetric multilinear mapping on pairs of vector fields which we will call the {\bf total gravitational source spacetime energy momentum stress of a spacetime region}.  In many examples, it is the ordinary Komar mass or Tolman integral multiplied by elapsed time.  This makes it clear that a major problem with defining quasi-local mass of a spatial slice is really the fact that clock rates vary from place to place on the slice in non-trivial gravitational fields, even in stationary examples.   From the Hamiltonian point of view, the energy is the conjugate variable to time, so if clock rates are varying in a given three dimensional slice (submanifold of spacetime) from point to point, then there will be no easy way to get an energy for a 3 dimensional region of this slice, and using only information on the regions boundary would be problematic as well.  It is this author's view that this is the real source of all the problems with developing concepts of quasi local mass or momentum and energy.  The definition of spacetime energy momentum stress \cite{C&DANNALS} makes this  clock rate problem immaterial.  Henceforth, I will also call this integral the {\bf Cooperstock Energy Momentum} in honor of my late co-author Fred Cooperstock (\cite{C&D}, \cite{C&DANNALS}, \cite{C&DFOOP}) who passed away not long ago.


\med

\section{THE MATHEMATICAL DESCRIPTION OF GRAVITY}

Because everything moves the same way under the influence of gravity, as Einstein realized, gravity must be a property of spacetime itself.  But the principle of relativity guarantees that there can be no action at a distance, so gravity must work locally.  To be emphatic, at each event, the state of the gravitational field has an existence at that event which determines how gravity acts at that event, there is no action at a distance.  As a consequence all the physics is either local or in boundary conditions.  In Newtonian gravity, the original action at a distance together with the law of motion makes gravity completely described by an acceleration field throughout spacetime which in the Newtonian case, is simply the cartesian product of the time line with Euclidean space together with a time varying vector field on that Euclidean space.  The spatial total (or Frech\'et) derivative of that vector field then gives the relative acceleration of nearby objects and consequently as in \cite{DUPRE4}, the tidal acceleration operator is then simply that derivative of the acceleration field, and one sees that mathematically the object is to find the tidal acceleration operator.  That is, as gravity must be a local theory due to the principle of relativity, it follows that gravity should locally simply be described by the tidal acceleration operator.  In a general spacetime, there is no preferred time coordinate, and each observer at a given event has his own tidal acceleration operator given by the equation of geodesic deviation.  This results that for a unit timelike vector at a specific event, the tidal acceleration operator is the Jacobi curvature operator $\J(u,u)=-\nabla_u^2.$  That is, the Jacobi curvature operator as a tensor is a "tensorization" of the differential operator $-\nabla_u^2$ which ostensibly only makes sense for a vector field $u.$  Thus, the equation of geodesic deviation makes precise in what sense that differential operator has been tensorized.  Now, in Newton's law of gravity formulated as the Poisson equation, we see that the divergence of the acceleration field can be viewed as the trace of the spatial Frech\'et derivative, that spatial derivative being the tidal acceleration operator, so the natural extension of Newton's theory to spacetime is to require that the {\bf trace of $\J(u,u)$} should be $4 \pi G \cdot E(u),$ where $E(u)$ is the total energy of everything that can be a source of gravity.  However, the trace of $\J(u,u)$ is in fact $Ric(u,u)$, where $Ric$ is the Ricci curvature tensor.  Moreover, in relativity, all the energy of matter and fields is described by the Energy Momentum Stress (EMS) tensor and so it is natural to assume that what ever is the source of gravity should be described by a symmetric tensor $S$ which has the property that for any observer with velocity $u,$ the energy he sees as the source of gravity should be $S(u,u).$  Thus, the equation of gravity should be simply

$$Ric=(4 \pi G) S$$
since that is equivalent to requiring that for each observer at each event we have $Ric(u,u)=(4 \pi G) S(u,u).$

This equation should thus be regarded as the Einstein equation of gravity.  In Einstein's time one could not imagine forms of energy we now know about, so his equation as originally given was always subject to revision as new forms of matter were discovered.  Thus, in this author's opinion, any theory of gravity described by this equation such as various so called modified gravity theories (mogs) are simply Einstein's theory with extended notions of source.  Now, if we denote by $T,$ the EMS tensor of all ordinary matter and fields other than gravity, then certainly $S$ must contain $T$ as a summand, that is we should simply regard $V=S-T$ as the EMS tensor of the gravitational field itself.  Thus, the obvious and controversial question immediately appears:  what is $V$?  Now, the conventional view that in free fall you do not feel gravity is erroneous, because free fall does not prevent tidal acceleration and that is the true essence of gravity.  As is well known, the tidal acceleration of gravity is felt by any object, no matter how small, in free fall as (internal) pressures.  This means that the true indicator of gravity's energy is the pressure.  At each event, each observer can spatially diagonalize the EMS tensor of whatever matter he sees, and thus, the sum of the spatial eigenvalues should thus be a reflection of the gravitational field energy being applied to cause the pressures.  That is, the work gravity is doing to try and enforce the geodesic law when matter is trying to thwart it is simply a reflection of the pressures felt due to tidal acceleration.  It is then easy to see that this energy in total being applied by gravity in its attempt to maintain the geodesic law is the sum of these eigenvalues which we call principal pressures.  This leads to the hypothesis that each observer should see the energy density of gravity as the sum of principal pressures, that is the sum of the spatial eigenvalues of $T$ as it acts on the orthogonal complement of his velocity vector in the tangent space at each event.  This immediately gives the result that

$$V=T-c(T)g,$$
where $c(T)$ denotes the contraction of $T.$  We will look at this in more detail in a moment.  But for now, we realize immediately that as 

$$S=T+V=2T-c(T)g,$$
the general Einstein equation is

$$Ric=(4 \pi G)[2T-c(T)g]=(8 \pi G)[T-(1/2)c(T)g],$$
which after trace reversal becomes

$$E=(8 \pi G)T,$$
on applying the involutory trace reversal to both sides of the equation, where $E=Ric-(1/2)c(Ric)g,$ the trace reversal of Ric, which is of course the Einstein tensor.  This last equation because of its brevity of course is the common form for the Einstein equation of gravity, and we see that it results from merely putting Newton's laws of gravity as the Poisson equation into a form where we recognize that it is about the trace of the tidal acceleration operator exactly, so that the only natural extension to spacetime is from the equation of geodesic deviation and from the hypothesis that each observer should see the energy of gravity as the sum of principal pressures.

So now, let us look in a little more detail at this hypothesis on the sum of principal pressures being the energy of gravity.  We should keep in mind that Einstein himself always maintained that his equation contained the effect of the energy of gravity so that it was always there in front of us.  The problem was the old Newtonian concept of the potential energy of a force field.  This leads to the idea that it must be the metric tensor which is the reflection of the potential energy and the connection coefficients are the resulting force, and as these can be made to vanish at an event merely by proper choice of coordinates, it seems the field can be  made to vanish at any point and therefore the erroneous conclusion that there can be no resulting energy viewed locally.  There are two problems with this argument.  First, it is the tidal acceleration and therefore the curvature which is the real essence of gravity, and the curvature cannot be made to vanish by simply choosing coordinates cleverly.  Second is the nature of the Newtonian potential function itself.  It is merely the result of the fact that the tidal acceleration operator is symmetric.  In Newtonian terms, the curl of the acceleration field of gravity vanishes, and thus, as the domain at each instant is simply connected, the acceleration field has a potential function which leads to a useful conservation law for Newtonian gravity in Euclidean space.  But this is a manifestation of the special nature of the Newtonian spacetime and cannot be taken to be general.  The existence of the potential function is really only a manifestation of the symmetry of the tidal acceleration operator together with the assumption that space is Euclidean, or at least simply connected, in other words, an unjustifiable assumption of a global topological nature.  Again, because of the upper limit on signal velocity in relativity, a priori global assumptions should be avoided to keep the theory local in nature.

Thus to get at the energy of gravity, we need to get out of the Newtonian potential "box".  We also need to get beyond the "box" imposed by Lagrangian mechanics.  In effect, we need to get out of the "box" imposed on us by \cite{CURIEL1}, as it imposes too many requirements.  Something has to be relaxed. What we should take from Lagrangian and Hamiltonian mechanics is that it is reasonable that however gravity is acting, it is doing so with minimal effort which means minimal energy.  The other thing about gravity that we tend to put too much emphasis on is the gross motion of things.  When you drop a book and it lands on your foot, the pain you feel is due to the energy gravity is applying in order to keep that book on its geodesic path at the event it hits your foot.  Likewise, the pressure you feel on your feet when standing is a reflection of the energy being exerted by gravity in its attempt to keep you falling through whatever floor is supporting you, that is, the state of gravity at each event of the location of your feet.  We must here for the moment forget about the usual idea that it is the enormous mass of the Earth which is causing gravity to attempt to enforce that motion but simply accept that whatever that motion is that gravity is attempting to enforce, it is doing it right at the location of our feet at a particular event, and as there is no action at a distance in relativity, the energy gravity is applying there must be located right there at our feet at the particular event.  Now, to get an idea of how much energy gravity is exerting at a specific event to maintain an object's geodesic motion, the reasonable approach is to see what energy would be required to counteract gravity.  For instance, the walls holding up the roof of a house are themselves counteracting gravity, so the mass energy of those walls is a reflection of the total energy gravity is exerting to try and make the roof fall toward the center of the Earth.  If we assume that gravity works with "minimal" energy, we could say the energy gravity is exerting to make an object fall is the same as the minimal energy of a support system which would counteract gravity.  Now, any such support system is made of matter, but the particles making up the matter are not really doing anything in and of themselves, it is the forces and pressures in the matter which keep it from collapsing.  This naturally leads to the idea that a system of minimal energy to counteract gravity would be massless and consist of pure energy.  That would be light which of course already has a fundamental position in the theory of relativity.  Thus, as a simple thought experiment, we could ask if we have an event inside the matter with EMS tensor $T$ where pressures are operating because the matter is not in geodesic flow, and an observer at that event with velocity $u,$ then at that event we can diagonalize the spatial part of $T$ and therefore obtain mutually perpendicular eigen-directions with eigenvalues the principal pressures observed by our observer.  Suppose he takes a small bit of this spacetime about his event and "scoops out" the matter and replaces it with mirror walls perpendicular to the eigen-directions forming a little box.  In each direction he can introduce a laser beam with enough laser pressure to exactly balance the principal pressures, so that the little laser box in effect is holding up the matter outside its walls.  Such a system would seem to have the minimal energy required to do this job, and so the total energy density of the laser beams should be the energy density gravity is exerting to create the pressures in the matter.  Now it is an elementary exercise in freshman physics to see that the energy density of each laser beam is the pressure in the beam that is to say the total energy density is the sum of the principal pressures (see also \cite{DUPRE}, \cite{DUPRE2}).  We include the details of this argument later in our development for convenience and completeness.  Now, if some of the principal pressures are negative, one could imagine having a pair of oppositely charged capacitor plates instead of a pair of mirrored walls in that direction, but again, the energy density of the electric field between a pair of capacitor plates is also again just the pressure with again applying simple freshman physics arguments, \cite{DUPRE2}.  In the case of negative pressure, obviously the material system here is working "with" gravity in the sense that the gravitational energy should also be considered as negative.  Thus it appears that a reasonable simple hypothesis on the energy of gravity is that each observer should see the energy density of the gravitational field at each event is simply the sum of principal pressures in his matter and fields other than gravity which leads immediately to the conclusion that $V=T-c(T)g.$  We state this formally as a hypothesis.

\medskip

{\bf GRAVITATIONAL SOURCE ENERGY HYPOTHESIS (GSEH).}  For each observer and at each event, the energy density of the gravitational field acting as source of gravity is the sum of principal pressures observed by the observer at that event.  Equivalently, again, with $c(T)$ denoting the contraction of $T$ so as to avoid indices and coordinates or frames,

\begin{equation}\label{GSEH}
V=T-c(T)g.
\end{equation}

Thus, to summarize, the GSEH and the equation of geodesic deviation give the Einstein equation in full generality, and now as a consequence, in particular, we have $div T=0,$ in complete generality since $div E=0,$ in complete generality, as long as spacetime has dimension 4.  If spacetime has dimension other than 4, since we still have div E=0, then assuming that div T=0 leads to the conclusion that scalar curvature is constant, which would seem to be much too restrictive to be realistic.

Since tidal acceleration is described by curvature, the true coin of the realm for gravity is curvature.  Uniquely, in the case of the tangent bundle, the curvature coin has two sides which are dual to each other, the Riemann curvature and the Jacobi curvature.  But it is the Jacobi curvature which has all the symmetry properties we expect of an EMS tensor, except that it's values are self adjoint operators on the tangent space instead of merely numbers.  It satisfies the first Bianchi identity, it is symmetric where the Riemann curvature is anti-symmetric or alternating, and the Jacobi curvature operator also satisfies the {\bf exchange identity} just as the Riemann curvature operator:

$$g(\J(w,x)y,z)=g(\J(y,z)w,x)  \mbox{ just as } g(\R(w,x)y,z)=g(\R(y,z)w,x).$$
Now the equation of geodesic deviation really applies in any semi-Riemannian manifold, and gives $\J(v,v)$ as the tensorization of $-\nabla_v^2$ for any tangent vector $v$ at any event, so it is then reasonable to apply this to get a tensorization of the d'Alembertian operator due to the connection.  That is, if $(v_{\alpha})$ is any local frame field with dual frame field $(v^{\alpha}),$ then we define the operator valued tensor field $\square_J$ by

\begin{equation}\label{dalembertian00}
{\bf \square}_J =-\sum_{\alpha, \beta} g(v^{\alpha}, v^{\beta}) \J (v_{\alpha} , v_{\beta}).
\end{equation}
We see that as the values of $\J$ at each point are self adjoint linear operator-valued second rank tensors on the tangent space, that it must be the case that $\square_J$ is itself a self-adjoint operator valued field of rank zero, that is, simply a field of self adjoint operators on the tangent bundle.
We can form a symmetric tensor field $g\square_J$ defined by

$$g\square_J (v,w)=g(\square_J w,v).$$

Now, the ordinary d'Alembertian due to the connection is the differential operator

$$\square=\sum_{\alpha,\beta}g(v^{\alpha}, v^{\beta}) \nabla_{v_{\alpha}} \nabla_{v_{\beta}}.$$
At the origin of normal coordinates we then find that (\cite{MTW}, page 286, equation (11.32))

\begin{equation}\label{g''(0)}
g_{\alpha \beta, \mu, \nu}=-\frac{2}{3} \J_{\alpha \beta \mu \nu}=-\frac{2}{3}g(\J(v_{\alpha},v_{\beta})v_{\mu}, v_{\nu}),
\end{equation}
and therefore

\begin{equation}\label{waveg}
\square g =\frac{2}{3} g\square_J, \mbox{ at the origin of normal coordinates.}
\end{equation}
On the other hand, the Einstein equation expressed directly in terms of the Jacobi curvature operator says for each timelike unit vector we have, using the summation convention, with $g^{\alpha \beta}=g(v^{\alpha}, v^{\beta}),$ and the exchange property above,

$$(4 \pi G) S(u,u)=Ric(u,u)=trace (\J(u,u))=g^{\alpha \beta}g(\J(u,u)v_{\alpha}, v_{\beta}))=g^{\alpha \beta}g(\J(v_{\alpha}, v_{\beta})u,u)= - g\square_J (u,u).$$
As this must hold for every timelike unit vector, we conclude that the Einstein equation is equivalent to

\begin{equation}\label{waveEinstein}
g\square_J= - (4 \pi G) S,
\end{equation}
which clearly now shows the wavelike nature of the Einstein equation for gravity.  Thus in the vacuum we have $\square g =0$ at the origin of every normal coordinate system.

At this point, we realize that the components of $\J/(4 \pi G)$ have units of energy, it is natural to think of $\J/(4 \pi G)$ as the real energy of everything including gravity and all matter and fields, so that the trace of $\J/(4 \pi G)=Ric/(4 \pi G)$ is merely the part acting as source of gravity.  In particular, in the vacuum, it must be that all the gravitational energy is in a balance so that the trace of $\J/(4 \pi G)$ is zero.  Now another well known general consequence of the differential geometry of the Jacobi curvature operator, which we will see in the sequel following, is that at a specific event $q,$ if $W$ is the tangent space $W=T_qM$  to the spacetime manifold $M,$ then defining $J=\J|_q: W \times W \lra L(W;W),$ and $b=g|_q:W \times W \lra \bR,$ and setting

$$G(x)=1_W-\frac{1}{3}J(x,x), \mbox{ for each } x \in W,$$
then there is $U_q$ open in $W$ containing the zero tangent vector in $W$ so that $G(x)$ is invertible for each $x \in U_q$ and therefore on $U_q$ viewed as an open submanifold of the vector space $W$, we can define a nondegenerate semi Riemannian metric $g_U$ by setting

$$g_U(v,w)=b(G(x)v,w),$$
for any vectors in $v,w$ in $W$ viewed as tangent vectors to $U_q$ at the point $x \in U_q.$  And we find that the Jacobi curvature operator of the semi Riemannian manifold $U_q,$ which we can denote by $\J_U$ has

$$\J_U|0=J=\J|q.$$
Thus, $U_q$ is osculating $M$ at the event $q \in M.$  That is, at $q$ which can be identified as the zero vector in $W=T_qM,$ the two manifolds have the same tangent space and the same curvature operators at the point of origin $q.$  This means that $\J|q$ is determining the dynamic character of the metric tensor $g$ at the event $q.$  Of course, this is also clear as a result of (\ref{g''(0)}) and (\ref{waveg}).  As a mathematical aside, if $A$ is any self-adjoint operator on a finite dimensional Euclidean space, $E$, then by the spectral theorem we can find an orthonormal basis $e_1,e_2,...,e_n$ for $E$ consisting of eigenvectors of $A,$ so there are real numbers $r_1,r_2,r_3,...,r_n$ with $Ae_k=r_ke_k$ for each $k \leq n.$  Letting $S_E$ be the set of unit length vectors in $E,$ we see that the image of $S_E$ under $A$, denoted $A(S_E)$ is an "ellipsoid" that is, if all eigenvalues are positive, then the equation of $A(S_E)$ is

$$(\frac{x_1}{r_1})^2+(\frac{x_2}{r_2})+...+(\frac{x_k}{r_k})^2=1.$$
Thus, the eigenvalues give the magnification factors along the various orthogonal directions by which $A$ distorts the unit sphere $S_E.$ The eigenvalues and eigen diirections are in fact easily found by the method of Lagrange multipliers, the eigenvalues are the Lagrange multipliers, working by induction "peeling off" the eigenspace of the highest eigenvalue, and thus this ellipsoid actually characterizes the self adjoint operator $A.$  Consider an observer with unit tangent vector $u$ along his path.  At each event $q,$ he can "view" the ellipsoid of $1+\J(u,u),$ so as curvatures are generally small, this purely spatial self adjoint operator has an ellipsoid at the event which characterizes it and in effect then characterizes $\J(u,u),$ viewed as a self adjoint operator on the Euclidean space orthogonal to $u.$  Thus he can view the dynamics of the spacetime along his path as a wobbling spatial ellipsoid, resulting from slightly distorting a sphere, the wobble due to the dynamic nature of the spacetime he is passing through.  For instance, if a disturbance passes him at the speed of light, then distances in the direction of motion of such a disturbance in the observer frame would be "squashed" to zero, so the ellipsoid would be an ellipsoid in the orthogonal complement of the direction of motion of the disturbance and would thus reduce to an ellipsoid in one less spatial dimension than the orthogonal complement of $u.$  Thus in spacetime dimension 4, this means the disturbance would be viewed as a wobbling ellipse in the spatial 2 dimensional plane orthogonal to the direction of motion of the disturbance.  We see that this mathematically leads quickly to the usual view of a gravity wave.



\section{MATHEMATICAL PRELIMINARIES}

\med

In this section we will establish our basic mathematical notation, terminology, and framework.  For more details on the mathematics, we refer the interested reader to Appendix I.  Our mathematical presentation is expository, and we make no claim to originality in the mathematical results, however the exposition itself is new in  many respects.

We generally assume that $M$ is a smooth manifold.  We say that $M$ is a semi-Riemannian manifold to mean $M$ as a smooth manifold with a given semi-Riemannian metric by which we mean a given non-degenerate symmetric real valued smooth bilinear tensor field, which we will usually denote by $g.$  When we speak of
spacetime we mean an $(n+1)$-dimensional semi-Riemannian manifold $M$ whose metric tensor $g,$ has signature $(-,+,+,+,...,+),$ or signature $(1,n),$ in which case $g$ is called a Lorentz metric and $M$ is also called a Lorentz manifold.   More generally then, we say that $g$ or $M$ has signature $(p,q)$ to mean that at each point of $M$ the metric $g$ is negative definite on a subspace of the tangent space of dimension $p$ and positive definite on the complement of dimension $q$ so that $p+q=dim(M).$  We use $\nabla$ for the resulting unique Levi-Civita Koszul connection
or covariant differentiation operator on $M$ which is torsion free and satisfies $\nabla g=0.$  We use $TM$ for the
tangent bundle of $M$ and $T_mM$ to denote the tangent space of
$M$ at $m \in M.$ If $f: M \lra N$ is a differentiable map of
manifolds, say of class $C^r,$ then $Tf:TM \lra TN$ is the tangent
map which is of class $C^{r-1}$ and we note here the simple
property $T(hf)=(Th)(Tf)$ as regards composition of differentiable
mappings. In particular, $Tf$ is a vector bundle map covering $f.$ In case $f(m)=n \in N,$ then $T_mf:T_mM \lra T_nN$ is a
linear map.  It is convenient in this setting to refer to $u \in
T_mM$ as a unit vector to mean merely $|g(u,u)|=1.$  We say a tangent vector $v$ is timelike if $g(v,v) < 0$ and spacelike if $g(v,v) > 0.$  Thus $u$ is a
time-like unit vector when $g(u,u)=-1.$  Of course if $M$ is a Lorentz
manifold, each of its tangent spaces is a Lorentz vector space of
dimension $n+1.$ By an {\it observer} at $m \in M$ we mean simply
a time-like unit vector in $T_mM.$ By a {\it spacetime model}, we will
always mean a triple $(M,g,T)$ where $M$ is a smooth Lorentz manifold of
dimension $n+1,$ where $g$ is a twice continuously differentiable
Lorentz metric on $M,$ and $T$ is a given second rank symmetric
covariant tensor field on $M.$

Suppose that $M$ is a smooth manifold.  By a {\bf smooth bundle} over $M$ we mean a smooth manifold $E$ with a given smooth map, called a {\bf bundle projection} in this setting, 

$$p_E:E \lra M.$$  
For each $x \in M,$ it is customary to denote $E_x=p^{-1}(x) \subset E,$ and call $E_x$ the fiber of $E$ or of $p$ over $x,$ of course here, $p=p_E.$  If $E$ is a smooth bundle over $M$ and $F$ is a smooth bundle over $N,$ then by a smooth bundle map $h: E \lra F$ covering $f: M \lra N,$ we mean that both $h$ and $f$ are smooth maps and that $p_F \circ h= f \circ p_E.$  This last condition of course guarantees that for each $x \in M,$ the map $h$ carries $E_x$ into $F_{f(x)}$ defining the map $h_x: E_x \lra F_{f(x)}.$  In case that $M=N,$ we say that $h$ is over $M$ provided that $f=id_M,$ the identity map of $M.$  Thus, the notion of isomorphism of bundles is clear.  We say $E$ and $F$ are isomorphic over $M$ if both are over $M$ and there is an isomorphism over $M$ of $E$ onto $F.$  If $U$ is a subset of $M,$ then $p_E^{-1}(U)=E|U \subset E$ is itself a bundle over $U$ which is smooth if $E|U$ and $U$ are submanifolds of $E$ and $M,$ respectively, which is clearly the case if $U$ is an open subset of $M.$  More generally, if $F$ is a submanifold of $E$ and $N$ is a submanifold of $M,$ with $p_E(F) \subset N$ and $p_F=p_E|F$ is the {\bf restriction} of $p_E$ to $F,$ then we call $F$ a subbundle of $E$ over $N,$ which is then in fact a subbundle of $E|N,$ if $E|N$ is a submanfold of $E.$  If $F$ is a given manifold, we call $E=M \times F$ the product bundle with fiber $F,$ in which case we mean $p_E(x,v)=x$ always.  If $E$ is merely isomorphic over $M$ to a {\bf product bundle} with fiber $F,$ then we say that $E$ is a {\bf trivial bundle} with fiber $F.$  We say that $E$ is locally trivial (with fiber $F$) if each $x \in M$ has on open neighborhood $U_x$ such that $E|U_x$ is trivial (with fiber $F$).  Any bundle isomorphism with a product bundle is called a {\bf trivialization}.  Clearly, if $M$ is connected, and if $E$ is locally trivial, then $E$ is locally trivial with fiber $F$ for some fixed $F.$  

Notice the simplest example of a bundle is $id_M :M \lra M,$ the identity map of $M,$ so for each $x \in M,$ the fiber $M_x$ is just the set consisting of the single point $x.$  Clearly, we can say this bundle is trivial with a one point fiber.  Any bundle map $s:M \lra E$ over $M$ is called a {\bf section} of $M.$  Thus, if $s$ is a section of $E,$ then $s(x)$ belongs to $E_x$ for each $x \in M.$

Suppose that $G$ is a group and $S$ is a set.  We call a map $\alpha:G \times S \lra S$ an {\bf action} of $G$ on $S$ provided (using the notation $gs=\alpha(g,s),$ for $g \in G$ and $s \in S$) that
$g(hs)=(gh)s$ and $es=s$ for all $s \in S,$ and all $g,h \in G,$ where $e$ here denotes the identity of the group $G.$  For $H \subset G$ and $T \subset S$ we denote by $HT$ the set of all $gs$ for $g \in H$ and $s \in T.$  In particular, $Gs$ is the {\bf orbit} or more specifically, the $G$-orbit of $s \in S$ for each $s \in S.$  Notice that if two orbits meet, then they are identical, so the set of all orbits, which is denoted $S/G$ is called the {\bf orbit set} of the action and thus is a partition of $S.$ We then have a natural {\bf orbit map} of $S$ onto $S/G$ carrying $s$ to $Gs.$ If $G$ is a Lie group, if $S$ is a manifold, and if $\alpha$ is smooth, we say $G$ acts smoothly on $S.$  In this case, the coinduced topology by the orbit map (which is the largest topology making the orbit map continuous) gives the orbit set a topology so we call the orbit set with this topology the {\bf orbit space}.  In many cases, the orbit space is itself a smooth  manifold so that the orbit map is smooth even in infinite dimensions (\cite{DUPREGLAZE1}, \cite{DUPREGLAZE2}, \cite{DUPGLAZEPREV}).  Let $Aut(S)$ denote the group of all bijections of $S$ onto itself.  Thus, we have a group homomorphism $h:G \lra Aut(S)$ for which $h(g)(s)=gs,$ for each $s \in S$ and each $g \in G.$  We say the action is {\bf effective} if $gs=s$ for all $s \in S$ implies that $g=e,$ the identity of $G.$  Thus, for the action to be effective means that $h$ is an injective homomorphism.  In this case, we see that $G$ can be thought of as a group of bijective self maps of $S.$  We say the action is {\bf free} if in fact $gs=s$ for some $s \in S$ implies that $g=e.$  in this case, if $s \in S$ and $gs=hs,$ then $g=h,$ since then $g^{-1}h=e.$  In effect, we can cancel on the right.  If $S$ is a set with a $G$-action, we often simply call $S$ a $G$-set or $G$-space if the action is continuous, or a $G$-manifold, if $S$ is a manifold, if $G$ is a Lie group, and if the action is smooth.  We say that a map $f:S \lra T$ is a $G$-map or {\bf equivariant map} of $G$-spaces, if $f(gs)=gf(s),$ for all $g \in G$ and for all $s \in S.$   We say the action is trivial if every member of $G$ acts like the identity on $S.$  Notice that the orbit map is equivariant where the orbit set is given the trivial action.  If $f: S \lra T$ is a $G$-map, then it induces a map of $S/G$ into $T/G$ which is covered by $f.$

Notice that if $G$ acts on $S$ and if $p:S \lra X$ is any map which is surjective and with the property that the fibers of the map $p$ are the orbits of the action, then we can naturally identify the orbit set with $X,$ that is, there is a unique bijective map of the orbit set onto $X$ covered by the identity map on $S.$  Any topological group action is automatically an open map, since if $U$ is an open subset of $S,$ then $gU$ is open for each $g$ in $G$ and so if $K$ is any subset of $G,$ then $KU$ is the union of the open sets $gU$ for $g \in K.$  If $p_G$ denotes the orbit map, then by definition of the coinduced topology, to see that $p_G(U)$ is open in $S/G,$ we need only observe that $p_G^{-1}(p_G(U))=GU$ which is open in $S.$  This means that if $p:S \lra X$ is a continuous open surjection whose fibers are the orbits of the action, then the natural bijection of the orbit set onto X covered by the identity of $S$ is a homeomorphism.  In the case of manifolds, this map is often a diffeomorphism of manifolds.

For any sets $R_1,R_2,S_1,S_2$ and any pair of maps $f_k: R_k \lra S_k,$ we define the map 

$$f_1 \times f_2 : R_1 \times R_2 \lra S_1 \times S_2$$
by

$$[f_1 \times f_2](r_1,r_2)=(f_1(r_1),f_2(r_2)).$$
For any set $X$ we can define the diagonal $D_X \subset X \times X$ as the set of all pairs $(x,x)$ for $x \in X.$  Likewise, we define the diagonal map $d_X : X \lra X \times X,$ by $d_X(x)=(x,x).$  Obviously $d_X$ carries $X$ bijectively onto the diagonal $D_X.$

If $f:M \lra N$ is a smooth map of manifolds and $E$ is a smooth bundle over $N,$ then we define the pullback, denoted $f^*E,$ of $E$ by $f$ to be the subset

$$f^*E=[f \times p_E]^{-1}(D_N) \subset M \times E.$$
Of course, this means that $f^*E$ is a subbundle of the product bundle over $M$ with fiber $E,$ and in case that $E$ is locally trivial with fiber $F,$ it is an easy exercise to show that $f^*E$ is a smooth bundle which is also locally trivial with fiber $F.$

If E and F are smooth bundles over $M$ and $N$ respectively, then $E \times F$ is a smooth bundle over $M \times N$ in the obvious way, and if both are locally trivial so is $E \times F.$  If $M=N,$ then letting $d_M$ denote the diagonal map of $M$ into $M \times M$ given by $d_M(x)=(x,x),$ we define the Whitney sum or fiber product, denoted $E \oplus F,$ by

$$E \oplus F=d_M^*[E \times F].$$
Thus, as a set, $E \oplus F$ is the union of the fiber products $E_x \times F_x$ for all $x \in M.$  As restrictions of locally trivial smooth bundles are again locally trivial, it follows that the Whitney sum of locally trivial bundles is again locally trivial. 

Suppose $p:E \lra M$ is a smooth orbit map of the Lie group action of $G$ on $M,$ which is free.  Then we have a unique {\bf transition map} denoted $\tau_E: E \oplus E \lra G$ defined by requiring that $\tau_E(s,t)=g $ is the unique $g \in G$ with $gs=t.$   We say that $p$ or $E$ is a {\bf principal $G$-bundle} provided that $\tau_E$ is smooth.  Notice the fiber of $E \oplus E$ over $x \in M$ is the product $Gs \times Gs$ where $s$ is any point of $E_x.$  If $s$ is a section of $E$ over $A \subset M,$ then the map sending $ t \in p^{-1}(A)$ to $(p(t), \tau(t,s(p(t))) \in A \times G$ is equivariant and gives a trivialization of $E|A.$  Thus, a principal $G$-bundle is continuously or smoothly locally trivial, respectively, provided that it has local continuous (respectively, smooth) sections about each point of the base manifold  $M.$  

If $E$ is a smooth principal $G$-bundle over $M,$ if $F$ is a smooth $G$-manifold and $f: E \lra F$ is an equivariant map, then since $E \times F$ is also is a $G$-space, then its orbit set is denoted by $E[F].$  The first factor projection map $E \times F \lra E$ is then equivariant, so induces a smooth map $p_F:E[F] \lra M,$ called the {\bf associated fiber bundle with fiber} $F$ to $E$ over $M.$  If the action is effective, then this associated bundle is actually a fiber bundle with fiber $F.$

 If $v:E \lra F$ is any $G$-equivariant map, then the map
$h:E \lra E \times F$ given by $h=(id_E,v)$ is $G$-equivariant so passing to orbit spaces gives a section of the associated bundle with fiber $F.$  Thus, the sections of the associated bundle correspond naturally to the equivariant maps of the principal bundle into the fiber.  Moreover, any local section of the principal bundle gives a local trivialization of the principal bundle and any associated bundle over the domain of the section.   If $S$ is say a semigroup acting on $F,$ with an action commuting with the $G$-action, then the action of $S$ on $F$ will give an action of $S$ on $E[F],$ and in particular, if $S=F$ and the action is simply left multiplication, then we see that $E[F]$ becomes a semigroup bundle.  Thus, in particular, if $F$ is a vector space and the action of $G$ on $F$ is linear, that is, given by a representation of $G$ on $F,$ then $E[F]$ becomes a vector bundle.

We say $E$ is a smooth vector bundle if for each $x \in M,$ the fiber $E_x=p^{-1}(x)$ of $E$ over $x$ is in fact a vector space and the operations of vector addition and multiplication by scalars are smooth maps on $E.$  The field of scalars can be taken to be either $\bR$ or $\bC.$  More precisely, we require that addition be smooth on the subset of $E \times E$ on which the addition is defined, which of course is the union of all sets $E_x \times E_x$ for $x \in M.$  On the other hand, we say that $h$ is a smooth vector bundle map covering $f$ provided that $E$ and $F$ are both smooth vector bundles and $h$ is linear on each fiber, meaning of course that for each $x \in M,$ the map $h_x$ is linear.    Of course, if $F$ is a vector space (finite dimensional), then the product bundle with fiber $F$ is a smooth vector bundle, so likewise is any trivial bundle with fiber $F.$   Of course, for vector bundles, we require the local trivializations to be smooth vector bundle maps, to define the notion of a locally trivial smooth vector bundle.  Unless otherwise stated, we always require our smooth vector bundles to be locally trivial with finite dimensional fibers.  If $E$ and $F$ are both smooth bundles over $M,$ with $E \subset F,$ so that $p_E$ is simply the restriction of the map $p_F$ to $E,$ then we call $E$ a subbundle of $F.$

For instance, if $E$ is a smooth vector bundle over $M,$ then the addition map $Add$ defines a smooth bundle map

$$Add : E \oplus E \lra E,$$
and as addition is actually itself a linear map, it follows that $Add$ is a smooth vector bundle map.  

If $E$ is a smooth (or only continuous) vector bundle and each fiber is an algebra, then the multiplication in the fibers gives a bundle map $m: E \oplus E \lra E$ which is the bilinear multiplication in each fiber.  We say $E$ is a smooth (or continuous) {\bf Bundle of Algebras} if $m$ is smooth (or continuous).

A semi-Riemannian vector bundle $E$ is a smooth vector bundle over the smooth manifold $M$ with a given non-degenerate metric tensor $g$ on $E,$ that is, to be precise, $g$ is a smooth map

$$g : E \oplus E \lra \bR,$$
which on each fiber of $E \oplus E$ defines a real-valued non-degenerate symmetric bilinear map.  For $v \in E$ we define the length of $v$ denoted $\|v\|$ by

$$\|v\|=\sqrt{|g(v,v)|}.$$

Of course, the main example of a locally trivial smooth semi-Riemannian vector bundle is the tangent bundle of any semi-Riemannian manifold.

The following theorem which is an easy consequence of the inverse function theorem is useful for quickly proving certain spaces are manifolds and easily determining their tangent bundles.  A major part of the convenience is that we only need to know certain mappings to be smooth but do not have to calculate derivatives in advance of knowing we are actually dealing with manifolds.  As most of the classical manifolds are easily submanifolds of vector spaces, their manifold character and their tangent bundles can be calculated with ordinary differentiation of simple functions without having to deal with coordinate systems and verifying consistency on coordinate changes.  A mapping $r:S \lra S$ is called a retraction provided that $r \circ r=r.$  In this case, the image $r(S) \subset S$ of $S$ 
under $r$ is the set of all fixed points of $r,$ that is

$$r(S)=\{x \in S : r(x)=x\}.$$

Thus, if $M$ is a manifold, and if $r:M \lra M$ is a smooth retraction, then $Tr : TM \lra TM$ is a retraction as well as a smooth vector bundle map.  Likewise we can apply this to $TTr$ and so on.

\begin{theorem}\label{RETRACT} {\bf RETRACTION THEOREM.} 
Suppose that  $X$ is a subset of the smooth manifold $M.$  Then $X$ is a smooth submanifold of $M$ if and only if for each $x \in X$ there is an open subset $U_x $ of $M$ containing the point $x$ and a smooth retraction $r_x:U_x \lra U_x$ such that

$$r_x(U_x)=U_x \cap X.$$
In this case, it follows that for each $y \in U_x$ the linear map $T_yr_x: T_yM \lra T_yM$ is a continuous linear retraction of $T_yM$ onto $T_yX.$  Thus, for each $x \in M$ the bundle map 

$$Tr_x : TU_x \lra TU_x$$
satisfies

$$Tr_x(TU_x)=T(U_x \cap X).$$  

\end{theorem}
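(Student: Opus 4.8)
The plan is to prove both implications; the ``if'' direction (a retraction forces $X$ to be a submanifold) is the substantive one, while the ``only if'' direction is a routine chart construction.

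For the ``only if'' direction, fix $x\in X$. Since $X$ is a smooth submanifold there is a chart $\phi:U_x\lra V$ of $M$ with $U_x$ open, $\phi(x)=0$, carrying $U_x\cap X$ onto a slice $V\cap(\bR^k\times\{0\})$; after shrinking $U_x$ I may take $V$ to be a product $V'\times V''$, so the linear projection $\rho(a,b)=(a,0)$ maps $V$ into $V$ with $\rho(V)=V'\times\{0\}=V\cap(\bR^k\times\{0\})$. Then $r_x:=\phi^{-1}\circ\rho\circ\phi:U_x\lra U_x$ is smooth, satisfies $r_x\circ r_x=\phi^{-1}\circ\rho^{2}\circ\phi=r_x$, and has $r_x(U_x)=\phi^{-1}(\rho(V))=U_x\cap X$.

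For the ``if'' direction, fix $x\in X$ and pass to a chart, so without loss of generality $M$ is an open set $U\subset\bR^n$, $x=0$, and $r=r_x:U\lra U$ is smooth with $r\circ r=r$ and $r(0)=0$; note that $r(U)$ is exactly the fixed point set of $r$. Put $A=Dr(0)$; differentiating $r\circ r=r$ at $0$ gives $A^{2}=A$, so $A$ is the linear projection onto $E_1:=\IM A$ along $E_2:=\Ker A$, and $\IM A=\Ker(I-A)$. The key device is the map $\Phi:U\lra\bR^n$,
\[
\Phi(y)=A\,r(y)+(I-A)\bigl(y-r(y)\bigr).
\]
Differentiating the two summands and using $A^{2}=A$ gives $D\Phi(0)=A^{2}+(I-A)^{2}=A+(I-A)=I$, so by the inverse function theorem $\Phi$ is a diffeomorphism of a neighborhood of $0$ onto an open set. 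I then claim that on a sufficiently small neighborhood $W$ of $0$ one has $y\in r(U)\iff\Phi(y)\in E_1$. Applying $I-A$ shows $\Phi(y)\in E_1\iff(I-A)(y-r(y))=0$; if $r(y)=y$ this is trivial, and conversely if $(I-A)(y-r(y))=0$ then $\Phi(y)=A\,r(y)=\Phi(r(y))$ (the second equality because $r(y)$ is a fixed point of $r$), so choosing $W$ small enough that $r(W)$ also lies in the region where $\Phi$ is injective forces $y=r(y)$. Hence $\Phi$ carries $(U\cap X)\cap W=r(U)\cap W$ onto the relatively open subset $\Phi(W)\cap E_1$ of the linear subspace $E_1$; thus $U\cap X$ is a submanifold near $x$, and since $x\in X$ was arbitrary, $X$ is a smooth submanifold of $M$. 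For the tangent-bundle assertions I would use functoriality: from $r_x\circ r_x=r_x$ and $T(f\circ g)=Tf\circ Tg$ we get $(Tr_x)^{2}=Tr_x$, so $Tr_x:TU_x\lra TU_x$ is a smooth vector-bundle retraction; and since $r_x$ maps $U_x$ into the submanifold $U_x\cap X$ and is the identity there, for $y\in U_x\cap X$ the chain rule applied to $r_x\circ\iota=\id$ (with $\iota$ the inclusion of $U_x\cap X$) gives $T_yr_x\circ T_y\iota=\id$, hence $T_yX\subset\IM(T_yr_x)$, while $Tr_x$ has image inside $T(U_x\cap X)$, so $\IM(T_yr_x)=T_yX$ and, running over fibers, $Tr_x(TU_x)=T(U_x\cap X)$.

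The step I expect to be the main obstacle is the reverse inclusion in the claim ``$\Phi(y)\in E_1\Rightarrow y\in r(U)$'': it is tempting but not automatic that $\Phi$ straightens the image of $r$ onto the full slice $E_1$ rather than onto something larger, and making it work genuinely needs the idempotent relation $r\circ r=r$ (to get $\Phi(y)=\Phi(r(y))$) together with local injectivity of $\Phi$, along with a careful choice of $W$ so that $r(W)$ remains inside the domain on which $\Phi$ is a diffeomorphism.
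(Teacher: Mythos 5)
Your proof is correct, and it follows exactly the route the paper indicates (the paper's "proof" is a one-line appeal to the inverse function theorem with a reference to an external source): the idempotence of $Dr(0)$, the straightening map $\Phi(y)=A\,r(y)+(I-A)(y-r(y))$ with $D\Phi(0)=I$, and the identification of $r(U)$ with its fixed-point set are precisely the standard inverse-function-theorem argument. Your treatment of the delicate reverse inclusion (shrinking $W$ so that both $W$ and $r(W)$ lie in the injectivity domain of $\Phi$, then using $\Phi(y)=\Phi(r(y))$) is the right way to close that gap, and the tangent-bundle statements follow from functoriality as you say.
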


\begin{proof}
This is an easy exercise in application of the inverse function theorem, but for more details see \cite{DUPREGLAZE1}.

\end{proof}

In many applications of the Retraction Theorem, we have a manifold or even simply a vector space $E$ with $U \subset E$ a single open subset and $r:U \lra U$ a smooth retraction with 
$r(U)=M \subset U.$  For instance, if $E$ is a vector space and if $g$ is a positive definite metric on $E,$ then with $U$ being the set of non-zero vectors in $E,$ we define $r:U \lra U$ by

$$r(x)=x/\|x\|,~x \in U,$$
so then $r$ is a smooth retraction and $S=r(U)$ is the unit sphere in $E,$ showing the unit sphere to be a submanifold of $E,$ without needing to compute a derivative.  It is now trivial to see that the tangent space $T_xS$ is the orthogonal complement of $x$ in $E,$ that is $TS=\{(x,v) \in E \times E : g(x,v)=0\}.$ 

\begin{corollary}\label{SPLIT}  Suppose that $E$ is a smooth locally trivial vector bundle over $M$ and that $Q:E \lra E$ is a smooth vector bundle map over $M$ with $Q \circ Q=Q,$  Then $Q(E)$ is a smooth vector subbundle of $E$ and with $R=id_E-Q$ we have $R \circ R=R$ and

$$E=Q(E) \oplus R(E).$$

\end{corollary}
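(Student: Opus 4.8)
The plan is to reduce everything to the Retraction Theorem applied fiberwise, combined with elementary linear algebra of idempotents. First I would verify the purely algebraic facts: since $Q\circ Q=Q$ and $Q$ is a bundle map over $M$, on each fiber $Q_x:E_x\to E_x$ is a linear idempotent, hence $E_x=Q_x(E_x)\oplus \Ker Q_x$; moreover, setting $R=\id_E-Q$, one checks $R\circ R=\id_E-2Q+Q\circ Q=\id_E-Q=R$, so $R$ is also a smooth bundle map and an idempotent on each fiber, with $R_x(E_x)=\Ker Q_x$. Thus fiberwise $E_x=Q_x(E_x)\oplus R_x(E_x)$, which is the fiberwise content of the desired decomposition $E=Q(E)\oplus R(E)$. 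It remains to show $Q(E)$ (and symmetrically $R(E)$) is a smooth vector subbundle of $E$, i.e.\ that it is a submanifold of the total space $E$ and locally trivial as a vector bundle.

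For the submanifold claim, I would apply the Retraction Theorem with $M$ replaced by the total space $E$ (a smooth manifold) and $X=Q(E)$: the map $Q:E\to E$ is itself a smooth retraction defined on all of $E$, with $Q(E)=\{v\in E: Q(v)=v\}$ being exactly its fixed-point set. Hence by the Retraction Theorem (taking $U_x=E$ for every point), $Q(E)$ is a smooth submanifold of $E$, and $T_vQ:T_vE\to T_vE$ is a linear retraction onto $T_v(Q(E))$ for each $v\in Q(E)$. Restricting to the zero section identifies the fibers: for $x\in M$, the fiber of $Q(E)$ over $x$ is $Q_x(E_x)$, a linear subspace of $E_x$, and the smooth structure induced as a submanifold of $E$ restricts on each fiber to the linear structure, so vector addition and scalar multiplication are smooth on $Q(E)$. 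The same argument applied to $R$ shows $R(E)$ is a smooth vector subbundle.

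For local triviality of $Q(E)$, I would use that $E$ is locally trivial: pick $x\in M$ and a local trivialization $\varphi:E|_{U}\xrightarrow{\ \sim\ } U\times F$ over an open neighborhood $U$ of $x$. Under $\varphi$, the bundle map $Q$ becomes a smooth family $y\mapsto \widetilde Q_y\in \End(F)$ of idempotents, $y\in U$. The rank $\rank \widetilde Q_y=\dim Q_y(E_y)$ is locally constant (an idempotent-valued continuous family has constant rank on connected components, since rank is lower semicontinuous for both $\widetilde Q_y$ and $\id_F-\widetilde Q_y$, whose ranks sum to $\dim F$); shrinking $U$ we may assume it is constant, say $k$. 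Then a standard device produces a local frame: choose $w_1,\dots,w_k\in F$ whose images $\widetilde Q_x w_1,\dots,\widetilde Q_x w_k$ are a basis of $\widetilde Q_x(F)$; by continuity the sections $y\mapsto \widetilde Q_y w_i$ remain linearly independent on a smaller neighborhood of $x$, giving $k$ smooth sections of $Q(E)|_U$ that form a basis of each fiber, hence a smooth local trivialization $Q(E)|_U\cong U\times\bR^k$. Symmetrically $R(E)$ is locally trivial.

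Finally, the Whitney sum decomposition: the map $Q(E)\oplus R(E)\to E$ over $M$ sending $(v,w)\mapsto v+w$ (using $\Add$) is a smooth vector bundle map which on each fiber is the isomorphism $Q_x(E_x)\oplus R_x(E_x)\xrightarrow{\sim}E_x$ coming from $v=Q_xv+R_xv$; a fiberwise-isomorphic smooth vector bundle map between locally trivial bundles of the same rank is a bundle isomorphism (its inverse $v\mapsto(Q(v),R(v))$ is manifestly smooth). This yields $E=Q(E)\oplus R(E)$ as claimed. I expect the only genuinely delicate point to be the local-constancy of the rank and the construction of the smooth local frame for $Q(E)$; the submanifold statement is immediate from the Retraction Theorem, and everything else is routine idempotent bookkeeping.
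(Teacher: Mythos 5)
Your proof is correct and is essentially the argument the paper intends: the corollary is stated as an immediate consequence of the Retraction Theorem, applied exactly as you do with the total space $E$ as the ambient manifold and $Q$ itself as the global smooth retraction onto its fixed-point set $Q(E)$. Your additional care with the local constancy of the rank and the construction of local frames from $y\mapsto \widetilde Q_y w_i$ fills in the local-triviality details the paper leaves implicit, and the idempotent bookkeeping for $R$ and the Whitney sum is routine and correct.
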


If $E$ is a smooth bundle over $M,$ by a selection $s$ of $E$ we mean a map $s:M \lra E$ such that $p_E \circ s=id_M.$  A continuous selection is called a continuous section. A smooth selection is called a smooth section or often simply a section.  We denote by $\Gamma(E)$ the set of all smooth sections of $E.$  Notice that if $s$ is a smooth section of $E,$ then $r=s \circ p_E$ is a smooth retraction $r : E \lra E$ and $r(E)=s(M),$ so by the Retraction Theorem, the image of any section is a submanifold of $E,$ and then $s$ defines a diffeomorphism of $M$ onto $s(M).$

If $E$ and $F$ are smooth vector bundles over $M,$ then we denote by $L(E;F)$ the smooth vector bundle whose fiber over the point $x \in M$ is $L(E_x;F_x),$ the vector space of linear transformations, for each $x \in M.$  Each smooth vector bundle map defines in an obvious manner a section of $L(E;F)$ and likewise, each section of $L(E;F)$ defines a smooth vector bundle map of $E$ to $F$ over $M.$  This means we can apply vector bundle analysis to vector bundle maps themselves.  In particular, $A=L(E;E)$ is a smooth algebra bundle, that is it is a smooth vector bundle and the multiplication in each fiber defines a smooth vector bundle map $Mult:A  \oplus A \lra A.$  If $E$ is a semi-Riemannian vector bundle, then  for any smooth vector bundle map $T: E \lra E$ there is an adjoint, and likewise this applies to members of the fibers, so the adjoint operation defines a smooth map $Adj:A \lra A.$  Moreover, if $g$ is positive definite, then the operator norm defines a continuous map of $A \lra \bR.$  We will call $E$ Riemannian if $g$ is positive definite.  Now, in case that $E$ is a real Riemannian vector bundle, we can complexify forming $E_{\bC}=E \otimes \bC$ which is a smooth complex vector bundle, and we define a hermitian metric $g_{\bC}=g \otimes \bC$ on $E_{\bC}$ by 

$$g_{\bC}(u \otimes z, v \otimes w)=g(u,v)\bar{z}w.$$
This makes $E_{\bC}$ into a locally trivial Hilbert bundle \cite{DUPRE0}  and now $A_{\bC}=L(E_{\bC};E_{\bC})$ is a locally trivial $C^*$-algebra bundle as in \cite{DUPRE0}.  We can then form the $C^*$-algebra of bounded continuous sections of $A_{\bC}$ which we will denote simply by $B.$  Of course, if $M$ is compact, then $B$ consists of all the continuous sections of $A_{\bC}.$  We can then apply the analytic functional calculus to members of $B,$ and as point evaluations are $C^*$-algebra homomorphisms, it follows that applying an analytic function to a section gives the section that would be obtained by applying it to each of its values.  Using the correspondence between smooth bundle maps of smooth sections, allows us to apply the functional calculus to smooth vector bundle maps.  Thus, if $T:E \lra E,$  is a smooth vector bundle map over $M,$ then we can regard $T \otimes id_{\bC}  \in B,$ and then if $S$ denotes the spectrum of $T,$ then $S \subset \bC$ is compact and if $f$ is a complex analytic function defined on an open neighborhood of $S$ in $\bC,$ then we can form $f(T \otimes id_{\bC}))$ and have $f(T\otimes id_{\bC})_x=f(T_x \otimes id_{\bC})$ for each $x \in M.$  In case that $f$ is real analytic, then $f(T \otimes id_{\bC})=f(T) \otimes id_{\bC}.$  Moreover, as $f$ is analytic, if $T$ is smooth, then $f(T)$ will also be smooth.  For instance, if $E$ is a real smooth Riemannian vector bundle, and if $T:E \lra E$ is an isomorphism, then $S$ is compact and does not contain $0 \in \bC.$  
Thus, in the complexification we can form the positive self-adjoint operator field $T^*T$ and as the square root function is real analytic on the right half plane, we can form the square root of $T^*T,$ denoted $|T|.$  Thus we have back in L(E;E), that $|T|:E \lra E$ is a smooth vector bundle map which is a smooth positive self-adjoint operator field commuting with $T^*T,$ and
  
$$|T|^2=T^*T.$$
This means that the polar decomposition of $T$ is $T=V|T|,$ by taking $V=T|T|^{-1},$ with $V$ a unitary operator.
In case that $T$ is itself self-adjoint, then by spectral theory in the $C^*$-algebra $B,$ we know that $T$ and $|T|$ commute as well, so $V, T,$ and $|T|$ all commute, and $V$ is now a self-adjoint unitary.

If $E$ is a semi-Riemannian smooth vector bundle with metric tensor $g,$ then we can always put a smooth Riemannian metric $k$ on $E$ using a smooth partition of unity subordinate to an open cover by subsets of $M$ over which $E$ is trivial.  By the Riesz representation theorem we can find a smooth vector bundle isomorphism $G:E \lra E$ such that 

$$g(v,w)=k(Gv,w), \mbox{ for all } (v,w) \in E \oplus E.$$
Symmetry of $g$ means that $G$ is self-adjoint relative to $k$ and therefore we have for the case $T=G,$ that $G=V|G|=|G|V.$  Of course, now we can form $|T|^{1/2}$ as well, and it commutes with these operators.  Define a new Riemannian metric $h$ by setting 

$$h(v,w)=k(|G|v,w).$$
Then, as $V$ is a self-adjoint unitary,

$$g(v,w)=k(Gv,w)=h(Vv,w)=h(v,Vw),~V^*=V,~V^2=1.$$
Since $V^2=1,$ the only eigenvalues of $V$ are $\pm1.$  Define

$$Q_{\pm }=\frac{1 \pm V}{2}.$$ 
We see immediately that $Q_{\pm }$ is self-adjoint and equals its own square in $B.$  This means that it is a self adjoint linear retraction and a smooth bundle map of $E$ to itself.  We also see that

$$V=Q_{+}-Q_{-},  ~Q_{-}+Q_{+}=id_E=1,  \mbox{ and } Q_{+}Q_{-}=0=Q_{-}Q_{+}.$$
By Corollary \ref{SPLIT}, we see that $E=Q_{-}(E) \oplus Q_{+}(E).$  On the other hand, we have 

$$Vu=u,~u \in Q_{+}(E) \mbox{ and } Vu=-u,~ u \in Q_{-}(E).$$
Set

$$E_+=Q_+(E) \mbox{ and } E_-=Q_-(E).$$
Thus, $E_+$ and $E_-$ are orthogonal subbundles of $E$ relative to both $g$ and $h,$ and $E_+ +E_-=E,$ that is to say, $E=E_+ \oplus E_-$ is an orthogonal decomposition of $E,$ and if $u,w$ belong to the fiber of $E$ over the same point of $M,$ then with

$$u_{\pm}=Q_{\pm}(u) \mbox{ and } w_{\pm}=Q_{\pm}(w),$$
we have, since $Vu=u_+ - u_-,$

$$g(u,w)=h(Vu,w)=h(u_+  -u_-,w_+ +w_-)=-h(u_-,w_-)+h(u_+,w_+).$$
We have decomposed $E$ as an orthogonal direct sum of smooth vector subbundles such that on one summand $g$ is negative definite and on the other $g$ is positive definite.  We define the signature of $g$ as the pair $(trace(Q_-),trace(Q_+))$ which are then positive integer valued and so constant on connected components of $M.$  It seems to be customary to report the dimension of the negative definite space first, so to say $g$ has signature $(r,s)$ will mean that locally $E$ has an orthonormal frame field $(u_{\alpha})$ where $g(u_{\alpha},u_{\alpha})=-1$ if $1 \leq \alpha \leq  r$ and  $g(u_{\alpha},u_{\alpha})=1$ if $r < \alpha \leq r+s.$  Thus, with our spacetime convention, we have in case $M$ is a spacetime, that $r=1$ and $n=s,$ so $n$ always denotes the spatial dimension of the spacetime.


If $u$ is a time-like unit vector in a Lorentz vector space, $L,$
then $u^{\perp} \subset L$ is a Euclidean space. We define the
projection operator $P_u :L \lra L$ by
$P_u(v)=v+g(v,u)u,$ for any $v \in L.$ Then $P_uL=u^{\perp}.$   If $B$ and $C$ are any
linear transformations of $L,$ we say that $C$ is the adjoint of
$B$ to mean that $g(Bv,w)=g(v,Cw)$ for all pairs of vectors $v,w
\in L.$ In this case, $C$ is uniquely determined by $B$ and we
write $C=B^{(g)}=B^*.$ It is easy to see that in general for any two
linear operators $B$ and $C$ on $L$ we have $(BC)^*=C^*B^*.$ In
particular, $P_u$ is self-adjoint, $P_u^*=P_u$ and as well
$P_u^2=P_u,$ so $P_u$ is a self adjoint idempotent in the algebra of linear
maps of $L.$

If $B$ is any self-adjoint linear transformation of $L,$ then
$P_uBP_u$ is also self-adjoint but has $u^{\perp}$ as an invariant
subspace and therefore defines a self-adjoint linear
transformation $B_u :u^{\perp} \lra u^{\perp}.$ But since
$u^{\perp}$ is a Euclidean space, this means that $B_u$ is
diagonlizable. We call the eigenvalues (also called proper values)
of $B_u$ the $u-$spatial eigenvalues of $B,$ we call the principal
axes or lines through eigenvectors of $B_u$ the $u-$spatial
principal directions of $B,$ and we call the average of the
eigenvalues of $B_u$ the $u-$isotropic eigenvalue of $B.$ Thus, if
$\lambda_u$ is the $u-$isotropic eigenvalue of $B,$ then
$trace(B_u)=n \lambda_u.$  In particular, if $u$ is also an
eigenvector of $B$ with eigenvalue $r,$ then $B$ is completely
diagonalizable and $trace(B)=r+n\lambda_u.$ But, more generally,
since $g(u,u)=-1,$ we always have,

\begin{equation}\label{trace0}
trace(B)=-g(u,Bu)+n\lambda_u,
\end{equation}
even if $u$ is not an eigenvector of $B.$ Thus, we emphasize that
even though $B_u$ is always diagonalizable, $B$ itself need not
be, but (\ref{trace0}) will still be true.

Using the time-like unit vector $u$ allows us to also define a
Euclidean metric or inner product $g_u$ on $L$ by defining
$g_u(v,w)=g(v,w)+2g(v,u)g(u,w).$ This makes $L$ a topological
vector space and in case $L$ is finite dimensional, this gives $L$
its unique vector topology. Thus even though the Euclidean inner
product on $L$ depends on the choice of $u,$ the resulting
topology does not. It is easy to see that for $B=B^*$ to be also
self-adjoint with respect to the Euclidean inner product $g_u,$ it
is necessary and sufficient that $u$ be an eigenvector of $B$ in
which case $B$ is itself then diagonalizable.

If $T$ is a second rank covariant tensor on $L,$ which here is merely to say that
$T$ is a real-valued bilinear map on $L,$ then there is a unique
linear map $B_T: L \lra L$ with $T(v,w)=g(v,B_Tw),$ for all $v,w \in
L.$ We can now invariantly define the {\it contraction} of $T,$
denoted $c(T),$ by

\begin{equation}\label{contract0}
c(T)=trace(B_T).
\end{equation}
Any question of eigenvalues, eigenvectors, or diagonalizbility for
$T$ is really the same question for $B_T.$ Clearly to say $T$ is
symmetric is the same as saying that $B_T$ is self-adjoint. Thus
for $T$ symmetric, its $u-$spatial principal directions are those
of $B_T,$ its $u-$spatial eigenvalues are those of $B_T$ and its
$u-$isotropic eigenvalue is that of $B_T.$ If $(M,g,T)$ is a
spacetime model, we call the $u-$isotropic eigenvalue of $T$ at $m
\in M$ the {\it isotropic pressure} for the observer with velocity
$u \in L=T_mM,$ and we will denote this by $p_u,$ in this case.  Also, in this situation, $T(u,u)$ is always designated as the {\it
energy density} observed by the observer with velocity $u,$ which
we also denote by $\rho_u.$
Thus for this situation we have, by (\ref{trace0}) and
(\ref{contract0}),

\begin{equation}\label{contract1}
c(T)=trace(B_T)=-\rho_u+np_u.
\end{equation}

At this point we want to remark that if $E$ is any vector space
with a positive definite inner product, $g,$ and if  $A: E \lra E$
is any linear transformation of $E,$ then $A$ can be viewed as a
vector field on $E,$ say ${\bf v}_A$ where ${\bf v}_A(x)=A(x)$ for
$x \in E,$ and as well it defines the dual 1-form $\lambda_A$ on
the Riemannian manifold $E$ defined by $\lambda_A
(x)(w)=g(A(x),w).$ We record the following result as a proposition
for future use. Its proof is an easy exercise.

\begin{proposition}\label{flatspacedivcurl}
For the linear transformation $A:E \lra E$ of the Euclidean space
$E,$ we have

\begin{equation}\label{flatspacediv}
div_E {\bf v}_A(0)=(div_E A)(0)= trace(A),
\end{equation}
where $div_E$ denotes the ordinary divergence operator on vector
fields defined on $E.$  Moreover, $\lambda_A$ is a closed 1-form
(meaning $d \lambda_A=0$) if and only if $A$ is self-adjoint as a
linear transformation of $E.$
\end{proposition}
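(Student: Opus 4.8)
The plan is to fix a $g$-orthonormal basis $e_1,\dots,e_n$ of $E$ (which exists since $g$ is positive definite) together with the associated linear coordinates $x^1,\dots,x^n$, and to reduce both claims to a one-line computation involving the matrix $[A^i_j]$ of $A$ in that basis. Throughout, $trace(A)$ is basis-independent, so no choices are actually hidden.

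For the divergence statement: since $A$ is linear, the vector field ${\bf v}_A$ has component functions ${\bf v}_A^i(x)=\sum_j A^i_j x^j$, which are linear in $x$, so $\partial_i {\bf v}_A^i = A^i_i$ is constant on all of $E$. Hence $div_E {\bf v}_A = \sum_i A^i_i = trace(A)$ identically, in particular at $0$; the equality $(div_E A)(0) = div_E{\bf v}_A(0)$ is just the definition of the divergence of the linear map $A$ as the divergence of its associated vector field.

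For the curl statement: I would compute $d\lambda_A$ directly. In the chosen coordinates $\lambda_A = \sum_{i,j} A^i_j x^j\, dx^i$ (the orthonormality of the basis is what lets us lower the index of $A$ trivially), so
\[
d\lambda_A = \sum_{i,j} A^i_j\, dx^j \wedge dx^i = \sum_{i<j}\bigl(A^i_j - A^j_i\bigr)\, dx^j \wedge dx^i .
\]
Thus $d\lambda_A = 0$ if and only if $A^i_j = A^j_i$ for all $i,j$, i.e. the matrix of $A$ in an orthonormal basis is symmetric, which is exactly the statement $A = A^*$. Alternatively, to stay coordinate-free, one evaluates on the constant coordinate vector fields using $d\lambda_A(\partial_i,\partial_j) = \partial_i(\lambda_A(\partial_j)) - \partial_j(\lambda_A(\partial_i)) - \lambda_A([\partial_i,\partial_j])$; since $\lambda_A(\partial_j)(x) = g(Ax,e_j)$ has $\partial_i$-derivative $g(Ae_i,e_j)$ and the bracket vanishes, this equals $g(Ae_i,e_j) - g(Ae_j,e_i)$, and bilinearity then gives $d\lambda_A = 0 \iff g(Au,v) = g(Av,u)$ for all $u,v \iff A = A^*$.

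There is no genuine obstacle here; the only point requiring a little care is the index bookkeeping when $g$ is used to form $\lambda_A$, together with the observation that it is precisely the orthonormality of the chosen basis that makes the matrix of $A^*$ the transpose of the matrix of $A$. Everything else is the definitions of $div_E$ and of the exterior derivative on $1$-forms.
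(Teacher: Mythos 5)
Your proof is correct and complete; the paper itself gives no argument here, declaring the proposition "an easy exercise," and your computation in a $g$-orthonormal basis (divergence of a linear vector field equals the trace of its matrix, and $d\lambda_A(\partial_i,\partial_j)=g(Ae_i,e_j)-g(Ae_j,e_i)$) is exactly the intended exercise. The remark that orthonormality is what makes the matrix of $A^*$ the transpose of that of $A$ is the one point worth making explicit, and you made it.
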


For $M$ a Lorentz manifold, $m \in M,$ and $u \in T_mM$ a
time-like unit vector, and thus an observer at $m,$ we set
$E(u,m)=u^{\perp} \subset T_mM$ and call $E(u,m)$ the observer's
Euclidean space (at $m$). Choose an open subset $W$ of $T_mM,$
with $0 \in W$ and make the choice small enough that the
exponential map carries $W$ diffeomorphically onto a geodesically
convex (\cite{KRIELE}, page 131) open subset $W_L$ of $M$
containing $m.$ Denote the image of $W_E=W \cap E(u,m)$ under this
exponential diffeomorphism by $W_R.$  We shall call $W_R$ the
observer's Riemannian space (at $m$), whereas we refer to $W_E$ as
the observer's Euclidean neighborhood.  Thus, we should
intuitively think of $W_E$ as the Euclidean space an observer
thinks he is in if he is unaware of curvature, whereas $W_R$ is
the space the sophisticated observer thinks he is in when he is
aware of curvature.  Any linear transformation, $A$ of $E(u,m),$
can be viewed by the observer as
a vector field ${\bf w}$ on his Euclidean space, and by
(\ref{flatspacediv}), the divergence, $div_E {\bf w}(0)$ is simply
the trace of $A,$ where $E=E(u,m).$

We can think of the observation process in a spacetime as being performed by a field of observers placed at each event throughout a region $K$ of spacetime, whose measurements are all communicated to a single observer who then uses all the measurements to understand the region $K.$  Thus, the field of observers is modeled by a timelike unit vector field on $K,$ which we generally denote by $u.$  We then have $g(u,u)=-1,$ and therefore $\nabla u$ viewed as a transformation of the tangent bundle has $u^{\perp}$ as its range subspace, that is, $[\nabla v]u=\nabla_v u \in u^{\perp},$ for any $v,$ pointwise.  Thus, we have $\nabla u=P_u [\nabla u].$  Then we can split apart the self-adjoint (symmetric) and skew-adjoint (antisymmetric) parts of $[\nabla u] P_u=P_u[\nabla u]P_u,$ setting

$$B_u=\frac{1}{2}[([\nabla u] P_u)+([\nabla u] P_u)^*)],$$ and

\begin{equation}\label{rotation}
\omega_u=\frac{1}{2}[([\nabla u] P_u)-([\nabla u] P_u)^*].
\end{equation}
It is customary to refer to $\omega_u$ as the {\it rotation} of $u.$  As $trace(P_u)=n,$ it is customary to set

\begin{equation}\label{expansion}
\theta_u=trace([\nabla u] P_u)=trace(B_u),
\end{equation}
and
\begin{equation}\label{shear}
\sigma_u=B_u-\frac{1}{n}\theta_u P_u,
\end{equation}
and we refer to $\theta_u$ as the {\it expansion} of the field $u,$ whereas we call $\sigma_u$ the {\it shear}.  As a consequence we have

\begin{equation}\label{expansionshearrotation}
[\nabla u] P_u=\frac{1}{n}\theta_u P_u +\sigma_u + \omega_u.
\end{equation}
Notice that the trace defines an inner product on the vector space of linear transformations of $u^{\perp}$ at each event, known as the Hilbert-Schmidt inner product, and that in this inner product, the self-adjoint transformations are perpendicular to the skew-adjoint transformations and thus $B_u$ is orthogonal to the rotation $\omega_u.$  On the other hand, as $\sigma_u$ is obviously trace free and $P_u\sigma_u=\sigma_u,$ it follows that the shear is orthogonal to $\theta_uP_u,$ so the decomposition (\ref{expansionshearrotation}) is an orthogonal decomposition of $[\nabla u] P_u$ in the space of transformations of $u^{\perp}$ under the Hilbert-Schmidt inner product, in which the length of a transformation $T$ is called its Hilbert-Schmidt norm, denoted $\|T\|_2.$  We now have, as $\theta_u$ and $\sigma_u$ are self adjoint and $\omega_u^*=-\omega_u$,

$$([\nabla u] P_u)^*=\frac{1}{n}\theta_u P_u + \sigma_u -\omega_u,$$
and
$$\|(\nabla u) P_u\|_2^2=\frac{1}{n^2}\theta_u^2+\|\sigma_u\|_2^2+\|\omega_u\|_2^2.$$  Since $trace(AB)=trace(BA)$ and $P_u^2=P_u,$ if $T$ is any self-adjoint transformation of $TM,$ then $P_uTP_u$ is orthogonal to $\omega_u$ and therefore in the Hilbert-Schmidt inner product we have

$$trace(T [\nabla u] P_u)=\frac{1}{n}\theta_u ~trace(P_uTP_u)+trace(P_uT P_u \sigma_u).$$  The field $u$ is called {\it geodesic} provided that $\nabla_uu=0,$ in which case we have $\nabla u=P_u[\nabla u]P_u,$ so the $P_u$ factors can be dropped from the above equations from in front of the $\nabla u$ factors in the case $u$ is a geodesic field of observers.  For a discussion of more general vector fields from the standpoint of fluid dynamics on semi-Riemannian manifolds, see also \cite{DUPRE&ROSENCRANS}, where it is observed that for any vector field $u,$ it is the case that

$$\nabla_u-L_u=D(\nabla u),$$
where as usual, $L_u$ is the Lie derivative operator defined by $u,$ and where for any linear tangent bundle transformation field $B,$ we denote by $D(B)$ the unique derivation of the tensor algebra which vanishes on smooth scalar functions and coincides with $B$ on ordinary smooth vector fields.  In particular, if $C$ is another such transformation field, then 

$$D(B)C=[B,C], \mbox{ hence } \nabla_u(\nabla u)=L_u(\nabla u), \mbox{ as the algebraic commutator } [\nabla u, \nabla u]=0.$$
On the other hand, the Riemann curvature operator $\R$ is given by

\begin{equation}\label{Ray0}
\R(u,v)=[\nabla_u, ~\nabla_v]-\nabla_{[u,v]}.
\end{equation}
Then,

$$[\nabla_u(\nabla u)]v=\nabla_u(\nabla_v u)-[\nabla u] \nabla_u v =\nabla_u(\nabla_v u) -\nabla_{\nabla_uv}u,$$

$$[\nabla u] \circ [\nabla u]v=\nabla_{\nabla_v u} u,$$
and,

$$[\nabla (\nabla_u u)]v=\nabla_v(\nabla_u u) $$
Therefore we have the identity

\begin{equation}\label{RiemRay}
\R(u,v)u=-\R(v,u)u=[\nabla_u(\nabla u)]v - [\nabla (\nabla_u u)]v + [\nabla u] \circ [\nabla u]v, \mbox{ for any vector fields } u \mbox{ and } v.
\end{equation}
Anticipating the definition of the Jacobi curvature operator, $\J,$ will imply $\J(u,u)v=\R(v,u)u,$ we arrive at the identity underlying the Raychaudhuri equation

\begin{equation}\label{Ray1}
-\J(u,u) = [\nabla_u(\nabla u)] - [\nabla (\nabla_u u)] + [\nabla u] \circ [\nabla u], \mbox{ for any vector field } u,
\end{equation}
or using $[\nabla_u ,\nabla]u=[\nabla_u(\nabla u)] - [\nabla (\nabla_u u)]$
and  $B^2=B \circ B$ for the composition of transformations, for any transformation $B$ of the tangent bundle,  

\begin{equation}\label{Ray2}
-\J(u,u) = [\nabla_u,\nabla]u + [\nabla u]^2, \mbox{ for any vector field } u.
\end{equation}

The Raychaudhuri equation results from simply taking the trace of both sides of (\ref{Ray1}) after using the orthogonal decomposition above for $\nabla u$ and the fact that the trace of $\J(u,u)$ is $R(u,u),$ where $R$ denotes the Ricci tensor, as we shall soon see.

\med



\section{GENERAL BUNDLE CONNECTIONS}

\med

Suppose we have any smooth map $p:E \lra X$ of manifolds, and we think of this as a smooth bundle over $X.$  Then for $A \subset X,$  we here denote by $\Gamma(p;A)$ or $\Gamma(E|A)=H^0(A;E)=H^0(A;p)$ the set of smooth sections of $[p:E \lra X]|A$ meaning sections of $p$ with domain $A.$   It is natural that if $s \in \Gamma(E)$ and $u$ is a tangent vector field on $X,$ then we would like to be able to differentiate $s$ along $u$ and get a result which is independent of any particular coordinate choices.  We want an invariant construction.  

Notice that for each $x$ in $X,$ the bundle projection $p$ is constant on the fiber $E_x,$ with value $x,$ so for any $e \in E_x,$ we have $[Tp]_e ([T[E_x]_e)=0.$  With bundles, we think schematically of the fibers as being vertical and the base as horizontal, so naturally, we think of the tangent vectors to fibers as being vertical tangent vectors.  Thus, the vertical tangent vectors are all in the Kernel of $Tp,$ the tangent map of the bundle projection.  Unfortunately, there is in general no specifically given splitting of $TE$ along the vertical subbundle.  If $e \in E,$ and $s \in H^0(X;E),$ then as $ps=id_X,$ it follows that for $x=p(e),$ we have $[Tp]_e:[TE]_e \lra [TX]_x$ is surjective, so by the implicit function theorem, there is $V$ open in $E$ with $e \in V$ and so that $V \cap E_x$ is a submanifold of $E,$ and in fact, the tangent space at $e$ is the kernel of $[Tp]_e,$ for each point $e \in V \cap E_x.$  This gives a splitting, but each section through a given point of $E$ gives a possibly different splitting of the tangent space of the bundle at that point. 

Without assuming local triviality, let us simply assume that each point of $E$ is the value of some local section of $E,$ meaning that if $e \in E,$ then there is some open subset $U$ of $X$ with $s \in H^0(U;E)$ and $s(p(e)=e.$  In this case, we say the bundle is  {\bf smoothly full}.  Thus our previous observation tells us that now each fiber is a submanifold of $E.$  Then as $sp$ is a retraction of $E|U$ onto $s(U),$ it follows by the retraction theorem that $s(U)$ is also a submanifold of $E|U$ for each open subset $U$ of $X,$ and therefore if $s$ is any smooth section of $E,$ then $s(X)$ is a submanifold of $E,$ and $s$ is a diffeomorphic embedding of $X$ into $E$ with image $s(X).$  Moreover, by the retraction theorem, we have a local splitting given by the retraction vector bundle map $T(sp)$ of $TE$ onto $T[s(X)].$  We therefore have that for any $x \in X,$ the submanifolds $s(X)$ and $E_x$ intersect transversally.

 Now, getting back to differentiation of sections, we can of course notice that $Ts$ is a section of $Tp:TE \lra TX,$ and $u$ being a section of $TX$ is a smooth map $u:X \lra TX,$ so one might simply take $(Ts) \circ u:X \lra TE$ to be the derivative of $s$ along $u.$  However, if we do this, then we should look at the example of a product bundle with fiber $F$ over $X$ to see what this really means.  Thus, with $E = X \times F,$ our section $s$ is $s=(id_X,f):X \lra X \times F$ where $f:X \lra F$ is what we call the {\bf principal part} of the section $s.$  Then, provided that $F$ is a submanifold of a Banach vector space $V$, we can use ordinary differentiation for $Tf$ and have

$$(Ts)u(x)=(T(id_X)u(x),(Tf)u(x))=(u(x),(f(x),f'(x)u(x))).$$
Now in this case, the only thing we really have interest in is the function sending $x$ to $f'(x)u(x),$ that is to say we note that $f'u$ is the principal part of a section of the product bundle with fiber $V.$  But, in fact, $f'(x)u(x)$ is tangent to $F$ at $f(x).$   Now, we know that as $p$ is constant on $E_x$ it follows that $[Tp]f'u=0.$  Thus, returning to the general case, we want the derivative of $s$ along $u(x)$ to be in $T[E_{s(x)}]=Ker([Tp]_{s(x)}):[TE]_{s(x)} \lra [TX]_x.$  Since we think of bundle fibers as being vertical with the base horizontal, schematically, we think of tangent vectors to fibers as being vertical, so we thus want to look for the vertical component of $(Ts)u,$ pointwise.  Now, as $p$ is constant on fibers, it follows that at each point $e$ of $E,$ the space of vertical tangent vectors in 
$[TE]_e$ is $Ker([Tp]_e),$ and $Ker(Tp)$ is a vector subbundle of $TE,$ so we need a smooth vector bundle retraction $P_{vert}:TE \lra TE$ over $id_E$ which in effect provides a projection at each point of $E$ of the tangent space to $E$ at any point $e \in E$ onto the subspace of vertical tangent vectors at $e,$ that is $[P_{vert}]_e(TE_e)=T[E_{p(e)}].$  

The bottom line here, is that to have a process we want for differentiating sections which in the case of trivial bundles corresponds to differentiating principal parts of sections, we need a smooth vector bundle retraction 

$$P_{vert}: TE \ra TE \mbox{ over } id_E \mbox{ so that } P_{vert}(TE)=Ker[Tp].$$

It is then customary to set

$$P_{horz}=1-P_{vert}, \mbox{ where } 1=id_{TE},$$
so $P_{horz}$ is the smooth vector bundle retraction on a complementary smooth vector subbundle of the vertical subbundle of $TE,$ so we naturally call it the horizontal subbundle of $TE.$  Since the vertical subbundle is the kernel of $Tp,$ that is $Ker(Tp),$  it follows that if $HE$ denotes the image of $P_{horz},$ then $Tp|HE$ is a surjective smooth vector bundle map of $HE$ onto $TX,$ which on each fiber is an isomorphism of vector spaces.  Thus, if $e \in E,~p(x)=e,$ and $v \in TX_x,$ then there is a unique tangent vector $h \in HE_e \subset TE_e$ for which $[Tp]h=v,$ and we call this the {\bf horizontal lift} of $v$ to $TE_e.$

This said, if $s$ is a smooth section of $E,$ and $v$ is a tangent vector in $TX_x,$ then we simply define the {\bf COVARIANT DERIVATIVE} of $s$ along $v,$ denoted $\nabla_u s(x),$  to be 

$$\nabla_v s(x)=P_{vert}[Ts]v \in [TE]_{s(x)}.$$

Thus, if $s$ is a smooth section of $E$ over $X$ and $v$ is a tangent vector field on $X,$ then $\nabla_v s:X \lra Ker(Tp),$ where of course $[\nabla_v s](x)=\nabla_{v(x)} s(x),$ for each $x \in X.$

The most important case is the case where $E$ is a principal $G$-bundle over $X.$  Then, the fibers are the $G$-orbits of the action, so it is natural to require that the endomorphism $P_{vert}$ should respect this action, that is, it should be equivariant.  Of course, applying the tangent map to the action gives an action of $G$ on $TE,$ so we require that $P_{vert}$ is equivariant with respect to this action.  Consequently, the same will be true of $P_{horz}.$  If $F$ is a smooth manifold on which $G$ acts smoothly and effectively, then any section $s$ of the associated bundle $E[F]$ corresponds to a smooth map $f:E \lra F.$ which is equivariant.  Of course, the tangent map of the action gives a smooth action of $G$ on $TE$ covering the action of $G$ on $E,$ which viewing each $G$ as a function from $F$ to $F,$ since the action is effective, means for $v \in TE,$ we have simply $gv=[Tg]v.$   So for $v$ in $TX_x$ we choose a horizontal lift $w$ in $HE_{s(x)}$ and define $D_wf((s(x))=[Tf]w \in TF_{f(s(x))}.$ If $v$ is any tangent vector field, we can use the connection to obtain a horizontal lift $w$ to map of $X$ into $HE,$ so that $w(x)$ is in $HE_{s(x)}$ for each $x \in X.$  For $e \in E,$ then $g(e)=\tau(s(x),e)$ has $g(e)[s(x)]=e,$ so we define 

$$D_wf(e)=[Tf]g(e)w(p(e)).$$
Then, $D_wf$ is an equivariant map of $E$ into $TF$, so corresponds to a section of $E[TF]$ which we denote $$\nabla_v s=D_w f.$$

In case that $F$ is a vector space and $G$ acts linearly, then $E[F]$ is a vector bundle and we can reduce consideration of $TF$ to $F$ itself as each tangent space to $F$ is naturally identified with  $F$ itself.  In this case, we can replace $Tf$ with $f'=Df,$ where $D$ is the ordinary flat connection, that is ordinary differentiation for vector valued-functions.  We then have simply

$$\nabla_v s \mbox{ is the section corresponding to }f'w=D_wf.$$

Thus in the derivative of an equivariant map into $F$ along a horizontal lift is again an equivariant map into $F$ giving the covariant derivative of the corresponding section.  Then we have $\nabla_v s$ is again a section of $E[F].$  It then follows from linearity of the action that $\nabla$ is a Kozul connection on the vector bundle $E[F]. $

In case that $G$ is a Lie subgroup of the general linear group $G(V) \subset L(V;V),$ where $V$ is a vector space (as is the case for virtually all Lie groups), we can take $F=L(V;V),$ and then we obtain an embedding of $E$ in $E[F]$ as a subbundle of a vector bundle with fiber $F=L(V;V).$  In this case, the covariant derivative on the principal bundle in effect becomes an ordinary Kozul connection with $G$-invariance properties.  In particular, the curvature of the principal bundle is the curvature of this associated bundle.

 In case that $E$ is a vector bundle, then as each fiber is a vector space, and thus each vertical tangent space is naturally identified with the fiber it is tangent to, the connection $P_{horz}$ is just a retraction of $TE$ on $p^*(TX),$

In case that $E$ is a vector subbundle of the product bundle $X \times W,$ where $W$ is a possibly very high dimensional vector space, then we look for a projection valued function $Q:X \lra L(W;W)$ with the property that $Q(x)W=E_x \subset W,$ and then we define the covariant derivative using principal parts of sections of $E$ by simply setting $\nabla_v f=Qf'v=QD_v f,$ where $f:X \lra W$ is the principal part of a section, so $f(x) \in E_x \subset W,$ for each $x \in X.$  In fact, every smooth bundle over a paracompact base space is a pull back of such a bundle, though $W$ may be required to be infinite dimensional Hilbert space.  The map $Q$ is called the {\bf classifying map} of the bundle.

In case of any Kozul connection $\nabla$ on a smooth vector bundle $E,$ the Riemann curvature operator is $\R_{\nabla} \in L^2_{alt}(TM;L(E;E))$ given by

$$\R_{\nabla}(v,w)=[\nabla_v, \nabla_w]-\nabla_{[v,w]},$$
where the first pair brackets denote operator commutators, whereas the second is the Lie bracket of vector fields.  The exterior covariant derivative of $\R_{\nabla}$ always vanishes (the Bianchi identity), and can be regarded as a conservation law.  Put another way, the connection naturally extends to all tensor bundles over $E$ via the Leibniz or product rule for differentiation, so $Alt(\nabla \R)=0.$  This basically boils down to the general Jacobi identity of Lie algebra applied to the differential operators.

In the semi classical modeling of forces, the force field is modeled as the curvature operator on a principal bundle which in any group representation induces a Kozul connection on a smooth associated vector bundle.  Thus if we now denote the curvature operator by $F,$ since we are dealing with force, then the energy momentum stress tensor serving as gravity source is

$$\bar{T}_F=trace(T_F),$$
where 

$$T_F = FgF-\frac{1}{d}[c(FgF)]g, \mbox{ where } d=dim(X)$$
and the contraction

$$FgF(v,w)=F(v,e_{\alpha})g(e^{\alpha},e^{\beta})F(e_{\beta},w), \mbox{ for all tangent vector fields } v,w,$$
and

$$c(T)=g(e^{\alpha},e^{\beta})T(e_{\alpha},e_{\beta}),$$
for any operator valued bilinear map $T.$
Of course here, the family $(e_{\alpha})$ is a local frame field with dual frame field  $(e^{\alpha}).$  As trace commutes with the contraction $c,$ it follows that $c(\bar{T}_F)=0,$ so the energy momentum stress tensor is for a massless force field.  But, as $trace(T_F)$ gives the energy momentum stress tensor of the force field serving as gravitational source, it should be the case that we should regard $T_F$ itself as the energy momentum stress of the force field, and that in effect, in a vacuum, the force field energies are in a delicate balance.  Thus, the general Bianchi identity should be regarded as the conservation law, especially as it gives $div \bar{T}_F=0,$ the ordinary conservation law for gravitational sources.  In the case of the electromagnetic field, the vector bundle is simply a line bundle, so the trace becomes redundant, that is $\bar{T}_F=T_F.$   We will investigate what happens if we apply this formula directly to the ordinary Riemann curvature operator below.

\med


\section{THE OBSERVER PRINCIPLE}

\med

The results of this section are primarily the results of \cite{DUPRE3}, and extended to cover the general semi-Riemannian case,  and which we include here for the convenience of the reader, as it will be applied in our later developments.

As previously noted, in case $M$ is a spacetime, for fixed $m \in M,$ mathematically, $T_mM$ is a Lorentz vector space of dimension
$n+1,$ so taking any time-like unit vector, say $u,$ and
defining $g_u(v,w)=2g(u,v)g(u,w)+g(v,w)$ gives a Euclidean metric
on $T_mM$ making it in particular into a Banach space of finite
dimension. On the other hand, as observed in the previous section, any semi-Riemannian vector bundle can be decomposed as a direct sum of orthogonal subbundles each of which is definite, so given the decomposition, the metric naturally breaks up so that on one summand it is negative definite and on the other summand it is positive definite, and therefore relative to the decomposition there is a natural choice for a positive definite metric.  In particular, for any semi-Riemannian manifold $M,$ the semi-Riemannian vector space $T_mM$ is an example of a Banachable space-a
topological vector space whose topology can be defined by a norm.
This topology is actually well-known to be independent of the previous decomposition
choices in case of finite dimensions. Differential geometry
can be easily based on such spaces, \cite{UPMEIER}, and for some examples in
infinite dimension, the interested reader can see
\cite{KRIGL&MICHOR}, \cite{BELTITA}, \cite{DUPGLAZEPREV}, \cite{DUPREGLAZE1} and
\cite{DUPREGLAZE2}. In particular, the theory of analytic
functions and power series all goes through for general Banachable
spaces \cite{UPMEIER}.  We would like to point out how this can be applied to the
theory of semi-Riemannian vector spaces and Lorentz vector spaces, as well as vector spaces of vector fields on spacetimes or even vector spaces of vector fields on a semi-Riemannian vector bundle.

For the remainder of this section, when we speak of a semi-Riemannian manifold, we will assume that the signature $(s_-,s_+)$ has $s_- >0,$ so that at each point there are non-zero timelike vectors.  Of course if this is not the case, then all non-zero vectors are spacelike, and replacing $g$ by $-g$ interchanges timelike and spacelike to give non-trivial results.

In general, suppose that $E_1,E_2,...E_r,$ and $F$ are all vector spaces
(possibly infinite dimensional and not necessarily topological). Recall the function or mapping
$$A: E_1 \times E_2 \times...\times E_r \lra F$$ is a {\it multilinear
map} provided that it is linear in each variable when all others
are held fixed, and in this case, we say that $A$ is a multilinear map
of rank $r$ on $E_1 \times E_2 \times...\times E_r$ with values in $F.$  A useful notation here is just to use juxtaposition
for evaluation of multilinear maps, so we write

$$A(v_1,v_2,...,v_r)=Av_1v_2...v_r$$ whenever $v_k \in E_k$ for $1
\leq k \leq r.$  Thus, we simply treat the multilinear map $A$ as
a sort of generalized coefficient which allows us to multiply
vectors, and the multilinear condition simply becomes the
distributive law of multiplication.

In case that $E_k=E$ for all $k,$ there is really a single vector
space providing the input vectors, and $A:E^r \lra F.$ We say that
$A$ is a multilinear map of rank $r$ on $E$ in this case, even
though in reality, the domain of $A$ is the set $E^r.$ Here it is
useful to write $v^{(k)}$ for the $k-$fold juxtaposition of $v$'s.
Thus we have
$$A(v,v,...,v)=Av^{(r)}.$$ More generally, then for any positive
integer $m$ and vectors $v_1,v_2,...,v_m \in E$ and non-negative
integers $k_1,k_2,...,k_m$ satisfying $k_1+k_2+...+k_m=r,$ we have
the equation
$$Av_1^{(k_1)}v_2^{(k_2)}...v_m^{(k_m)}=A(v_1,...v_1,v_2,...,v_2,...,v_m,...v_m)$$ where
each vector is repeated the appropriate number of times, $v_1$
being repeated $k_1$ times, $v_2$ repeated $k_2$ times and so on.
If $k_i=0$ then that merely means that $v_i$ is
actually left out, so $v^{(0)}=1$ in effect.

Of course, we say that $A:E^r \lra F$ is symmetric if $Av_1v_2...v_r$ is
independent of the ordering of the $r$ input vectors.  Thus when
dealing with algebraic expressions involving symmetric multilinear
maps as coefficients, the commutative law is in effect.  We can define
the {\it monomial} function $f_A:E \lra F$ by the rule
$f_A(x)=A(x,x,x,...,x)=Ax^{(r)}.$

We denote by $L(E_1,E_2,...,E_r;F)$ the vector space of all multilinear maps of $E_1 \times E_2 \times ...\times E_r$ into $F,$ and set $L^r(E;F)=L(E_1,E_2,...,E_r;F)$ when all $E_k$ are the same vector space $E.$  Of course, $L^1(E;F)=L(E;F)$ is just the vector space of all linear maps from $E$ to $F.$  We denote the dual space of $E$ by $E^*=L(E;\bR).$  We use $L^r_{sym}(E;F)$ to denote the vector subspace of $L^r(E;F)$ consisting of the symmetric multilinear maps.  There is a natural isomorphism
$$L(E_1,E_2,...,E_r;F)\cong L(E_1,...,E_{r-1};L(E_r;F))$$ which identifies the $F-$valued rank $r$ multilinear map $A$ with the $L(E_r;F)-$valued multilinear map $B$ of rank $r-1$ given by
$$[Bv_1v_2...v_{r-1}](v_r)=Av_1v_2...v_r,$$  and notice that if $A$ is symmetric then so is $B,$ but of course the converse may not be true.  In any case, it is useful to simply denote $Bv_1v_2...v_{r-1}=Av_1v_2...v_{r-1}$ in this situation.

Suppose now that $E$ and $F$ are
any vector spaces and $A$ is a symmetric multilinear
map (tensor) on $E$ with values in $F,$ of rank $r.$  We will begin for simplicity by restricting to Banachable spaces, that is topological vector spaces whose topology is complete and comes from a norm.  In addition for simplicity, we assume that $A$ is continuous.  For in case $E$ and $F$ are Banachable spaces and $A$ is continuous, $f_A$ is an analytic
function. In fact, if $x_1,x_2,x_3,...,x_r \in E,$ then
differentiating, using proposition 3.3 and repeated application of
propositions 3.5 and 3.8 of \cite{LANG}, page 10, we find

\begin{equation}\label{analyticcontinuation}
D_{x_1}D_{x_2}D_{x_3}...D_{x_r}f_A(a)=(n!)A(x_1,x_2,x_3,...,x_r),~~a
\in E.
\end{equation}
From (\ref{analyticcontinuation}), we see very generally that if
$U$ is any open subset of $E$ on which $f_A$ is constant, then in
fact, $A=0,$ since we can choose $a \in U.$ Indeed, if $a \in U,$
since $f_A$ is constant on $U,$ it follows that the derivative on
the left side of the equation (\ref{analyticcontinuation}) is 0,
and hence the right side is 0, for every possible choice of
vectors $x_1,x_2,x_3,...x_r \in E.$ But notice that $a$ does not
appear on the right hand side of (\ref{analyticcontinuation}),
only $A(x_1,x_2,x_3,...x_r),$ and the vectors $x_1,x_2,x_3,...x_r$
can be chosen arbitrarily. Thus, $A=0$ follows. We have therefore
proven a special case of the following mathematical theorem, for which the proof in general will be given after some remarks.

\begin{theorem}\label{gen top anal cont}
Suppose $E$ is any topological vector space and $F$ is any vector
space. Suppose $A:E^n \lra F$ and $B:E^n \lra F$ are any symmetric multilinear
maps of rank $n.$ If there is a non-empty open subset of $E$ on
which $f_A-f_B:E \lra F$ is constant, then $A=B.$
\end{theorem}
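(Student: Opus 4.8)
The plan is to set $C=A-B$, which is again a symmetric multilinear map of rank $n$, and to show $C=0$; the hypothesis gives a nonempty open set $U\subseteq E$ and a vector $c\in F$ with $f_C(x)=f_A(x)-f_B(x)=c$ for all $x\in U$. (The case $n=0$ is trivial, since then $f_A=A$ and $f_B=B$ as constants, so assume $n\geq 1$.) The engine of the argument is a polarization identity with an arbitrary base point, combined with a scaling trick that confines all evaluation points to $U$; this replaces the derivative computation used in the Banachable special case, which is unavailable for a general topological vector space.

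First I would establish: for every $a\in E$ and all $x_1,\dots,x_n\in E$,
\begin{equation*}
\sum_{S\subseteq\{1,\dots,n\}}(-1)^{n-|S|}\,f_C\!\left(a+\sum_{i\in S}x_i\right)=n!\,C(x_1,\dots,x_n).
\end{equation*}
To see this, expand each $f_C\!\left(a+\sum_{i\in S}x_i\right)=C\bigl((a+\sum_{i\in S}x_i)^{(n)}\bigr)$ by multilinearity as a sum over maps $j\colon\{1,\dots,n\}\to S\cup\{0\}$ of terms $C(y_{j(1)},\dots,y_{j(n)})$, where $y_0=a$ and $y_i=x_i$; interchange the order of summation; and note that a given term's total coefficient is $\sum_{S\supseteq R}(-1)^{n-|S|}$, where $R$ is the range of $j$ intersected with $\{1,\dots,n\}$. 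Writing $S=R\cup T$ with $T$ disjoint from $R$, this inner sum equals $(-1)^{n-|R|}(1-1)^{n-|R|}$, hence vanishes unless $|R|=n$, i.e.\ unless $j$ is a permutation of $\{1,\dots,n\}$ (and $0$ is not in its range). Symmetry of $C$ then collapses the $n!$ permutation terms to $n!\,C(x_1,\dots,x_n)$, and the base point $a$ has disappeared from the right-hand side.

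Next I would use openness. Fix $a\in U$ and arbitrary $x_1,\dots,x_n\in E$. In a topological vector space the map $t\mapsto a+tv$ is continuous $\bR\lra E$ for each fixed $v$, so for each of the finitely many subsets $S\subseteq\{1,\dots,n\}$ the set of $t$ with $a+t\sum_{i\in S}x_i\in U$ is an open neighborhood of $0$ in $\bR$; let $\varepsilon>0$ be less than all these radii. Substituting $tx_i$ for $x_i$ in the polarization identity and using $C(tx_1,\dots,tx_n)=t^n C(x_1,\dots,x_n)$ gives, for $|t|<\varepsilon$,
\begin{equation*}
\sum_{S\subseteq\{1,\dots,n\}}(-1)^{n-|S|}\,f_C\!\left(a+t\sum_{i\in S}x_i\right)=n!\,t^n\,C(x_1,\dots,x_n).
\end{equation*}
For $|t|<\varepsilon$ every argument on the left lies in $U$, so each $f_C$-value equals $c$ and the left side is $c\sum_S(-1)^{n-|S|}=c\,(1-1)^n=0$ since $n\geq 1$. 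Thus $n!\,t^n\,C(x_1,\dots,x_n)=0$ for $0<|t|<\varepsilon$, forcing $C(x_1,\dots,x_n)=0$; as the $x_i$ were arbitrary, $C=0$, i.e.\ $A=B$.

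The routine parts are the bookkeeping in the polarization identity and the combinatorial evaluation $\sum_{S}(-1)^{n-|S|}=(1-1)^n$. The one point requiring care — and the only place the bare topological-vector-space structure is used — is arranging that all $2^n$ evaluation points $a+t\sum_{i\in S}x_i$ remain inside the fixed open set $U$; the homogeneity of $C$ lets us shrink $t$ to achieve this, which is exactly why the conclusion survives without any completeness, local convexity, or normability hypothesis.
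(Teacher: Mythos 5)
Your proof is correct, and it takes a genuinely different route from the one in the paper. The paper obtains this theorem as a special case of its General Observer Principle: it inducts on the rank, fixes $v_0$ in the star of the set of constancy, expands $f_A(v_0+tw)$ as a polynomial in $t$, applies a linear functional $\lambda\in F^*$ to get a real polynomial vanishing on an interval (hence with all coefficients zero), concludes $Av_0^{(r-1)}w=0$, and then feeds the rank $r-1$ map $v_1\cdots v_{r-1}\mapsto Av_1\cdots v_{r-1}w$ into the inductive hypothesis. You replace all of that with a single closed-form finite-difference polarization identity anchored at an arbitrary base point, plus homogeneity to shrink all $2^n$ evaluation points into the open set; the alternating sum of the constant value is $c(1-1)^n=0$, and you divide by the nonzero scalar $n!\,t^n$. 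Your route buys two things: it never applies a linear functional to $F$, so it needs no non-degeneracy hypothesis on $F$ (the paper's general theorem imposes one, grounded for arbitrary $F$ in the axiom of choice); and it uses only that the base point lies in the $1$-star of the set of constancy rather than the $r$-star, so it would in fact yield a slightly stronger form of the general statement. What the paper's inductive, functional-based argument buys in exchange is the reusable intermediate fact that $Av_0^{(r-1)}=0$ for every $v_0$ in the star, and a template that the paper extends to genuine polynomial functions (sums of monomials of several degrees) in the remarks that follow the proof. One small caveat: your dismissal of $n=0$ as trivial is not quite right, since for rank zero the constancy hypothesis is vacuously satisfied while the conclusion can fail; the theorem is of course intended for $n\geq 1$, which is where both proofs operate.
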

We emphasize that the vector spaces here may be infinite dimensional and the multilinear map need not be continuous.

Notice that Theorem \ref{gen top anal cont} is a well known special case of
the uniqueness of general power series (there is only one term
here). For a purely algebraic proof in the case $r=2,$ which is
the case of most importance here, we refer the interested reader
to \cite{DUPRE}.  See also page 72 of \cite{SACHSWU} or page 260 of \cite{KRIELE} for a proof using
differentiation for the special case $r=2$ which is similar in form to
that given here next.  As well, the result for $r=2$ can easily be proved directly using algebra alone by the technique of polarization as in \cite{DUPRE}.  Thus, the {\bf Polarization Identity} for bilinear maps which itself is obvious, is that if $B$ is any symmetric bilinear map,  on the vector space $V$ with values in the vector space $W,$ then 

$$\mbox{POLARIZATION:  } B(x,y)=\frac{1}{4} [f_B(x+y)-f_B(x-y)]  \mbox{ for all }  x,y \in V.$$
Thus, in the special case where $A$ and $B$ are simply bilinear and $f_A=f_B,$ so the open set of agreement is all of $V,$ then obviously $A=B,$ by the polarization identity.  Notice that no topological consideration is necessary in this special case, and it would seem that there should be a way to eliminate the need to restrict to topological vector spaces, and after mentioning a corollary, that will be our aim.

\begin{corollary}\label{obsv2} {\bf OBSERVER PRINCIPLE.} If $A$
and $B$ are both symmetric tensors of rank $r$ on $T_mM$ with
values in $F,$ and if $Au^{(r)}=Bu^{(r)}$ for every time-like unit
vector in $T_mM,$ then $A=B.$
\end{corollary}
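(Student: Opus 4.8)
The plan is to deduce the statement from Theorem~\ref{gen top anal cont} after thickening the set of time-like unit vectors, which is only a hypersurface, into a genuinely open set by exploiting homogeneity. First I set $C = A - B$; this is again a symmetric tensor of rank $r$ on $T_mM$ with values in $F$, and the hypothesis says exactly that $f_C(u) = Cu^{(r)} = 0$ for every time-like unit vector $u \in T_mM$.

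Next I would pass from ``unit time-like'' to ``time-like''. Let $\mathcal{T} = \{\, v \in T_mM : g(v,v) < 0 \,\}$ be the set of time-like vectors. Because $v \mapsto g(v,v)$ is continuous for the canonical (finite-dimensional) topology on $T_mM$, the set $\mathcal{T}$ is open, and by the standing assumption $s_- > 0$ in this section it is non-empty. Given $v \in \mathcal{T}$, put $\lambda = \sqrt{-g(v,v)} > 0$ and $w = v/\lambda$; then $g(w,w) = -1$, so $w$ is a time-like unit vector, and multilinearity of $C$ gives $f_C(v) = f_C(\lambda w) = \lambda^r f_C(w) = 0$. Hence $f_C = f_A - f_B$ vanishes identically on the non-empty open set $\mathcal{T}$; in particular it is constant there.

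Finally, applying Theorem~\ref{gen top anal cont} with $n = r$ and $E = T_mM$ (a topological vector space, being a real vector space of finite dimension with its canonical topology) yields $A = B$, as desired. Alternatively one could invoke only the special case established just before that theorem: $T_mM$ is Banachable and every multilinear map on a finite-dimensional space is automatically continuous, so the vanishing of the analytic function $f_C$ on a non-empty open set already forces $C = 0$.

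There is no real obstacle in this argument. The only point that genuinely needs care is that the hypothesis is imposed on a set with empty interior (the ``unit'' time-like vectors form a hypersurface), so the analytic-continuation theorem cannot be applied to it directly; the homogeneity step, which dilates that hypersurface to the full open time-like cone, is precisely what repairs this, and it uses in an essential way that $A$ and $B$ are tensors at the single point $m$ rather than tensor fields on $M$.
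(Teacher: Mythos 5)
Your proof is correct and follows essentially the same route as the paper: the paper's own proof also observes that homogeneity of degree $r$ of $f_A$ and $f_B$ extends the hypothesis from unit time-like vectors to the full (open, non-empty) light cone, and then invokes Theorem~\ref{gen top anal cont}. Your explicit rescaling $v=\lambda w$ with $\lambda=\sqrt{-g(v,v)}$ just spells out the homogeneity step the paper states in one line.
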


\begin{proof} Since $f_A$ and $f_B$ are homogeneous functions of
degree $r,$ it follows that the hypothesis guarantees
$Av^{(r)}=Bv^{(r)}$ for all $v$ in the light cone of $T_mM$ which
is an open subset of $T_mM.$
\end{proof}

Obviously, if $M$ is time oriented, then it would suffice to replace the phrase "every time-like unit vector" with "every future pointing time-like unit vector", and of course, we could just as well use "past" in place of "future".

In the following discussion, if $E$ is any semi-Riemannian vector space, we refer to the light cone of $E$ as the set of timelike vectors in $E$ which is of course an open subset of $E.$  Suppose now that $M$ is a semi-Riemannian manifold.  Thus, the light cone is a non-empty open subset, by our signature assumption for this section.  Of course, if we want, we can replace $g$ by $-g$ and interchange timelike with spacelike.  Suppose now that $E$ is any smooth semi-Riemannian vector bundle over the smooth manifold $M.$  Suppose $m \in M.$  If we define $U(E_m)$ to be the set of time-like unit
vectors in $E_m,$ then this set has a topology called the
relative topology as a subset of $E_m$ and we have a smooth retraction
function given, up to sign, by normalization which retracts the light cone onto
$U(E_m).$ It follows immediately that if $W$ is any (relatively)
open subset of $U(E_m),$ then the hypothesis of the observer
principle can be weakened to merely require $Au^{(r)}=Bu^{(r)}$
for each $u \in W.$ In particular, if $M$ is a spacetime with $E=TM,$ and if we choose a time orientation
on $T_mM,$ then we can merely require $Au^{(r)}=Bu^{(r)}$ for each
future time-like unit vector in $T_mM.$ This is in a sense, the
essence of the {\bf Principle of Relativity}, for instance, as
applied to second rank symmetric tensors-a law (at $m$), say
$A=B,$ should be true for all observers (at $m$) and conversely,
if true for all observers (at $m$), that is if $A(u,u)=B(u,u)$ for
all (future) time-like unit vectors $u \in T_mM,$ then it should
be a law (at $m$) that $A=B.$ It is for this reason that we call
Corollary \ref{obsv2} the observer principle.

We wish to be able to apply the observer principle to multilinear maps defined on vector spaces of sections of tensor bundles given by integration of sections, so we need the complete generality of Theorem \ref{gen top anal cont} whose proof we turn to now.  In fact, because of this need, it will be useful to be even more general, but the statement of the theorem becomes slightly more technical.  To give the general statement, we need to define the {\it star} of a subset of a vector space.  If $U \subset E,$ its star, denoted $Star(U,E)$ is the set of all points $v \in U$ having the property that for each $w \in E$ there is some positive number  $\delta_w$ so that $v+tw \in U$ for any number $t$ with $|t| \leq \delta_w.$  If $E$ is a topological vector space, then, as scalar multiplication is continuous, each open subset is equal to its star.  Set $Star^1(U,E)=Star(U,E)$ and inductively, define $$Star^r(U,E)=Star^{r-1}(Star(U,E)),$$ and call this the $r-$star of $U$ in $E.$  For instance, if $Star(U,E)=U,$ then obviously $Star^r(U,E)=U,$ for every $r\geq 1.$

We will also need to use the fact that if $v \in F,$ then there is $\lambda \in F^*$ with $\lambda(v)\neq 0.$  This fact in general vector spaces requires the existence of a spanning linearly independent set which is guaranteed by the Axiom of Choice of set theory.  As such an axiom might be objectionable in applications to physics, we circumvent this by simply defining $F$ to be {\it non-degenrate} provided that for each vector $v$ in $F$ there is a member $\lambda$ of $F^*$ with $\lambda(v) \neq 0.$  In all applications to vector spaces of tensor fields in physics, the non-degeneracy is usually obvious.  However, the topology is usually not, so the notion of the star will circumvent the need to actually deal with topological vector spaces, beyond merely noting that each open subset of a topological vector space equals its star and therefore its $r-$star for all $r \geq 1.$

\begin{theorem}\label{genanalcont} {\bf GENERAL OBSERVER PRINCIPLE.}  Suppose $F$ is a non-degenerate vector space. If $A$ and $B$ are a symmetric multilinear maps of rank $r$ on a vector space $E$ with values in the vector space $F$ and if the monomial
function $f_A-f_B:E \lra F$ is constant on a set having non-empty $r-$star, then $A=B.$
\end{theorem}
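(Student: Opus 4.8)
The plan is to reduce to scalar-valued $F$, then restrict to a finite-dimensional affine slice on which the monomial function becomes a genuine polynomial, and finally invoke a lemma that a polynomial of degree $\leq d$ on $\bR^n$ which is constant on a set whose $d$-star contains the origin must be constant; reading off one coefficient then kills $C:=A-B$. First I would set $C=A-B$, a symmetric rank-$r$ map with $f_C=f_A-f_B$ constant on some $U$ with $Star^r(U,E)\neq\emptyset$, and argue by contradiction: if $C\neq 0$ there are $x_1,\dots,x_r$ with $C(x_1,\dots,x_r)\neq 0$, and since $F$ is non-degenerate there is $\lambda\in F^*$ with $\lambda(C(x_1,\dots,x_r))\neq 0$; as $\lambda\circ C$ is scalar-valued with $f_{\lambda\circ C}=\lambda\circ f_C$ still constant on $U$, the scalar case forces $\lambda\circ C=0$, a contradiction. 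So it suffices to take $F=\bR$.

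Next, fixing $a\in Star^r(U,E)$ and arbitrary $w_1,\dots,w_r\in E$, I would study the affine map $\ell\colon\bR^r\lra E$, $\ell(s)=a+s_1w_1+\cdots+s_rw_r$, and the composite $\psi=f_C\circ\ell$. Expanding $\psi(s)=C(\ell(s),\dots,\ell(s))$ by multilinearity shows $\psi$ is an ordinary polynomial of degree $\leq r$ in $s_1,\dots,s_r$; only the top homogeneous piece $C(\sum_i s_iw_i,\dots,\sum_i s_iw_i)$ can contribute to a monomial of total degree $r$, so by symmetry of $C$ the coefficient of $s_1\cdots s_r$ in $\psi$ is $r!\,C(w_1,\dots,w_r)$. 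Also $\psi$ is constant on $V:=\ell^{-1}(U)$. To push the star hypothesis across $\ell$, I would use the elementary inclusion $\ell^{-1}(Star(W,\cdot))\subseteq Star(\ell^{-1}(W),\cdot)$, valid for any affine $\ell$ (chase the definition, turning a direction in $\bR^r$ into a direction in $E$ via the linear part of $\ell$); since $Star$ is monotone under inclusion this iterates to $\ell^{-1}(Star^k(W,\cdot))\subseteq Star^k(\ell^{-1}(W),\cdot)$, and at $\ell(0)=a\in Star^r(U,E)$ it yields $0\in Star^r(V,\bR^r)$.

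The core is then the lemma: if $P$ is a polynomial of degree $\leq d$ on $\bR^n$ that is constant on $V$ with $0\in Star^d(V,\bR^n)$, then $P$ is constant. I would prove this by induction on $d$, the case $d=0$ being trivial. For $d\geq 1$ put $V_1=Star(V,\bR^n)$; since $0\in Star^d(V,\bR^n)=Star^{d-1}(V_1)\subseteq V_1$, the set $V_1$ is nonempty and contains $0$. For any direction $s$ and any $x\in V_1$ there is $\delta>0$ with $x+\tau s\in V$ for $|\tau|\leq\delta$, so $P$ is constant along that segment and $\partial_s P(x)=0$; hence $\partial_s P$ has degree $\leq d-1$, is constant (equal to $0$) on $V_1$, and $0\in Star^{d-1}(V_1)$, so the inductive hypothesis gives $\partial_s P\equiv 0$. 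As $s$ is arbitrary, every partial derivative of $P$ vanishes identically, so $P$ is constant. Applying this with $n=d=r$ and $P=\psi$ makes $\psi$ constant, so its $s_1\cdots s_r$ coefficient $r!\,C(w_1,\dots,w_r)$ is $0$; hence $C(w_1,\dots,w_r)=0$ for all $w_i$, i.e.\ $C=0$ and $A=B$.

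The step I expect to be the real obstacle is the careful accounting of the iterated star: getting the affine-preimage inclusion right for $Star^k$ rather than just $Star^1$, and arranging the induction in the core lemma so that each differentiation consumes exactly one star, so that the hypothesis $0\in Star^r(V,\cdot)$ with $r$ the rank is precisely what gets used. The remaining ingredients---the reduction to scalars and the extraction of the top coefficient---are routine polarization-style bookkeeping, essentially the finite-difference analogue of the derivative computation already used in the proof of Theorem~\ref{gen top anal cont}.
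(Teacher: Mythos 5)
Your proof is correct, but it takes a genuinely different route from the paper's. The paper inducts on the rank $r$ of the multilinear map itself: it fixes $v_0\in Star(U,E)$, expands $f_A(v_0+tw)$ as a one-variable polynomial in $t$, uses the fact that a real polynomial vanishing on an interval has all coefficients zero to conclude $Av_0^{(r-1)}w=0$, and then feeds the rank-$(r-1)$ map $v\mapsto Av^{(r-1)}$ (valued in $L(E;F)$) back into the inductive hypothesis on the set $Star(U,E)$, which has non-empty $(r-1)$-star. You instead reduce to scalar values once and for all, restrict to the $r$-dimensional affine slice $\ell(s)=a+\sum s_iw_i$ to obtain an honest polynomial $\psi$ on $\bR^r$, transport the iterated star through $\ell$ via the inclusion $\ell^{-1}(Star^k(W))\subseteq Star^k(\ell^{-1}(W))$, and then invoke a self-contained induction-on-degree lemma for polynomials on $\bR^n$, reading off $C(w_1,\dots,w_r)$ from the $s_1\cdots s_r$ coefficient. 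Both arguments consume exactly one star per differentiation, and the coefficient extraction is the same polarization bookkeeping; what yours buys is that the analytic content is isolated in a clean finite-dimensional polynomial lemma and the star bookkeeping is made fully explicit, at the cost of the extra affine transport lemma. What the paper's version buys is that it never leaves $E$ and needs only one-variable polynomials, though it silently relies on the fact that $L(E;F)$ is non-degenerate whenever $F$ is (true, and provable without choice, but worth noting since the induction changes the target space). Your identification of the star transport as the delicate step is accurate, and your proof of it (pushing a direction $u$ in $\bR^r$ to $L(u)$ in $E$ via the linear part, plus monotonicity of $Star$ under inclusion) is sound.
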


Obviously Theorem \ref{gen top anal cont} is a consequence of Theorem \ref{genanalcont}, by our previous remarks. Clearly, as $f_{A-B}=f_A-f_B,$ it suffices to prove the case with $A$ arbitrary and $B=0.$  Thus we begin by assuming that $A$ is an arbitrary symmetric multilinear map of rank $r$ on the arbitrary vector space $E$ with values in the arbitrary non-degenerate vector space $F.$

Then for any $v_0,v_1,...v_m \in E,$

\begin{equation}\label{multnomthm}
f_A(v_0+v_1+...+v_m)=\sum_{[k_0+k_1+...k_m=r]}C(r;k_0,k_1,...,k_m)Av_0^{(k_0)}v_1^{(k_1)}...v_m^{(k_m)}.
\end{equation}
Here $C(r;k_0,k_1,...k_m)$ is the multinomial coefficient:

\begin{equation}\label{multnomcoeff}
C(r;k_0,k_1,...,k_m)=\frac{r!}{k_0!k_1!...k_m!}.
\end{equation}

Now, proceeding inductively, let us notice that if $r=1,$ then the theorem is a triviality, since a linear map which is constant on any non-empty star is easily seen to be identically zero.  Assume the theorem is already proven for the case of rank $r-1.$  Suppose that $f_A$ is constant with value $C$ on $U \subset E$ with $Star^r(U,E)$ non-empty.  Choose $v_0 \in  Star(U,E) \subset U,$ take any $w \in E,$ and any $\lambda \in F^*.$  Then choose $\delta >0$ such that $v_0+tw \in U,$ whenever $|t| \leq \delta.$  We have

$$C=f_A(v_0+tw)=\sum^r_{k=0}C(r;k,r-k)t^kAv_0^{(r-k)}w^{(k)}=f_A(v_0)+\sum^r_{k=1} C(r;k,r-k)t^kAv_0^{(r-k)}w^{(k)}.$$ Thus, since also $f_A(v_0)=C,$ we must in fact have
$$\sum^r_{k=1}C(r;k,r-k)t^kAv_0^{(r-k)}w^{(k)}=0,$$ for any number $t$ with $|t| \leq \delta.$

Applying $\lambda$ to the previous vanishing equation gives a real-valued polynomial function on $\bR$ of degree $r$ which vanishes for all $t$ with $|t| \leq \delta.$  Since such a polynomial function can have at most $r$ roots, this vanishing implies all coefficients are zero.  In particular, this means that $$\lambda(Av_0^{(r-1)}w)= 0.$$

Since $F$ is non-degenerate and $\lambda \in F^*$ was arbitrary, $Av_0^{(r-1)}w = 0$  must be the case.  That is, the linear map $Av_0^{(r-1)}=0.$  But, $v_0$ is an arbitrary point of $Star(U,E).$   This means that if we define the rank $r-1$ multilinear map $B$ by $[Bv_1v_2...v_{r-1}]w=Av_1v_2...v_{r-1}w,$ then $f_B$ vanishes identically on $Star(U,E)$ which has non-empty $(r-1)-$star, and therefore by the inductive hypothesis, $B=0.$  But this obviously implies $A=0,$ and the proof is complete.  We take this opportunity to point out that the proof given in the appendix of \cite{DUPRE2} for this very general case is invalid, so the proof here, which is also the proof given in \cite{DUPRE3}, provides a correction to that appendix, as well as a further generalization.

By convention, a multilinear map from $E$ to $F$ of rank zero is
just a vector in $F.$  If $A_k$ is a symmetric multilinear map of
$E$ to $F$ of rank $k,$ for $0 \leq k \leq r,$ then the function

$$f=\sum_{k=0}^r f_{A_k}$$
is a polynomial function of degree $r.$   The above proof can easily be modified to apply to the polynomial function of any finite degree, by using induction on the degree of the polynomial.  The preceding expansion argument just gives a more complicated polynomial function, but as all coefficients must vanish,  applying it to the top degree term as above would show that the coefficient $A_r=0$ if we start with what we think is a degree $r$ polynomial.  But this means that the polynomial is really only of degree $r-1,$ and therefore by downward induction, we see $f=0.$  If $F$ is also a
topological vector space, then we can take limits in the sum and
consider power series. The general principle of analytic
continuation relies on the uniqueness of power series expressions.
In general, for Banach spaces, if two power series agree locally
as functions, then all their coefficients are the same-that is,
they are the same power series. The proof is easy using
differentiation, just use the same method used in freshman
calculus, but for Banach space valued functions. We have basically
proven this fact in case there is only one term in the power
series, and indicated how to prove it for a series with only a finite number of terms, but without using differentiation and without even having
topology on the vector spaces.

Of course, to apply the (General) Observer Principle, it helps to have an easy criterion to recognize when certain sets have non-empty $r-$star.  If $ U \subset E$ is a non-empty subset of the vector space $E,$ and $Star(U,E)=U,$ then certainly $Star^r(U,E)=U$ for every $r \geq 1.$   Here is a simple useful case.

\begin{proposition}\label{nonvanish}
Suppose that $A$ is a symmetric rank $r$ multilinear map of the vector space $E$ into the non-degenerate vector space $F,$ and $A \neq 0.$  Let $U=E \setminus f_A^{-1}(0).$  Then 

$$\emptyset \neq Star(U,E)=U.$$
\end{proposition}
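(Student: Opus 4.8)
The plan is to prove the two assertions of the statement separately: first that $U$ is non-empty, and then that $Star(U,E)=U$; once $U\neq\emptyset$ and $Star(U,E)=U$ are known, the non-emptiness of $Star(U,E)$ is automatic. The inclusion $Star(U,E)\subseteq U$ is immediate from the definition of the star (every element of $Star(U,E)$ lies, in particular, in $U$), so the genuine content is $U\subseteq Star(U,E)$ together with $U\neq\emptyset$.

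For non-emptiness of $U$, I would argue by contradiction using the General Observer Principle. Note that trivially $Star(E,E)=E$: given $v,w\in E$, the choice $\delta_w=1$ already has $v+tw\in E$ for all $t$, so $E$ has non-empty $r$-star for every $r$. If $f_A$ vanished identically on $E$, then $f_A-f_0=f_A$ would be constant on $E$, and Theorem \ref{genanalcont} would force $A=0$, contrary to hypothesis. Hence there is some $x_0$ with $f_A(x_0)\neq 0$, i.e.\ $x_0\in U$, so $U\neq\emptyset$.

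For $U\subseteq Star(U,E)$, fix $v\in U$ and an arbitrary direction $w\in E$; the task is to produce $\delta_w>0$ with $f_A(v+tw)\neq 0$ whenever $|t|\leq\delta_w$. Expanding by the $m=1$ case of the multinomial formula (\ref{multnomthm}),
\[
f_A(v+tw)=\sum_{k=0}^{r} C(r;k,r-k)\,t^{k}\,Av^{(r-k)}w^{(k)},
\]
which is an $F$-valued polynomial in $t$ with constant term $f_A(v)\neq 0$. The key device — and essentially the only subtlety, since $F$ carries no topology so "small perturbations stay nonzero'' is not available directly — is to pass to scalars via non-degeneracy of $F$: pick $\lambda\in F^{*}$ with $\lambda\big(f_A(v)\big)\neq 0$, and set $p(t)=\lambda\big(f_A(v+tw)\big)$. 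Then $p$ is a genuine real polynomial of degree at most $r$ with $p(0)=\lambda(f_A(v))\neq 0$, hence not the zero polynomial, hence with only finitely many real roots, none of them equal to $0$. Taking $\delta_w$ strictly less than the least absolute value of the nonzero roots of $p$ (or $\delta_w=1$ if $p$ has no nonzero roots) yields $p(t)\neq 0$, and therefore $f_A(v+tw)\neq 0$, for all $|t|\leq\delta_w$. Thus $v+tw\in U$ on that interval; since $w$ was arbitrary, $v\in Star(U,E)$, and since $v\in U$ was arbitrary, $U\subseteq Star(U,E)$. Combining with the trivial reverse inclusion gives $Star(U,E)=U$, which by the first step is non-empty.

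I do not anticipate any serious obstacle here: the whole argument turns on replacing an $F$-valued polynomial by a scalar one through a functional $\lambda$, after which the finiteness of the root set of a nonzero real polynomial finishes the job. The only point requiring a little care is the verification that $f_A(v+tw)$ really is polynomial in $t$, which is precisely formula (\ref{multnomthm}) specialized to two summands, and the bookkeeping that the selected $\delta_w$ avoids all roots of $p$ in the closed interval $[-\delta_w,\delta_w]$.
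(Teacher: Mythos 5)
Your proof is correct and follows essentially the same route as the paper's: non-emptiness of $U$ via the general observer principle, and then, for $v\in U$ and arbitrary $w$, passing to the scalar polynomial $t\mapsto\lambda(f_A(v+tw))$ with $\lambda\in F^*$ chosen by non-degeneracy so that it is nonzero at $t=0$, whence it is nonvanishing on a closed interval about $0$. Your use of the finiteness of the root set in place of the paper's appeal to an open interval of nonvanishing is an inessential variant of the same step.
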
 

The proof is a simple variation on the previous proof.  As $A \neq 0,$ by the (general) observer principle, we know that $U \neq \emptyset.$  If $v \in U,$ and $w \in E,$ and $\lambda \in F^*$ with 
$f_A (v) \neq 0,$ and $\lambda$ chosen so that $\lambda(f_A (v)) \neq 0,$ then we can define $g: \bR \lra \bR$ to be $g(t)=\lambda (f_A [v+tw]).$  Them, $g(0) \neq 0,$ but expanding the expression for $g$ just as in the proof of the general observer principle, we see that $g$ is simply a polynomial function of degree $r,$ and as it does not vanish at $0,$ there is an open interval in $\bR$ containing $0$ on which it is non-vanishing.  As $w$ was arbitrary in $E,$ this shows that $v \in Star(U,E).$  As $v$ was an arbitrary member of $U,$ we have $U=Star(U,E).$

Some interesting simple corollaries follow now.

\begin{corollary}\label{vanish1}
Suppose that $A$ and $B$ are symmetric real valued multilinear maps of rank $r$ and $s,$ respectively, on the vector space $E.$  Then

$$Sym(A \otimes B)=0 \mbox{ implies either } A=0 \mbox{ or } B=0.$$
\end{corollary}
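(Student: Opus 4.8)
The plan is to reduce the statement to a property of the associated monomial functions and then run the same one-variable polynomial argument already used in the proofs of Theorem \ref{genanalcont} and Proposition \ref{nonvanish}. First I would note that evaluating $\mathrm{Sym}(A\otimes B)$, a symmetric multilinear map of rank $r+s$, on the diagonal kills the symmetrization: $[\mathrm{Sym}(A\otimes B)]x^{(r+s)}$ is a nonzero scalar multiple (the scalar depending only on the normalization convention for $\mathrm{Sym}$) of $[(A\otimes B)]x^{(r+s)} = (Ax^{(r)})(Bx^{(s)}) = f_A(x)f_B(x)$. Hence the hypothesis $\mathrm{Sym}(A\otimes B)=0$ forces $f_A(x)f_B(x)=0$ for every $x\in E$; equivalently, $E = f_A^{-1}(0)\cup f_B^{-1}(0)$.

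Next I would argue by contradiction, assuming $A\neq 0$ and $B\neq 0$. Since $F=\bR$ is non-degenerate and $Star^r(E,E)=E\neq\emptyset$, the General Observer Principle (Theorem \ref{genanalcont}), applied with the constant value $0$, shows $f_A$ is not identically zero, and likewise $f_B$ is not identically zero; equivalently this is Proposition \ref{nonvanish}, which gives that $U_A := E\setminus f_A^{-1}(0)$ and $U_B := E\setminus f_B^{-1}(0)$ are both nonempty. So I may choose $v,w\in E$ with $f_A(v)\neq 0$ and $f_B(w)\neq 0$.

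Now I would restrict to the line $t\mapsto v+tw$. By the multinomial expansion (\ref{multnomthm}) with two summands, $p(t):=f_A(v+tw)$ is a polynomial in the real variable $t$ of degree at most $r$ whose value at $t=0$ is $f_A(v)\neq 0$, so $p$ is not the zero polynomial and has at most $r$ real roots; similarly $q(t):=f_B(v+tw)$ is a polynomial of degree at most $s$ whose coefficient of $t^s$ equals $Bw^{(s)}=f_B(w)\neq 0$, so $q$ is not the zero polynomial and has at most $s$ real roots. Since $\bR$ is infinite, I can pick $t_0\in\bR$ avoiding the finitely many roots of $p$ and of $q$; then $x_0 := v+t_0 w$ satisfies $f_A(x_0)f_B(x_0)\neq 0$, contradicting the first paragraph. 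Therefore $A=0$ or $B=0$.

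There is no genuinely hard step here, but the point requiring care is to keep the argument purely algebraic, as the rest of this section insists: one must not invoke topology or a Baire-category argument to conclude that a vector space cannot be the union of the two zero sets $f_A^{-1}(0)$ and $f_B^{-1}(0)$ (each of which can be large, e.g. a finite union of hyperplanes). The line-restriction together with the root count over the infinite field $\bR$ is exactly the device that replaces such topological reasoning, and it is the same device already employed in Theorem \ref{genanalcont} and Proposition \ref{nonvanish}; checking the innocuous diagonal identity $f_{\mathrm{Sym}(A\otimes B)} = (\mathrm{const})\,f_A f_B$ is the only other thing that needs verification.
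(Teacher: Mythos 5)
Your proof is correct. The diagonal identity $[\mathrm{Sym}(A\otimes B)]v^{(r+s)}=f_A(v)f_B(v)$ (exactly, with the normalization of (\ref{symmetrizationoperator}), since every permutation of the arguments agrees on the diagonal) is the same first step the paper takes, and your verification that the $t^s$ coefficient of $q(t)=f_B(v+tw)$ is $Bw^{(s)}=f_B(w)$ is exactly what is needed to make the line argument close.

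Where you diverge is in how you get from $f_Af_B\equiv 0$ to the conclusion. The paper argues directly: if $A\neq 0$ then, by Proposition \ref{nonvanish}, the set $U=E\setminus f_A^{-1}(0)$ is nonempty and satisfies $Star(U,E)=U$, hence $Star^s(U,E)=U\neq\emptyset$; since $f_B$ must vanish on $U$, the General Observer Principle (Theorem \ref{genanalcont}) gives $B=0$. You instead argue by contradiction, picking $v$ with $f_A(v)\neq 0$ and $w$ with $f_B(w)\neq 0$ and producing a single point $v+t_0w$ where both monomials are nonzero by counting roots of two one-variable polynomials. The two routes rest on the same underlying device (restriction to a line plus the finiteness of roots of a nonzero real polynomial, which is also how Proposition \ref{nonvanish} is proved), but yours short-circuits the star formalism entirely and is more self-contained and elementary, at the cost of not exhibiting the stronger structural fact the paper's route records in passing, namely that the nonvanishing set of $f_A$ equals its own star and so is itself a legitimate domain for a further application of the observer principle. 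Both arguments are purely algebraic and avoid any topological or category-theoretic reasoning, as you correctly note is required in this section.
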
  

We simply note that for any $v \in E,$ we have

$$[Sym(A \otimes B)]v^{(r+s)}=[Av^{(r)}][Bv^{(s)}]=f_A(v)f_B(v).$$
Therefore, if $A \neq 0,$ then $f_B$ must vanish on $U=E \setminus f_A^{-1}(0),$ which is nonempty with $Star(U,E)=U,$ and therefore $\emptyset \neq Star^s(U,E),$ so again by the general observer principle, we must have $B=0.$

\begin{corollary}\label{vanish2}
Suppose that $M$ is any $C^1$ manifold and that $A$ and $B$ are $C^1$ symmetric covariant tensor fields on $M,$ of any ranks.
If $x \in M$ and $Sym(A \otimes B)$ vanishes at $x,$ then either $A$ vanishes at $x$ or $B$ vanishes at $x.$
\end{corollary}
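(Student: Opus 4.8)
The plan is to reduce the assertion to the purely algebraic Corollary \ref{vanish1} by working in the single finite-dimensional vector space $E = T_xM$. Note at the outset that neither the $C^1$ hypothesis nor any regularity at all is actually used: the conclusion concerns only the values of $A$ and $B$ at the one event $x$, so the tensor fields may be replaced by their values there.

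First I would record that both the tensor product $\otimes$ and the symmetrization operator $Sym$ act fiberwise on covariant tensor fields, so that the value at $x$ of the field $Sym(A \otimes B)$ is exactly $Sym(A|_x \otimes B|_x)$, the symmetrization of the rank $r+s$ multilinear map $A|_x \otimes B|_x$ on $E = T_xM$, where $r = \rank(A)$ and $s = \rank(B)$. Hence the hypothesis that $Sym(A \otimes B)$ vanishes at $x$ is precisely the algebraic identity $Sym(A|_x \otimes B|_x) = 0$ in $L^{r+s}_{sym}(E;\bR)$.

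Then I would apply Corollary \ref{vanish1} with the vector space taken to be $E = T_xM$ and the symmetric real-valued multilinear maps taken to be $A|_x$ (rank $r$) and $B|_x$ (rank $s$); the target field $\bR$ is non-degenerate, since for nonzero $v \in \bR$ the identity functional $\id \in \bR^* = L(\bR;\bR)$ satisfies $\id(v) = v \neq 0$. Corollary \ref{vanish1} then gives $A|_x = 0$ or $B|_x = 0$, which is to say $A$ vanishes at $x$ or $B$ vanishes at $x$. Equivalently, one can argue directly: from the identity $[Sym(A \otimes B)]v^{(r+s)} = f_A(v)f_B(v)$ noted in the proof of Corollary \ref{vanish1}, the hypothesis forces $f_{A|_x}(v)f_{B|_x}(v) = 0$ for all $v \in T_xM$, whence by Proposition \ref{nonvanish} together with the General Observer Principle \ref{genanalcont} applied to $T_xM$, one of the monomial functions $f_{A|_x}$, $f_{B|_x}$ vanishes identically, i.e. one of $A|_x$, $B|_x$ is zero. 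The only delicate point — and it is merely an unwinding of definitions rather than a genuine obstacle — is the fiberwise identification carried out in the first step, after which the corollary is immediate.
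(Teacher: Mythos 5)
Your proof is correct and is essentially the paper's own argument: the paper disposes of this corollary in one line by saying it is just Corollary \ref{vanish1} applied with $E = T_xM$ and $F = \bR$, which is exactly the fiberwise reduction you carry out (your observation that the $C^1$ hypothesis is never used is also accurate). Nothing further is needed.
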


This is just the case $E=T_xM$ and $F=\bR,$ as in the next corollary as well.

\begin{corollary}\label{vanish3}
Suppose that $M$ is a $C^2$ manifold with Kozul connection $\nabla$ and that $A$ is a $C^2$ tensor field of rank $r$ which is symmetric.  Then, for any $x \in M,$ and any $v \in T_xM,$ if
$$\nabla_v(Sym[A \otimes A])(x)=0,$$ then either $A(x)=0$ or $\nabla_v A(x)=0.$
\end{corollary}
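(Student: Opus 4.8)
The plan is to trade the manifold-level covariant derivative appearing in the hypothesis for a purely algebraic identity among symmetric tensors on the single tangent space $T_xM$, and then invoke Corollary \ref{vanish1}. Fix $x \in M$ and $v \in T_xM$.

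First I would set $h = Sym[A \otimes A]$, a symmetric covariant tensor field of rank $2r$; it is at least $C^1$ because $A$ is $C^2$, so $\nabla_v h(x)$ is defined. The operator $Sym$ is a fixed average of the permutation operators on $\bigotimes^{2r}T^*M$, and each permutation operator commutes with the Koszul connection induced on $\bigotimes^{2r}T^*M$, since the Leibniz rule for tensor products treats the factors symmetrically; hence $Sym$ commutes with $\nabla_v$. Combining this with $\nabla_v(A \otimes A) = (\nabla_v A)\otimes A + A\otimes(\nabla_v A)$ gives, at $x$,
$$\nabla_v h(x) = Sym[(\nabla_v A)(x)\otimes A(x)] + Sym[A(x)\otimes(\nabla_v A)(x)].$$
A full symmetrization over all $2r$ arguments does not see the order of the two length-$r$ blocks --- both tensors on the right are symmetric of rank $2r$ and restrict to the same monomial function on the diagonal, hence agree by polarization --- so the two terms coincide and $\nabla_v h(x) = 2\,Sym[(\nabla_v A)(x)\otimes A(x)]$. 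The same remark shows $(\nabla_v A)(x)$ is itself symmetric of rank $r$, since $A = Sym(A)$.

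Next I would put $P = (\nabla_v A)(x)$ and $Q = A(x)$, symmetric real-valued multilinear maps of rank $r$ on the vector space $E = T_xM$. The hypothesis $\nabla_v(Sym[A\otimes A])(x) = 0$ now reads precisely $Sym(P\otimes Q) = 0$, and Corollary \ref{vanish1}, applied with $E = T_xM$ and the two tensors $P$ and $Q$ (ranks $r$ and $s = r$), yields at once $P = 0$ or $Q = 0$, that is, $\nabla_v A(x) = 0$ or $A(x) = 0$, which is the assertion.

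I expect no serious obstacle here; the one point that warrants a sentence of justification is the commutation of $Sym$ with $\nabla_v$ together with the consequent collapse of the two block orderings, after which the result is a direct quotation of Corollary \ref{vanish1}. If one prefers to avoid appealing to parallelism of $Sym$, one can argue directly on the diagonal instead: given arbitrary $w \in T_xM$, choose a vector field $W$ with $W(x) = w$ and $\nabla_v W(x) = 0$; then $(\nabla_v h)(x)$ evaluated on $2r$ copies of $w$ equals the derivative along $v$ at $x$ of the scalar function $y \mapsto \big(A_y(W_y,\dots,W_y)\big)^2$, which by the chain rule is $2\,f_{A(x)}(w)\,[(\nabla_v A)(x)]w^{(r)}$. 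Since $w$ is arbitrary, the hypothesis forces the product of the monomial functions of $P$ and $Q$ to vanish identically on $E$, and one then finishes exactly as in the proof of Corollary \ref{vanish1}, using Proposition \ref{nonvanish} and the General Observer Principle (Theorem \ref{genanalcont}).
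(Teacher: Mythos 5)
Your proof is correct and follows essentially the same route as the paper: commute $Sym$ with $\nabla_v$, apply the Leibniz rule, observe that full symmetrization collapses the two block orderings to give $\nabla_v(Sym[A\otimes A])(x)=2\,Sym\bigl(A(x)\otimes[\nabla_v A](x)\bigr)$, and then invoke the vanishing-of-a-symmetrized-product corollary (the paper cites Corollary \ref{vanish2}, which is just the pointwise form of Corollary \ref{vanish1} that you use). Your alternative diagonal argument is a harmless restatement of the same idea via monomial functions.
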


Here, we just note that at $x \in M,$ we can apply Corollary \ref{vanish2}, as symmetry of $A$ also gives symmetry of $\nabla_v A,$ and

$$\nabla_v (Sym[A \otimes A])=Sym[\nabla_v(A \otimes A)]=2 \cdot Sym(A \otimes [\nabla_v A]).$$

\begin{corollary}\label{vanish4}
Suppose that $M$ is a $C^2$ manifold with Kozul connection $\nabla$ and that $A$ is a $C^2$ never vanishing covariant symmetric tensor field on $M,$ of any rank.  If $\nabla(Sym[A \otimes A])$ vanishes everywhere on $M,$ then $\nabla A$ vanishes everywhere on $M.$
\end{corollary}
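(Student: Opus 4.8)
The plan is to obtain this as an immediate pointwise consequence of Corollary \ref{vanish3}. Fix an arbitrary point $x \in M$ and an arbitrary tangent vector $v \in T_xM$. Since $\nabla(Sym[A \otimes A])$ vanishes identically on $M$, in particular $\nabla_v(Sym[A \otimes A])(x) = 0$. Corollary \ref{vanish3} then yields the dichotomy that either $A(x) = 0$ or $\nabla_v A(x) = 0$.

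Now I would invoke the standing hypothesis that $A$ is never vanishing, i.e. $A(x) \neq 0$ for every $x \in M$. This rules out the first alternative, so one is forced to conclude $\nabla_v A(x) = 0$. Since $x \in M$ and $v \in T_xM$ were arbitrary, $\nabla A = 0$ everywhere on $M$, which is the assertion.

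The only points requiring a word of care are bookkeeping: one must check that the regularity hypotheses of Corollary \ref{vanish3} ($M$ of class $C^2$, $\nabla$ a Koszul connection, $A$ a $C^2$ symmetric covariant tensor field) are exactly the hypotheses in force here, and that ``$A$ never vanishing'' is understood as $A(x) \ne 0$ as an element of the space of symmetric covariant tensors at $x$, so that the non-degeneracy of $F = \bR$ used in the General Observer Principle (Theorem \ref{genanalcont}) behind Corollary \ref{vanish3} genuinely applies at $x$. I do not expect any substantive obstacle: the entire mathematical content sits in Corollary \ref{vanish3}, and ultimately in the polarization/expansion argument behind the General Observer Principle. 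One could alternatively bypass Corollary \ref{vanish3} and argue directly from Corollary \ref{vanish2} applied to the pair $(A, \nabla_v A)$ together with the identity $\nabla_v(Sym[A\otimes A]) = 2\,Sym(A \otimes \nabla_v A)$, but that would merely re-derive Corollary \ref{vanish3} in place, so the cleanest route is the two-line deduction above.
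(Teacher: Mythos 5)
Your proof is correct and is exactly the intended argument: the paper states Corollary \ref{vanish4} as an immediate consequence of Corollary \ref{vanish3}, obtained pointwise by using the never-vanishing hypothesis to exclude the alternative $A(x)=0$. Nothing further is needed.
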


\begin{example}
Suppose that $M$ is a smooth semi-Riemannian manifold with metric tensor $g,$ and suppose that $u$ is a vector field  on $M$ with $g(u,u)=1.$  Let $u^g$ denote the 1-form dual to $u$ with respect to the metric tensor $g,$ and set
$$h=g-2(u^g \otimes u^g).$$
Then $h$ is a semi-Riemannian metric with signature changed to increase the negative signature.  Let $\nabla^g$ denote the Levi-Civita connection due to $g$ and $\nabla^h$ that due to $h.$  If we were to suppose that both these connections coincide, then denoting their common value as simply $\nabla,$ we would have $\nabla h=0$ and $\nabla g=0,$ and therefore $\nabla (u^g \otimes u^g)=0.$  By Corollary \ref{vanish4}, we must conclude that $\nabla u^g=0.$  But, since $[\nabla_v u^g]=[\nabla_v u]^g,$ for any vector field $v$ on $M,$ this gives $\nabla u=0.$  That is, the only way to modify the metric in this simple way so as to leave the connection unchanged is to use a parallel vector field, and obviously from these equations we see that if $u$ is a parallel vector field, then the Levi-Civita connection remains unchanged, as for example in the case of Minkowski space as formed by taking a parallel unit vector field on four dimensional Euclidean space.  If the vector field is not parallel, then the connections are different, the resulting Lorentz connection will therefore not have the same geodesics as Euclidean space.  And of course, if $M$ is any semi-Riemannian manifold, then this type of signature change produces a new Levi-Civita connection and thus changes some of the geodesics.
\end{example}

Our main application of the observer principle in infinite dimensions will be to vector spaces of smooth sections of $TM,$ so we would like to know that a symmetric multilinear map on the vector space of sections of $TM$ must vanish if its monomial form vanishes on all timelike vector fields.  Thus, if we know that the set of timelike vector fields equals its star, then as it equals its $r-$star for all $r \geq 0,$ then the observer principle applies by our Theorem
\ref{genanalcont}.  The next proposition solves this problem for the case of vector spaces of vector fields which are continuous over a given fixed compact subset of $M.$

\begin{theorem}\label{timelikestar}
Suppose that $M$ is any smooth manifold and that $E$ is a smooth semi-Riemannian vector bundle over $M.$ Suppose that $K$ is a compact subset of $M$ and $\Gamma$ is a vector subspace of the set of all continuous vector fields of the vector bundle $E|K.$  Let $U$ be the set of all vector fields $v$ in $\Gamma$ satisfying $g(v,v) < 0$ on $K.$  Then $$Star(U,\Gamma)=U.$$
\end{theorem}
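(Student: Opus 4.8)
The plan is to establish the two inclusions separately. The inclusion $\mathrm{Star}(U,\Gamma) \subseteq U$ is immediate: by definition the star of a set is a subset of that set, so nothing is needed there. Everything of substance lies in proving $U \subseteq \mathrm{Star}(U,\Gamma)$, i.e. that every $v \in U$ can be perturbed by an arbitrary $w \in \Gamma$ in a small interval of scalars without leaving $U$.

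First I would fix $v \in U$. Since $E$ is a smooth semi-Riemannian vector bundle, the metric $g : E \oplus E \lra \bR$ is smooth, hence the function $x \mapsto g(v(x),v(x))$ is continuous on $K$; being strictly negative there and $K$ being compact, it attains a negative maximum, so there is $\epsilon > 0$ with $g(v(x),v(x)) \le -\epsilon$ for all $x \in K$. This uniform gap, which is exactly where compactness of $K$ enters, is the heart of the argument; without it one would only get a pointwise, non-uniform bound and no single $\delta_w$ could be chosen.

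Next, given an arbitrary $w \in \Gamma$, I would produce $\delta_w$ as follows. The sections $v$ and $w$ are continuous on $K$, so $x \mapsto g(v(x),w(x))$ and $x \mapsto g(w(x),w(x))$ are continuous on the compact set $K$ and hence bounded in absolute value by some constant $C \ge 0$. For any real $t$ and any $x \in K$, bilinearity of $g$ gives
$$g(v(x)+tw(x),\,v(x)+tw(x)) = g(v(x),v(x)) + 2t\,g(v(x),w(x)) + t^2 g(w(x),w(x)) \le -\epsilon + 2|t|\,C + t^2 C.$$
Taking $\delta_w = 1$ if $C = 0$ and $\delta_w = \min\{1,\ \epsilon/(4C)\}$ if $C>0$, one has for $|t| \le \delta_w$ that $2|t|C + t^2 C \le 3\delta_w C \le 3\epsilon/4 < \epsilon$, whence $g(v(x)+tw(x),\,v(x)+tw(x)) < 0$ for every $x \in K$. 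Since $\Gamma$ is a vector subspace, $v + tw \in \Gamma$, and the displayed strict inequality shows $v + tw \in U$ whenever $|t| \le \delta_w$. As $w \in \Gamma$ was arbitrary, $v \in \mathrm{Star}(U,\Gamma)$; and as $v \in U$ was arbitrary, $U \subseteq \mathrm{Star}(U,\Gamma)$, which together with the trivial reverse inclusion gives $\mathrm{Star}(U,\Gamma) = U$.

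I do not expect a genuine obstacle here: the proof is an elementary continuity-and-compactness estimate on the quadratic $t \mapsto g(v+tw,v+tw)$. The only point that truly needs care is invoking compactness of $K$ at the right moment to turn "$g(v,v) < 0$ pointwise on $K$" into the uniform bound "$g(v,v) \le -\epsilon$ on $K$"; the remaining estimate on the cross and quadratic terms is then routine and uniform over $K$.
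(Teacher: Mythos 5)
Your proof is correct. The one substantive inclusion, $U\subseteq Star(U,\Gamma)$, is established soundly: compactness of $K$ turns the pointwise condition $g(v,v)<0$ into a uniform bound $g(v,v)\le-\epsilon$, the cross term $g(v,w)$ and quadratic term $g(w,w)$ are uniformly bounded by some $C$ on $K$, and your explicit choice $\delta_w=\min\{1,\epsilon/(4C)\}$ (or $\delta_w=1$ when $C=0$) makes the quadratic $t\mapsto g(v+tw,v+tw)$ strictly negative on all of $K$ for $|t|\le\delta_w$. The trivial inclusion $Star(U,\Gamma)\subseteq U$ is indeed immediate from the definition.

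Your route differs from the paper's in an instructive way. The paper avoids any explicit estimates: it defines the continuous function $f_w(m,t)=g(v_0(m)+tw(m),v_0(m)+tw(m))$ on $K\times\bR$, observes that $W=f_w^{-1}(N)$ (with $N$ the negative reals) is an open neighborhood of the compact rectangle $K\times\{0\}$, and then invokes Wallace's theorem (the tube lemma) to extract a rectangular neighborhood $K\times V_w\subset W$, which yields $\delta_w$ with no computation. Yours is the quantitative version of the same compactness phenomenon: you manually produce the uniform gap $\epsilon$ and the uniform bound $C$ and then do the elementary inequality on the quadratic. What the paper's soft argument buys is brevity and independence from the bilinear structure of $g$ (it would work verbatim for any continuous function cut out by an open condition); what your argument buys is self-containedness (no appeal to Wallace's theorem) and an explicit, computable $\delta_w$. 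Both exploit compactness of $K$ at exactly the same moment, and both are valid proofs of the theorem.
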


The proof of this theorem is a simple application of one of the most useful, simple, and beautiful theorems in point-set topology which is due to A. D. Wallace (my first mathematical mentor).

\begin{theorem} ({\bf A. D.  Wallace.})  If $X$ and $Y$ are any topological spaces, if $A$ is a compact subset of $X$ and $B$ is a compact subset of $Y,$ and if $W$ is an open subset of $X\times Y$ which contains $A \times B,$ then there are open subsets $U$ of $X$ and $V$ of $Y,$ respectively, such that $A \subset U,~~B \subset Y,$ and $U \times V \subset W.$
\end{theorem}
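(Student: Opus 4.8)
The plan is to prove the statement by two nested applications of compactness, the inner one being the classical ``tube lemma'' for a single slice and the outer one iterating over the other factor. The key bookkeeping point throughout is that one must \emph{intersect} the factors in which the compact set has been shrunk to a point and \emph{union} the factors that are covering the compact set; getting this asymmetry right is exactly what makes the product containment survive each stage, and it is the only place where care is really needed.

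First I would fix a point $a \in A$ and handle the slice $\{a\} \times B$. For each $b \in B$ the point $(a,b)$ lies in the open set $W$, so by definition of the product topology there are open boxes $U_{a,b} \ni a$ in $X$ and $V_{a,b} \ni b$ in $Y$ with $U_{a,b} \times V_{a,b} \subset W$. The family $\{V_{a,b}\}_{b \in B}$ is then an open cover of the compact set $B$, so I extract a finite subcover indexed by $b_1, \dots, b_k$ and set $U_a = \bigcap_{i=1}^k U_{a,b_i}$ and $V_a = \bigcup_{i=1}^k V_{a,b_i}$. Then $U_a$ is open and contains $a$, the set $V_a$ is open and contains $B$, and $U_a \times V_a \subset W$: if $x \in U_a$ and $y \in V_a$ then $y \in V_{a,b_i}$ for some $i$ while $x \in U_a \subset U_{a,b_i}$, so $(x,y) \in U_{a,b_i} \times V_{a,b_i} \subset W$. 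This produces a tube around the slice $\{a\} \times B$.

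Next I would run the same compactness argument in the $X$-direction. The family $\{U_a\}_{a \in A}$ is an open cover of the compact set $A$, so I extract a finite subcover indexed by $a_1, \dots, a_m$ and set $U = \bigcup_{j=1}^m U_{a_j}$ and $V = \bigcap_{j=1}^m V_{a_j}$. Then $U$ is open and contains $A$, and $V$ is open and contains $B$ because each $V_{a_j}$ already contains $B$. Finally $U \times V \subset W$: given $x \in U$, pick $j$ with $x \in U_{a_j}$; since $y \in V \subset V_{a_j}$ we obtain $(x,y) \in U_{a_j} \times V_{a_j} \subset W$. (The ``$B \subset Y$'' in the statement is evidently a typo for $B \subset V$, which is what this argument delivers.) This completes the proof; the main obstacle, such as it is, is purely the direction of the unions versus the intersections at the two stages, since reversing either one would break the containment $U \times V \subset W$.
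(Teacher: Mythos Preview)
Your argument is correct and is the standard two-stage tube-lemma proof; the paper itself does not prove this theorem but simply cites Kelley's \emph{General Topology} and remarks that it is an elementary exercise. Your observation that ``$B \subset Y$'' is a typo for ``$B \subset V$'' is also right.
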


It is customary to call a set of the form $A \times B$ a rectangle or to call it rectangular.  For a proof of Wallace's Theorem, we refer to \cite{JLKELLY}, but it is an elementary exercise in topology.  Also, it is elementary in topology that $U \times V$ is open if and only if both $U$ and $V$ are open, whereas only slightly less elementary is the fact that $A \times B$ is compact if and only if both $A$ and $B$ are compact.  An open set which contains $A$ is said to be an open neighborhood of $A.$  Thus Wallace's Theorem says simply every open neighborhood of a compact rectangle contains a rectangular open neighborhood of that compact rectangle.

To use Wallace's Theorem here, given $v_0 \in U,$ and any $w \in \Gamma,$ we define the real valued function $f_w:K \times \bR \lra \bR$ by

$$f_w(m,t)=g(v_0(m)+tw(m),v_0(m)+tw(m)),~~(m,t) \in K \times \bR.$$  Since all vector fields in $E$ are assumed continuous, it follows that $f$ is continuous.  Also, clearly

$$f_w(K \times \{0\}) \subset N,$$ where $N$ denotes the set of all negative real numbers.  Thus, as $N$ is an open subset of $\bR$ and $f$ is continuous, it follows that its inverse image $W=f_w^{-1}(N)$ is an open subset of $K \times \bR$ and hence is an open neighborhood of the compact rectangle $K \times \{0\} \subset K \times \bR.$  Keeping in mind that $K$ is an open subset of itself, by Wallace's Theorem, there is an open rectangle $K \times V_w$ with $$K \times \{0\} \subset K \times V_w \subset W.$$  Thus, $V_w$ is an open neighborhood of zero in $\bR$ so there is a positive number $\delta_w$ with the property that if $|t| < \delta_w,$ then $t \in V_w.$  This means that $v_0 +tw \in U,$ for any $t \in V_w,$ and in particular, for any $t$ with $|t| < \delta_w.$  As $w$ was arbitrary in $E,$ this means, $v_0 \in Star(U,\Gamma).$  As $v_0$ was an arbitrary vector field in the set $U,$ it follows that $Star(U,\Gamma)=U$ as claimed.

Combining Theorem \ref{genanalcont} and Theorem \ref{timelikestar}, we then have immediately the final result on the observer principle.

\begin{theorem}\label{timelikeobserverprinciple}
If $K$ is a compact subset of $M$ and if $S$ and $T$ are symmetric multilinear maps of rank $r$ on a vector space $E$ of continuous vector fields on $K$ with the property that $Sv^r=Tv^r$ for every timelike vector field $v \in E,$ then $S=T.$
\end{theorem}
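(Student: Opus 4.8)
The plan is to combine the two immediately preceding results mechanically: Theorem \ref{timelikestar} supplies the point-set input that the timelike vector fields form a set equal to its star, and Theorem \ref{genanalcont} (the General Observer Principle) then forces the two tensors to coincide. First I would set
$$U = \{\, v \in E : g(v,v) < 0 \text{ everywhere on } K \,\},$$
the set of timelike vector fields belonging to $E$. Applying Theorem \ref{timelikestar} with $\Gamma = E$ gives $Star(U,E) = U$. Since a subset equal to its own star is equal to its $r$-star for every $r \geq 1$ (as noted right after the definition of $Star^r$), it follows that $Star^r(U,E) = U$. I would also flag that $U$ is non-empty under the standing hypothesis: if $E$ contained no timelike vector field, the assumption ``$Sv^r = Tv^r$ for every timelike $v$'' would be vacuous and the conclusion could fail, so the hypothesis is only meaningful when such a $v$ exists, and then $Star^r(U,E) = U \neq \emptyset$.

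Next I would pass to the difference $D = S - T$, which is again a symmetric multilinear map of rank $r$ on $E$, now with values in $\bR$. Its monomial function is $f_D = f_S - f_T$, and the hypothesis $Sv^r = Tv^r$ for all timelike $v$ says exactly that $f_D$ vanishes identically on $U$. Hence $f_D$ is constant, with value $0$, on a set whose $r$-star is non-empty.

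Finally, $\bR$ is trivially a non-degenerate vector space, so Theorem \ref{genanalcont} applies verbatim and yields $D = 0$, i.e. $S = T$, which is the assertion.

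I do not expect any serious obstacle here, since all the real work has been front-loaded into Theorems \ref{genanalcont} and \ref{timelikestar}. The only points demanding any care are the two pieces of bookkeeping just used: that $Star(U,E) = U$ propagates to all higher stars $Star^r(U,E)$, and the (implicit) non-emptiness of $U$, since it is precisely the non-empty $r$-star of $U$ that licenses the invocation of the General Observer Principle.
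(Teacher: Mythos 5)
Your proof is correct and follows essentially the same route as the paper's: combine Theorem \ref{timelikestar} (which gives $Star(U,E)=U$ for the set $U$ of timelike fields in $E$, hence $Star^r(U,E)=U$) with the General Observer Principle, Theorem \ref{genanalcont}, applied to $S-T$. The only divergence is in the non-degeneracy bookkeeping: the paper's proof verifies that $E$ itself is non-degenerate via the evaluation functionals $\lambda\circ ev_m$, whereas you invoke non-degeneracy of the target $\bR$ --- which is the hypothesis Theorem \ref{genanalcont} actually requires for real-valued maps --- and you also, reasonably, make explicit the implicit assumption that $E$ contains at least one timelike field so that $U\neq\emptyset$.
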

We merely need to observe that $E$ must be non-degenerate. Indeed, if $v \in E$ with $v \neq 0,$ then there is some particular $m \in K$ with $v(m) \neq 0,$ and then we can choose any $\lambda \in (T_mM)^*$ with $\lambda(v(m)) \neq 0,$ to obtain an element $f \in E^*$ with $f(v) \neq 0,$ namely $f=\lambda [ev_m],$ where $ev_m :E \lra T_mM$ is the evaluation map, $ev_m(w)=w(m),$ for all $w \in E.$

As an application of Theorem \ref{timelikeobserverprinciple}, we will apply it to integrals of operators which are more general than tensor fields.  Let us call $S$ a {\it tensor operator} of rank $r$ on $K \subset M$ provided that it is a multilinear map of rank $r$ on the vector space of smooth vector fields on $K$ and whose values are also smooth scalar fields on $K$ which has the property that if $m \in K,$ and if $f$ is a smooth function on $K$ which is constant in an open neighborhood of $m,$ then

$$S(v_1,v_2,...,fv_l,...,v_r)(m)=f(m)S(v_1,v_2,...,v_r)(m).$$
 For instance, $S$ could be simply a tensor field of rank $r,$ but more generally, $S$ could be formed by covariant differentiation operators and tensor fields so as to be multilinear of rank $r,$ but not necessarily a tensor field of rank $r.$  Suppose that $\mu$ is a volume form on $M,$ and that $K \subset M.$   We can then define the scalar valued rank $r$ multilinear map $\hat{S}$ on the vector space $\Gamma_K$ of all smooth vector fields on $K$ by

\begin{equation}\label{integralS}
\hat{S}(w_1,w_2,...,w_r)=\int_K S(w_1,w_2,...,w_r)\mu, ~~~~w_1,w_2,...,w_r \in \Gamma_K.
\end{equation}
We will call $\hat{S}$ the integral of $S$ over $K,$ and denote it by

\begin{equation}\label{integraltensorfield}
\int_K S \mu =\hat{S},
\end{equation}
so we have

\begin{equation}\label{integraltensorfield2}
\int_K S \mu (w_1,w_2,...w_r)=\int_K S(w_1,w_2,...w_r) \mu,~~~w_1,w_2,...w_r \in \Gamma_K.
\end{equation}

Notice that the integral of such an operator is a multilinear map, which is therefore a special kind of tensor-not a tensor field.  By forming the multilinear map $\int S \mu,$ we in effect get around the problem that generally it does not make sense to integrate a tensor field itself, without choosing some kind of coordinate representation, as integrating components of a tensor does not give a tensor in the usual sense that physicists use the term.

It is clear that if $\int_U S \mu$ vanishes for every sufficiently small open subset of $U$ of $K,$ then $S$ itself must vanish.  For if $S(w_1,w_2,...,w_r)$ does not vanish at $m \in K,$ then it maintains its sign over some open neighborhood $U$ of $m,$ so the integral over $U$ would be either positive or negative but not zero.  To make the term sufficiently small precise here, we could say that if $\U$ is an open cover of $K$ and if $U$ is contained is some member of $\U,$ then $U$ is $\U-$small.  Then we say $U$ is sufficiently small if it is $\U-$small for some open cover $\U$ of $K.$

In any case, we see immediately that if $S$ is symmetric, then so is $\int S \mu.$  Thus if $S$ and $T$ are both symmetric, then to know that $S=T,$ by  Theorem \ref{timelikeobserverprinciple}, it is sufficient to know that the monomial forms of $\int S \mu$ and $\int T \mu$ agree on all sufficiently small open subsets of $K.$

Moreover, when $K$ is compact, if $S$ and $T$ are both symmetric rank $r$ tensor fields on $K,$ then to see that $\hat{S}=\hat{T},$ by Theorem \ref{timelikeobserverprinciple}, it suffices to know that $\hat{S}v^{(r)}=\hat{T}v^{(r)}$ for every timelike smooth vector field on $K.$  In particular, if $\hat{S}v^{(r)}=0$ for every timelike vector field $v$ on $K,$ then $\hat{S}=0.$  But, if $\hat{S}=0,$ then it follows that $S=0,$ since now we are assuming $S$ is an actual tensor field.  For if $S \neq 0,$  then we can choose a  function $f:K \lra [0,1] \subset \bR$ which vanishes outside a small open neighborhood $W_f$ of $m$ but with $f(m)=1.$  We can then choose smooth vector fields $w_1,w_2,...,w_r$ on $K$ with $S(m)(w_1(m),w_2(m),...,w_r(m)) \neq 0,$ so the function $S(w_1,w_2,...,w_r)$ does not vanish at $m.$  Then with $h=f^r:K \lra [0,1],$ we have $h(m)=1,$ and $h$ vanishes on $K \setminus W_f.$  Moreover,

\begin{equation}\label{bumpfunctionintegral}
0=\hat{S}(fw_1,fw_2,...,fw_r)=\int_K hS(w_1,w_2,...,w_r) \mu.
\end{equation}
If $S(w_1,w_2,...,w_r)$ has a positive value at $m,$ then after making $W_f$ smaller if necessary, we can assume $S(w_1,w_2,...,w_r)$ must have a positive value for all points of $W_f$ and this would mean the integral on the right hand side of equation (\ref{bumpfunctionintegral}) would not vanish which would be a contradiction.  If $S(w_1,w_2,..,w_r)$ is negative at $m,$ then replacing $S$ by $-S$ we reach the conclusion $S=0.$  We have therefore proven the following corollary.

\begin{corollary}\label{integraltensorfieldobserverprinciple}
Suppose that $K$ is a compact subset of $M$ and $S$ is a continuous symmetric tensor field on $K$ of rank $r.$  If the monomial form of the multlinear map $\int_K S \mu$ vanishes on all timelike vector fields, then $S=0.$
\end{corollary}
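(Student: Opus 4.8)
The plan is to split the argument into two moves: first promote the hypothesis to the assertion that the \emph{entire} multilinear map $\int_K S \mu$ vanishes, and then localize to recover pointwise vanishing of $S$ itself.

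For the first move I would invoke Theorem \ref{timelikeobserverprinciple} directly, taking the two symmetric multilinear maps there to be $\int_K S \mu$ and the zero map. By (\ref{integralS})--(\ref{integraltensorfield2}) the integral $\int_K S \mu$ is a rank $r$ multilinear map on the vector space $\Gamma_K$ of smooth (hence continuous) vector fields on $K$, and it is symmetric because $S$ is, as already noted. The hypothesis is precisely that its monomial form vanishes on every timelike vector field on $K$, so Theorem \ref{timelikeobserverprinciple} (which in turn rests on Theorems \ref{genanalcont} and \ref{timelikestar}) yields $\int_K S \mu = 0$ identically on $\Gamma_K$.

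For the second move I would argue by contradiction with a bump function. Suppose $S$ is not identically zero, say $S(m) \neq 0$ for some $m \in K$; choose smooth vector fields $w_1, w_2, \dots, w_r$ on $K$ with $S(w_1, w_2, \dots, w_r)(m) \neq 0$, and, replacing $S$ by $-S$ if necessary, assume this value is positive. By continuity of $S$ there is an open neighborhood $W$ of $m$ in $K$ on which $S(w_1, w_2, \dots, w_r) > 0$. Pick a smooth $f : K \lra [0,1]$ with $f(m) = 1$ and $f \equiv 0$ off $W$, and set $h = f^r$. Since $S$ is a genuine tensor field of rank $r$, each scalar factor pulls out of every slot, whence
\[
0 = \int_K S \mu (f w_1, f w_2, \dots, f w_r) = \int_K h\, S(w_1, w_2, \dots, w_r)\, \mu ;
\]
but the integrand on the right is everywhere $\geq 0$ and strictly positive throughout $W$, forcing the integral to be strictly positive, a contradiction. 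Hence $S = 0$.

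The step carrying the real content is this second, localization step, since the first is a straight citation of the machinery already assembled. The delicate points there are that $S$ must be an honest tensor field, not merely one of the ``tensor operators'' introduced above: for a tensor operator only the weaker ``constant near $m$'' form of factor extraction is available, which is not enough for a compactly supported cutoff, so the argument genuinely uses that $S$ is a section of a tensor bundle. One also needs the existence of the smooth cutoff $f$ (smooth partitions of unity on $M$) together with the sign persistence of $S(w_1, \dots, w_r)$ near $m$, which is exactly where continuity of $S$ is used.
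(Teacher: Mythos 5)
Your proposal is correct and follows essentially the same route as the paper: first invoke Theorem \ref{timelikeobserverprinciple} to get $\int_K S\mu=0$ as a multilinear map, then use the bump-function argument $0=\hat{S}(fw_1,\dots,fw_r)=\int_K f^r S(w_1,\dots,w_r)\mu$ with sign persistence of the continuous function $S(w_1,\dots,w_r)$ near a point where it is nonzero. Your remark that the localization step genuinely requires $S$ to be an actual tensor field (so the cutoff pulls out of every slot) matches the paper's own caveat exactly.
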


For instance, if $T$ is an energy momentum stress tensor field on the compact subset $K$ of $M$ and if $\int_K T(v,v) \mu=0$ for every continuous timelike vector field $v,$ then $T=0.$  Of course, the corollary can be proven using the above technique together with the special case of Theorem \ref{timelikeobserverprinciple} in finite dimensions, but our development makes it clear that their are more general applications possible where the purely finite dimensional argument would not suffice.  For instance, if $\lambda$ is a smooth 1-form and $S(u,v)=\lambda(\nabla_u v),$ then $S$ is a tensor operator which is not a tensor field.  If $\int S \mu$ vanishes in this case for all sufficiently small open sets, then this would mean that $\lambda(\nabla_u v)=0$ for all pairs of vector fields $u,v.$  If the monomial form of $\int S \mu$ vanishes for all sufficiently small open sets, then $Sym(S)=0,$ and therefore $\lambda(\nabla_u v)+\lambda(\nabla_v u)=0,$ for all pairs of vector fields $u,v.$



\section{SEMIGROUPS, SEMIGROUP RINGS, GROUP RINGS AND ALGEBRAS}

It is useful to have a broader perspective on the the algebra underlying the {\bf Algebra of Curvature} in order to see how various curvature operators can be produced.  The key here is actually in a very elementary idea from algebra sometimes called {\bf Relative Inversion} or {\bf Pseudo Inversion} as in \cite{DUPREGLAZE2}, \cite{DUPREGLAZE3} and \cite{DUPGLAZEPREV}, which can be defined in any semigroup.

Recall that a {\bf Semigroup} is simply a set with a given associative multiplication.  In general, the multiplication is not commutative. If $S$ is a semigroup and there is no confusion as to the multiplication, $m: S \times S \lra S,$ then we simply write $m(x,y)=xy,$ for any $x,y \in S.$  We say $x \in S$  is {\bf Idempotent} if $x^2=x.$   We denote the set of all idempotents in $S$ by $P(S),$ although it is commonly denoted by $E)S$ in the semigroup literature.  Notice that if $x \in S$ is idempotent, then $xSx$ is a subsemigroup of $S,$ all members of $xSx$ must commute with $x,$ and  $x$ is an identity for $xSx.$  Thus, if $S=xSx,$ then $x$ is an identity for $S.$  Of course, a semigroup can have at most one identity, so we will usually denote it by 1 when no confusion results.  If $S$ has an identity then we say $g \in S$ is invertible provided that there is $h \in S$ with $gh=1=hg,$ in which case we call $h$ an {\bf Inverse} for $g.$  If $g$ is invertible, and $h, k \in S$ are both inverses for $g,$ then

$$h=h(gk)=(hg)k=k,$$
so $g$ can have at most one inverse, and it is denoted $g^{-1}.$  We denote by $G(S)$ the set of all invertible members of $S,$ which is then a {\bf Group} called the {\bf Group of Units} of $S.$  Thus we could define a group as simply a semigroup which equals its own group of units.  If $x \in S$ is any idempotent, then as it is an identity for the subsemigroup $xSx,$ so $G(xSx)$ is the group of units of $xSx,$ consisting of members $g \in xSx$ for which there is $h \in S$ with $gh=x=hg.$  For, $xhx$ belongs to $xSx$ and is an inverse for $g$ in $xSx,$ since $g=xg=gx=xgx,$ and therefore,

$$g(xhx)=x(gh)x=x^3=x=(xhx)g.$$
In any case, for $g$ invertible in $xSx,$ it is therefore natural to write $g^{-x}$ for the inverse of $g$ in $xSx.$

If $S$ is a semigroup, then $S^{op}$ denotes the {\bf Opposite Semigroup} obtained by simply reversing the order of multiplication.

Now, more generally, we say $u,v \in S$ are {\bf Relative Inverses}, or {\bf Muturally Relatively Inverse} or {\bf Pseudoinverses} or {\bf Mutually Pseudoinverse} if both equations


\begin{equation}\label{relinverse0}
uvu = u \mbox{ and } vuv = v.
\end{equation}
When these two equations (\ref{relinverse0}) hold, then we say $u$ and $v$ are relatively invertible or pseudoinvertible and each is called a relative inverse or pseudoinverse of the other.  Thus, $u$ is  {\bf Relatively Invertible} or {\bf Pseudoinvertible} if there is some $v \in S$ for which $u$ and $v$ are mutually pseudoinverse.
If we start with only $u,x \in S$ such that $uxu=u,$ then $ux$ is idempotent, that is, $(ux)^2=ux,$ and setting $v=xux,$ we have $uvu= (ux)^2u=uxu=v,$ so $u$ is pseudoinvertible with pseudoinverse $v.$  In fact, it is easy to see now that 

\begin{equation}\label{relinverse1}
\mbox{For } w \in S \mbox{ to be pseudoinvertible if and only if  } w \in wSw.
\end{equation}

\medskip

We use $W(S)$ to denote the set of all pseudoinvertible members of $S.$

\medskip

If $S$ and $T$ are semigroups, then a map $h:S \lra T$ is a {\bf Semigroup Homomorphism} provided that $h(xy)=h(x)h(y),$ for all $x, y$ in $S.$  We denote the set of all semigroup homomorphisms of $S$ to $T,$ by $Hom(S,T).$  Obviously composition of semigroup homomorphisms are semigroup homomorphisms and the identity map on any semigroup is a semigroup homomorphism, for instance, $id(S),$ the identity map on $S$ is a semigroup homomorphism.   We say $h$ is a {\bf Semigroup Isomorphism} if there is $g \in Hom(T,S)$ with $gh = id(S).$ and $hg = id(T).$  We denote by 
$Iso(S,T)$ the set of isomorphisms of $S$ onto $T.$  We say $h \in Hom(S,S)$ is an {\bf Endomorphism} of $S.$  We define $Aut(S) = Iso(S,S),$ and say $h$ is an {\bf Automorphism} of $S$ if 
$h \in Aut(S).$ 

If $x \in G(S),$ then it is convenient to define $s^x = xsx^{-1}$ for any $s \in S.$
We define $Inn(x)  \in  Aut(S)$ by

$$[Inn(x)](s) = s^x,~s \in S,~x \in G(S).$$
Since

$$s^{(xy)}=[s^x]^y, ~x,y \in G(S),~ s \in S,$$
it follows that

$$Inn \in Hom(G(S),Aut(S))$$
 is a group homomorphism whose image we denote by $Inn(S) \subset Aut(S),$ called the {\bf Inner Automorphism Group} of $S.$

\medskip

Recall that a {\bf Ring} is simply an (additive) abelian group with an associative multiplication which satisfies the associative and distributive (with respect to addition) laws, but in general the multiplication need not be commutative.  Thus, the ring considered only in terms of its multiplication is a semigroup.  If $R$ and $S$ are rings, then $h:R \lra S$ is a {\bf Ring Homomorphism} provided that it is a semigroup homomorphism of both the additive and multiplicative structures.  Then, we have $Hom(R,S),~Iso(R,S),~End(S),~Aut(S),~Inn(S)$ defined where we now require the homomorphisms to preserve the additive as well as multiplicative structures.  Thus, in particular, if $x \in G(R),$ and $s \in R,$ then $s^x = xss^{-1}= [Inn(x)](s)$ is now making $Inn(x)$ automatically an inner automorphism of the ring. If $B$ is an abelian group, then the set of all group endomorphisms of $B$ is denoted $End(B),$ and is naturally a ring with the multiplication defined by composition of endomorphisms.  A {\bf Left Module} over a ring is an (additive) abelian group with a given ring homomorphism of the ring into the endomorphism ring of the abelian group.  Thus, if $R$ is a ring and $M$ is a left R-module, then $M$ is an abelian group with a given homomorphism $h: R \lra End(M).$  As long as $h$ is understood, we simply denote $rm=h(r)(m),$ for any $r \in R,$ and any $m \in M.$  If $M$ is in fact itself a ring, and $End(M)$ is the set of ring homomorphisms of $M$ to itself, then we would call $M$ a {\bf left $R-$algebra}.  If $R$ is a ring or algebra, then $R^{op}$ denotes the {\bf Opposite Ring} obtained by simply reversing the order of multiplication.  Thus, $M$ is a right $R-$module if there is a given ring homomorphism $h:R^{op} \lra End(M),$ in which case we write $mr=h(r)(m).$  If $R$ and $S$ are both rings, and if $M$ is both a left $R-$module and a right $S-$module, then we say it is {\b an $(R,S)-$bimodule} provided that $r(ms)=(rm)s,$ for any $r \in R, m \in M, s \in S.$  For instance, if $A$ is any ring and $R$ is any subring, then $A$ is naturally an $(R,R)-$bimodule, or simply, an $R-$bimodule. likewise, we can speak of a {\bf Bialgebra}. RecalI that   If $R$ has an identity, $1_R,$ then $b=h(1_R)$ is an idempotent in $End(M),$ so $b(M)=RM$ is a submodule, and with 1 now denoting the identity endomorphism of $M,$ we have $M=RM \oplus (1-b)(M),$ as $1-b$ is also an idempotent in $End(M),$ and $(1-b)(M)$ is an $R-$submodule of $M$ on which $R$ acts trivially.

In differential geometry, for any smooth manifold $X,$ the ring $R=C^{\infty}(X)$ of smooth real valued functions on $X$ acts on all the vector spaces of sections of the various tensor bundles over $X,$ so they are all $R-bimodules$ due to the commutativity of $R.$

If $S$ is any multiplicative semigroup and $R$ is any ring, we can form a new ring $R[S]$ consisting of formal linear combinations of members of $S$ with coefficients in $R$ and where we impose the condition that $rs=sr$ for any $r \in R, s \in S.$  To multiply two such linear combinations, just allow an unrestricted distributive law.  In effect, we are forming the {\bf Free module} generated by the set $S$ and using the multiplication in $S$ to naturally extend to a multiplication for $R[S].$  In this way, $R[S]$ becomes an 
$R-$bialgebra, called the {\bf Semigroup} $R-$algebra or ring.  We note here that $R[S]$ has the {\bf Universal Property} that if if $A$ is any $R$-algebra and $h:S \lra A$ is a homomorphism of $S$ into the underlying multiplicative semigroup of $A,$ then there is a unique extension of $h$ to a an $R$-algebra homomorphism $\bar{h}:R[S] \lra A.$   In particular, if $h:S \lra T$ is a homomorphism of semigroups, then it naturally extends to an $R-$bialgebra homomorphism

$$R[h]:R[S] \lra R[T], \mbox{ with } R[h](rs)=rh(s), r \in R, s \in S.$$ 
If we assume that $R$ has an identity, $1_R$, then we have an injective map of $S$ into $R[S]$ which sends each $s$ in $S$ to the element $1_Rs=s1_R$ of $R[S],$ so in this way, we can identify $S \subset R[S],$ by simply the rule $1_Rs=s=s1_R,$ for any $s \in S.$ If $S$ also has an identity, $1_S,$ then $1_R1_S$ is an identity for $R[S].$

In particular, If $S=G$ is a multiplicative group, then since $G$ and $G^{op}$ are the same as sets, the inversion map $I:G \lra G^{op}$ is a group isomorphism which induces an {\bf  Involution} of $R[G]$ called the {\bf  Adjoint or Star Operation}.  Thus, we write 

$$x^{*}=R[I](x), \mbox{ for } x \in R[G], \mbox{ so in particular, }~x^*=x^{-1} \mbox{for } x \in G \subset R[G],$$
and for $x$ in $R[G],$ we call $x^*$ the adjoint of $x.$

If $S$ is a semigroup, $V$ is a vector space, and $h: S \lra End(V)=L(V;V),$ is a homomorphism of the multiplicative semigroups, then we call $h$ a {\bf Representation} of $S$ on $V.$  By the universal property it then extends to a representation of $R[S]$ on $V,$ meaning of course an $R$-algebra homomorphism of $R[S]$ into $L(V;V).$  Then, we see for any idempotent $x \in P(S),$ that $h(x)$ is then a linear retraction on $V,$ and hence a projection onto a linear subspace of $V.$  If $u$ and $v$ are mutually pseudoinverse in $S,$ then $vu$ is called the {\bf Domain Projection} of $u,v$ and $uv$ is called the {\bf Range Projection} of $u,v.$  Setting $p=vu$ and $q=uv,$ we have 

\begin{equation}\label{relinverse2}
u=up=qu=qup \in qSp \mbox{ and } v=pu=vq=pvq \in pSq, \mbox{ thus } u \in (qSp) \cap (uSu).
\end{equation} 
Moreover, even though in general $u$ may have many pseudoinverses, the particular pseudoinverse $v$ is uniquely determined by the triple $(q,u,p)$ and we denote this by writing

\begin{equation}\label{relinverse3}
v = u^{-(p,q)} \mbox{ if } vu=p,~ uv=q, ~ u = qup, \mbox{ in which case, $v$ is a pseudoinverse for } u.
\end{equation}
For, notice that $qu=v$ and therefore $qu=v=vp=qvp.$
We then see that for the representation $h$ of $S$ on $V,$ that $h(p)V$ and $h(q)V$ are linear subspaces of $V$ and $h(u)$ carries $h(p)V$ isomorphically onto $h(q)V$ and is zero on the complementary subspace $h(1-p)V.$  For this reason, in algebra, $h(u)$ is called a partial isomorphism and likewise, $u$ can be called a {\bf Partial Isomorphism}.  Then $h(v)$ is the inverse partial isomorphism to $h(u)$ which is zero on $h(1-q)V$ and carries $h(q)V$ isomorphically onto the subspace $h(p)V$ inverse to that defined by $h(u).$

The main example of this phenomena of our interest is in the case of curvature algebra, where we will see that the constructions of Jacobi curvature operators from Riemann curvature operators is accomplished with a computable pair of mutually pseudoinverse operations in the the group ring of the appropriate permutation group.

Notice that if $V$ is a real vector space and $h:G \lra End(V)=L(V;V)$ is a representation of the group $G,$ which of course is simply a group homomorphism of $G$ into the group of units of $L(V;V),$ then it extends uniquely to a ring homomorphism of $R[G]$ into $End(V),$ thus giving a representation of $R[G]$ on $V.$  If $V$ has a non-degenerate inner product, then it gives $End(V)$ a natural adjoint operation or star operation, and if $h$ is a star-homomorphism on $G,$ then the resulting representation of $R[G]$ is a star-homomorphism and thus a star-representation of $R[G]$ on $V.$  Thus, in this case, each member of $G$ acts as an isometric linear map of $V$ relative to the inner product.

Obviously, if $H$ is a subgroup of $G,$ then $R[H]$ is a star-sub-algebra of $R[G].$  If $H$ is a finite subgroup, then $R[H]$ is finitely generated as a $R-$algebra.  As $H \subset G \subset R[G],$ we can form the finite sum

\begin{equation}\label{cycleH}
C_H=\sum_{h \in H} h \in R[H] \subset R[G].
\end{equation}
Notice then obviously, as $gH=H=Hg,$ for every $g \in H,$ we have

$$g(C_H)=C_H=(C_H)g, \mbox{ for all } g \in H,$$ 
and therfore

\begin{equation}\label{cycleHg}
g^k(C_H)=C_H=(C_H)g^k \mbox{ for all } g \in H \mbox{ for any integer }k.
\end{equation}
In addition, since $H=H^{-1},$ we must have

\begin{equation}\label{cycleHstar}
C_H=(C_H)^*,
\end{equation}
so $C_H$ is self-adjoint.  If $n$ is the cardinality of $H,$ it now follows from (\ref{cycleHg}) that

\begin{equation}\label{eigencycle}
C_H^2=nC_H, 
\end{equation}
and therefore

\begin{equation}\label{cycleproject}
Q_H=\frac{1}{n}C_H \mbox{ satisfies } Q_H^*=Q_H=Q_H^2 \mbox{ and } Q_HC_H=C_H=C_HQ_H.
\end{equation}
If $M$ is a left $R[G]-$module, then for any $x \in M,$ we have that $m=Q_Hx \in M$ satisfies $C_Hm=m.$  If $m \in M$ satisfies $C_Hm=m,$ then we say $m$ is {\bf $H$-cyclic}.  Thus for each $m \in M, ~Q_Hm$ is $H$-cyclic.  If $m \in M$ and $C_Hm=0,$ then we say that $m$ is {\bf $H$-acyclic}.   Define

$$Q^{-}_H=1-Q_H.$$
Then

$$[Q^{-}_H]^2=Q^{-}_H=[Q^{-}_H]^*,~~       Q_H+Q^{-}_H=1, ~~Q_HQ^{-}_H=0=Q^{-}_HQ_H.$$
If $x \in M,$ and if we set $m=Q^{-}_H x,$ then $m$ is $H$-acyclic, and we see that each member of $M$ decomposes uniquely as a sum of cyclic and acyclic members of $M.$  That is, $Q_H M$ is the submodule of $H$-cyclic members of $M,$ and $Q^{-}_H M$ is the submodule of $H$-acyclic members of $M,$ so we have the direct sum decomposition of $M$ itself:

\begin{equation}\label{cycledecomp}
M=Q_H M \oplus  Q^{-}_H M.
\end{equation}

Suppose that $g \in G$ has order $n,$ so $g$ generates a subgroup of $G$ of cardinality (or order) $n.$  Thus, $g^n=1, $ but $g^k \neq 1$ if $k$ is a positive integer less than 1.  Thus the subgroup $H= \langle g \rangle$  of $G$ generated by $g$ is {\bf Cyclic} and has order $n,$ so everything we have just done applies here to $C_{\langle g \rangle}.$

When the group $G$ acts on a set $X,$ for each $x \in X,$ we call $Gx$ the {\bf Orbit} of $x.$  The set of all orbits is denoted $X/G$ and called the orbit set.  If $H \subset G$ and $W \subset X$ with $HW \subset W,$ then we say $W$ is {\bf $H$-invariant}.  Notice that if $W$ is $H$-invariant, so is $X \setminus W.$  If $g \in G$ and $gx=x,$ (so $x$ is $g$-invariant), we call $x$ a {\bf fixed point} of $g$ and if the orbit of $x$ is just the one point set consisting of $x$ itself, then $x$ is a fixed point of the $G$-action.  Obviously, $X/G$ is a set of subsets of $X$ forming a {\bf Partition} of $X.$  If $x \in G,$ then we denote by $G_x$ the set of members of $G$ which fix $x,$ and easily see that $G_x$ is a subgroup of $G.$  Let $F(g,X)$ be the subset of $X$ consisting of points of $X$ that $g$ fixes.  Obviously,  $F(g,X)$ is $g$-invariant, hence so is $X \setminus F(g,X),$ and we call this the {\bf Support} of $g.$  Let $G_f \subset G,$ denote the set of all $g \in G$ for which $X \setminus F(g,X)$ is finite, that is, the set of all elements having finite support.  Notice for $g \in G,$ that $g$ and $g^{-1}$ have the same support, and for $g,h \in G,$ if the support of $g$ is $A \subset X$ and the support of $h$ is $B \subset X,$ then the support of $gh$ is contained in $A \cup B.$   Thus, $G_f$ is a subgroup of $G.$




\section{CURVATURE ALGEBRA}

Of particular interest for dealing with curvature type identities is the case where $G$ is a permutation group.  To define the relevant notation, it is simplest to deal with permutations of the set of all natural numbers.

As usual, $\bN$ denotes the set of positive natural whole numbers.  Let $G=P(\bN)$ denote the group of all bijective self maps of $\bN,$ so $G$ acts on $X=\bN,$ and we call these maps {\bf Permutations}.  We let $P_f(\bN)=G_f$ denote the set of permutations $\sigma$ of $\bN$ which have finite support.  Notice that $\sigma \in G,$ has finite support if and only if it has the property that for some positive $n$ it is the case that $\sigma(x)=x $ for all $x \in \bN$ with $x > n.$  We denote by $P(n)$ the subgroup of $G_f$ consisting of those permutations for which the support is contained in the subset $\{1,2,3,...,n\}.$   Thus, $P_f(\bN) \subset P(\bN)$ is a subgroup of the group $P(\bN)$, containing $P(n)$ as a subgroup for each $n \in \bN$ and $P(1) \subset P(2) \subset... \subset P(n) \subset P(n+1) \subset ...P_f(\bN)$ is a tower of subgroups whose union is $P_f(\bN).$  

Denote $\{1,2,3,...,n\}$ by $\bN_n,$ so $\bN_n$ is $P(n)$-invariant.  If $\sigma \in P(n),$ we say $\sigma$ is a {\bf Cycle} of {\bf Length} $k$ provided that there is a $k-$tuple $(x_1,x_2,x_3,...,x_k),$ of distinct positive integers in $\bN_n$ with $\sigma(x_k)=x_1$ and 
$\sigma(x_j)=x_{j+1},$ for any $j<k.$  In this case we will write

\begin{equation}\label{perm2}
\sigma=[x_1,x_2,x_3,...,x_k], \mbox{ so also } \sigma=[x_k,x_1,x_2,x_3,...,x_{k-1}], 
\end{equation}
and so on, which is the obvious fact that the entries can be cyclically permuted without changing $\sigma.$  Obviously, $\sigma^k=1$ in this case, so $\sigma$ generates a cyclic subgroup of order $k.$  A cycle of length of length 2 is called a {\bf Transposition}.  We say that $\alpha$ and $\beta$ in $G$ are {\bf Disjoint} if their supports are disjoint subsets of $\bN.$  Notice that disjoint permutations commute.  If $\sigma$ is any member of $P(n),$ then the cyclic subgroup $H$ generated by $\sigma$ acts on $\bN_n$ and the orbit set $\bN_n / H$ gives the set of disjoint cycles whose product is $\sigma,$ showing each permutation is a product of disjoint cycles.  That is to say, when $P(n)$ acts on 
$\{1,2,3,...,n\}$ for $\sigma \in P(n),$ the cyclic subgroup generated by $\sigma$ gives an orbit decomposition which then gives the unique decomposition of $\sigma$ itself as a product of disjoint cycles, the order being immaterial as disjoint cycles commute.
Also, if $\sigma^2=1,$ then each orbit is at most a two element set and therefore $\sigma$ is a product of disjoint transpositions.  In particular, we can notice that if $\sigma$ is a cycle of length $k$ and $k$ is odd, then $\sigma^2$ is also a cycle of length $k,$ whereas if $k=2m,$ then 
$[(\sigma)^m]^2=1,$ so $\sigma^m$ is a product of disjoint transpositions.  For example,

$$[1,2,3,4]^2=[1,3][2,4].$$  

For computations it is useful to note that if $\sigma$ and $\lambda$ are cycles which cyclically permute the sets $A,B \subset \bN$ and if 
$A \cap B = \{c\}$ is a singleton, then we can write $\sigma=[x_1,x_2...,c]$ and $\lambda=[c,y_2,y_3,...]$ and check that

\begin{equation}\label{perm3}
\sigma \lambda= [x_1,x_2...,c,y_2,y_3,...],
\end{equation}
which in particular shows every cycle is a product of transpositions.  on the other hand, if $A$ and $B$ are disjoint, then $\sigma$ and $\lambda$ commute.  If $\sigma, \lambda$ are arbitrary members of $P(n),$ then conjugation of $\lambda$ by $\sigma$ results in $\sigma \lambda \sigma^{-1}$ which is the permutation sending $b$ to $c$ if $\sigma(k)=b$ and $c=\sigma(\lambda(k))=c.$  Thus for conjugation of a cycle we have

\begin{equation}\label{perm4}
\sigma [x_1,x_2,x_3,...,x_k] \sigma^{-1}=[\sigma(x_1),\sigma(x_2),...,\sigma(x_k)].
\end{equation}
In particular, if $\sigma = [a,b]$ is simply a transposition, then $\sigma^2=1,$ so $\sigma^{-1}=\sigma,$ and therefore,

\begin{equation}\label{perm5}
\sigma [x_1,x_2,x_3,...,x_k]\sigma=[\sigma(x_1),\sigma(x_2),...,\sigma(x_k)], \mbox{ for } \sigma \mbox{ a transposition.}
\end{equation}
For example,

\begin{equation}\label{perm6}
[1,2][1,3][1,2]=[2,3]=[1,3][1,2][1,3].
\end{equation}
More generally, for transpositions, if $\alpha$ and $\beta$ are transpositions, then either (1) they are disjoint and commute, or (2) they are identical, or (3) their supports have exactly one element in common, in which case, we can find 3 elements $u,v,w$ of $\bN$ with

$$\alpha=[u,v],~\beta=[v,w] \mbox{ so then } \alpha \beta=[u,v,w]$$
and then by (\ref{perm5}) we get the

\begin{equation}\label{transflip}
\mbox{TRANSPOSITION FLIP:  }\alpha \beta \alpha =[u,w]= \beta \alpha \beta.
\end{equation}
Thus the three possibilities are (1) $\alpha=\beta,$ or (2) $\alpha \beta= \beta \alpha,$ or (3) $ \alpha \beta \alpha = \beta \alpha \beta,$ and of course (2) and (3) together are equivalent to (1).

Let us define

$$\A_P=\bR[P_f(\bN)] \mbox{ and } \A_P(n)=\A_{P(n)}=\bR[P(n)], ~n \mbox{ a positive integer, so  } \A_{P(n)} \subset \A_P. $$
In addition, keep in mind that $\A_P$ is a star-algebra with the adjoint operation the extension of inversion in the group to the whole algebra.  In particular, for each $n \in \bN$ the group algebra $\A_{P(n)}$ is a star-subalgebra of $\A_P.$

Let $sgn(\sigma)$ denote the sign of the permutation $\sigma,$ defined as $+1$ in case $\sigma$ is an even permutation and $-1$ otherwise, so $sgn$ is a group homomorphism of the permutation group $P(r)$ into the multiplicative group $\{-1,1\}.$ 
Many computations with these permutations acting as operators on various vector spaces of tensors can be more readily carried out in the group algebra 
$\A_P=\bR[P_f(\bN)],$ over $\bR$ generated by $P_f(\bN),$ for instance, 

\begin{equation}\label{altr}
Alt_r \in \A_P(r)  \mbox{ is } Alt_r=\frac{1}{r!}\sum_{\sigma \in P(r)} sgn(\sigma) \sigma.
\end{equation}
and

\begin{equation}\label{symr}
Sym_r \in \A_P(r)  \mbox{ is } Sym_r=\frac{1}{r!}\sum_{\sigma \in P(r)} \sigma.
\end{equation}
Thus, as $P(r)$ is a subgroup of $G$ of order $r!,$ it follows that

$$Sym_r=C_{P(r)} / (r!)=Q_{P(r)},$$
a self adjoint projection in $\A_P(r).$  Likewise, since $sgn(\sigma^{-1})=sgn(\sigma) $ for every $\sigma \in P(r),$ it follows that $Alt_r$ is self-adjoint in $\A_P(r).$  As $sgn:P(r) \lra \{-1,1\}$ is a group homomorphism, it follows that also $Alt_r$ is idempotent, and thus is a self adjoint projection in $\A_P(r).$  

The natural representations of $\A_P$ on various tensor algebras preserve the star operation, if there is a non-degenerate inner product on the underlying vector space of the tensor algebra so in particular, the image of any self adjoint idempotent is a self adjoint projection operator on the particular vector space of tensors on which the permutation group is acting via permutation of arguments in the tensors.  Since any transposition (cycle of length 2) has square equal to  1, it is therefore its own inverse, so is a  self adjoint projection, so 
it follows any real linear combination in $\A_P$ of transpositions is a self-adjoint member of $\A_P.$

To be more specific, for any permutation $\sigma$ of $\{1,2,...,r\},$ and any tensor $A$ of rank $r,$ on the vector space $E$ with values in the vector space $F,$ let $\sigma(A)=\sigma A$ denote the result of permuting the arguments of $A$ via $\sigma,$ so

\begin{equation}\label{perm0}
[\sigma(A)](v_1,v_2,...,v_r)=A(v_{\sigma(1)},v_{\sigma(2)},...,v_{\sigma(r)}).  
\end{equation}

In this way, the permutation $\sigma$ defines a linear isomorphism

\begin{equation}\label{perm1}
\sigma: L^r(E;F) \lra L^r(E;F), 
\end{equation}
and therefore we have here a linear action or representation of the permutation group $P(r)$ of all permutations of $\{1,2,3,...,r\}$  on the vector space $L^r(E;F),$ which then makes $L^r(E;F)$ a left $\A_P(r)$-module.

Thus, the {\it alternation operator}, $Alt_r=Alt,$ is defined by

\begin{equation}\label{alternationoperator}
Alt_r(A)=\frac{1}{r!}\sum_{\sigma \in P(r)} sgn(\sigma)[\sigma(A)],
\end{equation}
whereas the {\it symmetrization operator}, $Sym_r=Sym,$ is defined by

\begin{equation}\label{symmetrizationoperator}
Sym_r(A)=\frac{1}{r!}\sum_{\sigma \in P(r)}  [\sigma(A)].
\end{equation}

Then $Alt \circ Alt=Alt$ and $Sym \circ Sym=Sym,$ so $Alt$ is a projection of $L^r(E;F)$ onto the subspace $L^r_{alt}(E;F)$ of alternating $F-$valued tensors on $E$ and $Sym$ is a projection of $L^r(E;F)$ onto the subspace $L^r_{sym}(E;F)$ of symmetric 
$F-$valued tensors on $E.$

Let $Cycl(r)$ denote the cyclic subgroup of $P(r)$ generated by the "upward shift" cycle $\sigma_r$ defined by

\begin{equation}\label{cyclr1}
\sigma_r=[1,2,3,...r].  
\end{equation}
We can then define the {\it cycling operator}, $C_r=Cycl_r=Cycl,$ by

\begin{equation}\label{cycle operator}
C_r=Cycl_r=C_{Cycle(r)}=\sum_{\sigma \in Cycl(r)} \sigma,
\end{equation}
so we can note that in $\A_P(r),$ we have also

\begin{equation}\label{cycle1}
C_r=Cycl_r=\sum_{k=0}^{r-1} (\sigma_r)^k.
\end{equation}
Then by (\ref{cycleproject}) we know $C_r/r$ is a self adjoint projection onto the cyclic members of any $\A_P(r)$-module whose complementary projection projects onto the acyclic members. 

In general, any transposition $\tau$ is an odd permutation.  Then in case $r=3,$ the full permutation group consists merely of the cyclic subgroup of order 3 generated by a non-trivial cycle, say $\sigma_3$ of (\ref{cyclr1}), which is an even permutation as are its powers, and its coset of $\tau$ consisting of the three odd permutations.  Thus, we have

\begin{equation}\label{altcycle1}
Alt_3=\frac{1}{3!}[Cycl_3-(Cycl_3)\tau]. 
\end{equation}
In particular, we have then for any $\A_P(3)$-module $M$ and any $A \in M$ with $\tau A=-A,$

\begin{equation}\label{altcycle2}
Alt_3(A)=\frac{1}{3}Cycl_3(A), \mbox{ if }\tau(A)=-A. 
\end{equation}
Obviously, (\ref{altcycle1}) and (\ref{altcycle2}) are relevant to the Bianchi identities in curvature.
For members $x,y$ of the ring $R$ we use $[x,y]=xy-yx$ for the commutator of $x$ and $y.$

\bigskip

In the following computations, throughout the remainder of this section, for brevity, unless otherwise specified, we use $C=C_3,~c=\sigma_3$ and because of he repeated use of certain special permutations, we find it convenient to define

$$\tau=[1,2]=\sigma_2,$$
$$\alpha=[3,4],$$
$$\nu=[1,3],$$
$$\beta=[2,4]$$
$$\chi=\beta \nu =\nu  \beta=[1,3][2,4],$$
$$\gamma=[2,3],$$
$$\lambda=[1,4]$$
$$\sigma_3=[1,2,3],$$
$$C=C_3=Cycl_3=1+\sigma_3+\sigma_3^2=1+\sigma_3 + \sigma_3^{-1}=1+c+c^2.$$
Thus, in $P(4)$ there are only six transpositions, and we have given each a symbol.

\bigskip

Then, using (\ref{perm3}), (\ref{perm4}), (\ref{perm5},  (\ref{perm6}), and  (\ref{transflip}), the transposition flip, we find:




\begin{proposition}\label{permcalc}

\begin{equation}\label{cycl1}
 c^3=1, ~c^2=c^{-1},
\end{equation}
as
$$[1,2,3]^3=1,  \mbox{ so }[1,2,3]^2=[1,2,3]^{-1}=[3,2,1],$$
\begin{equation}\label{cycl1a}
c=\nu \tau,
\end{equation}
as
$$[1,2,3]=[3,1,2]=[3,1][1,2]=[1,3][1,2],$$
\begin{equation}\label{cycl1b}
c^2=\tau \nu
\end{equation}
as
$$[1,2,3]^2=[1,3,2]=[2,1,3]=[2,1][1,3]=[1,2][1,3],$$
\begin{equation}\label{cycl2}
\gamma \tau \gamma= \tau \gamma \tau=c \tau=\nu
\end{equation}
as
$$[1,2,3][1,2]=[3,1,2][1,2]=[3,1][1,2][1,2]=[1,3],$$

\begin{equation}\label{cycl3}
 c \nu=\gamma=\tau \nu \tau=\nu \tau \nu
\end{equation}
as
$$[1,2,3][1,3]=[3,1,2][1,3]=[1,3][1,2][1,3]=[2,3]=[1,2][1,3][1,2],$$
and
$$[1,2,3]^2[1,2]=[1,2,3][1,3]=[2,3] \mbox{ and } [1,2,3]^2[1,3]=[1,2,3][[2,3]=[1,2][2,3][2,3]=[1,2],$$
or

\begin{equation}\label{cycl4}
c^2 \tau=c \nu=\tau \gamma \nu=\gamma \mbox{ and } c^2 \nu=c \gamma= \tau.
\end{equation}

\end{proposition}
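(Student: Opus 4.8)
The plan is to deduce every line of the proposition from the purely combinatorial cycle calculus already assembled in (\ref{perm3})--(\ref{perm6}) and the transposition flip (\ref{transflip}), using only the auxiliary facts that $\tau,\nu,\gamma$ are transpositions (so $\tau^2=\nu^2=\gamma^2=1$) and that $c=[1,2,3]$ is a $3$-cycle. The one point that must be nailed down before anything else is the composition convention: in (\ref{perm3}) the product $\sigma\lambda$ is read right-to-left, with the common symbol appearing as the \emph{last} entry of $\sigma$ and the \emph{first} entry of $\lambda$; I would fix this at the outset and apply it without deviation, since essentially every identity below is a one-line instance of (\ref{perm3}) or (\ref{transflip}).

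First, (\ref{cycl1}) is immediate: $c=[1,2,3]$ has order $3$, so $c^3=1$ and $c^2=c^{-1}=[3,2,1]$. For (\ref{cycl1a}) I would write $\nu=[3,1]$ and apply (\ref{perm3}) across the common symbol $1$ to get $\nu\tau=[3,1][1,2]=[3,1,2]=[1,2,3]=c$; symmetrically, writing $\tau=[2,1]$ gives $\tau\nu=[2,1][1,3]=[2,1,3]=[1,3,2]=c^2$, which is (\ref{cycl1b}). For (\ref{cycl2}), apply the transposition flip (\ref{transflip}) with $\alpha=\gamma=[3,2]$ and $\beta=\tau=[2,1]$ (common symbol $2$): this yields $\gamma\tau\gamma=[3,1]=\nu=\tau\gamma\tau$, and then $c\tau=(\nu\tau)\tau=\nu$ by (\ref{cycl1a}) and $\tau^2=1$. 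For (\ref{cycl3}), apply (\ref{transflip}) with $\alpha=\tau=[2,1]$ and $\beta=\nu=[1,3]$ (common symbol $1$) to obtain $\tau\nu\tau=[2,3]=\gamma=\nu\tau\nu$, and then $c\nu=(\nu\tau)\nu=\nu\tau\nu=\gamma$ again by (\ref{cycl1a}).

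Finally, (\ref{cycl4}) follows by composing the identities already in hand rather than by fresh computation: $c^2\tau=c(c\tau)=c\nu=\gamma$ by (\ref{cycl2}) and (\ref{cycl3}); next, by (\ref{perm3}) across the common symbol $2$ one has $\tau\gamma=[1,2][2,3]=[1,2,3]=c$, hence $\tau\gamma\nu=c\nu=\gamma$; and $c^2\nu=(\tau\nu)\nu=\tau$ by (\ref{cycl1b}) and $\nu^2=1$, while $c\gamma=c(c\nu)=c^2\nu=\tau$ using (\ref{cycl3}). The \emph{hard part} here is purely bookkeeping: there is no conceptual obstacle, the sole risk being an orientation or composition slip in applying (\ref{perm3})--(\ref{transflip}), which the preliminary fixing of conventions is designed to eliminate. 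As a safeguard I would cross-check a couple of the trickier lines (for instance $c\tau=\nu$ and $c\gamma=\tau$) by direct evaluation of the permutations on $\{1,2,3\}$.
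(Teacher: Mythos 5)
Your proposal is correct and follows essentially the same route as the paper: the proposition's own justification is exactly the inline cycle-calculus computations via (\ref{perm3}) and the transposition flip (\ref{transflip}), which is what you carry out (your only deviation is deriving (\ref{cycl4}) by composing the earlier identities rather than recomputing, which is a harmless economy). Your reading of the right-to-left composition convention in (\ref{perm3}) is the correct one, and all the individual identities check out.
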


\medskip


\begin{proposition}\label{cyclprop1}

In $\A_P,$

\begin{equation}\label{cycl5}
Cycl_3=1+[1,3][1,2]+[1,2][1,3]  \mbox{ or } C=1+\nu \tau + \tau \nu
\end{equation}

\begin{equation}\label{cycl6}
[1,2](Cycl_3)=(Cycl_3 )[1,2]=[1,2]+[1,3]+[2,3]=[1,3](Cycl_3)=(Cycl_3) [1,3]
\end{equation}
or
$$[C,\tau]=0=[C,\nu] \mbox{ and } C \tau=\tau + \nu + \gamma =C \nu.$$
In particular,

$$C \tau =\tau C=\nu C=C \nu.$$

\end{proposition}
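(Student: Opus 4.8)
The plan is to derive Proposition~\ref{cyclprop1} directly from Proposition~\ref{permcalc} and the expansion $C = C_3 = Cycl_3 = 1 + c + c^2$, where $c = \sigma_3 = [1,2,3]$; everything is then routine bookkeeping in the group algebra $\A_P$, with no new permutation identities needed. First I would establish the expansion (\ref{cycl5}): by (\ref{cycl1a}) we have $c = \nu\tau$ and by (\ref{cycl1b}) we have $c^2 = \tau\nu$, so substituting into $C = 1 + c + c^2$ gives $C = 1 + \nu\tau + \tau\nu$ at once.

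Next I would compute $C\tau$ and $\tau C$. Expanding on the right, $C\tau = \tau + c\tau + c^2\tau$; by (\ref{cycl2}) we have $c\tau = \nu$, and $c^2\tau = \gamma$ by (\ref{cycl4}), so $C\tau = \tau + \nu + \gamma$. Expanding on the left, $\tau C = \tau + \tau c + \tau c^2$; here $\tau c = \tau\nu\tau = \gamma$ by (\ref{cycl3}), while $\tau c^2 = \tau(\tau\nu) = \nu$ using $\tau^2 = 1$, so $\tau C = \tau + \gamma + \nu = C\tau$. This simultaneously gives $[1,2](Cycl_3) = (Cycl_3)[1,2] = [1,2] + [1,3] + [2,3]$ and the commutation $[C,\tau] = 0$.

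The analogous computation for $\nu$ runs the same way: $C\nu = \nu + c\nu + c^2\nu$ with $c\nu = \gamma$ and $c^2\nu = \tau$ from (\ref{cycl4}), so $C\nu = \nu + \gamma + \tau$; and $\nu C = \nu + \nu c + \nu c^2$ with $\nu c = \nu(\nu\tau) = \tau$ and $\nu c^2 = \nu(\tau\nu) = \nu\tau\nu = \gamma$ by (\ref{cycl3}), so $\nu C = \nu + \tau + \gamma = C\nu$, giving $[1,3](Cycl_3) = (Cycl_3)[1,3] = [1,2] + [1,3] + [2,3]$ and $[C,\nu] = 0$. Finally, since both $C\tau$ and $C\nu$ equal the element $\tau + \nu + \gamma$ of $\A_P$, we get $C\tau = C\nu$, and combining this with the two commutation relations yields the chain $C\tau = \tau C = \nu C = C\nu$. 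There is no genuine obstacle here; the only point to watch is consistently invoking $\tau^2 = \nu^2 = 1$ and the transposition-flip relations already proved in Proposition~\ref{permcalc}, rather than re-deriving any cycle arithmetic from scratch.
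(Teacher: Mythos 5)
Your proposal is correct and follows essentially the same route as the paper, which likewise obtains (\ref{cycl5}) from $c=\nu\tau$ and $c^2=\tau\nu$ and then derives (\ref{cycl6}) from the identities (\ref{cycl2}), (\ref{cycl3}), and (\ref{cycl4}) of Proposition~\ref{permcalc} (the paper's proof is just a terser citation of the same lemmas). Your explicit expansion of $C\tau$, $\tau C$, $C\nu$, $\nu C$ term by term is exactly the bookkeeping the paper leaves to the reader.
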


\begin{proof}
We use (\ref{cycl1a}) and (\ref{cycl1b}) to get (\ref{cycl5}) and then use (\ref{cycl5}) and (\ref{perm6}) to get (\ref{cycl6}) or alternately, use (\ref{cycl2}),(\ref{cycl3}), and (\ref{cycl4}) to get (\ref{cycl6}).
\end{proof}

If $x \in \A_P$ and $x^2=1,$ then define

$$Q^{\pm}_x=\frac{1}{2}(1 \pm x),$$
so then,

$$x Q^{\pm}_x =  Q^{\pm}_x x= \pm  Q^{\pm}_x,$$

$$x=Q^{+}_x - Q^{-}_x, ~Q^{+}_x+Q^{-}_x=1, ~[Q^{\pm}_x]^2=Q^{\pm}_x, \mbox{ and } Q^{+}_xQ^{-}_x=0=Q^{-}_xQ^{+}_x,$$
and if $x=x^*,$ then $Q^{+}_x$  and  $Q^{-}_x$ are both self adjoint in $\A_P.$  Thus $Q^{+}_x$ and $Q^{-}_x$ give a spectral decomposition of $x$ in $\A_P.$  Also, if $x$ is self-adjoint, then so are $Q^{\pm}_x,$ and if in addition we have $y \in \A_P$ which commutes with $x,$ then $y$ also commutes with $Q^{\pm}_x.$  Thus, $Q^{\pm}_{\tau}$ and $Q^{\pm}_{\nu},$ BOTH commute with $C_3=Cycl_3.$

Also, we note in particular, for transpositions $\phi$ and $\theta,$

$$ \phi Q^{\pm}_{\theta} \phi =\theta Q^{\pm}_{\phi} \theta, \mbox{ if } \phi \theta \phi = \theta \phi \theta,$$
which is the case whenever we apply the transposition flip, as in, for example,

$$\nu Q^{\pm}_{\tau} \nu = \tau Q^{\pm}_{\nu} \tau.$$

Now, in $\A_P(3) \subset \A_P$ we can define the elements $R_1$ and $J$ by


\begin{definition}\label{JANDR}
$$J=Q^+_{\tau} \nu Q^{-}_{\tau} \mbox{ and } R_1=Q^{-}_{\tau} \nu Q^+_{\tau}.$$
\end{definition}
Then obviously, 

$$J^*=R_1,~R^*_1=J,~J^2=0 \mbox{  and  } R_1^2=0.$$
In addition, using the fact that $\nu C_3=C_3 \nu=C_3 \tau = \tau C_3,$ by (\ref{cycl5}), we have

\begin{equation}\label{acyclicJ}
J C_3=C_3J=Q^{+}_{\tau} C_3 \nu Q^{-}_{\tau}=Q^{+}_{\tau} \tau C_3   Q^{-}_{\tau}= Q^{+}_{\tau}  Q^{-}_{\tau} C_3=0,
\end{equation}
and

\begin{equation}\label{acyclicR}
R_1 C_3=C_3R_1=Q^{-}_{\tau} C_3 \nu Q^{+}_{\tau}=Q^{-}_{\tau}  C_3  \tau  Q^{+}_{\tau}= Q^{-}_{\tau}  Q^{+}_{\tau} C_3=0.
\end{equation}
In particular, if $M$ is a left $\A_P(3)$-module and $m \in M,$ then $Jm$ and $R_1 m$ are acyclic.  We have therefore proved:


\begin{proposition}\label{acyclic1}
$$J^2=0, ~R_1^2=0,~J C_3=C_3 J =0\mbox{ and } R_1 C_3=C_3 R_1=0.$$
\end{proposition}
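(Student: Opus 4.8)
The plan is to verify the four claimed identities directly from the definitions $J=Q^+_{\tau}\nu Q^-_{\tau}$ and $R_1=Q^-_{\tau}\nu Q^+_{\tau}$, together with the algebraic facts about $Q^{\pm}_{\tau}$ already recorded in the excerpt. Two ingredients do all the work: first, the orthogonality relations $[Q^{\pm}_{\tau}]^2=Q^{\pm}_{\tau}$ and $Q^+_{\tau}Q^-_{\tau}=0=Q^-_{\tau}Q^+_{\tau}$; second, the commutation relation $\nu C_3=C_3\nu=C_3\tau=\tau C_3$, which follows from Proposition \ref{cyclprop1}, equation (\ref{cycl6}).

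First I would compute $J^2$. Since $J=Q^+_{\tau}\nu Q^-_{\tau}$, we get
$$J^2=Q^+_{\tau}\nu Q^-_{\tau}Q^+_{\tau}\nu Q^-_{\tau}=0,$$
because $Q^-_{\tau}Q^+_{\tau}=0$ sits in the middle. The same reasoning gives $R_1^2=Q^-_{\tau}\nu Q^+_{\tau}Q^-_{\tau}\nu Q^+_{\tau}=0$, using $Q^+_{\tau}Q^-_{\tau}=0$. These are the easy halves.

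Next I would handle $JC_3$ and $C_3J$. Since $C_3$ commutes with $\tau$ (by (\ref{cycl6})) it commutes with $Q^{\pm}_{\tau}$ as well, which is already noted right after Proposition \ref{cyclprop1}; hence $JC_3=C_3J$ once I show either one vanishes. For $C_3J$, push $C_3$ inward: $C_3J=C_3Q^+_{\tau}\nu Q^-_{\tau}=Q^+_{\tau}(C_3\nu)Q^-_{\tau}=Q^+_{\tau}(\tau C_3)Q^-_{\tau}=Q^+_{\tau}\tau Q^-_{\tau}C_3$. Now $Q^+_{\tau}\tau=Q^+_{\tau}$ since $\tau Q^+_{\tau}=Q^+_{\tau}$ (from $\tau Q^{\pm}_{\tau}=\pm Q^{\pm}_{\tau}$), so $C_3J=Q^+_{\tau}Q^-_{\tau}C_3=0$. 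This is exactly the chain in (\ref{acyclicJ}). The computation for $R_1$ is identical with the roles of $+$ and $-$ swapped: $C_3R_1=Q^-_{\tau}(C_3\nu)Q^+_{\tau}=Q^-_{\tau}\tau C_3 Q^+_{\tau}=Q^-_{\tau}\tau Q^+_{\tau}C_3=-Q^-_{\tau}Q^+_{\tau}C_3=0$, and likewise $R_1C_3=0$, as in (\ref{acyclicR}). Finally, the remark that $Jm$ and $R_1m$ are acyclic for any $\A_P(3)$-module $m$ is just the statement $C_3(Jm)=(C_3J)m=0$ — immediate from what was proved. There is no real obstacle here; the only thing to be careful about is the direction in which one slides $C_3$ past the projections and the sign bookkeeping in $\tau Q^{\pm}_{\tau}=\pm Q^{\pm}_{\tau}$, and that all the needed commutation facts are genuinely those established in (\ref{cycl6}) and the paragraph following it.
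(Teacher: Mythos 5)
Your proof is correct and follows essentially the same route as the paper: the nilpotency of $J$ and $R_1$ from $Q^{-}_{\tau}Q^{+}_{\tau}=0=Q^{+}_{\tau}Q^{-}_{\tau}$, and the vanishing against $C_3$ by sliding $C_3$ through via $C_3\nu=\tau C_3$ (from (\ref{cycl6})) and absorbing $\tau$ into the adjacent projection, exactly as in (\ref{acyclicJ}) and (\ref{acyclicR}). The sign bookkeeping in your $C_3R_1$ computation ($Q^{-}_{\tau}\tau=-Q^{-}_{\tau}$) is consistent with the paper's variant, which instead uses $\tau Q^{+}_{\tau}=Q^{+}_{\tau}$; both yield zero.
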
 

\begin{proof}
Apply (\ref{acyclicJ}) and (\ref{acyclicR}).
\end{proof}

Next, we calculate $J R_1$ and $R_1 J.$  Using the transposition flip, and $\tau Q^{\pm}_{\tau}= \pm Q^{\pm}_{\tau},$

$$J R_1=Q^{+}_{\tau} \nu Q^{-}_{\tau} Q^{-}_{\tau} \nu Q^{+}_{\tau}=Q^{+}_{\tau} \nu Q^{-}_{\tau}  \nu Q^{+}_{\tau}=Q^{+}_{\tau} \tau Q^{-}_{\nu} \tau Q^{+}_{\tau}=Q^{+}_{\tau} Q^{-}_{\nu} Q^{+}_{\tau},$$
so

$$4J R_1=4 Q^{+}_{\tau} Q^{-}_{\nu} Q^{+}_{\tau}=[(1+\tau)(1-\nu)]Q^{+}_{\tau}=[1+ \tau -\nu -\tau \nu]Q^{+}_{\tau}$$$$=2Q^{+}_{\tau} - \nu Q^{+}_{\tau} - \tau \nu Q^{+}_{\tau} =2Q^{+}_{\tau} -\nu \tau Q^{+}_{\tau} -\tau \nu Q^{+}_{\tau}  $$
$$=2Q^{+}_{\tau} -[\nu \tau +\tau \nu]Q^{+}_{\tau} =2Q^{+}_{\tau} -[C_3 -1]Q^{+}_{\tau}=Q^{+}_{\tau}[3 - C_3].$$
On the other hand,

$$R_1 J=Q^{-}_{\tau} \nu Q^{+}_{\tau} Q^{+}_{\tau} \nu Q^{-}_{\tau}=Q^{-}_{\tau}\nu Q^{+}_{\tau} \nu Q^{-}_{\tau} =
 Q^{-}_{\tau} \tau Q^{+}_{\nu} \tau Q^{-}_{\tau}=Q^{-}_{\tau} Q^{+}_{\nu} Q^{-}_{\tau},$$
 so
 
 $$4R_1J=[(1-\tau)(1+\nu)]  Q^{-}_{\tau}= [1+\nu-\tau -\tau \nu]Q^{-}_{\tau}=2Q^{-}_{\tau}-[\nu \tau +\tau \nu]Q^{-}_{\tau}            $$
 $$=2Q^{-}_{\tau}-[C_3-1] Q^{-}_{\tau}=3Q^{-}_{\tau}-Q^{+}_{\tau} C_3=Q^{-}_{\tau}[3-C_3].$$
 This gives
 
 
 {\begin{proposition}\label{JR1}
 $$(4/3)JR_1=Q^{+}_{\tau}[1-(C_3/3)]=Q^{+}_{\tau}Q^{-}_{Cycl(3)} \mbox{ and } (4/3) R_1J=Q^{-}_{\tau}[1-(C_3/3)]=Q^{-}_{\tau}Q^{-}_{Cycl(3)}.$$
 \end{proposition}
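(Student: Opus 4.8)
The plan is to reduce each of $JR_1$ and $R_1J$ to a multiple of $Q^{+}_{\tau}[3-C_3]$ (respectively $Q^{-}_{\tau}[3-C_3]$) by pure group-algebra manipulation in $\A_P(3)$, and then to recognize the scalar factor $1-(C_3/3)$ as the complementary cyclic projection $Q^{-}_{Cycl(3)}$. In fact the bulk of the work is the display immediately preceding the statement; what remains is to divide by $3$ and make that last identification.

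For $JR_1$ the steps are as follows. First use idempotency of $Q^{-}_{\tau}$ to collapse $JR_1 = Q^{+}_{\tau}\nu Q^{-}_{\tau}Q^{-}_{\tau}\nu Q^{+}_{\tau}$ to $Q^{+}_{\tau}\nu Q^{-}_{\tau}\nu Q^{+}_{\tau}$. Then apply the transposition flip: since $\nu\tau\nu=\tau\nu\tau$ by (\ref{cycl3}), the same relation passes to the spectral projections $Q^{\pm}_{\tau}=\frac{1}{2}(1\pm\tau)$ and $Q^{\pm}_{\nu}=\frac{1}{2}(1\pm\nu)$, giving $\nu Q^{-}_{\tau}\nu=\tau Q^{-}_{\nu}\tau$. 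Absorb the inner transpositions using $\tau Q^{+}_{\tau}=Q^{+}_{\tau}$ to land at $JR_1=Q^{+}_{\tau}Q^{-}_{\nu}Q^{+}_{\tau}$. Expanding $4Q^{+}_{\tau}Q^{-}_{\nu}=(1+\tau)(1-\nu)$, distributing over $Q^{+}_{\tau}$, replacing $\nu Q^{+}_{\tau}$ by $\nu\tau Q^{+}_{\tau}$, and using $\nu\tau+\tau\nu=c+c^2=C_3-1$ from (\ref{cycl1a}) and (\ref{cycl1b}), one collects $4JR_1=2Q^{+}_{\tau}-(C_3-1)Q^{+}_{\tau}=Q^{+}_{\tau}[3-C_3]$, the last rearrangement using that $Q^{+}_{\tau}$ commutes with $C_3$.

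The computation of $R_1J$ runs in parallel: idempotency of $Q^{+}_{\tau}$, the same flip in the form $\nu Q^{+}_{\tau}\nu=\tau Q^{+}_{\nu}\tau$, and $\tau Q^{-}_{\tau}=-Q^{-}_{\tau}$ (two sign changes that cancel) give $R_1J=Q^{-}_{\tau}Q^{+}_{\nu}Q^{-}_{\tau}$; expanding $4Q^{-}_{\tau}Q^{+}_{\nu}=(1-\tau)(1+\nu)$ and collecting exactly as before yields $4R_1J=Q^{-}_{\tau}[3-C_3]$. Dividing both identities by $3$ gives $(4/3)JR_1=Q^{+}_{\tau}[1-(C_3/3)]$ and $(4/3)R_1J=Q^{-}_{\tau}[1-(C_3/3)]$. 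To finish, apply (\ref{cycleproject}) with $H=Cycl(3)$, a cyclic group of order $3$: there $Q_{Cycl(3)}=\frac{1}{3}C_{Cycl(3)}=\frac{1}{3}C_3$, so $Q^{-}_{Cycl(3)}=1-(C_3/3)$, and substituting this factor gives the stated forms $Q^{+}_{\tau}Q^{-}_{Cycl(3)}$ and $Q^{-}_{\tau}Q^{-}_{Cycl(3)}$.

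Since everything is routine bookkeeping in $\A_P$, there is no serious obstacle; the only steps that want care are the transposition flip (one must genuinely check $\nu\tau\nu=\tau\nu\tau$, which is (\ref{cycl3}), and note that conjugating by a transposition carries $Q^{\pm}_{\theta}$ to the $Q^{\pm}$ of the conjugate transposition, immediate from $Q^{\pm}_{\theta}=\frac{1}{2}(1\pm\theta)$) and the correct identification $\nu\tau+\tau\nu=C_3-1$.
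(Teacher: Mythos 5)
Your proposal is correct and follows essentially the same route as the paper: collapse $JR_1$ to $Q^{+}_{\tau}Q^{-}_{\nu}Q^{+}_{\tau}$ via idempotency and the transposition flip, expand $(1+\tau)(1-\nu)Q^{+}_{\tau}$ using $\nu\tau+\tau\nu=C_3-1$ to get $Q^{+}_{\tau}[3-C_3]$ (and dually for $R_1J$), then identify $1-(C_3/3)$ with $Q^{-}_{Cycl(3)}$ via (\ref{cycleproject}). The paper's proof of the proposition proper is just that last identification, with the preceding display carrying the computation exactly as you describe.
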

 
 \begin{proof}
 Recall that $C_3/3$ is the self adjoint projection defined by the cyclic subgroup $Cycl(3)$ of $P(3),$ so 
 
 $$(C_3/3)=Q^{+}_{Cycl(3)} \mbox{ and therefore  } [1-(C_3/3)]=Q^{-}_{Cycl(3)}.$$
 
 \end{proof}
 
 Since $C_3 x=0$ is equivalent to $Q^{-}_{Cycl(3)}x=x,$ for any $x \in \A_P,$ combining Propositions \ref{acyclic1} and \ref{JR1} gives
 
\begin{corollary}

$$(4/3)JR_1J=J \mbox{  and  } (4/3)R_1 J R_1=R_1.$$

\end{corollary}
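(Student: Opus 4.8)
The plan is to obtain both identities directly from Proposition \ref{JR1} together with the acyclicity relations of Proposition \ref{acyclic1}, so that essentially no fresh computation is required. First I would record the elementary fact, already noted in the text immediately preceding the corollary, that for any $x \in \A_P$ the equation $C_3 x = 0$ is equivalent to $Q^{-}_{Cycl(3)} x = x$, since $Q^{-}_{Cycl(3)} = 1 - (C_3/3)$. Because Proposition \ref{acyclic1} gives $C_3 J = 0$ and $C_3 R_1 = 0$, this yields $Q^{-}_{Cycl(3)} J = J$ and $Q^{-}_{Cycl(3)} R_1 = R_1$; that is, $J$ and $R_1$ are fixed by the cyclic projection $Q^{-}_{Cycl(3)}$.

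Next I would multiply the first identity of Proposition \ref{JR1}, namely $(4/3) J R_1 = Q^{+}_{\tau} Q^{-}_{Cycl(3)}$, on the right by $J$. Using $Q^{-}_{Cycl(3)} J = J$ from the previous step, this gives $(4/3) J R_1 J = Q^{+}_{\tau} J$. Now $J = Q^{+}_{\tau} \nu Q^{-}_{\tau}$ by Definition \ref{JANDR} and $Q^{+}_{\tau}$ is idempotent, so $Q^{+}_{\tau} J = (Q^{+}_{\tau})^2 \nu Q^{-}_{\tau} = Q^{+}_{\tau} \nu Q^{-}_{\tau} = J$, hence $(4/3) J R_1 J = J$. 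The argument for the second identity is the mirror image: multiplying the second identity of Proposition \ref{JR1}, $(4/3) R_1 J = Q^{-}_{\tau} Q^{-}_{Cycl(3)}$, on the right by $R_1$ and using $Q^{-}_{Cycl(3)} R_1 = R_1$ gives $(4/3) R_1 J R_1 = Q^{-}_{\tau} R_1 = (Q^{-}_{\tau})^2 \nu Q^{+}_{\tau} = R_1$ since $R_1 = Q^{-}_{\tau} \nu Q^{+}_{\tau}$ and $Q^{-}_{\tau}$ is idempotent.

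There is no real obstacle here; the only points requiring care are the bookkeeping of which of $Q^{\pm}_{\tau}$ sits on which side of $\nu$ in $J$ as opposed to $R_1$, and the observation that $Q^{-}_{Cycl(3)}$ acts as the identity on the acyclic elements $J$ and $R_1$ — and both of these are already in hand from the material just above. Conceptually the corollary says precisely that, after rescaling $R_1$ by the scalar $4/3$, the elements $J$ and $(4/3)R_1$ are mutually pseudoinverse in $\A_P$ in the sense of equation (\ref{relinverse0}), which is the structural fact the later development exploits.
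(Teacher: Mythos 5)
Your proof is correct and follows exactly the route the paper takes: it combines Proposition \ref{JR1} with the acyclicity relations $C_3J=0$, $C_3R_1=0$ (so that $Q^{-}_{Cycl(3)}$ fixes $J$ and $R_1$) and the absorption identities $Q^{+}_{\tau}J=J$, $Q^{-}_{\tau}R_1=R_1$, which is precisely the content of the paper's one-line proof and the remark immediately preceding the corollary. Nothing further is needed.
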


\begin{proof}
Just keep in mind, also, that $Q^{+}_{\tau}J=J$ and $Q^{-}_{\tau}R_1=R_1.$

\end{proof}

At this point, in order to somewhat hide the factor $(4/3),$ we make the choice to define $R \in \A_P(3)$ by

\begin{equation}\label{R}
R=(4/3)R_1=(4/3)Q^{-}_{\tau} \nu Q^{+}_{\tau}.
\end{equation} 
We then have


\begin{corollary}\label{JRJ}
$$J^2=0,~R^2=0,~J^*=(3/4)R,~R^*=(4/3)J,~JRJ=J \mbox{  and  } RJR=R.$$
Also,

$$C_3J=0=JC_3, ~C_3R=0=RC_3,~Q^{+}_{\tau} J = J = J Q^{-}_{\tau}, \mbox{ and } ~Q^{-}_{\tau} R = R = R Q^{+}_{\tau}.$$

\end{corollary}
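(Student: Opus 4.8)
The plan is to assemble Corollary \ref{JRJ} directly from the preceding results, treating it as a bookkeeping consolidation rather than a new computation. First I would note that $J^2=0$ and $R_1^2=0$ come verbatim from Proposition \ref{acyclic1}, and since $R=(4/3)R_1$ is a scalar multiple of $R_1$, we get $R^2=(4/3)^2R_1^2=0$ immediately. The adjoint relations follow from Definition \ref{JANDR}, which gives $J^*=R_1$ and $R_1^*=J$: hence $R^*=(4/3)R_1^*=(4/3)J$, and conversely $J^*=R_1=(3/4)R$. The identities $C_3J=JC_3=0$ and $C_3R_1=R_1C_3=0$ are exactly Proposition \ref{acyclic1}, and multiplying the latter by $4/3$ gives $C_3R=RC_3=0$. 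The relations $Q^{+}_{\tau}J=J=JQ^{-}_{\tau}$ and $Q^{-}_{\tau}R=R=RQ^{+}_{\tau}$ I would derive from Definition \ref{JANDR} together with the idempotence of $Q^{\pm}_{\tau}$: since $J=Q^{+}_{\tau}\nu Q^{-}_{\tau}$, left-multiplying by $Q^{+}_{\tau}$ and using $[Q^{+}_{\tau}]^2=Q^{+}_{\tau}$ fixes $J$, and right-multiplying by $Q^{-}_{\tau}$ similarly; the statement for $R$ follows from $R_1=Q^{-}_{\tau}\nu Q^{+}_{\tau}$ by the same argument and scaling.

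The only slightly substantive pieces are $JRJ=J$ and $RJR=R$. Here I would invoke the Corollary immediately preceding the definition of $R$ (equation \eqref{R}), which states $(4/3)JR_1J=J$ and $(4/3)R_1JR_1=R_1$. Substituting $R=(4/3)R_1$ — equivalently $R_1=(3/4)R$ — into $(4/3)JR_1J=J$ gives $(4/3)J((3/4)R)J=J$, i.e. $JRJ=J$. For the other, from $(4/3)R_1JR_1=R_1$ I would multiply both sides by $4/3$ to get $(4/3)^2R_1JR_1=(4/3)R_1=R$, and rewrite the left side as $((4/3)R_1)J((4/3)R_1)=RJR$, yielding $RJR=R$. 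Alternatively one can rederive these from Propositions \ref{acyclic1} and \ref{JR1} as the text indicates: since $(4/3)JR_1=Q^{+}_{\tau}Q^{-}_{Cycl(3)}$ and $C_3J=0$ means $Q^{-}_{Cycl(3)}J=J$, while $Q^{+}_{\tau}J=J$, one gets $(4/3)JR_1J=Q^{+}_{\tau}Q^{-}_{Cycl(3)}J=Q^{+}_{\tau}J=J$, and symmetrically for $R_1JR_1$.

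I do not anticipate any real obstacle here: every assertion in Corollary \ref{JRJ} is either a direct citation of Proposition \ref{acyclic1}, Proposition \ref{JR1}, Definition \ref{JANDR}, or the intermediate Corollary, adjusted only by the scalar factor $4/3$ introduced in \eqref{R}. The one point requiring a moment's care is keeping the scalar factors consistent when passing between $R_1$ and $R$ — in particular that $J^*=(3/4)R$ and $R^*=(4/3)J$ are genuinely asymmetric because the normalization was applied only to $R_1$ and not to $J$. The proof is therefore essentially: "Combine Proposition \ref{acyclic1}, the preceding Corollary, and Definition \ref{JANDR}, substituting $R=(4/3)R_1$ throughout, and use $[Q^{\pm}_{\tau}]^2=Q^{\pm}_{\tau}$ for the absorption identities."
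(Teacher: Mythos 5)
Your proposal is correct and follows exactly the route the paper intends: the corollary is stated immediately after the definition $R=(4/3)R_1$ precisely because every identity is a direct citation of Definition \ref{JANDR}, Proposition \ref{acyclic1}, Proposition \ref{JR1} and its corollary, rescaled by the factor $4/3$. Your bookkeeping of that scalar, including the asymmetry $J^*=(3/4)R$ versus $R^*=(4/3)J$, is exactly right.
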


\bigskip

Now define $Q_J$ and $Q_R$ in $\A_P(3)$ by


\begin{definition}\label{QJ1}
$$Q_J=JR \mbox{ and } Q_R=RJ.$$
\end{definition}


\begin{corollary}\label{QJ2}

$$Q_J J=J=J Q_R,~Q_R R=R=RQ_J,~Q_J^2=Q_J=Q_J^* \mbox{  and  } Q_R^2=Q_R=Q_R^*,$$
moreover, $$Q_RQ_J=0=Q_J Q_R,$$
and

$$Q_J C_3=0=C_3 Q_J, ~ Q_R C_3 = 0 = C_3 Q_R,~ Q^{+}_{\tau} Q_J = Q_J =  Q_J Q^{+}_{\tau}, ~Q^{-}_{\tau} Q_R = Q_R = Q_R Q^{-}_{\tau}.$$

\end{corollary}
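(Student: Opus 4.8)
The plan is to derive every clause of Corollary~\ref{QJ2} as a purely formal consequence of Corollary~\ref{JRJ}, which already packages all the structural data we need: the pseudo\-inverse relations $JRJ=J$ and $RJR=R$, the nilpotence $J^2=R^2=0$, the adjoint formulas $J^*=(3/4)R$ and $R^*=(4/3)J$, and the one\-sided identities $C_3J=JC_3=0$, $C_3R=RC_3=0$, $Q^{+}_{\tau}J=J=JQ^{-}_{\tau}$, $Q^{-}_{\tau}R=R=RQ^{+}_{\tau}$. No further computation inside $\A_P(3)$ is required. First I would record the two absorption identities: from $JRJ=J$ we get $Q_J J=(JR)J=(JRJ)=J$ and $JQ_R=J(RJ)=(JRJ)=J$, and the mirror argument with $RJR=R$ gives $Q_R R=(RJ)R=(RJR)=R$ and $RQ_J=R(JR)=(RJR)=R$, establishing $Q_JJ=J=JQ_R$ and $Q_RR=R=RQ_J$.

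Next I would treat idempotency, self\-adjointness, and orthogonality. Using $JRJ=J$ once more, $Q_J^2=(JR)(JR)=(JRJ)R=JR=Q_J$, and symmetrically $Q_R^2=(RJ)(RJ)=(RJR)J=RJ=Q_R$. For self\-adjointness the reciprocal scalars cancel because $J$ and $R$ each occur exactly once in these products: $Q_J^*=(JR)^*=R^*J^*=\bigl((4/3)J\bigr)\bigl((3/4)R\bigr)=JR=Q_J$, and likewise $Q_R^*=(RJ)^*=J^*R^*=\bigl((3/4)R\bigr)\bigl((4/3)J\bigr)=RJ=Q_R$. Orthogonality is immediate from nilpotence: $Q_RQ_J=(RJ)(JR)=RJ^2R=0$ and $Q_JQ_R=(JR)(RJ)=JR^2J=0$.

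Finally the commutation relations. Since $C_3J=JC_3=0$ and $C_3R=RC_3=0$, we get $C_3Q_J=C_3(JR)=0$, $Q_JC_3=(JR)C_3=0$, and similarly $C_3Q_R=Q_RC_3=0$. Since $Q^{+}_{\tau}J=J=JQ^{-}_{\tau}$ and $Q^{-}_{\tau}R=R=RQ^{+}_{\tau}$, we obtain $Q^{+}_{\tau}Q_J=(Q^{+}_{\tau}J)R=JR=Q_J$, $Q_JQ^{+}_{\tau}=J(RQ^{+}_{\tau})=JR=Q_J$, and symmetrically $Q^{-}_{\tau}Q_R=(Q^{-}_{\tau}R)J=RJ=Q_R$, $Q_RQ^{-}_{\tau}=R(JQ^{-}_{\tau})=RJ=Q_R$. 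There is no genuine obstacle in this corollary: it is a bookkeeping exercise whose only subtlety is making sure the factors $3/4$ and $4/3$ from the adjoints of $J$ and $R$ cancel in $Q_J^*$ and $Q_R^*$, which they do precisely because $Q_J=JR$ and $Q_R=RJ$ are single products in which each of $J$ and $R$ appears once. The entire statement thus reduces to substituting the relations of Corollary~\ref{JRJ} and invoking associativity.
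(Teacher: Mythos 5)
Your proof is correct and is exactly the argument the paper intends: its own proof of Corollary \ref{QJ2} simply says ``Use the preceding definitions, propositions and corollaries,'' and you have spelled out that bookkeeping, correctly using $(JR)^*=R^*J^*$ together with $J^*=\tfrac{3}{4}R$, $R^*=\tfrac{4}{3}J$ so that the scalars cancel, and the relations $JRJ=J$, $RJR=R$, $J^2=R^2=0$ from Corollary \ref{JRJ}. Nothing further is needed.
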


\begin{proof}
Use the preceding definitions, propositions and corollaries.

\end{proof}

Notice that in the language of algebra, that $R$ and $J$ are mutually pseudoinverse in $\A_P(3) \subset \A_P,$ with $Q_J$ the domain projection of $J$ and range projection of $R$ and $Q_R$  is the range projection of $J$ and domain projection of $R.$  Of course, in this case we also have complementarity of $Q_J$ and $Q_R$ by Corollary \ref{QJ2}.

\bigskip

If $M$ is an $\A_{P(3)}$-module, then we can think of the members of $M$ as (abstract) curvature operators.  As the Bianchi identity applies to any actual Riemann curvature operator on a manifold coming from a torsion free connection on the tangent bundle, this means that such a curvature operator is acyclic, and obviously the antisymmetry gives the relation to $ Q^{-}_{\tau}.$
Thus, in the abstract setting, for any $r \in M,$ we would call $r$ a(n) (abstract) {\bf Riemann curvature operator}, if $Q_R r = r.$  Then we have $\tau r= - r$ or equivalently, $Q^{-}_{\tau} r = r,$ since 
$Q^{-}_{\tau} Q_R=Q_R.$  Also, as $Q_R r = r$ we have $C_3 r=0,$ since $C_3 Q_R=0.$  Notice that if $m \in M,$ then $Rm$ is a Riemann curvature operator as, $Q_R R= R.$  Thus, $Q_R$ is the self-adjoint projection onto the subspace of   $M$ consisting of Riemann curvature operators in an abstract sense, and $R$ transforms any curvature operator into a Riemann curvature operator.  Likewise, we would call $j \in M$ a(n) (abstract) {\bf Jacobi curvature operator} if  $Q_J j = j.$  Then,  $\tau j = j,$ or equivalently, $Q^{+}_{\tau} j= j,$ since $Q^{+}_{\tau} Q_J = Q_J.$  Also, $C_3 j=0,$ since $Q_J j = j,$ and $C_3 Q_J = 0.$  If $m \in M,$ then $Jm$ is a Jacobi curvature  operator, since $Q_J J =J.$  Thus, the abstract Jacobi operators are symmetric and also acyclic, which of course is the Bianchi identity for Jacobi curvature operators, and $Q_J$ is the self adjoint projection onto the space of Jacobi curvature operators.  And likewise, $J$ transforms any curvature operator into a Jacobi curvature operator.  If we take any $m \in M,$ then $Q_R m$ is a Riemann curvature operator and $Q_J m$ is a Jacobi curvature operator, since  
$[Q_R]^2 =Q_R$ and $[Q_J]^2  = Q_ J.$  In particular, if $r \in M$ is a Riemann curvature operator, then $Jr$ is a Jacobi operator, associated to $r,$ whereas if $j \in M$ is a Jacobi curvature operator, then $Rj$ is a Riemann curvature operator associated to $j$.  Moreover, if we start with a Riemann curvature operator $r \in M,$ which of course means $Q_R r = r,$ and form $j = Jr,$ the associated Jacobi curvature operator, then the associated Riemann curvature operator to $j$ is $Rj = RJr = Q_R r = r,$ the original Riemann curvature operator, whereas if we start with a Jacobi curvature operator $j,$ so $j = Q_J j,$ and form the associated Riemann curvature operator, $r = R j,$ then the associated Jacobi curvature operator to $r$ is $J r = J R j = Q_J j =j,$ and again we are back to the original curvature operator.  Thus, the Riemann and Jacobi curvature operators should be regarded as being paired like the two sides of a coin, in a duality.  The operators $R$ and $J$ acting on $M$ are thus mutually inverse isomorphisms of the subspaces of Riemann curvature operators and Jacobi curvature operators.  That is, $R(M)$ is the subspace of Riemann curvature operators, $J(M)$ is the space of Jacobi curvature operators, and $J$ carries $R(M)$ isomorphically onto $J(M)$ and $R$ carries $J(M)$ isomorphically onto $R)M),$ with these two isomorphisms being mutually inverse.

\bigskip

Now, on a semi Riemannian manifold, there is a metric tensor which determines a unique Levi-Civita connection or {\bf metric connection} which is torsion free and for which the metric tensor is parallel.  Thus, for such connections, the Riemann and Jacobi curvature operators enjoy an additional symmetry which we can call the {\bf Exchange property}.  This involves the action of $P(4),$ so now we look to $A_{P(4)}$ and the exchange property which is $\chi$-invariance, with $\chi = \beta \nu =\nu \beta =[1,3][2,4].$  Thus, if $M$ is an $\A_{P(4)}$-module, then $m \in M$ is called a {\bf Metric Curvature Operator} if it is a curvature operator which satisfies $\chi$-invariance.  Specifically, $m \in M$ is a metric curvature operator if $\chi m = m,$ or equivalently, if $Q_{\chi} m = m.$  We likewise say that $m$ has the {\bf (Metric) Exchange Property} if $m$ is a metric curvature operator.  If $m$ is a Riemann curvature operator which satisfies the exchange property, then it is a metric Riemann curvature operator, whereas if it is a Jacobi curvature operator and satisfies the exchange property, then it is a metric Jacobi curvature operator.  We shall see that in fact, our duality between the Riemann and Jacobi curvature operators also will respect the exchange property, so if $r$ is a metric Riemann curvature operator, then $J r$ is a metric Jacobi curvature operator, whereas if $j$ is a metric Jacobi curvature operator, then $R j$ is a metric Riemann curvature operator.  This amounts to demonstrating that the action of $Q_{\chi}$ commutes with the actions of $Q_J$ and $Q_R$ in an appropriate manner.  So, we will next deal with these details.

\medskip

We begin our extension of calculations into $\A_{P(4)}$ by noting the actions of $Inn(\tau)$ and $Inn(\alpha).$  Of course, we note that $\tau = [1,2]$ and $\alpha = [3,4]$ commute as they are disjoint transpositions.  We have, using (\ref{perm5}),

$$\beta^{\tau} =[1,2][2,4][1,2]=[1,4]=\lambda \mbox{ or equivalently,  } \tau \beta = \lambda \tau, $$

$$\beta^{\alpha} = [3,4][2,4][3,4]=[2,3]=\gamma  \mbox{ or equivalently,  } \alpha \beta = \gamma \alpha, $$

$$\nu^{\tau} = [1,2][1,3][1,2]=[2,3]=\gamma  \mbox{ or equivalently,  } \tau \nu = \gamma \tau,      $$

$$ \nu^{\alpha} = [3,4][1,3][3,4] = [1,4] = \lambda  \mbox{ or equivalently,  } \alpha \nu = \lambda \alpha, $$

$$\lambda^{\tau} = [1,2][1,4][1,2] = [2,4] = \beta   \mbox{ or equivalently,  }  \tau \lambda = \beta \tau,  $$

$$\lambda^{\alpha} = [3,4][1,4][3,4] = [1,3] = \nu   \mbox{ or equivalently,  }  \alpha \lambda = \nu \alpha, $$

$$\gamma^{\tau} = [1,2][2,3][1,2] = [1,4] = \lambda   \mbox{ or equivalently,  }  \tau \gamma = \lambda \tau, $$

$$\gamma^{\alpha} = [3,4][2,3][3,4] = [2,4] = \beta  \mbox{ or equivalently,  }  \alpha \gamma = \beta \alpha.$$

\bigskip

Inspection of these equations shows that (keep in mind that $\tau \alpha = \alpha \tau$)

\begin{equation}\label{nu tau alpha}
\nu^{\tau \alpha} = \gamma^{\alpha} = \beta   \mbox{ or equivalently,  }  ( \tau \alpha) \nu = \beta ( \tau \alpha)
\end{equation}    
and

\begin{equation}\label{beta tau alpha}
\beta^{\tau \alpha} = \lambda^{\alpha} = \nu   \mbox{ or equivalently,  }  ( \tau \alpha) \beta = \nu (\tau \alpha).
\end{equation}   
This means that as $\nu$ and $\beta$ commute and $\chi = \nu \beta,$ we must have

$$ \chi (\tau \alpha) = \chi  \mbox{ or equivalently,  } \chi (\tau \alpha) = (\tau \alpha) \chi.$$

But also, since $\nu \beta = \chi = \beta \nu,$ we have

\begin{equation}\label{tau chi}
\tau^{\chi} = \chi [1,2] \chi = [3,4] = \alpha  \mbox{ or equivalently,  }  \tau \chi =  \chi \alpha,
\end{equation}
and

\begin{equation}\label{alpha chi}
\alpha^{\chi} = \chi [3,4] \chi = [1,2] = \tau   \mbox{ or equivalently,  }  \alpha \chi = \chi \tau,
\end{equation}
from which we conclude

$$(\tau + \alpha) \chi = \chi ( \tau + \alpha ) .$$

Thus, the subalgebra of $\A_{P(4)}$ consisting of members which commute with $\chi$ contains both $\tau + \alpha$ and $\tau \alpha,$ and therefore, it contains both $Q^{+}_{\tau} Q^{+}_{\alpha}$
and $Q^{-}_{\tau} Q^{-}_{\alpha}.$  It follows that both of these commute with $Q^{\pm}_{\chi}$ as well.


\begin{definition}\label{QQQ}
We define

$$Q^{\pm} = Q^{+}_{\chi} Q^{\pm}_{\tau} Q^{\pm}_{\alpha}.$$
That is,

$$Q^{+} = Q^{+}_{\chi} Q^{+}_{\tau} Q^{+}_{\alpha} \mbox{ and  }  Q^{\pm} = Q^{+}_{\chi} Q^{-}_{\tau} Q^{-}_{\alpha}.$$
Notice that it is $Q^{+}_{\chi}$ which is the factor in both $Q^{\pm}.$

\end{definition}

\bigskip


\begin{proposition}\label{QQQ1}

We have

$$[Q^{\pm}]^2=Q^{\pm} = [Q^{\pm}]^{*},$$
and $Q^{+}_{\tau},~Q^{-}_{\tau},~Q^{+}_{\alpha},~Q^{-}_{\alpha}, $ and $Q^{+}_{\chi}$ all commute with $Q^{\pm}.$
Also

$$Q^{\pm}_{\tau} Q^{\pm } = Q^{\pm}, ~~Q^{\pm}_{\alpha} Q^{\pm } = Q^{\pm},$$
and

$$Q^{\pm}_{\chi} Q^{\pm} = Q^{\pm}.$$

\end{proposition}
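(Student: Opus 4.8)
The plan is to reduce the whole statement to the arithmetic of three commuting self-adjoint idempotents. Set $P_{+}=Q^{+}_{\tau}Q^{+}_{\alpha}$ and $P_{-}=Q^{-}_{\tau}Q^{-}_{\alpha}$, so that by Definition \ref{QQQ} we have $Q^{\pm}=Q^{+}_{\chi}P_{\pm}$. First I would record the elementary facts. Since $\tau=[1,2]$ and $\alpha=[3,4]$ are disjoint transpositions they commute, hence $Q^{\epsilon}_{\tau}$ commutes with $Q^{\delta}_{\alpha}$ for all signs $\epsilon,\delta$; and since $\tau,\alpha,\chi$ are each products of disjoint transpositions, each equals its own inverse and so is self-adjoint in $\A_{P(4)}$. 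Using $Q^{\epsilon}_{\tau}Q^{\epsilon}_{\tau}=Q^{\epsilon}_{\tau}$, the same for $\alpha$, and the commutativity of the $\tau$- and $\alpha$-projections, it follows that $P_{+}$, $P_{-}$ and $Q^{+}_{\chi}$ are all self-adjoint idempotents, and $P_{+}$ commutes with $P_{-}$.

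The one genuine input is the fact established in the paragraph immediately preceding Definition \ref{QQQ}: both $\tau+\alpha$ and $\tau\alpha$ commute with $\chi$ (this is exactly the point of (\ref{nu tau alpha}), (\ref{beta tau alpha}), (\ref{tau chi}) and (\ref{alpha chi})). Since $4P_{+}=1+(\tau+\alpha)+\tau\alpha$ and $4P_{-}=1-(\tau+\alpha)+\tau\alpha$, each $P_{\pm}$ lies in the commutative subalgebra of $\A_{P(4)}$ generated by $\tau+\alpha$ and $\tau\alpha$, hence commutes with $\chi$ and therefore with $Q^{+}_{\chi}=\tfrac{1}{2}(1+\chi)$. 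Thus $Q^{\pm}=Q^{+}_{\chi}P_{\pm}=P_{\pm}Q^{+}_{\chi}$ exhibits $Q^{\pm}$ as a product of two commuting self-adjoint idempotents, and idempotence and self-adjointness are immediate: $(Q^{+}_{\chi}P_{\pm})^{2}=(Q^{+}_{\chi})^{2}P_{\pm}^{2}=Q^{+}_{\chi}P_{\pm}$ and $(Q^{+}_{\chi}P_{\pm})^{*}=P_{\pm}^{*}(Q^{+}_{\chi})^{*}=P_{\pm}Q^{+}_{\chi}=Q^{+}_{\chi}P_{\pm}$.

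For the remaining identities I would first note the absorption relations $Q^{+}_{\tau}P_{+}=P_{+}=P_{+}Q^{+}_{\tau}$, $Q^{-}_{\tau}P_{-}=P_{-}=P_{-}Q^{-}_{\tau}$, $Q^{-}_{\tau}P_{+}=Q^{+}_{\tau}P_{-}=0$, and the same with $\tau$ replaced by $\alpha$; these come at once from the fact that $Q^{\epsilon}_{\tau}Q^{\delta}_{\tau}$ equals $Q^{\epsilon}_{\tau}$ if $\epsilon=\delta$ and vanishes otherwise, together with the commutativity of the $\tau$- and $\alpha$-projections. Then, keeping the $\tau$- and $\alpha$-factors bundled as $P_{\pm}$ so that they may be moved past $Q^{+}_{\chi}$, one computes for each sign $\epsilon$ that $Q^{\epsilon}_{\tau}Q^{\pm}=Q^{\epsilon}_{\tau}P_{\pm}Q^{+}_{\chi}$ and $Q^{\pm}Q^{\epsilon}_{\tau}=Q^{+}_{\chi}P_{\pm}Q^{\epsilon}_{\tau}$, each of which equals $Q^{\pm}$ when $\epsilon$ agrees with the superscript of $Q^{\pm}$ and equals $0$ otherwise; this gives simultaneously $Q^{\pm}_{\tau}Q^{\pm}=Q^{\pm}$ and the commutation of each of $Q^{+}_{\tau},Q^{-}_{\tau}$ (and, by the symmetry $\tau\leftrightarrow\alpha$, of $Q^{+}_{\alpha},Q^{-}_{\alpha}$) with $Q^{\pm}$. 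Finally $Q^{+}_{\chi}Q^{\pm}=(Q^{+}_{\chi})^{2}P_{\pm}=Q^{\pm}$ and $Q^{\pm}Q^{+}_{\chi}=Q^{+}_{\chi}P_{\pm}Q^{+}_{\chi}=(Q^{+}_{\chi})^{2}P_{\pm}=Q^{\pm}$, which disposes of the identities involving $Q^{+}_{\chi}$.

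I do not expect a serious obstacle; once $P_{\pm}=\tfrac{1}{4}\bigl(1\pm(\tau+\alpha)+\tau\alpha\bigr)$ is seen to commute with $Q^{+}_{\chi}$ the argument is purely formal. The one point needing care is that $\chi$ does \emph{not} commute with $\tau$ or with $\alpha$ separately — indeed $\tau\chi=\chi\alpha$ by (\ref{tau chi}) — so in every manipulation $Q^{+}_{\chi}$ must be commuted only through the combined projection $P_{\pm}$, never through $Q^{\pm}_{\tau}$ or $Q^{\pm}_{\alpha}$ alone; all the permutation identities computed earlier in this section exist precisely to make that single move legitimate.
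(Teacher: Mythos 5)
Your argument is correct and is essentially the paper's own proof: both rest on the commutativity of $Q^{\pm}_{\tau}$ with $Q^{\pm}_{\alpha}$ and on the fact, established just before Definition \ref{QQQ} via $(\tau+\alpha)\chi=\chi(\tau+\alpha)$ and $(\tau\alpha)\chi=\chi(\tau\alpha)$, that the bundled projection $Q^{\pm}_{\tau}Q^{\pm}_{\alpha}$ commutes with $Q^{+}_{\chi}$, after which everything is formal rearrangement of commuting self-adjoint idempotents. You merely spell out the details (the expansion $4P_{\pm}=1\pm(\tau+\alpha)+\tau\alpha$ and the absorption relations) that the paper leaves implicit, and your closing caution that $Q^{+}_{\chi}$ may only be commuted past the combined factor $P_{\pm}$, never past $Q^{\pm}_{\tau}$ or $Q^{\pm}_{\alpha}$ alone, is exactly the right point to emphasize.
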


\begin{proof}

Since $\tau$ and $\alpha$ commute, it follows that $Q^{\pm}_{\tau}$ and $Q^{\pm}_{\alpha}$ commute, and their product commutes with $Q^{+}_{\chi},$ as noted in the preceding discussion.  This means that the product that forms $Q^{\pm}$ can be arranged with either of these three as the first factor or the last factor, as wished, from which we conclude these equations hold and that $Q^{\pm}$ is self adjoint as all three factors are.

\end{proof}

\medskip


\begin{corollary}\label{QQQ2}

We have

$$\tau Q^{\pm} = Q^{\pm} \tau = \pm Q^{\pm},~\alpha Q^{\pm} = Q^{\pm} \alpha = \pm Q^{\pm} \mbox{ and } \chi Q^{\pm} = Q^{\pm} \chi = Q^{\pm}.$$
In particular,

$$(\alpha \tau )Q^{\pm} = Q^{\pm} (\alpha \tau) = Q^{\pm}.$$
In addition,

\begin{equation}\label{nuQ}
(\tau \alpha) \nu Q^{\pm}  = \nu Q^{\pm}      \mbox{ and }  \tau \nu Q^{\pm} = \alpha \nu Q^{\pm}.
\end{equation}

\end{corollary}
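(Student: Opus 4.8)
The plan is to reduce every assertion of the corollary to the single identity $x Q^{\pm}_x = Q^{\pm}_x x = \pm Q^{\pm}_x$ (valid for any $x$ with $x^2 = 1$, as recorded just before Definition \ref{QQQ}), fed through the absorption and commutation relations already proved in Proposition \ref{QQQ1}: namely $Q^{\pm}_\tau Q^{\pm} = Q^{\pm}$, $Q^{\pm}_\alpha Q^{\pm} = Q^{\pm}$, $Q^{+}_\chi Q^{\pm} = Q^{\pm}$, together with the fact that each of $Q^{\pm}_\tau$, $Q^{\pm}_\alpha$, $Q^{+}_\chi$ commutes with $Q^{\pm}$, so that the absorbed projections may be pulled through from either side.

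First I would dispatch the three ``eigenvalue'' identities. For $\tau Q^{\pm} = \pm Q^{\pm}$, insert $Q^{\pm}_\tau$ on the left of $Q^{\pm}$ and slide $\tau$ across: $\tau Q^{\pm} = \tau Q^{\pm}_\tau Q^{\pm} = (\pm Q^{\pm}_\tau) Q^{\pm} = \pm Q^{\pm}$; the right-handed version is identical after writing $Q^{\pm} = Q^{\pm} Q^{\pm}_\tau$. The statement for $\alpha$ is the same argument verbatim, and the one for $\chi$ is the same using $Q^{+}_\chi$, where the sign is $+$ (i.e. $\chi Q^{+}_\chi = Q^{+}_\chi$). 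The composite $(\alpha\tau)Q^{\pm} = Q^{\pm}$ follows by applying the two eigenvalue identities in succession: $(\alpha\tau)Q^{\pm} = \alpha(\tau Q^{\pm}) = \alpha(\pm Q^{\pm}) = (\pm)(\pm)Q^{\pm} = Q^{\pm}$, since $(\pm 1)^2 = 1$ in either case, and the same on the right.

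For the final pair (\ref{nuQ}) the one genuinely substantive observation is that $\chi Q^{\pm} = Q^{\pm}$ forces $\beta Q^{\pm} = \nu Q^{\pm}$: writing $\chi = \nu\beta$ gives $\nu\beta Q^{\pm} = Q^{\pm}$, and left-multiplying by $\nu$ and using $\nu^2 = 1$ yields the claim. Combining this with the conjugation relation $(\tau\alpha)\nu = \beta(\tau\alpha)$ from (\ref{nu tau alpha}) and the already-established $(\tau\alpha)Q^{\pm} = Q^{\pm}$ (recall $\tau\alpha = \alpha\tau$), I obtain $(\tau\alpha)\nu Q^{\pm} = \beta(\tau\alpha)Q^{\pm} = \beta Q^{\pm} = \nu Q^{\pm}$, which is the first equation of (\ref{nuQ}); left-multiplying that identity by $\tau$ and cancelling $\tau^2 = 1$ gives $\tau\nu Q^{\pm} = \alpha\nu Q^{\pm}$, the second.

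There is no serious obstacle here: every line is a one-step rearrangement once Proposition \ref{QQQ1} is in hand. The only place needing a moment's thought --- the ``main obstacle'', such as it is --- is the extraction of $\beta Q^{\pm} = \nu Q^{\pm}$ from $\chi$-invariance; beyond that the care required is purely bookkeeping, making sure the projection adjacent to whichever permutation is being moved past $Q^{\pm}$ is the correct one among $Q^{\pm}_\tau$, $Q^{\pm}_\alpha$, $Q^{+}_\chi$, which is exactly what the commutation clauses of Proposition \ref{QQQ1} exist to license.
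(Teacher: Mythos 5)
Your proof is correct and follows essentially the same route as the paper's: the eigenvalue identities come from absorbing the appropriate projection $Q^{\pm}_{\tau}$, $Q^{\pm}_{\alpha}$, $Q^{+}_{\chi}$ into $Q^{\pm}$ via Proposition \ref{QQQ1} and using $xQ^{\pm}_x=\pm Q^{\pm}_x$, and (\ref{nuQ}) comes from the absorption of $\chi$ and $\tau\alpha$ together with the conjugation relation (\ref{nu tau alpha}) and $\nu=\chi\beta$. The only difference is that you spell out the intermediate step $\beta Q^{\pm}=\nu Q^{\pm}$ explicitly, which the paper leaves implicit.
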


\begin{proof}
The first two equations follow from the preceding remarks and (\ref{nuQ}) follows from the first two equations because $Q^{\pm}$ absorbs both $\chi$ and $\tau \alpha$ so (\ref{nu tau alpha}) applies to give the result, since $\chi \beta = \beta \chi =\nu$ and $\tau \alpha = \alpha \tau.$

\end{proof}

\medskip

\begin{definition}\label{QQQ3}

Define $J_0 = JQ^{-}$ and $R_0 = R Q^{+}.$

\end{definition}

If $M$ is an $\A_{P(4)}$-module and $m$ belongs to $M,$ then we have that $m$ is a metric Riemann curvature operator if $Q_R m = m$ and $\chi m = m,$ or equivalently, if $Q_R m = m$ and 
$Q^{+}_{\chi} m = m.$   If so, then $\tau m = -m$ and $Cm=0$ and $\chi m = m.$  Then, $\alpha m = \alpha \chi m = \chi \tau m = - \chi m = - m,$ therefore $\alpha m = - m$ and therefore,

$$Q^{-}_{\tau} m = m,~Q^{-}_{\alpha} m = m, \mbox{ and } Q^{+}_{\chi} m = m.$$
It follows that $Q^{-} m = m.$  On the other hand, if $Q^{-} m = m, $ then it follows immediately from Proposition \ref{QQQ1} that the three previous equations hold so that if $Q_R m = m,$ then $m$ is a metric Riemann curvature operator.   Replacing all the minus superscripts with plus superscripts and $R$ by $J,$ we see that if $m$ is a Jacobi curvature operator, then if it is in fact a metric Jacobi curvature operator which means also $\chi m = m,$ then we have

$$Q^{+}_{\tau} m = m,~Q^{+}_{\alpha} m = m, \mbox{ and } Q^{+}_{\chi} m = m,$$
from which it likewise follows that $Q^{+} m = m.$  Again, by Proposition \ref{QQQ1}, if this last equation holds, then all three of the previous three equations hold.  Thus, if $m$ is a metric Riemann curvature operator, then $J_0 m = Jm $ and therefore $J_0 m$ is a Jacobi curvature operator.  Likewise, if $m$ is a metric Jacobi curvature operator, then $R_0 m = R m$ is a Riemann curvature operator.  We want to show that in fact in either case, the result is also a metric curvature operator.

Thus, assuming that $\chi m = m$ and $Q_R m = m,$ we want to show that $\chi J m = Jm,$ showing that the Jacobi curvature operator associated to $m$ is also a metric curvature operator and likewise, we want to show that if $m$ is a metric Jacobi curvature operator, then $Rm$ the associated Riemann curvature operator is in fact a metric curvature operator.

\medskip

Thus, we want to show that $\chi J_0 = J_0$ and $\chi R_0 = R_0.$

\medskip

Of course, $R=(4/3)Q^{-}_{\tau} \nu Q^{+}_{\tau},$ so by Proposition \ref{QQQ1} we have $R_0=(4/3)Q^{-}_{\tau} \nu Q^{+}.$  Now, we can apply Corollary \ref{QQQ2} and the fact that 

$$\chi \nu = \beta = \nu \chi.$$
Thus, using (\ref{nu tau alpha}), (\ref{beta tau alpha}),   (\ref{tau chi}), and (\ref{alpha chi}),

$$(3/4)R_0 = Q^{-}_{\tau} \nu Q^{+} =  Q^{-}_{\tau} \nu  (\alpha \tau) \chi Q^{+} = Q^{-}_{\tau}  (\alpha \tau) \beta \chi Q^{+}= (- \alpha ) Q^{-}_{\tau} \nu Q^{+} = - (3/4) \alpha R_0,$$
from which we conclude that

$$\alpha R_0 = - R_0 \mbox{ and } Q^{-}_{\alpha} R_0 =R_0,$$
so as $\chi$ commutes with $Q^{-}_{\alpha}Q^{-}_{\tau},$ we have 

$$(3/4) \chi R_0 = \chi Q^{-}_{\alpha}Q^{-}_{\tau} \nu Q^{+} = Q^{-}_{\alpha}Q^{-}_{\tau} \chi \nu Q^{+} = Q^{-}_{\alpha}Q^{-}_{\tau} \nu \chi Q^{+} = Q^{-}_{\alpha}Q^{-}_{\tau} \nu Q^{+}=(3/4)R_0.$$

Thus, finally, we do indeed have

$$\chi R_0 = R_0 = Q^{+}_{\chi}R_0 = Q^{+}_{\chi} Q^{-}_{\alpha}Q^{-}_{\tau} R_0 = Q^{-} R_0.$$

\medskip

Next, we interchange the plusses and minuses to deal with $J_0,$  keeping in mind that $Q^{-} \alpha \tau =Q^{-},$ so again using $\beta \chi = \nu,$

$$J_0 = J Q^{-} =  Q^{+}_{\tau}  \nu Q^{-}_{\tau} Q^{-}_{\tau}Q^{-}_{\alpha}  Q^{+}_{\chi} =  Q^{+}_{\tau} \nu Q^{-}=  Q^{+}_{\tau} \nu (\tau \alpha) \chi Q^{-} =  \alpha  Q^{+}_{\tau} \beta  \chi Q^{-} = \alpha J_0.$$
Thus,

$$\alpha J_0 = J_0 \mbox{ and }  Q^{+}_{\alpha} J_0 =J_0.$$
Thus, as $\chi$ commutes with $ Q^{-}_{\alpha}Q^{-}_{\tau}$

$$\chi J_0 =  Q^{+}_{\alpha}Q^{-}_{\tau} \chi \nu Q^{+} =  Q^{+}_{\alpha}Q^{+}_{\tau} \nu \chi Q^{+} =  Q^{+}_{\alpha}Q^{+}_{\tau} \nu Q^-{-} =  Q^{-}_{\alpha} J_0 =J_0.$$
This finally gives

$$\chi J_0 =J_0 \mbox{ and }  Q^{+}_{\chi} J_0 =J_0 = Q^{+} J_0.$$


\begin{proposition}\label{QQQ4}

Regarding $J_0 = J Q^{-} $ and $R_0 =R Q^{+},$ we have

$$J_0=Q^{+} J_0 Q^{-} = Q^{+} J Q^{-} =  Q^{+} \nu Q^{-}  \mbox{  and  }   R_0=Q^{-} R_0 Q^{+} = Q^{-} R Q^{+} =   Q^{-} \nu Q^{+}.$$

\end{proposition}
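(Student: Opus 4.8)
The plan is to reduce everything to two facts already obtained in the discussion immediately preceding the statement, namely $Q^{+}J_0 = J_0$ (equivalently $Q^{+}_{\tau}J_0 = Q^{+}_{\alpha}J_0 = Q^{+}_{\chi}J_0 = J_0$) and its mirror $Q^{-}R_0 = R_0$, together with the absorption identities of Proposition~\ref{QQQ1}, the definitions $J_0 = JQ^{-}$ and $R_0 = RQ^{+}$, and the explicit forms $J = Q^{+}_{\tau}\nu Q^{-}_{\tau}$ and $R = (4/3)Q^{-}_{\tau}\nu Q^{+}_{\tau}$.

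First I would handle the outer equality $J_0 = Q^{+}J_0Q^{-}$. Since $J_0 = JQ^{-}$ and $Q^{-}$ is idempotent, $J_0Q^{-} = JQ^{-}Q^{-} = JQ^{-} = J_0$, and combining this with $Q^{+}J_0 = J_0$ gives $J_0 = Q^{+}J_0Q^{-}$. The middle equality is then immediate, $Q^{+}J_0Q^{-} = Q^{+}(JQ^{-})Q^{-} = Q^{+}JQ^{-}$, again by idempotency of $Q^{-}$.

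For the last equality $Q^{+}JQ^{-} = Q^{+}\nu Q^{-}$, substitute $J = Q^{+}_{\tau}\nu Q^{-}_{\tau}$ to get $Q^{+}JQ^{-} = (Q^{+}Q^{+}_{\tau})\,\nu\,(Q^{-}_{\tau}Q^{-})$. Here $Q^{-}_{\tau}Q^{-} = Q^{-}$ is one of the absorption identities of Proposition~\ref{QQQ1}, and $Q^{+}Q^{+}_{\tau} = Q^{+}$ is its adjoint (both factors are self-adjoint, so transpose $Q^{+}_{\tau}Q^{+} = Q^{+}$); hence $Q^{+}JQ^{-} = Q^{+}\nu Q^{-}$, completing the chain for $J_0$. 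The $R_0$ statement is the exact mirror: swap each superscript $+$ with $-$, replace $J$ by $R$, and use $Q^{-}R_0 = R_0$ in place of $Q^{+}J_0 = J_0$. Thus $R_0Q^{+} = RQ^{+}Q^{+} = R_0$ gives $R_0 = Q^{-}R_0Q^{+} = Q^{-}RQ^{+}$, and substituting $R = (4/3)Q^{-}_{\tau}\nu Q^{+}_{\tau}$ and collapsing via $Q^{-}Q^{-}_{\tau} = Q^{-}$ and $Q^{+}_{\tau}Q^{+} = Q^{+}$ yields $Q^{-}RQ^{+} = (4/3)Q^{-}\nu Q^{+}$, the scalar being carried along from the definition~(\ref{R}) of $R$.

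There is no real obstacle here: all the substantive work — showing that the $\chi$-factor and $\alpha$-factor hidden inside $Q^{\pm}$ are genuinely absorbed by $J_0$ and $R_0$ — was already carried out in the paragraphs preceding the proposition. The only point requiring a little care is clerical, namely keeping straight which absorption identity of Proposition~\ref{QQQ1} is applied on the left and which on the right, and remembering that in each of the two cases one of them must be invoked through its adjoint.
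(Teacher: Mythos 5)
Your proof is correct and is exactly the argument the paper intends: its own proof merely says the result is ``obvious from the preceding equations,'' namely $Q^{+}J_0=J_0$ and $Q^{-}R_0=R_0$ established just before the proposition, which together with idempotency of $Q^{\pm}$, the absorption identities $Q^{\pm}_{\tau}Q^{\pm}=Q^{\pm}=Q^{\pm}Q^{\pm}_{\tau}$ of Proposition \ref{QQQ1}, and the explicit forms of $J$ and $R$ give every link in the chain as you wrote it. One point worth recording: your careful tracking of the scalar shows that $Q^{-}RQ^{+}=\tfrac{4}{3}\,Q^{-}\nu Q^{+}$, since $R=\tfrac{4}{3}Q^{-}_{\tau}\nu Q^{+}_{\tau}$ by (\ref{R}); the proposition's final expression $R_0=Q^{-}\nu Q^{+}$ as printed omits this factor $\tfrac{4}{3}$ (it is $R_1$, not $R$, whose compression equals $Q^{-}\nu Q^{+}$), so your version of that last equality is the correct one, while the $J_0$ chain needs no such correction because $J$ carries no scalar.
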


\begin{proof}

The result is now obvious from the preceding equations.

\end{proof}


\begin{definition}\label{QQQ5}

Define $Q^0_J$ and $Q^0_R$ in $\A_{P(4)}$ by

$$Q^0_J=J_0R_0 \mbox{ and } Q^0_R = R_0 J_0.$$

\end{definition}

We can note from Proposition \ref{JR1}, Corollary \ref{JRJ}, and Corollary \ref{QJ2} that in $\A_P$ we have

$$J^{*} = \frac{3}{4} R \mbox{ and } R^{*} =\frac{4}{3} J $$
and all the $Q$'s are self adjoint idempotents, which is to say they are orthogonal projections in $\A_P.$  Applying adjoints to the equations in Proposition \ref{QQQ4}, Proposition \ref{JR1}, and its Corollaries, we have

$$J_0^{*} = [JQ^{-}]^* = Q^{-} J^* = \frac{3}{4} Q^{-} R \mbox{ and } J_0^* =[Q^{+} J Q^{-}]^* = Q^{-} J^* Q^{+} = \frac{3}{4} Q^{-} R Q^{+}= \frac{3}{4} R_0$$
from which we conclude that

$$R_0 =  Q^{-} R  \mbox{  and  }  J_0^* = \frac{3}{4} R_0.$$
Likewise,

$$R_0^* =[RQ^{+}]^* =Q^{+} R^* = \frac{4}{3} Q^{+} J  \mbox{ and } R_0^* = [Q^{-} R Q^{+}]^* =Q^{+} R^* Q^{-} = \frac{4}{3} Q^{+} J Q^{-} = \frac{4}{3} J_0, $$
so

$$J_0 = Q^{+} J \mbox{ and } R_0^* = \frac{4}{3} J_0.$$
Now,

$$Q_J^0  J_0 = J_0 R_0 J_0 = [Q^{+}J][ R Q^{+}][ J Q^{-}]= Q^{+}J R[Q^{+}JQ^{-}]= Q^{+} JR[ J Q^{-}] =Q^{+} [JRJ] Q^{-} = Q^{+} J Q^{-} =J_0,$$
and either reversing the plus and minus signs and interchanging $R$ with $J$ or else taking adjoints in 

\begin{equation}\label{relinvJ0}
J_0 R_0 J_0 = Q_J^0 J_0 = J_0,
\end{equation}
gives 

\begin{equation}\label{relinvR0}
R_0 J_0 R_0 = Q_R^0 R_0 = R_0.
\end{equation}

We can now easily see that

$$Q^{+} Q_J = Q^{+} J R = [Q^{+} J Q^{-}] R = [Q^{+} J][Q^{-} R]= J_0 R_0 = Q_J^0,$$
$$Q_J Q^{+} = J R Q^{+} = J[Q^{-} R Q^{+}] = [J Q^{-}][R Q^{+}] = J_0 R_0 = Q_J^0 , $$
$$Q^{-} Q_R = Q^{-} R J = [Q^{-} R Q^{+}] J =  [Q^{-} R] [Q^{+} j] = R_0 J_0 = Q_R^0,$$
and

$$ Q_R Q^{-} = R J Q^{-} =R [Q^{+} J Q^{-}] = [R Q^{+} ][ J Q^{-}] = R_0 J_0 =Q_R^0.       $$

\bigskip

Finally, we summarize these results in


\begin{theorem}\label{QQQ6}

In $\A_{P(4)}$

$$J_0 = Q^{+} J = Q^{+} J Q^{-} = J Q^{-} \and R_0 = Q^{-} R = Q^{-} R Q^{+},$$

$$J_0^{*} = \frac{3}{4} R_0 \mbox{ and } R_0^{*} = \frac{4}{3} J_0$$

$$J_0R_0J_0 = Q^0_J  J_0 = J_0  \mbox{ and } R_0 J_0 R_0 =  Q^0_R R_0 = R_0.$$

Moreover, $Q^0_J$ and $Q^0_R$ are self adjoint idempotents, which is to say they are orthogonal projections in $\A_P.$  In fact,

$$Q^0_J = Q^{+} Q_J =Q_J Q^{+} = Q^{+} Q_J Q^{+} \mbox{ and } ~Q_R^0 = Q^{-} Q_R =Q_R Q^{-} = Q^{-} Q_R Q^{-}.$$

\end{theorem}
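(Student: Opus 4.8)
The plan is to assemble Theorem~\ref{QQQ6} entirely from the identities already established just above it, so that the theorem is essentially a bookkeeping summary rather than a new computation. First I would record the four descriptions of $J_0$ and $R_0$: from Definition~\ref{QQQ3} we have $J_0 = JQ^{-}$ and $R_0 = RQ^{+}$ by fiat; the displayed computations preceding Proposition~\ref{QQQ4} showed $\chi J_0 = J_0$ and $\chi R_0 = R_0$, hence $Q^{+}_{\chi}J_0 = J_0$ and $Q^{+}_{\chi}R_0 = R_0$; combining this with $Q^{-}_{\tau}R = R$, $Q^{+}_{\tau}J = J$ from Corollary~\ref{JRJ} and the absorption rules $Q^{\pm}_{\tau}Q^{\pm} = Q^{\pm}$, $Q^{\pm}_{\alpha}Q^{\pm} = Q^{\pm}$, $Q^{\pm}_{\chi}Q^{\pm} = Q^{\pm}$ from Proposition~\ref{QQQ1}, together with the facts $\alpha J_0 = J_0$ and $\alpha R_0 = -R_0$ derived in the run-up, one reads off $J_0 = Q^{+}J = Q^{+}JQ^{-} = JQ^{-}$ and $R_0 = Q^{-}R = Q^{-}RQ^{+}$. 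This is exactly Proposition~\ref{QQQ4} together with the adjoint manipulation carried out immediately after Definition~\ref{QQQ5}, so nothing new is needed here.

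Next I would handle the adjoint relations. Since $J^{*} = \tfrac{3}{4}R$ and $R^{*} = \tfrac{4}{3}J$ in $\A_P$ by Corollary~\ref{JRJ}, and each $Q^{\pm}$, $Q^{\pm}_{\tau}$, $Q^{\pm}_{\alpha}$, $Q^{+}_{\chi}$ is a self-adjoint idempotent (Proposition~\ref{QQQ1}), applying $*$ to $J_0 = Q^{+}JQ^{-}$ gives $J_0^{*} = Q^{-}J^{*}Q^{+} = \tfrac{3}{4}Q^{-}RQ^{+} = \tfrac{3}{4}R_0$, and symmetrically $R_0^{*} = \tfrac{4}{3}J_0$. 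For the pseudoinverse relations, I would use the chain $J_0 R_0 J_0 = [Q^{+}J][RQ^{+}][JQ^{-}] = Q^{+}(JRJ)Q^{-} = Q^{+}JQ^{-} = J_0$, using $Q^{+}JQ^{-}R = Q^{+}JR$ (absorb the middle $Q^{\pm}$ between $J$ and $R$ via Proposition~\ref{QQQ1}) and the key identity $JRJ = J$ from Corollary~\ref{JRJ}; the companion identity $R_0 J_0 R_0 = R_0$ follows by swapping plus/minus and $J\leftrightarrow R$, or simply by taking adjoints. Then $Q^0_J = J_0 R_0$ and $Q^0_R = R_0 J_0$ are idempotent because $Q^0_J Q^0_J = J_0(R_0 J_0 R_0)... $ wait — cleaner: $Q^0_J Q^0_J = (J_0 R_0)(J_0 R_0) = J_0(R_0 J_0 R_0) = J_0 R_0 = Q^0_J$ using \eqref{relinvR0}, and likewise for $Q^0_R$; self-adjointness follows since $(J_0 R_0)^{*} = R_0^{*}J_0^{*} = \tfrac{4}{3}J_0 \cdot \tfrac{3}{4}R_0 = J_0 R_0$.

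Finally, the factorization $Q^0_J = Q^{+}Q_J = Q_J Q^{+} = Q^{+}Q_J Q^{+}$ and its mirror for $Q^0_R$ are precisely the four displayed lines computed right before the statement of the theorem: $Q^{+}Q_J = (Q^{+}J)(RQ^{+})$ after inserting $Q^{-}$ harmlessly, $Q_J Q^{+} = (JQ^{-})(RQ^{+})$ similarly, and both equal $J_0 R_0 = Q^0_J$; the middle expression $Q^{+}Q_J Q^{+}$ follows by combining the two. So the proof body is just: cite Corollary~\ref{JRJ}, Proposition~\ref{QQQ1}, Corollary~\ref{QJ2}, and Proposition~\ref{QQQ4}, and reproduce these short lines. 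I expect no real obstacle — the only thing to watch is consistency of the absorption conventions (making sure that whenever a $Q^{\pm}_{\tau}$ or $Q^{\pm}_{\alpha}$ or $Q^{+}_{\chi}$ sits next to $Q^{\pm}$ it is absorbed in the direction Proposition~\ref{QQQ1} permits), and double-checking the $3/4$ versus $4/3$ factors do not drift, since $R = \tfrac{4}{3}R_1$ was rescaled precisely to hide them. The write-up will therefore read:

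\begin{proof}
All the identities have in fact been established in the discussion preceding the statement. By Definition~\ref{QQQ3}, $J_0 = JQ^{-}$ and $R_0 = RQ^{+}$. The computations immediately before Proposition~\ref{QQQ4} show $\chi J_0 = J_0$ and $\alpha J_0 = J_0$, hence $Q^{+}_{\chi}J_0 = Q^{+}_{\alpha}J_0 = J_0$; since also $Q^{+}_{\tau}J = J$ by Corollary~\ref{JRJ}, Proposition~\ref{QQQ1} gives $J_0 = Q^{+}J = Q^{+}JQ^{-} = JQ^{-}$, and symmetrically $R_0 = Q^{-}R = Q^{-}RQ^{+}$, which is Proposition~\ref{QQQ4}. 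Taking adjoints and using $J^{*} = \tfrac{3}{4}R$, $R^{*} = \tfrac{4}{3}J$ from Corollary~\ref{JRJ} together with self-adjointness of $Q^{\pm}$ yields $J_0^{*} = \tfrac{3}{4}R_0$ and $R_0^{*} = \tfrac{4}{3}J_0$. Using $JRJ = J$ and $RJR = R$ from Corollary~\ref{JRJ}, and absorbing the intervening $Q^{\pm}$ via Proposition~\ref{QQQ1}, we get $J_0R_0J_0 = Q^{+}(JRJ)Q^{-} = Q^{+}JQ^{-} = J_0$ and $R_0J_0R_0 = R_0$; setting $Q^0_J = J_0R_0$ and $Q^0_R = R_0J_0$, these relations give $(Q^0_J)^2 = J_0(R_0J_0R_0) = J_0R_0 = Q^0_J$ and likewise $(Q^0_R)^2 = Q^0_R$, while $(Q^0_J)^{*} = R_0^{*}J_0^{*} = \tfrac{4}{3}J_0\cdot\tfrac{3}{4}R_0 = Q^0_J$ and $(Q^0_R)^{*} = Q^0_R$, so $Q^0_J$ and $Q^0_R$ are orthogonal projections in $\A_P$. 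Finally, inserting a harmless factor $Q^{-}$ between $J$ and $R$ and using $Q_J = JR$, $Q_R = RJ$ of Definition~\ref{QJ1}, we compute $Q^{+}Q_J = [Q^{+}J][RQ^{+}] = J_0R_0 = Q^0_J$, $Q_JQ^{+} = [JQ^{-}][RQ^{+}] = J_0R_0 = Q^0_J$, whence $Q^{+}Q_JQ^{+} = Q^0_J$ as well; the corresponding identities for $Q^0_R$ follow by the same argument with plus and minus interchanged and $J\leftrightarrow R$.
\end{proof}
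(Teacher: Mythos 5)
Your proposal is correct and follows the paper's own route exactly: the paper's proof environment for this theorem is in fact empty, the entire argument being the chain of computations between Definitions~\ref{QQQ3}, \ref{QQQ5} and the theorem statement, which you faithfully reproduce (the four descriptions of $J_0$ and $R_0$, the adjoint relations via $J^{*}=\tfrac{3}{4}R$ and $R^{*}=\tfrac{4}{3}J$ from Corollary~\ref{JRJ}, the pseudoinverse identities from $JRJ=J$ and $RJR=R$, and the factorizations of $Q_J^0$ and $Q_R^0$). One small ordering caveat: the equalities $Q^{+}J=J_0$ and $Q^{-}R=R_0$ are obtained in the paper only from the adjoint computation $J_0^{*}=\tfrac{3}{4}Q^{-}R=\tfrac{3}{4}Q^{-}RQ^{+}=\tfrac{3}{4}R_0$ (they do not follow from Proposition~\ref{QQQ1} alone), so in your write-up that adjoint step should precede, rather than follow, the first displayed line.
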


\begin{proof}

\end{proof}

\medskip

We note that the main upshot here in algebraic terms is that $J_0$ and $R_0$ are mutually pseudoinverse with $Q_J^0$ the domain projection of $J_0$ and $Q_R^0$ the domain projection of $R_0.$

Thus,

$$[J_0]^{-(Q_J^0,Q_R^0)} =R_0 \mbox{ and likewise } [R_0]^{-(Q_R^0,Q_J^0)} =J_0.$$

\bigskip

If $M$ is a $\A_{P(4)}$-module, and $m \in M,$ then by definition, $m$ is a metric curvature operator if and only if $\chi m = m,$ so by the preceding propositions and theorems, if $m$ is in fact a metric Riemann curvature operator, then $Q_R m = m $ so $\tau m = - m$ and $\chi m = m,$ and therefore as $\alpha  \chi = \chi \tau,$ it follows that $\alpha m =  \alpha \chi m = \chi \tau m = - \chi m =-m$, so  as already noted above, we have $Q^{-}m = m$ and therefore $Q^{-} Q_R m = m.$  Likewise, if $m$ is a metric Jacobi curvature operator, the fact that $\tau m = m$ and $\chi m =m,$ immediately gives $\alpha m = m,$ so we have $Q^{+} m = m.$  Thus, the orthogonal projections $Q_J^0$ and $Q_R^0$ are projecting $M$ onto the subspaces of metric Jacobi curvature operators and metric Riemann curvature operators, respectively.  

We note that $JM \subset M,$ so 

$$Q_J JM \subset Q_J M =JR M \subset JM = Q_J JM \subset Q_J M,$$ 
and therefore

$$Q_J M = JM.$$
Likewise, reversing the roles of $R$ and $J,$ we find that

$$Q_R M = RM.$$

Finally, we note that we can simply take $M = \A_{P(4)},$ as a left $\A_{P(4)}$-module.  In this case, the equation $Q_J J =J,$ says that $J$ itself is a Jacobi curvature operator, so is the {\bf Universal Jacobi Curvature Operator} and likewise, the equation $Q_R R = R$ makes $R$ the {\bf Universal Riemann Curvature Operator}.


\vfill
\break


\section{ALGEBRAIC CURVATURE OPERATORS}





The various identities satisfied by the Riemann curvature operator have been usefully and extensively studied algebraically by mathematicians and physicists with applications to semi-Riemannian geometry.  The monograph by Peter Gilkey, \cite{GILKEY1}, is a very useful introduction to this literature, however, most of the discussion is limited to the quadratic form of the Jacobi curvature operator.  We are interested in the full Jacobi curvature operator, which in \cite{GILKEY1} is rarely used and when used it is through polarization. The treatment here is a useful way to derive relations between the full Jacobi and Riemann curvature operators, as we do not restrict attention to the quadratic forms so we do not have to polarize to get the full tensor.  It is a little more involved than the quadratic form, but it will be helpful in our development.  Some of the results here can also be found in \cite{MTW}, pages 286-287, in index notation, where the Jacobi curvature operator field on a semi-Riemannian manifold is shown to have a particularly useful description in Riemann normal coordinates.  The results of this section are valid for any semi-Riemannian vector space, that is, for any finite dimensional vector space, $V,$ with a given non-degenerate symmetric bilinear form, $g.$  Even some of the basic facts concerning the Riemann curvature tensor can be more efficiently seen through the Jacobi curvature operator.  But it will be useful to consider curvature operators more generally than is usually done, in order to make the most efficient use of them, and this will require more detailed permutation computations.


In addition, it is useful to note that $\A_P$ is a star-algebra with the adjoint operation the extension of inversion in the group to the whole algebra. The natural representations of $\A_P$ on these various tensor algebras preserve the star operation, so in particular, the image of any self adjoint idempotent is a self adjoint projection operator on the particular vector space of tensors on which the permutation group is acting via permutation of arguments in the tensors.  Since any transposition (cycle of length 2) is idempotent, it is therefore its own inverse, so is self adjoint, so the decomposition in $\A_P$ as a sum of self adjoint projections produces self adjoint projection operators acting on tensors.

the decomposition in $\A_P$ as a sum of self adjoint projections produces self adjoint projection operators acting on tensors. That is to say, that if $B$ is a transposition, then $Q^{\pm}_B$ are self adjoint idempotents, that is, they are orthogonal projection operators.


Algebraically, curvature operators modeling actual curvature operators on a tangent bundle at a point can be modeled as members of $PC(V)=L^2(V;L(V;V))$ with various properties, where $V$ is a vector space with given non-degenerate indefinite inner product $g,$ so for simplicity, we refer to the members of $PC=PC(V)$ as {\it precurvature operators} or simply as {\bf Curvature Operators}.   Of course, the properties of interest are expressed in terms of the action of the permutation group $P(4)$ and the resulting action of $\A_{P(4)},$ so we will identify $L^3(V;V)=PC(V)$ through evaluation, so $T(u,v,w)=[T(u,v)]w $ for any $u,v,w \in V,$ and through the metric $g$ we identify $PC(V)$ with $L^4(V, \bR),$ so $T(w,x,y,z)=g(T(w,x)y,z)$ for any $w,x,y,z \in V.$  Thus, $\A_{P(3)}$ and $\A_{P(4)}$ act on $PC(V)$ in the obvious way. 

More generally, in case of general bundles such as principal bundles and vector bundles, the objects are still more general, so we would have $V$ as above, but replace $L(V;V)$ by any vector space $A$ with a bilinear map
$m:A \times A \lra A$ specified which is not assumed commutative or associative, but defines what we think of as a specified multiplication denoted as juxtaposition, so for $a,b \in A$ we use 
$ab=m(a,b).$  Then, any member of $L^2(V;A)$ is a {\bf curvature operator} or {\bf algebraic curvature operator}.  Thus, in case of a vector bundle over a manifold with fiber $W,$ we would have $A=L(W;W)$ is the algebra of linear transformations of $W.$  Since the semi classical modeling of forces uses curvature operators in this general sense, and as we are concerned with energy, we can point out here, that the energy momentum stress tensor of such a field $F$ is

$$T_F=trace(F^{<2>} - \frac{1}{d}[c(F^{<2>})]g), \mbox{ where } c(F^{<2>}) \in A \mbox{ is a contraction of } F^{<2>}, \mbox{ and } d=dim(V).$$
Here we are using $F^{<2>}$ for the contraction

$$F^{<2>}=FgF, \mbox{ where in any frame at a point, } [FgF](u,v) = F(u,e_{\mu})F(e^{\mu},v), \mbox{ for any tangent vectors } u,v.$$

In fact, the contraction used in the equation commutes with the trace and therefore, $c(T_F)=0,$ as $c(g)=d=dim(V).$  Thus, these semi classical models are for massless force fields.  But, given our idea that the trace of an operator giving energy momentum stress density should itself be considered as a generalized energy momentum stress tensor, it seems reasonable that in the general semi classical setting of modeling massless forces, that the full energy momentum stress of the field should be

$$T_F= F^{<2>}-\frac{1}{d}[c(F^{<2>})]g$$
and then we regard $trace(T_F)$ as the gravitational source energy momentum tensor of the force field, thus maybe we should as well consider that force fields have other forms of energy that are in balance so as not to be a source of gravity, just like with the gravitational energy momentum stress tensor itself.  After all, the Bianchi identity applies to all vector bundle connections, so is really a general form of conservation of energy.

 However, in these more general cases, there is not much to the algebra so, for now, we stick with the case that $A=L(V;V).$

Extending the notation of the preceding section on group rings, we find it convenient to introduce specific symbols for the operations that we will need to use over and over.  Except for the computations with transpositions which are referred to, the following treatment is fairly self-contained and will not require the preceding section for any reader not interested in that level of generality.  Of course, we have the {\bf Natural Isomorphisms} of $PC(V)$ onto $L^3(V;V)$ and onto $L^4(V; \bR)$ which will allow us to think of curvature operators as being in either of these places for convenience, where the discussion makes the clear, as for instance in applying the $Cycl_3$ to a curvature operator or $Cycl_4$ as the case may be.

$$\tau,\alpha, \beta, \gamma, \lambda, \chi,\nu,Cycl : PC(V) \lra PC(V), \mbox{ so for any }T \in PC(V),$$

$$[1,2]T=\tau(T)=\tilde{T} \in PC(V),$$
$$[3,4]T=\alpha(T)=T^* \in PC(V),$$
$$[2,4]T=\beta T,$$
$$[2,3]T= \gamma T,$$
$$[1,4]T = \lambda T,$$
$$[1,3][2,4]T=\chi(T)=T^{\dag} \in PC(V),$$
$$[1,3]T=\nu(T)=T^{\#} \in PC(V),$$ 
$$Cycl(T) \in PC(V),$$
by requiring


\begin{equation}\label{pc2}
[\tau(T)](u,v)=\tilde{T}(u,v)=T(v,u),
\end{equation}

\begin{equation}\label{pc3}
[\alpha(T)](u,v)=T^*(u,v)=[T(u,v)]^*,
\end{equation}

\begin{equation}\label{pc4}
g([[\chi(T)](u,v)]w,x)=g([T^{\dag}(u,v)]w,x)=g(T(w,x)u,v),
\end{equation}

\begin{equation}\label{pc5}
[[\nu(T)](u,v)]w=[T^{\#}(u,v)]w=[T(w,v)]u,
\end{equation}

\begin{equation}\label{pc6a}
Cycl(T)=Cycl_3(T),
\end{equation}
hold for all vectors $u,v,w,x \in V.$
%


Think $\tau$ for twisting the first two arguments, $\alpha$ for the adjoint which then interchanges the last two arguments.  We can think of $\chi$ as exchanging the first pair with the last pair maintaining order which effects an exchange operation, and the symbol $\chi$ then looks like an "ex".  The action of $\nu$ performs the "hash" operation which will be fundamental.

\medskip


\begin{proposition}\label{involutions}
(1)  As linear maps on $PC(V)$  the permutations $\tau, \alpha, \beta, \chi, \nu$ are linear involutions, that is, with $id$ denoting the identity map of $PC(V),$

$$\tau  \tau=id,~~\alpha  \alpha=id,~~ \beta \beta = id, ~~\chi \chi=id, ~~\nu  \nu=id,$$
so for any $T \in PC(V),$

$$\tilde{\tilde{T}}=T,~~(T^*)^*=T,~~(T^{\dag})^{\dag}=T,~~(T^{\#})^{\#}=T.$$

\end{proposition}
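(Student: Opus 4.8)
The plan is to verify each of the four involution claims directly from the defining equations \eqref{pc2}--\eqref{pc5}, which amounts to observing that each of $\tau,\alpha,\beta,\chi,\nu$ is represented on $PC(V)$ by a transposition (or a product of two disjoint transpositions) in $P(4)$, and that every such element is its own inverse in the group $P(\bN)$. Since the action of $\A_P$ on $PC(V)$ is a group action by the representation $\sigma\mapsto(\text{permute arguments})$ described in \eqref{perm0}--\eqref{perm1}, we have $\sigma_1(\sigma_2(T))=(\sigma_1\sigma_2)(T)$, so that for any element $\sigma$ with $\sigma^2=1$ in the group we immediately get $\sigma\circ\sigma=\mathrm{id}$ on $PC(V)$. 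As noted in the text, $\tau=[1,2]$, $\alpha=[3,4]$, $\beta=[2,4]$, $\nu=[1,3]$ are transpositions, hence square to $1$, and $\chi=[1,3][2,4]$ is a product of two disjoint (hence commuting) transpositions, so $\chi^2=[1,3]^2[2,4]^2=1$.

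For a self-contained argument that does not lean on the abstract group-action formalism, I would simply apply each definition twice. For $\tau$: from \eqref{pc2}, $[\tau(\tau(T))](u,v)=[\tau(T)](v,u)=T(u,v)$, so $\tilde{\tilde T}=T$. For $\alpha$: from \eqref{pc3}, $(T^*)^*(u,v)=[T^*(u,v)]^*=[[T(u,v)]^*]^*=T(u,v)$, using that the adjoint of an operator on $V$ (with respect to $g$) is an involution, which was established in the discussion of the adjoint $B^{(g)}=B^*$ earlier in the paper. For $\chi$: from \eqref{pc4}, $g([[\chi(\chi(T))](u,v)]w,x)=g([\chi(T)](w,x)u,v)=g([T(u,v)]w,x)$ for all $u,v,w,x\in V$, and nondegeneracy of $g$ forces $(T^{\dag})^{\dag}=T$. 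For $\nu$: from \eqref{pc5}, $[[\nu(\nu(T))](u,v)]w=[\nu(T)(w,v)]u=[T(u,v)]w$ for all $u,v,w$, hence $(T^{\#})^{\#}=T$. (For $\beta=[2,4]$ one argues exactly as for $\nu$, swapping the second and fourth slots twice.)

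There is essentially no obstacle here: the statement is a bookkeeping verification, and the only points requiring a word of care are (i) that the adjoint operation on $L(V;V)$ is itself an involution, which is where $\alpha\alpha=\mathrm{id}$ comes from and which uses the symmetry (not positive-definiteness) of $g$, and (ii) the appeal to nondegeneracy of $g$ in the $\chi$ case to pass from the equality of all the scalars $g(\cdot,x)$ back to equality of the operator-valued tensors. Both are already available in the paper. I would present the proof as four short lines, one per involution (with $\beta$ subsumed into the $\nu$-type argument), and remark that it is the immediate consequence of each permutation symbol representing an element of order two in $P(4)$.
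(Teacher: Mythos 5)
Your proposal is correct and its primary argument --- that $\tau,\alpha,\beta,\nu$ are transpositions and $\chi=[1,3][2,4]$ is a product of disjoint, hence commuting, transpositions, so each has order two in $P(4)$ and therefore acts as an involution on $PC(V)$ --- is exactly the proof given in the paper. The supplementary slot-by-slot verifications (including the remark that $\alpha\alpha=\mathrm{id}$ rests on the adjoint with respect to the symmetric nondegenerate $g$ being an involution) are a harmless elaboration of the same idea.
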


\begin{proof}
Obvious because, except for $\chi,$ they are defined by transpositions.  If $\sigma$ is a transposition in any permutation group, then $\sigma^2=1.$  But, $\chi=[1,3][2,4]$ and as disjoint cycles commute, $\chi^2=1,$ as well.
\end{proof}
\medskip

We say that $T \in PC(V)$ is an {\it acyclic curvature operator} provided that $~Cycl(T)=0.$  If $T$ is an acyclic curvature operator with $\tilde{T}=T,$ we call $T$ a(n algebraic) {\it Jacobi} curvature operator, whereas if $\tilde{T}=-T,$ we call $T$ a(n algebraic) {\it Riemann} curvature operator.   We say that $T$ is a metric-$g$ curvature operator if it is a curvature operator which also satisfies $T^{\dag}=T.$  If $T \in PC(V)$ satisfies $Cycl(T)=0,$ we say it satisfies the {\bf First Bianchi Identity}, whereas if $T$ satisfies $T^{\dag}=T,$ then we say $T$ has the {\bf (metric) Exchange Property}.  Thus all acyclic curvature operators satisfy the First Bianchi identity and all metric acyclic curvature operators in addition have the Exchange Property.

If $W$ is any vector space and $B \in L(W;W),$ we say $B$ is an involution if $B \circ B =id_W,$ in which case, if $w \in W,$ then we define 

$$Q_B^+(w)=Sym_B(w)=\frac{w+B(w)}{2}, \mbox{ so } B(Sym_B(w))=Sym_B(w),$$
and likewise

$$Q_B^-(w)=Alt_B(w)=\frac{w-B(w)}{2} \mbox{ so } B(Alt_B)(w)=-Alt_B(w).$$ Thus, with $1_W=id_W,$ the identity map on $W,$ we have simply

$$Q_B^+=\frac{1_W+B}{2} \mbox{ and } Q_B^-=\frac{1_W-B}{2},$$
and these are the idempotent operators projecting onto the $\pm1$ eigenspaces of $B,$ giving the trivial spectral decomposition $B=Q_B^+ -Q_B^-,~Q_B^+Q_B^-=0=Q_B^-Q_B^+,~Q_B^+ + Q_B^-=1_W.$  In particular, for $W=PC(V),$ the liner maps $\tau, \alpha, \chi, \nu$ are all involutions and thus are linear isomorphisms.

\medskip

For any $T \in PC(V),$ since $\tau$ is an involution, define the new curvature operators

$$J[[T]]=J_T, ~R[[T]]=R_T \in PC(V)$$ by

\begin{equation}\label{pc6b}
J[[T]]=J_T=Sym_{\tau}(T^{\#})=Q^+_{\gt}(T^{\#})
\end{equation}
and

\begin{equation}\label{pc7}
R[[T]]=R_T=\frac{4}{3}Alt_{\tau}(T^{\#})=\frac{4}{3}Q^-_{\gt}(T^{\#}).
\end{equation}


{\bf CAUTION: The notation here is slightly different from that in Definitions \ref{JANDR} and \ref{QQQ3}, because we are dealing with the specific $\A_P$-module $PC(V),$ so it will be convenient to modifiy the operators $J$ and $R,$  that is, in terms of those definitions, we have, with the $PC(V)$ as a left $\A_P$-module, with $J, R \in \A_P,$ defined as in Definition \ref{JANDR},

$$JT=J[[Q^-T]]  \mbox{ and } RT=R[[Q^+T],$$
so

$$JT=J_S \mbox{ where } S=Q^-T, \mbox{ and } RT=R_S \mbox{ where } S=Q^+T.$$}


The relations between these constructs are summarized in the following propositions whose proofs  are an elementary exercise, but in the proofs below, we give each step a description to make it clear how to proceed.  It is also useful to note that the quadratic form of $J_T$ is easily seen to be

\begin{equation}\label{jquad}
J_T(u,u)v=T(v,u)u,
\end{equation} 
so by the observer principle, we see that $J_T$ is determined readily from the curvature operator $T$.  In physical applications it is usually the quadratic form of the Jacobi curvature operator which is needed, but we will find use for the full Jacobi curvature operator.


\begin{proposition}\label{cyclprop2}
For any $T \in PC(V),$ we have

\begin{equation}\label{cycl7}
Cycl([1,2]T)=Cycl([1,3]T) \mbox{ in other words, } Cycl(\tilde{T})=Cycl(T^{\#}),
\end{equation}

\begin{equation}\label{cycl8}
Cycl(T)=T+[1,2](T^{\#}) +([1,2]T)^{\#}=T+[T^{\#}]\tilde ~+ [\tilde{T}]^{\#}
\end{equation}

\begin{equation}\label{cycl8c}
Cycl(Q_{\tau}^{\pm}(T^{\#})=Cycl(Q_{\tau}^{\pm}(\tilde{T}))
\end{equation}

\begin{equation}\label{cycl9}
T^{\#}=Q_{\tau}^+(T^{\#}) +Q_{\tau}^-(T^{\#}) =J[[T]]+\frac{3}{4}R[[T]] =J_T+\frac{3}{4}R_T.
\end{equation}
\begin{equation}\label{cycl9a}
T=[J_T]^{\#}+\frac{3}{4}[R_T]^{\#}.
\end{equation}

\begin{equation}\label{cycl9b}
Q^+_{\tau} ( T) =J[[J_T]] + \frac{3}{4} J[[R_T]] \mbox{ and } Q^-_{\tau} (T)= \frac{3}{4} R[[J_T]] + \frac{9}{16} R[[R_T]].
\end{equation}

\begin{equation}\label{cycl10}
\mbox{If } \tilde{T}=T, \mbox{ then } Cycl(T)=T+2J_T,
\end{equation}
whereas,

\begin{equation}\label{cycl11}
\mbox{if } \tilde{T}=-T, \mbox{ then } Cycl(T)=T-\frac{3}{2}R_T.
\end{equation}

\end{proposition}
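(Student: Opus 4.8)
The plan is to reduce every one of the listed identities to three ingredients already in hand: the group-algebra relation $C = 1 + \nu\tau + \tau\nu$ together with $C\tau = C\nu$ from Proposition~\ref{cyclprop1}; the involutivity of $\tau$ and $\nu$ on $PC(V)$ from Proposition~\ref{involutions}; and the defining formulas $J_T = Q^{+}_{\tau}(T^{\#})$ and $R_T = \frac{4}{3}Q^{-}_{\tau}(T^{\#})$ of (\ref{pc6b}) and (\ref{pc7}), where $Q^{\pm}_{\tau} = \frac{1}{2}(1 \pm \tau)$ and $Q^{+}_{\tau} + Q^{-}_{\tau} = 1$. Throughout one works inside the left $\A_{P}$-module $PC(V)$, so that $(ab)(T) = a(b(T))$ for $a,b \in \A_{P}$, and one translates freely between the operator notation and the group notation via $\tilde{T} = \tau(T)$, $T^{\#} = \nu(T)$ and $T^{\dag} = \chi(T)$.

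First the identities that are immediate. Applying $C\tau = C\nu$ to $T$ gives $Cycl(\tau(T)) = Cycl(\nu(T))$, which is (\ref{cycl7}). Applying $C = 1 + \nu\tau + \tau\nu$ to $T$ and using $\nu\tau(T) = \nu(\tilde{T}) = [\tilde{T}]^{\#}$ and $\tau\nu(T) = \tau(T^{\#})$ gives (\ref{cycl8}). Since $Q^{+}_{\tau} + Q^{-}_{\tau} = 1$ we have $T^{\#} = Q^{+}_{\tau}(T^{\#}) + Q^{-}_{\tau}(T^{\#})$, and (\ref{pc6b}) and (\ref{pc7}) identify the two summands as $J_T$ and $\frac{3}{4}R_T$ respectively, which is (\ref{cycl9}).

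Next come the identities obtained by transporting (\ref{cycl9}) with the involution $\nu$. Applying $\nu$ to (\ref{cycl9}) and using $(T^{\#})^{\#} = T$ from Proposition~\ref{involutions} yields (\ref{cycl9a}). Applying $Q^{\pm}_{\tau}$ to (\ref{cycl9a}) and recalling that $J[[S]] = Q^{+}_{\tau}(S^{\#})$ and $R[[S]] = \frac{4}{3}Q^{-}_{\tau}(S^{\#})$ for every $S \in PC(V)$, the image under $Q^{+}_{\tau}$ becomes $J[[J_T]] + \frac{3}{4}J[[R_T]]$ and the image under $Q^{-}_{\tau}$ becomes $\frac{3}{4}R[[J_T]] + \frac{9}{16}R[[R_T]]$, once the normalizations $\frac{3}{4}\cdot\frac{4}{3} = 1$ and $\frac{9}{16}\cdot\frac{4}{3} = \frac{3}{4}$ are accounted for; this is (\ref{cycl9b}). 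For (\ref{cycl8c}) one expands $Q^{\pm}_{\tau} = \frac{1}{2}(1 \pm \tau)$ and applies $Cycl$: using $C\tau = C\nu$ and $(T^{\#})^{\#} = T$ one reduces $Cycl(Q^{\pm}_{\tau}(T^{\#}))$ to $\frac{1}{2}(Cycl(T^{\#}) \pm Cycl(T))$, and using $C\tau = C\nu$, $\tilde{\tilde{T}} = T$ and (\ref{cycl7}) one reduces $Cycl(Q^{\pm}_{\tau}(\tilde{T}))$ to the same expression.

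Finally, (\ref{cycl10}) and (\ref{cycl11}) follow by feeding $\tilde{T} = \pm T$ into (\ref{cycl8}). When $\tilde{T} = T$ one has $[\tilde{T}]^{\#} = T^{\#}$, so $Cycl(T) = T + T^{\#} + \tau(T^{\#}) = T + (1 + \tau)(T^{\#}) = T + 2Q^{+}_{\tau}(T^{\#}) = T + 2J_T$; when $\tilde{T} = -T$ one has $[\tilde{T}]^{\#} = -T^{\#}$, so $Cycl(T) = T - T^{\#} + \tau(T^{\#}) = T - (1 - \tau)(T^{\#}) = T - 2Q^{-}_{\tau}(T^{\#}) = T - \frac{3}{2}R_T$. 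There is no genuine obstacle: the whole proposition is bookkeeping. The one place that demands care is keeping the two registers aligned (the permutation-action notation $\tilde{T}, T^{\#}, T^{\dag}$ versus the group-algebra elements $\tau, \nu, \chi$) through the left-module identity $(ab)(T) = a(b(T))$, and tracking the normalizing constants $\frac{4}{3}$ and $\frac{3}{4}$ built into $R[[T]]$ and $R_T$.
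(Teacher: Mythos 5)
Your proof is correct and follows essentially the same route as the paper: everything is read off from $C=1+\nu\tau+\tau\nu$ and $C\tau=C\nu$ (Proposition \ref{cyclprop1}), the involutivity of $\tau$ and $\nu$, and the defining formulas $J_T=Q^{+}_{\tau}(T^{\#})$, $R_T=\frac{4}{3}Q^{-}_{\tau}(T^{\#})$, with the constants checked correctly. You merely supply the bookkeeping (including the step for (\ref{cycl9b}), which the paper's proof leaves implicit) that the paper compresses into "immediate" and "obvious."
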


\begin{proof}
We have (\ref{cycl7}) and (\ref{cycl8}) as immediate consequences of (\ref{cyclprop1}).  Then (\ref{cycl8c}) follows immediately from  (\ref{cycl7}) using $(T^{\#})^{\#}=T.$ Of course, (\ref{cycl9}) is completely obvious, as 

$$Q_{\tau}^+ + Q_{\tau}^-=1,$$
and (\ref{cycl9a}) follows from (\ref{cycl9}) again using $T[^{\#}]^{\#}=T.$
Finally, (\ref{cycl10}) and (\ref{cycl11}) follow from (\ref{cycl8}).

\end{proof}


\begin{proposition}\label{cyclprop3}
Suppose that $T \in PC(V)$ with $Cycl(T)=0.$  

\begin{equation}\label{cycl12J}
\mbox{If } \tilde{T}=T, \mbox{ then } J_T=-\frac{1}{2}T \mbox{ and therefore also } J[[J_T]]=\frac{1}{4}T.
\end{equation}

\begin{equation}\label{cycl12R}
\mbox{If } \tilde{T}=-T, \mbox{ then } R_T=\frac{2}{3}T \mbox{ and therefore also } R[[R_T]]=\frac{4}{9}T.
\end{equation}

\end{proposition}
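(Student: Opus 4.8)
The plan is to read off both statements directly from the expansion formula (\ref{cycl8}), namely $Cycl(T)=T+[T^{\#}]\tilde{}\,+[\tilde T]^{\#}$, by specializing to the symmetric and antisymmetric cases and then invoking the definitions (\ref{pc6b}) and (\ref{pc7}).

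For the first claim, assume $Cycl(T)=0$ and $\tilde T=T$. Then (\ref{cycl10}) gives $0=Cycl(T)=T+2J_T$, so $J_T=-\tfrac12 T$ at once. To obtain $J[[J_T]]=\tfrac14 T$, I would like to say $J[[\,\cdot\,]]$ is linear (which it is, being $Q^+_\tau\circ\nu$), so $J[[J_T]]=-\tfrac12 J[[T]]=-\tfrac12 J_T=\tfrac14 T$; the only thing to check is that $J[[\,\cdot\,]]$ really is a linear operator on $PC(V)$, which is immediate since $T\mapsto T^{\#}$ is linear and $Q^+_\tau$ is linear. So this half is essentially automatic once (\ref{cycl10}) is in hand.

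For the second claim, assume $Cycl(T)=0$ and $\tilde T=-T$. Then (\ref{cycl11}) gives $0=Cycl(T)=T-\tfrac32 R_T$, hence $R_T=\tfrac23 T$. For $R[[R_T]]=\tfrac49 T$, I would again use linearity of $R[[\,\cdot\,]]=\tfrac43 Q^-_\tau\circ\nu$ to write $R[[R_T]]=\tfrac23 R[[T]]=\tfrac23 R_T=\tfrac23\cdot\tfrac23 T=\tfrac49 T$. Alternatively, if one prefers not to reuse linearity, one can observe that $R_T=\tfrac23T$ is again antisymmetric and acyclic (since $T$ is), so it satisfies the same hypotheses as $T$, and applying the already-proven formula $R_S=\tfrac23 S$ to $S=R_T$ gives $R[[R_T]]=R_{R_T}=\tfrac23 R_T=\tfrac49 T$. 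I would present the latter phrasing since it makes the self-consistency transparent and reuses exactly what was just proved.

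The only mild subtlety — hardly an obstacle — is making sure the identities (\ref{cycl10})–(\ref{cycl11}) are being applied in the correct direction: they were derived assuming only $\tilde T=\pm T$, with no acyclicity hypothesis, so adding $Cycl(T)=0$ is legitimate and the argument is just solving a linear equation. One should also note that $\tfrac34$ is nonzero so no division issue arises, and that the formulas are consistent with the duality $R,J$ being mutually pseudoinverse on the appropriate eigenspaces established in Section~7. Thus the proof is a two-line consequence of the expansion (\ref{cycl8}) together with linearity of the operators $J[[\,\cdot\,]]$ and $R[[\,\cdot\,]]$; I would write it as such, citing (\ref{cycl10}) and (\ref{cycl11}) explicitly.
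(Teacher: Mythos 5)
Your proposal is correct and follows the same route as the paper: the paper's proof simply states that (\ref{cycl12J}) and (\ref{cycl12R}) are immediate consequences of (\ref{cycl10}) and (\ref{cycl11}), respectively, which is exactly your first step. Your added justification of the iterated formulas $J[[J_T]]=\frac14 T$ and $R[[R_T]]=\frac49 T$ via linearity of $J[[\,\cdot\,]]$ and $R[[\,\cdot\,]]$ (or by reapplying the result to $J_T$, $R_T$, which are again symmetric/antisymmetric and acyclic by Proposition \ref{pc8}) is sound and merely fills in a step the paper leaves implicit.
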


\begin{proof}
The statement (\ref{cycl12J}) is an immediate consequence of (\ref{cycl10}) and statement (\ref{cycl12R}) is an immediate consequence of (\ref{cycl11}).

\end{proof}

\begin{corollary}\label{RRJJ}

If $J$ is a Jacobi curvature operator, then  $J[[J]] = J_J = (-1/2)J,$ whereas, if $R$ is a Riemann curvature operator, then $R[[R]] = R_R = (2/3)R,$  hence,

\begin{equation}
Sym(J^{\#}) = - \frac{1}{2} J \mbox{ and }  Alt(R^{\#}) =\frac{3}{4} R[[R]] = \frac{1}{2} R,
\end{equation}
and therefore,

\begin{equation}
J^{\#} = Alt(J^{\#}) + Sym(J^{\#}) = \frac{3}{4} R[[J]] - \frac{1}{2} J  \mbox{ and } R^{\#} =  Sym(R^{\#}) + Alt(R^{\#})  = J[[R]] +  \frac{1}{2} R.
\end{equation}

\end{corollary}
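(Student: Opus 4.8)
The plan is to assemble the statement directly from Proposition~\ref{cyclprop3} together with the definitions (\ref{pc6b}) and (\ref{pc7}) of $J[[\cdot]]$ and $R[[\cdot]]$; essentially no new computation is needed. Recall that a Jacobi curvature operator $J$ is by definition an acyclic curvature operator with $\tilde J = J$, i.e. $Cycl(J)=0$ and $\tilde J = J$, while a Riemann curvature operator $R$ satisfies $Cycl(R)=0$ and $\tilde R = -R$; and recall $J[[T]]=J_T=Q^{+}_{\tau}(T^{\#})=Sym_{\tau}(T^{\#})$ and $R[[T]]=R_T=\tfrac{4}{3}Q^{-}_{\tau}(T^{\#})=\tfrac{4}{3}Alt_{\tau}(T^{\#})$ for every $T\in PC(V)$.

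First I would dispatch the two leading equalities. Applying (\ref{cycl12J}) of Proposition~\ref{cyclprop3} with $T=J$ — legitimate precisely because $\tilde J = J$ and $Cycl(J)=0$ — gives $J[[J]]=J_J=-\tfrac{1}{2} J$. Applying (\ref{cycl12R}) with $T=R$ — legitimate because $\tilde R = -R$ and $Cycl(R)=0$ — gives $R[[R]]=R_R=\tfrac{2}{3} R$. This already proves the first sentence of the corollary.

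Next, the ``hence'' line is just a restatement through the definitions: since $Sym(J^{\#})$ means $Q^{+}_{\tau}(J^{\#})=J[[J]]$, the first identity above reads $Sym(J^{\#})=-\tfrac{1}{2} J$; and since $Alt(R^{\#})=Q^{-}_{\tau}(R^{\#})=\tfrac{3}{4} R[[R]]$ by (\ref{pc7}), substituting $R[[R]]=\tfrac{2}{3} R$ gives $Alt(R^{\#})=\tfrac{1}{2} R$. For the final pair of identities I would use the spectral decomposition $T^{\#}=Q^{+}_{\tau}(T^{\#})+Q^{-}_{\tau}(T^{\#})$, valid because $\tau$ acts as an involution on $PC(V)$ (Proposition~\ref{involutions}); this is precisely the content of (\ref{cycl9}). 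With $T=J$ one has $Q^{+}_{\tau}(J^{\#})=J[[J]]=-\tfrac{1}{2} J$ and $Q^{-}_{\tau}(J^{\#})=\tfrac{3}{4} R[[J]]$ by (\ref{pc7}), hence $J^{\#}=\tfrac{3}{4} R[[J]]-\tfrac{1}{2} J$; with $T=R$ one has $Q^{+}_{\tau}(R^{\#})=J[[R]]$ and $Q^{-}_{\tau}(R^{\#})=\tfrac{3}{4} R[[R]]=\tfrac{1}{2} R$, hence $R^{\#}=J[[R]]+\tfrac{1}{2} R$.

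There is no genuine obstacle here; the corollary is pure bookkeeping resting on Proposition~\ref{cyclprop3}, which itself comes from the expansions (\ref{cycl10}) and (\ref{cycl11}) of $Cycl$. The only points requiring care are notational: that $Sym$ and $Alt$ in this statement denote the $\tau$-symmetrization $Q^{+}_{\tau}$ and $\tau$-antisymmetrization $Q^{-}_{\tau}$ on $PC(V)$, not any rank-four (anti)symmetrization, and that the factor $\tfrac{4}{3}$ sits inside the definition of $R[[\cdot]]$, so that the coefficients $\tfrac{3}{4}$, $\tfrac{1}{2}$, $\tfrac{2}{3}$ reconcile exactly as claimed.
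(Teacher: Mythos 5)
Your proof is correct and follows exactly the route the paper intends: the paper gives no separate proof for this corollary, presenting it as an immediate consequence of Proposition \ref{cyclprop3} (i.e.\ (\ref{cycl12J}) and (\ref{cycl12R})) combined with the decomposition $T^{\#}=Q^{+}_{\tau}(T^{\#})+Q^{-}_{\tau}(T^{\#})$ from (\ref{cycl9}) and the definitions (\ref{pc6b})--(\ref{pc7}). Your bookkeeping of the factors $\tfrac{4}{3}$, $\tfrac{3}{4}$, $\tfrac{2}{3}$, $\tfrac{1}{2}$ and your reading of $Sym$, $Alt$ as $Q^{\pm}_{\tau}$ are both exactly right.
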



\begin{proposition}\label{pc8}
(1)  As linear maps $\tau, \alpha, \chi, \nu$ are linear involutions of $PC(V)$, that is, with $id$ denoting the identity map of $PC(V),$

\begin{equation}\label{involution1}
\tau^2=id,~~\alpha^2=id,~~\beta^2=id, ~~ \chi \circ \chi=id, ~~\nu \circ \nu=id, \mbox{ and, } \chi \nu=\nu \chi =\beta
\end{equation}
so for any $T \in PC(V),$

\begin{equation}\label{involution2}
\tilde{\tilde{T}}=T,~~(T^*)^*=T,~~(T^{\dag})^{\dag}=T,~~(T^{\#})^{\#}=T.
\end{equation}
Moreover,

\begin{equation}\label{commutation1}
\a  \gt=\gt   \alpha,~~\x  \nu =\nu  \x,
\end{equation}
whereas

\begin{equation}\label{commutation2}
\gt  \x=\x  \a,~ \chi  \tau=\alpha  \chi,
\end{equation}
and
$$\nu \tau=\sigma_3,~~\tau \nu =\sigma_3^{-1}=\sigma_3^2  \mbox{ so } \tau  \nu=\sigma_3  \nu \tau=\nu \gt \nu \gt,$$
and therefore

$$\nu \gt \nu =\gamma= \gt \nu \gt,$$
%
%
$$Cycl=1+\nu \gt + \gt \nu,$$
consequently

$$\nu[Cycl]= [Cycl] \nu = \gt [Cycl]= [Cycl] \gt.$$

\medskip

(2)  Suppose $T \in PC(V).$  Then, 

$$\tilde{J_T}=J_T,~~\tilde{R}_T=-R_T,$$

$$(\tilde{T})^{\dag}=(T^{\dag})^*,~~(T^{\#})^{\dag}=(T^{\dag})^{\#}.$$

(3)  If any two of the three equations

$$\tilde{T}=T,~~T^*=T,~~T^*=\tilde{T},$$
hold, then all three hold.  If any two of the three equations

$$\tilde{T}=-T,~~T^*=-T,~~T^*=\tilde{T},$$
hold, then all three hold.

\medskip

(4)  If $Cycl(T)=0,$ then both

$$Cycl(T^{\#})=0 \mbox{ and } Cycl(\tilde{T})=0,$$
%
%
and moreover,
$$Cycl(J_T)=0,~and~Cycl(R_T)=0.$$

\medskip

(5)  If $$\tilde{T}=T,$$ then

$$Cycl(R_T)=0,$$
whereas if $$\tilde{T}=-T,$$
then $$Cycl(J_T)=0.$$

\medskip
 
(6)  If $$T^{\dag}=T=\tilde{T},$$ then $$T^*=T ~~and ~~(R_T)^{\dag}=R_T,$$ whereas, if

$$T^{\dag}=T=-\tilde{T},$$ then $$T^*=\tilde{T}~~and ~~(J_T)^{\dag}=J_T.$$  

\medskip

(7)  Given that $$Cycl(T)=0,$$ 
if 

$$T=\tilde{T},~~then~~J[[R_T]]=T,$$
whereas if

$$\tilde{T}=-T,~~then~~R[[J_T]]=T.$$

\end{proposition}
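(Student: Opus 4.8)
The plan is to verify the seven items in order, treating them as a cascade where each depends on the permutation-algebra facts already established in Propositions \ref{permcalc} and \ref{cyclprop1}, together with the module structure of $PC(V)$ over $\A_{P(4)}$ described in the section on Curvature Algebra. The overarching strategy is: translate every statement about $\tilde T$, $T^*$, $T^\dag$, $T^\#$, $J_T$, $R_T$ into a statement about the action of the elements $\tau,\alpha,\chi,\nu,Cycl$ (and the idempotents $Q^\pm_\tau$, $Q^\pm_\chi$, $Q_J$, $Q_R$, $Q^\pm$) on $T$, and then invoke the pure-algebra identities in $\A_{P(4)}$. Parts (1)--(4) are essentially bookkeeping; parts (5)--(7) are where the pseudoinverse relations $JRJ=J$, $RJR=R$ and Proposition \ref{cyclprop3} do the real work.

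First I would dispose of part (1): the involution identities $\tau^2=\alpha^2=\beta^2=\chi^2=\nu^2=1$ are immediate since each is a transposition or a product of disjoint transpositions, and $\chi\nu=\nu\chi=\beta$ is just $[1,3][2,4][1,3]=[2,4]$ (or directly from $\chi=\nu\beta=\beta\nu$). The commutation relations $\alpha\tau=\tau\alpha$ (disjoint), $\chi\nu=\nu\chi$ (already noted), and the ``crossing'' relations $\tau\chi=\chi\alpha$, $\chi\tau=\alpha\chi$ are exactly equations (\ref{tau chi}) and (\ref{alpha chi}). The cycle facts $\nu\tau=\sigma_3$, $\tau\nu=\sigma_3^{-1}$, $\nu\tau\nu=\gamma=\tau\nu\tau$, $Cycl=1+\nu\tau+\tau\nu$, and $\nu\,Cycl=Cycl\,\nu=\tau\,Cycl=Cycl\,\tau$ are restatements of Proposition \ref{permcalc} and Proposition \ref{cyclprop1}. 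For part (2), $\widetilde{J_T}=J_T$ and $\widetilde{R_T}=-R_T$ follow from $J_T=Q^+_\tau(T^\#)$, $R_T=\tfrac43 Q^-_\tau(T^\#)$ and $\tau Q^\pm_\tau=\pm Q^\pm_\tau$; the identities $(\tilde T)^\dag=(T^\dag)^*$ and $(T^\#)^\dag=(T^\dag)^\#$ are the relations $\chi\tau=\alpha\chi$ and $\chi\nu=\nu\chi$ read on $T$. Part (3) is the standard two-out-of-three argument for the pair of involutions $\tau,\alpha$ with $\tau\alpha=\alpha\tau$: if $\tilde T=\pm T$ and $T^*=\pm T$ then $T^*=\tilde T$; if $\tilde T=\pm T$ and $T^*=\tilde T$ then $T^*=\pm T$; etc. Part (4): $Cycl(T)=0$ means $Cycl\cdot T=0$; since $Cycl\,\nu=Cycl$ and $Cycl\,\tau=Cycl$ we get $Cycl(T^\#)=Cycl(\tilde T)=0$, and then $Cycl(J_T)=Cycl\,Q^+_\tau(T^\#)=Q^+_\tau\,Cycl(T^\#)$ — wait, $Cycl$ and $Q^\pm_\tau$ commute because $Cycl\,\tau=\tau\,Cycl$ — hence $=0$; similarly for $R_T$. (Actually even without $Cycl(T)=0$ one has $Cycl(J_T)=Cycl(R_T)=0$ by Corollary \ref{JRJ}'s relations $C_3J=JC_3=0$, $C_3R=RC_3=0$; I would use whichever is cleaner.)

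For part (5), if $\tilde T=T$ then $T=T^\#{}^\#$ and one computes $R_T=\tfrac43 Q^-_\tau(T^\#)$; applying $Cycl$ and using $Cycl\,Q^-_\tau=Q^-_\tau\,Cycl$ together with $Cycl\,\nu=Cycl$ gives $Cycl(R_T)=\tfrac43 Q^-_\tau\,Cycl(T)$, which need not vanish unless $Cycl(T)=0$ — so here I must be careful: the hypothesis for (5) is only the symmetry, not acyclicity, so I'd instead argue directly that $\widetilde{R_T}=-R_T$ always (part 2) forces, via the general identity (\ref{cycl8}), $Cycl(R_T)=R_T-\tfrac32 R_{R_T}$ and then use Corollary \ref{RRJJ} ($R_{R_T}=\tfrac23 R_T$) to get $Cycl(R_T)=R_T-R_T=0$; symmetrically $Cycl(J_T)=J_T+2J_{J_T}=J_T-J_T=0$ using $J_{J_T}=-\tfrac12 J_T$. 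Part (6) combines part (3) (to get $T^*=T$ or $T^*=\tilde T$ from the two given equations) with a check that $\chi$ commutes appropriately past $Q^\pm_\tau$ and $\nu$: from $\chi T=T$ and $\tilde T=-T$ one gets $\alpha T=-T$, so $Q^-T=T$, hence $R_T=R_0T$-type object and $\chi R_T=R_T$ by the computation already carried out in the text showing $\chi R_0=R_0$; dually for $J_T$ when $\tilde T=T$. Part (7) is the heart: given $Cycl(T)=0$ and $\tilde T=T$, Proposition \ref{cyclprop3} gives $J_T=-\tfrac12 T$, so $R[[J_T]]=-\tfrac12 R[[T]]=-\tfrac12\cdot\tfrac43 Q^-_\tau(T^\#)$; but also $R[[T]]=\tfrac43 Q^-_\tau(T^\#)$ and one must show $J[[R_T]]=T$ — equivalently $Q^+_\tau((R_T)^\#)=T$. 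Here I would use the pseudoinverse identity: $J[[R_T]]$ corresponds to applying $Q_J$-then-reconstruct, and $RJR=R$, $JRJ=J$ on the module $PC(V)$ give $J[[R[[T]]]]=T$ precisely when $T$ is a Riemann curvature operator (i.e. $Cycl(T)=0$, $\tilde T=-T$) and dually $R[[J[[T]]]]=T$ when $T$ is a Jacobi operator; so the correct reading of (7) is: $\tilde T=T$ (Jacobi symmetry) $\Rightarrow R[[J[[T]]]]$... no — the statement says $\tilde T=T\Rightarrow J[[R_T]]=T$, which I'll derive from $R_T=\tfrac23 T$? No, that needs $\tilde T=-T$. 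I expect the resolution is that when $\tilde T=T$ and $Cycl\,T=0$, one has $J_T=-\tfrac12 T$ and $R_T$ is the unique Riemann operator with $J[[R_T]]=T$; concretely $J[[R_T]]=\tfrac34\cdot\tfrac43\,Q^+_\tau(\nu Q^-_\tau(\nu T))$-style expression which collapses via $\nu\tau\nu=\gamma$ and $Cycl\,T=0$ to $T$. The main obstacle will be exactly this part (7): getting the numerical factors $\tfrac34$, $\tfrac43$ to cancel correctly requires carefully tracking the normalization in $R_T=\tfrac43 Alt_\tau(T^\#)$ versus $R=\tfrac43 R_1$ in $\A_P$, and making sure the ``CAUTION'' remark about the discrepancy between $J[[\cdot]]$ and the module action of $J\in\A_P$ is applied in the right direction. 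I would handle this by reducing everything to the universal identities $JRJ=J$, $RJR=R$ of Corollary \ref{JRJ} applied to the $\A_{P(4)}$-module $PC(V)$, so that no coefficient arithmetic beyond what is already in Proposition \ref{JR1} is needed.
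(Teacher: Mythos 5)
Your handling of parts (1)--(3) and (6) matches the paper's proof (which simply cites Proposition \ref{permcalc} and checks the rest, using the extra identity $\chi\tau\nu\tau=\tau\nu\alpha$ for (6)), but two of your steps rest on the false identity $Cycl\circ\nu=Cycl$. In the group algebra, $C_3$ is the sum of the three even permutations of $P(3)$ while $C_3\nu$ is the sum of the three odd ones, so they are distinct elements; the correct relations from Proposition \ref{cyclprop1} are $C_3\nu=C_3\tau=\tau C_3=\nu C_3$. In part (4) this is harmless, since $Cycl(T^{\#})=C_3\nu T=\nu C_3T=0$ still follows, but in part (5) the same error is what pushes you off a salvageable computation and onto a circular one.

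The genuine gap is part (5). Your replacement argument invokes Corollary \ref{RRJJ} to get $R_{R_T}=\tfrac23 R_T$, but that corollary (equivalently Proposition \ref{cyclprop3}) applies only to \emph{acyclic} operators, i.e.\ it presupposes $Cycl(R_T)=0$ --- exactly what you are trying to prove. Worse, your argument never uses the hypothesis $\tilde T=T$, so if it were valid it would show $Cycl(R_T)=0$ for \emph{every} $T\in PC(V)$, which is false: in general $C_3Q^{-}_{\tau}\nu=Q^{-}_{\tau}C_3\tau=-C_3Q^{-}_{\tau}$, so $Cycl(R_T)=-\tfrac43\,Cycl(Q^{-}_{\tau}T)=-4\,Alt_3(T)$, which need not vanish. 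The repair is precisely the computation you abandoned, done with the correct commutation rule: $Cycl(R_T)=-\tfrac43\,Cycl(Q^{-}_{\tau}T)$ and $Q^{-}_{\tau}T=0$ when $\tilde T=T$ (dually, $Cycl(J_T)=\tfrac13\cdot 4\,Alt_3(T)$-type term built from $Q^{+}_{\tau}T$, which vanishes when $\tilde T=-T$); this is the paper's route via (\ref{cycl8c}). Part (7) as written is a plan rather than a proof, but the plan is sound and can be closed without coefficient chasing: under the hypotheses $Q^{+}_{\tau}T=T$ gives $J[[R_T]]=JRT=Q_JT$, and Proposition \ref{JR1} gives $Q_J=Q^{+}_{\tau}\bigl(1-\tfrac13C_3\bigr)$, which fixes any symmetric acyclic $T$; the paper instead deduces (7) from (\ref{cycl9b}) together with Proposition \ref{cyclprop3}, or by a direct four-term expansion, and you should carry one of these out explicitly rather than leave it at the level of intent.
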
 

\begin{proof}

Parts (1)-(4) are almost obvious and what is not is straight forward to check.  In particular, the commutation relations in (1), (2) and (3) follow from Proposition \ref{permcalc}, whereas (4) follows from (1) and Proposition \ref{cyclprop2}.  Likewise, while (5) maybe surprising, it is also straight forward to check.  However, it is also an immediate consequence of (\ref{cycl8c}) because if $\tilde{T}=T,$ then $Q^-_{\tau}(\tilde{T})=0,$ whereas if $\tilde{T}=-T,$ then $Q^+_{\tau}(\tilde{T})=0.$  Likewise (6) is straightforward to check, either directly or using the relations in (1) and (2) and the easily checked fact that in the permutation group

\begin{equation}\label{experm}
\chi \tau \nu \tau = [2,1,3,4] = \tau \nu \alpha.
\end{equation}
Since $T^{\dag}=T$ and $\tilde{T}=\pm T$ give $\alpha \tau (T)=T,$ and as $\chi $ and $\nu $ commute, (\ref{experm}) takes care of the non-obvious term of $Q^{\pm}(T^{\#}).$
For (7), just unravel the left side of the equation evaluated on three vectors and note that the result is $(1/3)$(sum of four terms) of which two are identical immediately from either $T=\tilde{T}$ or $\tilde{T}=-T,$ as the case may be, whereas the sum of the remaining two terms is also the same because of $Cycl(T)=0.$  Alternately, we can note that (7) is an immediate consequence of proposition \ref{cyclprop3} together with (\ref{cycl9b}) as the result is to give

$$\tilde{T}=T \mbox{ implies } T=\frac{1}{4} T+ \frac{3}{4} J[[R_T]],$$
whereas

$$\tilde{T}=-T \mbox{ implies } T=\frac{1}{4} T+ \frac{3}{4} R[[J_T]],$$

\end{proof}

\begin{corollary}\label{cpc8}
The involutions $\a$ and $\gt$ commute, so the composition $\a   \gt $ is an involution, and both $\a+\gt$ and $\a  \gt$ commute with $\x,$ so $\x$ and hence also $Q^{\pm}_{\x}$ commute with both $Q^+_{\a}Q^+_{\gt}=Q^+_{\gt}Q^+_{\a}$ and $Q^-_{\a}Q^-_{\gt}=Q^-_{\gt}Q^-_{\a}.$
\end{corollary}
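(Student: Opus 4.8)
The plan is to reduce the whole statement to the permutation relations already recorded in Proposition \ref{pc8}(1), together with the elementary observation that for an involution $B$ of a vector space one has $Q^+_B=\tfrac{1}{2}(1+B)$ and $Q^-_B=\tfrac{1}{2}(1-B)$, so that $Q^{\pm}_B$ is a polynomial in $B$ and therefore commutes with every linear map that commutes with $B$.

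First I would note that $\alpha$ and $\tau$ are induced by the disjoint transpositions $[3,4]$ and $[1,2]$, hence $\alpha\tau=\tau\alpha$ as in (\ref{commutation1}); consequently $(\alpha\tau)^2=\alpha^2\tau^2=id$, so $\alpha\tau$ is an involution, and $Q^{\pm}_{\alpha}$ and $Q^{\pm}_{\tau}$ commute with one another, giving in particular $Q^+_{\alpha}Q^+_{\tau}=Q^+_{\tau}Q^+_{\alpha}$ and $Q^-_{\alpha}Q^-_{\tau}=Q^-_{\tau}Q^-_{\alpha}$. Next I would use the mixed relations $\chi\tau=\alpha\chi$ and $\tau\chi=\chi\alpha$ from (\ref{commutation2}) --- equivalently, $\chi$ conjugates $\tau$ to $\alpha$ and $\alpha$ to $\tau$ --- to compute
$$\chi(\alpha+\tau)=\chi\alpha+\chi\tau=\tau\chi+\alpha\chi=(\alpha+\tau)\chi$$
and
$$\chi(\alpha\tau)=(\chi\alpha)\tau=(\tau\chi)\tau=\tau(\chi\tau)=\tau(\alpha\chi)=(\tau\alpha)\chi=(\alpha\tau)\chi,$$
so both $\alpha+\tau$ and $\alpha\tau$ commute with $\chi$.

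Finally, expanding $4\,Q^+_{\alpha}Q^+_{\tau}=(1+\alpha)(1+\tau)=1+(\alpha+\tau)+\alpha\tau$ and $4\,Q^-_{\alpha}Q^-_{\tau}=(1-\alpha)(1-\tau)=1-(\alpha+\tau)+\alpha\tau$, and using that each of $1$, $\alpha+\tau$, and $\alpha\tau$ commutes with $\chi$, I would conclude that both $Q^+_{\alpha}Q^+_{\tau}$ and $Q^-_{\alpha}Q^-_{\tau}$ commute with $\chi$; since $Q^{\pm}_{\chi}=\tfrac{1}{2}(1\pm\chi)$ is a polynomial in $\chi$, they commute with $Q^{\pm}_{\chi}$ as well, which is the assertion. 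I do not expect a genuine obstacle here: the corollary is pure bookkeeping on top of Proposition \ref{pc8}(1). The only point requiring care is to invoke the correct mixed commutation relations --- it is $\chi\tau\chi=\alpha$, not $\chi\tau=\tau\chi$, that does the work --- and to keep track of which side $\chi$ sits on when it is moved past $\alpha$ and $\tau$; everything else is immediate from $\alpha^2=\tau^2=\chi^2=id$ and $\alpha\tau=\tau\alpha$.
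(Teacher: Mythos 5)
Your proposal is correct and is essentially the argument the paper intends: the printed proof is just ``obvious consequence of Proposition \ref{pc8},'' and the explicit sum-and-product computation you give (that $\alpha+\tau$ and $\alpha\tau$ commute with $\chi$, hence so do $(1\pm\alpha)(1\pm\tau)$) is spelled out verbatim in the earlier group-ring discussion preceding Definition \ref{QQQ}. Your identification of the mixed relations $\chi\tau=\alpha\chi$, $\tau\chi=\chi\alpha$ as the working ingredient is exactly right.
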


\begin{proof}
This is an obvious consequence of the proposition \ref{pc8}.
\end{proof}


As an immediate consequence of Proposition \ref{pc8} parts (2) and (4), if $T \in PC(V)$ is an acyclic curvature operator, then $T^{\#}$ is also an acyclic curvature operator which is a metric-$g$ algebraic curvature operator if $T$ is metric-$g$, and by parts (5) and (6), if $T \in PC(V)$ is an acyclic Jacobi curvature operator (respectively, a metric-$g$ acyclic Jacobi curvature operator), then  $R_T$ is an acyclic Riemann curvature operator (respectively, a metric-$g$ acyclic Riemann curvature operator), whereas if $T$ is an acyclic Riemann curvature operator (respectively, a metric-$g$ acyclic Riemann curvature operator), then $J_T$ is an acyclic Jacobi curvature operator (respectively, a metric-$g$ acyclicic Jacobi curvature operator).  

Let $AC(V) \subset PC(V)$ be the set of acyclic curvature operators and $AC(V,g) \subset AC(V)$ be the set of metric-$g$ acyclic curvature operators.  Thus the involution $\nu$ of $PC(V)$ carries $AC(V)$ onto $AC(V)$ and $AC(V,g)$ onto $C(V,g).$

\begin{proposition}
The involution $\nu$ of $PC(V)$ carries the linear subspace $AC(V)$ onto itself and carries $AC(V,g)$ onto itself.  In addition,

$$(\tau \alpha )\nu (\tau \alpha) = \beta \mbox{ so } \tau \alpha \nu = \beta \tau \alpha \mbox{ and } \tau \alpha \beta = \nu \tau \alpha.$$

\end{proposition}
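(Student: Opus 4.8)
The plan is to handle the two assertions separately. That $\nu$ maps $AC(V)$ and $AC(V,g)$ onto themselves will follow at once from Proposition~\ref{pc8}, and the permutation identity $(\tau\alpha)\nu(\tau\alpha)=\beta$ will be a one-line conjugation computation in $P(4)$ together with the fact that $\tau\alpha$ is an involution.

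For the first assertion, I would first recall from Proposition~\ref{pc8}(1) that $\nu$ is a linear involution of $PC(V)$, hence a bijection; it therefore suffices to prove the inclusions $\nu(AC(V))\subseteq AC(V)$ and $\nu(AC(V,g))\subseteq AC(V,g)$, since applying $\nu$ once more then upgrades them to equalities. Now let $T\in AC(V)$, so $Cycl(T)=0$. By Proposition~\ref{pc8}(4) we get $Cycl(T^{\#})=0$, i.e. $\nu(T)=T^{\#}\in AC(V)$, whence $\nu(AC(V))=AC(V)$. If moreover $T\in AC(V,g)$, so that also $T^{\dag}=T$, then by the identity $(T^{\#})^{\dag}=(T^{\dag})^{\#}$ of Proposition~\ref{pc8}(2) we have $(T^{\#})^{\dag}=(T^{\dag})^{\#}=T^{\#}$; together with the previous step this shows $T^{\#}\in AC(V,g)$, so $\nu(AC(V,g))=AC(V,g)$.

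For the permutation identity, I would observe that $\tau=[1,2]$ and $\alpha=[3,4]$ are disjoint transpositions, hence commute, and $\tau\alpha=\alpha\tau$ is an involution with $(\tau\alpha)^{-1}=\tau\alpha$. Applying the conjugation rule (\ref{perm4}) to the transposition $\nu=[1,3]$ with the permutation $\tau\alpha$, and using $(\tau\alpha)(1)=2$ and $(\tau\alpha)(3)=4$, one obtains
$$(\tau\alpha)\,\nu\,(\tau\alpha)^{-1}=[(\tau\alpha)(1),(\tau\alpha)(3)]=[2,4]=\beta,$$
which is precisely (\ref{nu tau alpha}) in the notation of the earlier section; since $(\tau\alpha)^{-1}=\tau\alpha$ this is the claimed $(\tau\alpha)\nu(\tau\alpha)=\beta$. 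Right-multiplying this equation by $\tau\alpha$ and using $(\tau\alpha)^2=1$ gives $\tau\alpha\,\nu=\beta\,\tau\alpha$; left-multiplying $\beta=(\tau\alpha)\nu(\tau\alpha)$ by $\tau\alpha$ (again using $(\tau\alpha)^2=1$) gives $\tau\alpha\,\beta=\nu\,\tau\alpha$, which completes the proof.

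I do not anticipate a genuine obstacle here, as every ingredient is already in place. The only points that take a moment are that $\nu$ being an involution is what converts the inclusion $\nu(AC(V))\subseteq AC(V)$ into an equality, and that one must correctly track where $\tau\alpha$ sends $1$ and $3$ in the conjugation; indeed one could skip the explicit computation and simply cite (\ref{nu tau alpha}) and (\ref{beta tau alpha}).
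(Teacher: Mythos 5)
Your proof is correct and follows essentially the same route as the paper, which simply cites Proposition \ref{pc8} parts (2) and (4) (together with the already-established conjugation identities (\ref{nu tau alpha}) and (\ref{beta tau alpha})); you have merely supplied the details the paper leaves implicit, including the standard observation that an involution mapping a subspace into itself maps it onto itself.
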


\begin{proof}
This is an immediate consequence of Proposition \ref{pc8} parts (2) and (4).
\end{proof}

Let $JC(V) \subset AC(V)$ denote the set of all acyclic Jacobi curvature operators on $V$ and let $JC(V,g) \subset JC(V)$ denote the subset of all metric-$g$ acyclic Jacobi curvature operators. Likewise, let $RC(V) \subset AC(V)$ denote the set of all acyclic Riemann curvature operators on $V,$ and let $RC(V,g) \subset RC(V)$ denote the subset of all metric-$g$ acyclic Riemann curvature operators.  Obviously, $JC(V),~JC(V,g),~RC(V),$ and $RC(V,g)$ are linear subspaces of $AC(V).$   Define linear maps

\begin{equation}\label{pc9}
{\bf \R}: JC(V) \lra RC(V) \mbox{ by } {\bf \R}(J)=R_J \mbox{ for any } J \in JC(V).
\end{equation}
and

\begin{equation}\label{pc10} 
{\bf \J}: RC(V) \lra JC(V) \mbox{ by } {\bf \J}(R)=J_R, \mbox{ for any } R \in RC(V).
\end{equation}

\begin{theorem}\label{pc11}  The linear maps

$${\bf \R}: JC(V) \lra RC(V) \mbox{ and } {\bf \J}:RC(V) \lra JC(V)$$
are mutually inverse vector space isomorphisms and ${\bf \R}$ carries $JC(V,g)$ isomorphically onto $RC(V,g)$ and likewise, ${\bf \J}$ carries $RC(V,g)$ isomorphically onto $JC(V,g).$

\end{theorem}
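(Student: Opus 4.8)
The plan is to obtain the theorem as essentially a bookkeeping consequence of Proposition \ref{pc8}, which already contains all the substantive permutation algebra; no new computation should be needed. First I would note that ${\bf \R}$ and ${\bf \J}$ are linear, since $J_T = Q^+_{\tau}(T^{\#})$ and $R_T = \tfrac{4}{3}Q^-_{\tau}(T^{\#})$ are built from $T$ by composing the linear operators $\nu$, $Q^{\pm}_{\tau}$ and scalar multiplication on $PC(V)$, so the assignments $T \mapsto J_T$ and $T \mapsto R_T$ are linear. Next I would confirm the targets are correct. For $J \in JC(V)$ we have $Cycl(J)=0$ and $\tilde J = J$; by Proposition \ref{pc8}(2), $\widetilde{R_J} = -R_J$, and by Proposition \ref{pc8}(5), $Cycl(R_J)=0$, so $R_J \in RC(V)$, i.e. ${\bf \R}(JC(V)) \subseteq RC(V)$. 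Symmetrically, for $R \in RC(V)$, Proposition \ref{pc8}(2) gives $\widetilde{J_R}=J_R$ and Proposition \ref{pc8}(5) gives $Cycl(J_R)=0$, so ${\bf \J}(RC(V)) \subseteq JC(V)$.

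The heart of the argument is Proposition \ref{pc8}(7) (which in turn rests on Proposition \ref{cyclprop3} and equation (\ref{cycl9b})). Take $J \in JC(V)$, so $Cycl(J)=0$ and $J = \tilde J$; the first clause of part (7) says $J[[R_J]] = J$, which is exactly ${\bf \J}({\bf \R}(J)) = J_{R_J} = J$. Dually, take $R \in RC(V)$, so $Cycl(R)=0$ and $\tilde R = -R$; the second clause of part (7) says $R[[J_R]] = R$, i.e. ${\bf \R}({\bf \J}(R)) = R_{J_R} = R$. Hence ${\bf \J}\circ{\bf \R} = \mathrm{id}_{JC(V)}$ and ${\bf \R}\circ{\bf \J} = \mathrm{id}_{RC(V)}$, so ${\bf \R}$ and ${\bf \J}$ are mutually inverse linear bijections, hence vector-space isomorphisms.

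For the metric refinement I would invoke Proposition \ref{pc8}(6). If $J \in JC(V,g)$, then $J^{\dag}=J=\tilde J$, so by the first clause of part (6) we get $(R_J)^{\dag}=R_J$; combined with $R_J \in RC(V)$ from above this gives $R_J \in RC(V,g)$, whence ${\bf \R}(JC(V,g)) \subseteq RC(V,g)$. Symmetrically, if $R \in RC(V,g)$, then $R^{\dag}=R=-\tilde R$, so the second clause of part (6) gives $(J_R)^{\dag}=J_R$, whence ${\bf \J}(RC(V,g)) \subseteq JC(V,g)$. Restricting the identities ${\bf \J}\circ{\bf \R}=\mathrm{id}$ and ${\bf \R}\circ{\bf \J}=\mathrm{id}$ to these subspaces shows that ${\bf \R}|_{JC(V,g)}$ and ${\bf \J}|_{RC(V,g)}$ are mutually inverse, so ${\bf \R}$ carries $JC(V,g)$ isomorphically onto $RC(V,g)$ and ${\bf \J}$ carries $RC(V,g)$ isomorphically onto $JC(V,g)$.

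The theorem has no real obstacle of mathematical depth; the only care required is of bookkeeping type: pairing the hypotheses $\tilde T = T$ versus $\tilde T = -T$ with the correct map (so that parts (5), (6), (7) of Proposition \ref{pc8} are applied in the right clause), and keeping the normalization constants straight. The latter is automatic here, because the inversion formulas $J[[R_T]]=T$ and $R[[J_T]]=T$ in part (7) already incorporate the factor $\tfrac{4}{3}$ hidden in the definition of $R_T$; one could alternatively phrase the entire proof as the statement that ${\bf \R},{\bf \J}$ are the restrictions to $JC(V),RC(V)$ of the mutually pseudoinverse actions $R_0,J_0$ of Theorem \ref{QQQ6} (modulo the normalization noted in the CAUTION following (\ref{pc7})), but the direct route through Proposition \ref{pc8} is shorter and self-contained.
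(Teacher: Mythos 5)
Your proposal is correct and follows the same route as the paper, which simply cites Proposition \ref{pc8} parts (5), (6), and (7) as giving the result immediately; you have merely expanded that remark into the explicit bookkeeping (part (5) for acyclicity of the images, part (7) for the mutual inversion, part (6) for preservation of the metric exchange property). No gap.
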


\begin{proof}
This is an immediate consequence of Proposition \ref{pc8} parts (5),(6) and (7).
\end{proof}

We note here that at each point of a semi-Riemannian manifold, the Riemann curvature operator is an acyclicic Riemann curvature operator on the tangent space at that point, and if $\R$ is the Riemann curvature operator (field) on such a manifold, then $J_{\R}$ is the Jacobi curvature operator (field), \cite{MTW}.  Of course, the inner product $g$ on $V$ gives naturally an inner product on all associated tensor products of $V$ and its dual and therefore on $PC(V),$ as well, but these mutually inverse isomorphisms are not isometric because they are compositions of isometric isomorphisms (due to permutations) and orthogonal projections, and the orthogonal projection factors prevent the isomorphisms from being isometric.  However, these mutually inverse isomorphisms mean that we can consider that the Riemann curvature operator and its associated Jacobi curvature operator both contain the exact same information and are thus sort of dual ways of keeping track of curvature.  In a given situation, one may be more useful than the other as far as a particular need is concerned.

\begin{corollary}\label{pc12}
If $U$ and $W$ are open subsets of $V$ and if $J_1$ and $J_2$ are acyclic Jacobi curvature operators with $J_k^*=J_k,$ for $k=1,2,$ and if $g(J_1u,u)w,w)=g(J_2(u,u)w,w)$ for every $(u,w) \in U \times W,$ then $J_1=J_2.$
\end{corollary}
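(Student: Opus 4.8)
The plan is to peel off the two vector arguments of the Jacobi operator one at a time using the Observer Principle, and then finish with polarization together with the non-degeneracy of $g$. Write $J$ for either $J_1$ or $J_2$. Two structural remarks handle the bookkeeping. First, since $\tilde J = J$, the map $(u,v)\mapsto J(u,v)$ is a \emph{symmetric} $L(V;V)$-valued bilinear map on $V$; hence for fixed $w\in V$ the scalar function $u\mapsto g(J(u,u)w,w)$ is precisely the monomial form of the symmetric real-valued bilinear map $(u_1,u_2)\mapsto g(J(u_1,u_2)w,w)$. Second, since $J^*=J$, each operator $J(u,u)$ is $g$-self-adjoint, so for fixed $u\in V$ the function $w\mapsto g(J(u,u)w,w)$ is the monomial form of the symmetric bilinear map $(w_1,w_2)\mapsto g(J(u,u)w_1,w_2)$.

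Now set $D=J_1-J_2\in PC(V)$; by linearity $\tilde D=D$ and $D^*=D$, and by hypothesis $g(D(u,u)w,w)=0$ for all $(u,w)\in U\times W$. First I would fix $w\in W$: the function $u\mapsto g(D(u,u)w,w)$ is the monomial form of a symmetric bilinear map on $V$ which is constant (equal to $0$) on the non-empty open set $U$, so Theorem \ref{gen top anal cont} (applied with $E=V$, $F=\bR$) forces that bilinear map to vanish, giving $g(D(u,u)w,w)=0$ for \emph{all} $u\in V$ and all $w\in W$. Next I would fix any $u\in V$: the function $w\mapsto g(D(u,u)w,w)$ is the monomial form of a symmetric bilinear map vanishing on the non-empty open set $W$, so again by Theorem \ref{gen top anal cont} it vanishes for all $w\in V$. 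Thus $g(D(u,u)w,w)=0$ for all $u,w\in V$. (Equivalently, one may invoke Corollary \ref{obsv2}, the Observer Principle, using the remark following it that the hypothesis may be localized to any non-empty open subset.)

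To conclude, for each fixed $u$ I would polarize the now-vanishing quadratic form $w\mapsto g(D(u,u)w,w)$ via the Polarization Identity $B(x,y)=\tfrac14[f_B(x+y)-f_B(x-y)]$, obtaining $g(D(u,u)w_1,w_2)=0$ for all $w_1,w_2$; since $g$ is non-degenerate this yields $D(u,u)=0$ in $L(V;V)$ for every $u\in V$. Then, because $(u_1,u_2)\mapsto D(u_1,u_2)$ is a symmetric $L(V;V)$-valued bilinear map whose monomial form vanishes identically, one last polarization (purely algebraic, needing no topology) gives $D(u_1,u_2)=0$ for all $u_1,u_2$, i.e. $J_1=J_2$.

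I do not anticipate a genuine obstacle here; the only point needing care is verifying that the two scalar functions appearing in the argument really are monomial forms of \emph{symmetric} bilinear maps — which is exactly where the hypotheses $\tilde J_k=J_k$ (symmetry in the first pair of arguments) and $J_k^*=J_k$ (self-adjointness of $J_k(u,u)$) are consumed. Everything else is the twofold application of the Observer Principle already established in the excerpt, followed by routine polarization and non-degeneracy.
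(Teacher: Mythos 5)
Your proof is correct and follows essentially the same route as the paper: two successive applications of the Observer Principle (Theorem \ref{gen top anal cont}), one in each slot, with the symmetry $\tilde J_k=J_k$ and the self-adjointness $J_k^*=J_k$ supplying the symmetry of the two bilinear maps being compared, and non-degeneracy of $g$ finishing the job. The only cosmetic differences are that you treat the $u$-variable first where the paper treats the $w$-variable first, and that you spell out the polarization and non-degeneracy steps which the paper's one-line proof leaves implicit.
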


\begin{proof}
We simply apply the observer principle in two stages.  For each choice of  fixed $u \in U,$ the hypothesis guarantees that the quadratic forms viewed as functions of $w \in W$ agree and therefore it follows from the observer principle that $J_1(u,u)=J_2(u,u),$ for each $u \in U,$ and then again by the observer principle applied directly now gives $J_1=J_2.$
\end{proof}

\begin{corollary}\label{pc13}

If $U$ and $W$ are open subsets of $V$ and if $J_1$ and $J_2$ are metric-$g$ acyclic Jacobi curvature operators with  $g(J_1u,u)w,w)=g(J_2(u,u)w,w)$ for every $(u,w) \in U \times W,$ then $J_1=J_2.$

\end{corollary}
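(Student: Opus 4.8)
The plan is to deduce this directly from Corollary \ref{pc12}. The only hypothesis appearing in that corollary which is not literally assumed in the present statement is the operator self-adjointness $J_k^{*}=J_k$, so the entire task is to extract that property from the metric exchange property, after which Corollary \ref{pc12} applies verbatim.

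First I would unpack the definitions: to say that $J_1$ and $J_2$ are metric-$g$ acyclic Jacobi curvature operators means precisely that, for $k=1,2$, we have $Cycl(J_k)=0$, $\tilde{J_k}=J_k$ (symmetry), and $J_k^{\dag}=J_k$ (the metric exchange property), i.e. $J_1,J_2 \in JC(V,g)$. Next I would invoke Proposition \ref{pc8}, part (6), applied with $T=J_k$: since $J_k^{\dag}=J_k=\tilde{J_k}$, that part yields $J_k^{*}=J_k$. Thus each $J_k$ is an acyclic Jacobi curvature operator which is in addition self-adjoint in the sense $J_k^{*}=J_k$, and by hypothesis $g(J_1(u,u)w,w)=g(J_2(u,u)w,w)$ holds for every $(u,w)\in U\times W$ with $U,W$ open in $V$. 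All hypotheses of Corollary \ref{pc12} are therefore met, and that corollary gives $J_1=J_2$, which is the desired conclusion.

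I do not expect any genuine obstacle here, since the statement is essentially a specialization of Corollary \ref{pc12}; the only point worth a remark is that the metric exchange property together with symmetry forces operator self-adjointness. If one prefers not to cite Proposition \ref{pc8}, the same fact can be checked by hand in one line: for any $u,v,w,x\in V$, the metric exchange property $T^{\dag}=T$ reads $g([T(u,v)]w,x)=g(T(w,x)u,v)$, while symmetry $\tilde{T}=T$ gives $T(w,x)=T(x,w)$, so
\begin{equation*}
g([T(u,v)]w,x)=g(T(w,x)u,v)=g(T(x,w)u,v)=g([T(u,v)]x,w)=g(w,[T(u,v)]x),
\end{equation*}
whence $[T(u,v)]^{*}=T(u,v)$, i.e. $T^{*}=T$; applying this with $T=J_1$ and $T=J_2$ recovers exactly the missing hypothesis of Corollary \ref{pc12} and finishes the proof.
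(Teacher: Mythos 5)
Your proposal is correct and follows exactly the paper's own route: the paper likewise observes that a metric-$g$ acyclic Jacobi curvature operator is symmetric and satisfies $J^{\dag}=J$, hence $J^{*}=J$ (this is Proposition \ref{pc8} part (6)), and then invokes Corollary \ref{pc12}. Your additional one-line direct verification of $T^{*}=T$ from the exchange property and symmetry is a nice self-contained check but does not change the argument.
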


\begin{proof}

If $J$ is a metric-$g$ acyclic Jacobi curvature operator, then $J$ is symmetric and $J^{\dag}=J,$ so therefore, $J=J^*.$  Now apply the previous corollary.
\end{proof}

\begin{corollary}\label{pc14}

If $U$ and $W$ are open subsets of $V$ and if $R_1$ and $R_2$ are metric-$g$ acyclic Riemann curvature operators with  $g(R_1(w,u)u,w)=g(R_2(w,u)u,w)$ for every $(u,w) \in U \times W,$ then $R_1=R_2.$

\end{corollary}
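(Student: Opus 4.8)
The plan is to reduce Corollary \ref{pc14} to Corollary \ref{pc13} by transporting the hypothesis through the Jacobi--Riemann duality of Theorem \ref{pc11}. First I would use the quadratic-form formula (\ref{jquad}), which says that for any curvature operator $T \in PC(V)$ we have $J_T(u,u)v = T(v,u)u$. Applying this with $T = R_k$ for $k = 1,2$ and pairing with $w$ through $g$ gives
$$g(J_{R_k}(u,u)w,w) = g(R_k(w,u)u,w), \quad (u,w) \in V \times V,$$
so the hypothesis $g(R_1(w,u)u,w) = g(R_2(w,u)u,w)$ on $U \times W$ is precisely the statement that $g(J_{R_1}(u,u)w,w) = g(J_{R_2}(u,u)w,w)$ on $U \times W$.

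Next I would invoke Theorem \ref{pc11} together with the remarks preceding it: since $R_1$ and $R_2$ are metric-$g$ acyclic Riemann curvature operators, ${\bf \J}(R_1) = J_{R_1}$ and ${\bf \J}(R_2) = J_{R_2}$ are metric-$g$ acyclic Jacobi curvature operators. Hence Corollary \ref{pc13} applies to the pair $J_{R_1}, J_{R_2}$ and yields $J_{R_1} = J_{R_2}$, i.e. ${\bf \J}(R_1) = {\bf \J}(R_2)$. Applying the inverse isomorphism ${\bf \R} : JC(V,g) \lra RC(V,g)$ and using that ${\bf \R}$ and ${\bf \J}$ are mutually inverse (Theorem \ref{pc11}), we conclude $R_1 = {\bf \R}({\bf \J}(R_1)) = {\bf \R}({\bf \J}(R_2)) = R_2$, which finishes the proof.

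As for the hard part: there is essentially no substantive obstacle here, since all of the real work has already been front-loaded into (\ref{jquad}), Theorem \ref{pc11}, and Corollary \ref{pc13}. The only point requiring a moment's care is the argument-bookkeeping in the first step — one must check that the slot pattern produced by (\ref{jquad}), namely $v \mapsto T(v,u)u$ with $u$ held fixed as the "observer" direction, matches exactly the contraction $g(R_k(w,u)u,w)$ appearing in the hypothesis, and in particular that it is $u$ (not $w$) that plays the role of the repeated direction over which the observer principle is first invoked inside the proof of Corollary \ref{pc13}. Once that identification is made, the chain of implications is immediate.
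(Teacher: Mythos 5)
Your proposal is correct and follows exactly the route the paper intends: its proof of Corollary \ref{pc14} is the one-line instruction ``apply (\ref{jquad}), the theorem, and the previous corollary,'' and your argument is precisely that chain spelled out, with the slot-matching $g(J_{R_k}(u,u)w,w)=g(R_k(w,u)u,w)$ handled correctly so that Corollary \ref{pc13} applies with $u\in U$ in the repeated position. Nothing further is needed.
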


\begin{proof}

Apply \ref{jquad}, the theorem, and the previous corollary.

\end{proof}

From the preceding corollary, we see that metric acyclic Riemann curvature operators are characterized by their sectional curvatures because of their antisymmetry properties.

\begin{corollary}\label{pc15}
If $J$ is an acyclic Jacobi curvature operator, then 

$$Sym_3(J)=0,$$
or equivalently, $J(u,u)u=0,$ for all $u \in V.$
\end{corollary}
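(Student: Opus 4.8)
The plan is to carry out a short computation in the group algebra $\A_P(3)$ acting on $PC(V)$, using only the two facts that define a Jacobi curvature operator: the first Bianchi identity $Cycl_3(J)=0$ and the symmetry $\tilde J=J$ (in fact only the former is needed for the displayed statement $Sym_3(J)=0$).

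First I would rewrite the symmetrizer. Since $P(3)$ is the disjoint union of the cyclic subgroup $Cycl(3)=\{1,\sigma_3,\sigma_3^2\}$ and the three transpositions $\tau=[1,2]$, $\nu=[1,3]$, $\gamma=[2,3]$, we have, with $C_3=Cycl_3$,
\[
6\,Sym_3=C_3+(\tau+\nu+\gamma)\quad\text{in }\A_P(3).
\]
Now equation (\ref{cycl6}) of Proposition \ref{cyclprop1} says precisely $\tau+\nu+\gamma=\tau C_3$, so
\[
6\,Sym_3=C_3+\tau C_3=(1+\tau)C_3 .
\]
Applying this operator to $J$ and invoking acyclicity $C_3J=Cycl_3(J)=0$ gives $6\,Sym_3(J)=(1+\tau)(C_3J)=0$, that is $Sym_3(J)=0$. (Alternatively, bypassing (\ref{cycl6}): from $c=\nu\tau$ and $c\nu=\gamma$ — equations (\ref{cycl1a}) and (\ref{cycl3}) of Proposition \ref{permcalc} — together with $\tau J=J$ one gets $\nu J=cJ$ and $\gamma J=c^2J$, hence $J+\nu J+\gamma J=(1+c+c^2)J=Cycl_3(J)=0$; then $6\,Sym_3(J)=\tau J+\nu J+\gamma J=J+(\nu J+\gamma J)=J-J=0$.)

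For the equivalence with $J(u,u)u=0$: evaluating the symmetric rank-$3$, $V$-valued tensor $Sym_3(J)$ on the monomial $(u,u,u)$ gives $Sym_3(J)(u,u,u)=J(u,u,u)=J(u,u)u$, so $Sym_3(J)=0$ forces $J(u,u)u=0$ for every $u\in V$. Conversely, if $J(u,u)u=0$ for all $u\in V$, then the monomial function $f_{Sym_3(J)}$ vanishes identically on $V$, and since $V$ is finite dimensional with a non-degenerate inner product (hence a non-degenerate vector space in the sense of Section 5), Theorem \ref{genanalcont} applied with $r=3$ and $B=0$ yields $Sym_3(J)=0$.

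I do not anticipate a genuine obstacle here: the whole argument is a two-line identity in $\A_P(3)$ followed by one application of the observer principle, and it reduces entirely to the already-established relation $\tau C_3=\tau+\nu+\gamma$ and the hypothesis $C_3J=0$. The only point that needs care is the bookkeeping of which permutation acts on which of the three slots of a curvature operator, but that convention was fixed in the preceding sections.
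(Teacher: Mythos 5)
Your proof is correct, but it takes a genuinely different route from the paper's. The paper proves this corollary by invoking the Riemann--Jacobi duality of Theorem \ref{pc11}: it writes $J=J_R$ with $R=R_J$ an acyclic Riemann curvature operator, uses the quadratic-form identity (\ref{jquad}) to get $J(u,u)u=R(u,u)u=0$ from the antisymmetry of $R$, and only then applies the observer principle to conclude $Sym_3(J)=0$. You instead work entirely in $\A_P(3)$: the identity $6\,Sym_3=C_3+(\tau+\nu+\gamma)=(1+\tau)C_3$ (which is exactly (\ref{cycl6})) shows that acyclicity $C_3J=0$ alone kills the symmetrization, and $J(u,u)u=0$ then drops out by evaluation rather than serving as the input. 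Your argument is more elementary --- it does not need Theorem \ref{pc11} or the antisymmetry of any associated Riemann operator --- and it proves something marginally stronger, namely that $Sym_3(T)=0$ for \emph{every} acyclic $T\in PC(V)$, symmetric or not; the paper's route, on the other hand, makes the geometric content visible by identifying $J(u,u)u$ with $R(u,u)u$. Your handling of the equivalence with $J(u,u)u=0$ via Theorem \ref{genanalcont} is also correct (the paper uses the same observer-principle step, just in the opposite direction).
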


\begin{proof}

By the theorem, there is an algebraic Riemann curvature operator $R$ with $J=J_R,$ namely $R=R_J.$  Then, by  (\ref{jquad}), with the usual abuse of notation,

$$J(u,u,w)=J(u,u)w=R(w,u)u,$$
so by antisymmetry of $R$ it follows that $R(u,u)=0,$ so $J(u,u,u)=0,$ for all $u \in V,$ so by the observer principle, $Sym_3(J)=0.$

\end{proof}

\begin{corollary}\label{RRJJSHARPS}
If $J$ is a Jacocbi curvature operator and $R$ is a Riemann curvature operator with $J = J[[R]]$ and $R = R[[J]],$ then

$$J^{\#} + \frac{1}{2}R^{\#} = R \mbox{ and } J = \frac{3}{4} R^{\#} - \frac{1}{2} J^{\#}.$$

 \end{corollary}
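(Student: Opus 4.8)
The plan is to obtain both identities as a short linear computation built on Corollary \ref{RRJJ}, which already records, for any Jacobi curvature operator $J$ and any Riemann curvature operator $R,$ the ``hash decompositions''
\begin{equation}
J^{\#} = \frac{3}{4} R[[J]] - \frac{1}{2} J \mbox{ and } R^{\#} = J[[R]] + \frac{1}{2} R.
\end{equation}
Under the standing hypotheses $R = R[[J]]$ and $J = J[[R]]$ these collapse to the two relations $J^{\#} = \frac{3}{4} R - \frac{1}{2} J$ and $R^{\#} = J + \frac{1}{2} R,$ and everything that remains is bookkeeping in $PC(V).$

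First I would invoke Corollary \ref{RRJJ}; this is legitimate precisely because $J$ is assumed to be a Jacobi curvature operator and $R$ a Riemann curvature operator, so $\tilde{J} = J,$ $\tilde{R} = -R,$ and $Cycl(J) = Cycl(R) = 0,$ which are exactly the hypotheses feeding Proposition \ref{cyclprop3} and hence Corollary \ref{RRJJ}. Then I would substitute $R[[J]] = R$ into the formula for $J^{\#}$ and $J[[R]] = J$ into the formula for $R^{\#}$ — being careful to feed each hypothesis into the correct slot — to get the two reduced relations above. Adding the first to one half of the second yields $J^{\#} + \frac{1}{2} R^{\#} = \frac{3}{4} R - \frac{1}{2} J + \frac{1}{2} J + \frac{1}{4} R = R,$ which is the first claimed identity; and forming $\frac{3}{4} R^{\#} - \frac{1}{2} J^{\#} = \frac{3}{4}(J + \frac{1}{2} R) - \frac{1}{2}(\frac{3}{4} R - \frac{1}{2} J) = \frac{3}{4} J + \frac{1}{4} J = J$ gives the second.

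There is essentially no obstacle here: once Corollary \ref{RRJJ} is available the statement is a two line computation, and the only place one can slip is the rational arithmetic with the coefficients $\tfrac12$ and $\tfrac34$ together with the matching of the two hypotheses to the two slots. As an alternative that avoids citing Corollary \ref{RRJJ} directly, one could re-derive the two reduced relations from $T^{\#} = J_T + \frac{3}{4} R_T$ in Proposition \ref{cyclprop2} together with $J_J = -\frac{1}{2} J$ and $R_R = \frac{2}{3} R$ from Proposition \ref{cyclprop3}; but this merely reproduces the proof of Corollary \ref{RRJJ} and would not shorten anything, so the citation route is preferable.
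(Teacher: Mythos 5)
Your proposal is correct and follows essentially the same route as the paper: the paper's proof likewise cites Corollary \ref{RRJJ} to get $J^{\#} = \frac{3}{4} R - \frac{1}{2} J$ and $R^{\#} = J + \frac{1}{2} R$ under the hypotheses $R[[J]]=R$ and $J[[R]]=J,$ and then observes the claimed identities follow by taking the appropriate linear combinations. Your explicit arithmetic with the coefficients is accurate, so there is nothing to add.
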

 
 \begin{proof}
 By Corollary \ref{RRJJ} we have
 
 $$J^{\#} = \frac{3}{4} R - \frac{1}{2} J \mbox{ and } R^{\#} = J + \frac{1}{2} R,$$
 and the result is an immediate consequence of these two equations, by either adding them or simply using the fact that $\nu$ is a linear involution.
 
 \end{proof}

As a consequence of Corollary \ref{pc15}, if $J$ is an acyclic Jacobi curvature operator and $u \in V,$ then $u$ is in the kernel of $J(u,u)$ so if $J$ is a metric-$g$ acyclic Jacobi curvature operator, then as $J(u,u)$ is self-adjoint, it therefore defines a self-adjoint transformation of $u^{\perp},$ the orthogonal complement of $u$ in $V.$  In particular, if $V$ is a Lorentz vector space and $u$ is timelike, then $u^{\perp}$ is a Euclidean space and $J(u,u)$ being self-adjoint is thus diagonalizable by the spectral theorem.
Of course, the semi-Riemannian inner product $g$ on $V$ induces semi-Riemannian inner
products on all the various tensor products of $V$ and its dual $V^*=L(V; \bR),$ which we will also simply denote by $g$.  In fact, all the various natural ( in the sense of category theory) isomorphisms of these tensor products are all isometric.  Thus, in particular, the natural isomorphism of $V^* \otimes V$ onto $L(V;V)$  which sends $\mu \otimes v$ to the linear transformation whose value on $w$ is $\mu(w)v,$  is isometric.  For instance, it is easy to check that

$$trace(\mu \otimes v)=\mu(v), \mbox{ for any } \mu \in V^*, ~v \in V.$$

It is convenient to use the metric $g$ to explicitly define an isomorphism $\theta : V \lra V^*$ given by $\theta(x)=x^*,$ where $x^*(y)=g(x,y),$ for every $y \in V.$  In that case, the previous formula for trace gives

$$trace(x^* \otimes y)= g(x,y), \mbox{ for all } x,y \in V.$$

If $S^*$ denotes the adjoint with respect to $g$ of $S \in L(V;V),$ then the inner product on $L(V;V)$ due to $g$ on $V$ is easily seen to be

$$g(S,T)=trace(S^*T).$$

Likewise, we have $PC(V)$ naturally isomorphic with $[V^* \otimes V^*] \times [V^* \otimes V]$ which is in turn naturally isomorphic with $V^* \otimes V^* \otimes V^* \otimes V,$ and all these isomorphisms are isometric with respect to $g.$  Of course in a tensor product, the actual tensor products of various vectors are called elementary tensors, the full tensor product being the linear span of the elementary tensors.  Thus to check the equality of multi linear maps, it is enough to check on elementary tensors.  Also, as a special case, the natural isomorphism $PC(V) \lra L^3(V;V),$ is an isometric isomorphism.

\begin{proposition}\label{pc16}
The linear map $Cycl:PC(V) \lra PC(V),$  and the linear involutions $\tau, ~\alpha,~\chi, ~\nu$ of $PC(V)$ are all self adjoint.  
Consequently, $Q^{\pm}_{\a},Q^{\pm}_{\gt},Q^{\pm}_{\x},Q^{\pm}_{\nu}$ are all self adjoint idempotent operators and hence are orthogonal projection operators on $PC(V).$ Moreover, if $\sigma$ is a permutation in the permutation group $P(3),$ it induces a self adjoint isomorphism of $PC(V),$ as it permutes the three dual space factors of $PC(V).$
\end{proposition}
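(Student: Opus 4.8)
The plan is to reduce the whole statement to one elementary fact about permutation operators on a tensor power of an inner-product space. Recall from the preceding discussion that the natural isomorphisms $PC(V)\cong L^3(V;V)\cong V^*\otimes V^*\otimes V^*\otimes V\cong V^*\otimes V^*\otimes V^*\otimes V^*$ are all isometric for the inner products induced by $g$ — the last one via the isometry $\theta:V\lra V^*$, $\theta(x)=x^*$ — so that on $PC(V)$ the relevant inner product is the tensor-product inner product $g(\mu_1\otimes\cdots\otimes\mu_4,\eta_1\otimes\cdots\otimes\eta_4)=\prod_i g(\mu_i,\eta_i)$. Under this identification every operator named in the statement acts by permuting tensor factors: $\tau=[1,2]$, $\alpha=[3,4]$, $\nu=[1,3]$ and $\chi=[1,3][2,4]$ permute the four $V^*$-factors, a $\sigma\in P(3)$ permutes the first three of them, and $Cycl=\id+\sigma_3+\sigma_3^2$ is a real-linear combination of such permutations. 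Hence it suffices to compute the adjoint of a factor-permutation operator on $(V^*)^{\otimes r}$.

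The key lemma I would record is: for any permutation $\sigma$ of $\{1,\dots,r\}$, the operator it induces on $(V^*)^{\otimes r}$ — which, with the argument-permuting convention $[\sigma A](v_1,\dots,v_r)=A(v_{\sigma(1)},\dots,v_{\sigma(r)})$, sends $\mu_1\otimes\cdots\otimes\mu_r$ to $\mu_{\sigma^{-1}(1)}\otimes\cdots\otimes\mu_{\sigma^{-1}(r)}$ — satisfies $\sigma^*=\sigma^{-1}$. Since elementary tensors span, this is proved by checking on elementary tensors that $g(\sigma(\mu_1\otimes\cdots\otimes\mu_r),\eta_1\otimes\cdots\otimes\eta_r)$ and $g(\mu_1\otimes\cdots\otimes\mu_r,\sigma^{-1}(\eta_1\otimes\cdots\otimes\eta_r))$ both equal $\prod_j g(\mu_j,\eta_{\sigma(j)})$ after re-indexing the product. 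The only point requiring care is that one of the four factors of $PC(V)$ is $V$ and not $V^*$; this is handled once and for all by the isometry $\theta$, which conjugates the argument-permuting action on $L^4(V;\bR)$ into the factor-permuting action on $(V^*)^{\otimes 4}$, leaving no factor distinguished. I expect this indexing bookkeeping to be the only (minor) obstacle; everything else is formal.

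Granting the lemma, the assertions follow. If $\sigma$ is an involution, so $\sigma^{-1}=\sigma$, then $\sigma^*=\sigma$; applied to $\tau,\alpha,\chi,\nu$, which are involutions by Proposition \ref{pc8} (see also Proposition \ref{involutions}), this shows each of them is self adjoint on $PC(V)$. For the cycling operator, $\sigma_3^3=1$ gives $Cycl^*=\id+(\sigma_3)^*+(\sigma_3^2)^*=\id+\sigma_3^{-1}+\sigma_3^{-2}=\id+\sigma_3^2+\sigma_3=Cycl$, so $Cycl$ is self adjoint as well. More generally, for any $\sigma\in P(3)$ the lemma gives that the adjoint of the operator $\sigma$ is the operator of $\sigma^{-1}$, i.e.\ $P(4)$ — equivalently $\A_{P(4)}$ — acts on $PC(V)$ by a $*$-representation, so that permuting the dual-space factors preserves adjoints; in particular the transpositions act self-adjointly.

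Finally, for the projections: for each involution $B\in\{\tau,\alpha,\chi,\nu\}$ one has $Q^{\pm}_B=\frac{1}{2}\bigl(\id_{PC(V)}\pm B\bigr)$, so $(Q^{\pm}_B)^*=\frac{1}{2}\bigl(\id_{PC(V)}\pm B^*\bigr)=\frac{1}{2}\bigl(\id_{PC(V)}\pm B\bigr)=Q^{\pm}_B$ by the self-adjointness just established, while $(Q^{\pm}_B)^2=Q^{\pm}_B$ was already recorded when these projections were introduced. Hence $Q^{\pm}_{\a},Q^{\pm}_{\gt},Q^{\pm}_{\x},Q^{\pm}_{\nu}$ are self adjoint idempotents, that is, orthogonal projections on $PC(V)$, which completes the argument.
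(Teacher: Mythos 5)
Your proof is correct and is essentially the paper's own argument: the paper's proof consists of the single remark that one may ``check on pairs of elementary tensors in $V^* \otimes V^* \otimes V^* \otimes V$'' (with the alternative of invoking the star-algebra structure of $\A_P$ together with $Cycl=1+\nu\tau+\tau\nu$), and your key lemma $\sigma^*=\sigma^{-1}$ for factor-permutation operators is exactly that check made explicit, with the $\theta$-isometry correctly disposing of the one $V$-factor. Your reading of the final clause is also the right one, and is in fact more careful than the paper's literal wording: a general $\sigma\in P(3)$ (e.g.\ a $3$-cycle) is \emph{not} self adjoint but satisfies $\sigma^*=\sigma^{-1}$, so the permutation action is a $*$-representation, and only the involutions among these operators are literally self adjoint.
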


\begin{proof}
It is a simple matter to simply check on pairs of elementary tensors in $V^* \otimes V^* \otimes V^* \otimes V.$  Alternately, note that in $\A_P,$ all transpositions are yeir own inverse so are self adjoint, and by (\ref{cycl5}), as then $([1,2][1,3])^*=[1,3][1,2]$ it follows that $Cycl$ is self adjoint.
\end{proof} 

Now, everything that we have done so far for acyclicic curvature operators makes sense for semi-Riemannian vector bundles over any smooth manifold $M.$  Thus, if $E$ is any semi-Riemannian vector bundle over $M,$ then we have the semi-Riemannian vector bundle $PC(E)$ and the smooth self adjoint bundle involutions $\tau, ~\alpha,~\chi, ~\nu$ of $PC(E),$ over $M,$ and the previous definitions and propositions extend obviously to give results about algebraic curvature operator fields.  The specific case of when $E=TM,$ and we have the actual Riemann curvature operator of a connection on $M$ is what we will deal with after we make some remarks about how we can use these results to construct examples of curvature operators. 

\medskip

By Proposition \ref{pc8} and its Corollary together with Proposition \ref{pc16}, it follows that the operator $Q^+_{\a}Q^+_{\gt}$ is an orthogonal projection operator as the two factors are and they commute, and in turn, this commutes with $Q^+_{\x}$ so defining

\begin{equation}\label{pc17}
Q^+:=Q^+_{\x}Q^+_{\a}Q^+_{\gt}=Q^+_{\gt}Q^+_{\a}Q^+_{\x},
\end{equation}
results that $Q^+$ is also an orthogonal projection operator and

\begin{equation}\label{pc18}
Q^+Q^+_{\a}=Q^+=Q^+_{\a}Q^+,~Q^+Q^+_{\gt}=Q^+=Q^+_{\gt}Q^+, \mbox{ and } Q^+_{\x}Q^+=Q^+=Q^+Q^+_{\x}.
\end{equation}
Therefore, if $U \in PC(V)$ and $T=Q^+(U),$ then $\tilde{T}=T,~T^*=T,$ and $T^{\dag}=T.$  Thus, by Proposition \ref{pc8}, $R_T=(4/3)Q^-_{\gt}(T^{\#})=(4/3)Q^-_{\gt}(\nu(T))$ is a metric$-g$ algebraic Riemann curvature operator, so by Theorem \ref{pc11}, ${\bf \J}(R_T)$ is a metric$-g$ acyclic Jacobi curvature operator.  Since, $J_U=Q^+_{\gt}(U^{\dag})=Q^+_{\gt}(\nu(U)),$ this means finally, defining

\begin{equation}\label{pc19}
Q_J=\tilde{\bf \J}:PC(V) \lra PC(V), ~\tilde{\bf \J}=Q^+_{\gt} \nu (4/3)Q^-_{\gt} \nu Q ^+=\frac{4}{3} Q^+_{\gt} \nu Q^-_{\gt} \nu Q ^+,
\end{equation}
we have $\tilde{\bf \J }(T)=T,$ if $T \in JC(V,g).$ Indeed, if $T\in JC(V,g)$ is already a metric$-g$ acyclic curvature operator, then $Q^+(T)=T,$ so by Theorem \ref{pc11},

\begin{equation}\label{pc20}
\tilde{\bf \J}(T)={\bf \J} \circ {\bf \R}(T)=T.
\end{equation}

This means that $\tilde{\bf \J}$ is an idempotent operator.  On the other hand, using (\ref{pc18}), we have, since $Q^+(T)=T,$ for any $T \in JC(V,g)$

\begin{equation}\label{pc21}
\tilde{\bf \J}=\frac{4}{3}Q^+Q^+_{\gt} \nu Q^-_{\gt} \nu Q^+=\frac{4}{3}Q^+ \nu Q^-_{\gt} \nu Q^+,
\end{equation}
The last expression is obviously self adjoint as all the factors are and the symbols are a palindrome (other than the scalar factor), so this means that $\tilde{\bf \J}$ is a self adjoint idempotent, in other words, it is an orthogonal projection on $PC(V)$ with range $JC(V,g).$   Again appealing to Corollary \ref{cpc8}, we see that we could replace $Q^+_{\a}$ and $Q^+_{\gt}$ by $Q^-_{\a}$ and $Q^-_{\gt}$, respectively, and define likewise

\begin{equation}\label{pc22}
Q^-=Q^+_{\x} Q^-_{\a} Q^-_{\gt} \mbox{ and } Q_R=\tilde{\bf \R}=\frac{4}{3}Q^- \nu Q^+_{\gt} \nu Q^-,
\end{equation}
which is likewise the orthogonal projection on $PC(V)$ with range $RC(V,g).$

\begin{theorem}\label{pc23}
The linear operators

$$Q_J=\tilde{\bf \J}=\frac{4}{3}Q^+\nu Q^-_{\gt} \nu Q^+ \mbox{ and }Q_R= \tilde{\bf \R}=\frac{4}{3}Q^- \nu Q^+_{\gt} \nu Q^-$$
are self adjoint idempotent operators giving orthogonal projections of $PC(V)$ onto $JC(V,g)$ and $RC(V,g),$ respectively.  Moreover, $JC(V,g)$ and $RC((V,g)$ are orthogonal in $PC(V).$
\end{theorem}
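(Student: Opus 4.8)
The plan is to assemble the statement directly from the constructions already carried out in the paragraphs preceding the theorem, so that the ``proof'' is essentially a bookkeeping of facts rather than a fresh computation. First I would dispose of self-adjointness. By Proposition \ref{pc16} the operators $\nu$, $Q^{\pm}_{\gt}$, $Q^{\pm}_{\a}$, $Q^{\pm}_{\x}$ are all self-adjoint, and by Corollary \ref{cpc8} the factors of $Q^{+}=Q^{+}_{\x}Q^{+}_{\a}Q^{+}_{\gt}$ and of $Q^{-}=Q^{+}_{\x}Q^{-}_{\a}Q^{-}_{\gt}$ pairwise commute, so $Q^{+}$ and $Q^{-}$ are themselves self-adjoint idempotents. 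Since $\frac{4}{3}Q^{+}\nu Q^{-}_{\gt}\nu Q^{+}$ and $\frac{4}{3}Q^{-}\nu Q^{+}_{\gt}\nu Q^{-}$ are palindromes in self-adjoint factors (apart from a positive scalar), both $Q_J$ and $Q_R$ are self-adjoint.

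Next I would pin down the range and idempotency of $Q_J=\tilde{\bf \J}$. Unravelling (\ref{pc19}), given $U\in PC(V)$ put $T=Q^{+}(U)$; by (\ref{pc18}) this $T$ satisfies $\tilde T=T$, $T^{*}=T$, $T^{\dag}=T$, so by Proposition \ref{pc8}(2),(5)--(6) the operator $\frac{4}{3}Q^{-}_{\gt}(\nu T)=R_{T}$ is a metric-$g$ acyclic Riemann curvature operator; applying $Q^{+}_{\gt}\nu$ to it yields $J_{R_T}={\bf \J}(R_T)$, which lies in $JC(V,g)$ by Theorem \ref{pc11}. Hence $\tilde{\bf \J}(PC(V))\subseteq JC(V,g)$. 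Conversely, if $T\in JC(V,g)$ then, as established in the paragraph before the theorem, $Q^{+}_{\gt}T=T$, $Q^{+}_{\a}T=T$ and $Q^{+}_{\x}T=T$, whence $Q^{+}T=T$ and so $\tilde{\bf \J}(T)={\bf \J}\circ{\bf \R}(T)=T$ by (\ref{pc20}) and Theorem \ref{pc11}. Thus $\tilde{\bf \J}$ restricts to the identity on $JC(V,g)$ and maps all of $PC(V)$ into $JC(V,g)$, so $\tilde{\bf \J}^{2}=\tilde{\bf \J}$ and $\tilde{\bf \J}$ is the projection onto $JC(V,g)$; being self-adjoint, it is the orthogonal projection. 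I would then obtain the corresponding statement for $Q_R=\tilde{\bf \R}$ by the mirror-image argument, interchanging the $+$ and $-$ superscripts on $\gt$ and $\a$, and interchanging ${\bf \J}\leftrightarrow{\bf \R}$ and $J_{(\cdot)}\leftrightarrow R_{(\cdot)}$, again invoking Proposition \ref{pc8}(5)--(7) and Theorem \ref{pc11}.

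Finally, for the orthogonality of $JC(V,g)$ and $RC(V,g)$ inside $PC(V)$: since $\gt$ is a self-adjoint involution, $Q^{+}_{\gt}$ and $Q^{-}_{\gt}$ are complementary orthogonal projections, and every Jacobi curvature operator is $\gt$-symmetric while every Riemann curvature operator is $\gt$-antisymmetric, so $JC(V,g)\subseteq Q^{+}_{\gt}PC(V)$ and $RC(V,g)\subseteq Q^{-}_{\gt}PC(V)$. Hence for $S\in JC(V,g)$ and $T\in RC(V,g)$, $g(S,T)=g(Q^{+}_{\gt}S,\,Q^{-}_{\gt}T)=g(S,\,Q^{+}_{\gt}Q^{-}_{\gt}T)=0$, and in particular $Q_JQ_R=Q_RQ_J=0$ follows from the two projections having orthogonal ranges. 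I do not expect a real obstacle; the only delicate point is the bookkeeping of which self-adjoint projection absorbs which permutation factor when passing between the form $\frac{4}{3}Q^{+}_{\gt}\nu Q^{-}_{\gt}\nu Q^{+}$ of (\ref{pc19}) and the palindromic form $\frac{4}{3}Q^{+}\nu Q^{-}_{\gt}\nu Q^{+}$ used in the statement, and that computation has already been performed via (\ref{pc17})--(\ref{pc21}).
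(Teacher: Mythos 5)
Your proposal is correct and follows essentially the same route as the paper, whose own proof simply points back to the remarks establishing (\ref{pc17})--(\ref{pc22}): self-adjointness from the palindrome of commuting self-adjoint factors, identification of the range with $JC(V,g)$ (resp.\ $RC(V,g)$) via Theorem \ref{pc11}, and idempotency from $\tilde{\bf \J}$ restricting to the identity on its range. The only cosmetic difference is that you deduce the orthogonality of $JC(V,g)$ and $RC(V,g)$ from the complementary projections $Q^{\pm}_{\gt}$, whereas the paper notes $Q^{+}Q^{-}=0$ and hence $Q_JQ_R=0$; these amount to the same fact.
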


\begin{proof}
The proof is in the remarks above, the orthogonality following form $Q^+Q^-=0,$ which causes 

$$Q_JQ_R=\tilde{\bf \J}  \tilde{\bf \R}=0.$$
\end{proof}

Using (\ref{pc19}) and the fact observed above resulting from Proposition \ref{pc8}, we saw that if $T \in PC(V),$ then $R=(4/3)Q^-_{\gt} \nu Q^+(T)$ is a metric$-g$ acyclic curvature operator and
$\tilde{\bf \J} (T)=J_R$ is the metric$-g$ acyclic Jacobi operator resulting from projecting $T$ on $JC(V,g)$ via $\tilde{\bf \J}.$  If we assume $T^*=T,$ then we can simply replace $Q^+$ in front with $Q^+_{\x}Q^+_{\gt},$ and calculating in $\A_P,$ we have

\begin{equation}\label{pc24}
\frac{4}{3} Q^+_{\gt} \nu Q^-_{\gt} \nu Q^+_{\x} Q^+_{\gt}=\frac{1}{6}[1+\gt - \nu \gt \nu-\gt \nu \gt \nu +\x +\gt \x -\nu \gt \nu \x-\gt \nu \gt \nu \x]Q^+_{\gt}
\end{equation}
If in addition we have $\tilde{T}=T,$ then we can drop the last factor of $Q^+_{\gt}$ and we can directly calculate with the permutations that

$$\gt=[1,2]$$

$$\nu \gt \nu=[2,3]$$

$$\gt \nu \gt \nu=[1,2,3]$$

$$\gt \nu \gt=[2,3]$$

$$\x=[1,3][2,4]$$

$$\nu \gt \nu \x=[1,2,4,3]$$

$$\gt \nu \gt \nu \x=[2,4,3].$$
Thus in case $[1,2]T=T$ and $T^*=T,$ we have

\begin{equation}\label{pc25}
g(\tilde{\bf \J}(T)(u,v)w,x)= \left\{  \begin{array}{rcl}
          (1/6)[g(T(u,v)w,x) & +g(T(v,u)w,x)\\ -g(T(u,w)v,x)  & -g(T(v,w)u,x) \\
        +g(T(w,x)u,v) &+g(T(w,x)v,u) \\-g(T(v,x)u,w) & -g(T(u,x)v,w)]
                \end{array}\right\} 
\end{equation}

A case of interest is the case where $T=h \otimes 1_V \in PC(V),$ with $h \in L^2(V;\bR),$ so

$$T(u,v)w=h(u,v)w.$$  Assuming that $h$ is symmetric, then $\tilde{T}=T$ and $T^*=T,$ obviously.  Let $H \in L(V;V)$ be the unique linear operator with $g(Hx,y)=h(x,y),$ for all $x,y \in V.$  Then

\begin{equation}\label{pc26}
\tilde{\bf \J}(T)(u,v)w=\left\{   \begin{array}{rcl}
           (1/6)[h(u,v)w & +h(v,u)w \\
           -h(u,w)v  & -h(v,w)u  \\
          + g(u,v)Hw  & +g(v,u)Hw  \\
           -g(u,w)Hv  &  -g(v,w)Hu].
           \end{array}\right\}
  \end{equation}
  \bigskip
  Moreover, $h(x,w)y=g(Hx,w)y=[(Hx)^*w]y=[(Hx)^* \otimes y ](w),$ so 
  
  $$h(x,w)y = [(Hx)^* \otimes y ](w) \mbox{ and } g(x,w)Hy = [x^* \otimes Hy](w).$$
  Therefore, \ref{pc26} becomes
  
  \begin{equation}\label{pcJ}
  \tilde{\bf \J}(T)(u,v)=\left\{   \begin{array}{rcl}
           (1/6)[h(u,v) id & +h(v,u)id \\
           -[H(u)^* \otimes v  & -[Hv]^* \otimes u  \\
          + g(u,v)H  & +g(v,u)H  \\
           -[u^* \otimes Hv]  &  -[v^* \otimes Hu].
           \end{array}\right\}
  \end{equation}
Letting $d=dim(V),$ we have $trace(h \otimes id)=d \cdot h.$  As before, $c(h)=trace(H)$ is the contraction of $h,$ and

$$trace([Hx]^* y)=[Hx]^*y=g(Hx,y)=h(x,y),$$
whereas
$$trace([x^* \otimes Hy]=x^*(Hy)=g(x,Hy)=g(Hy,x)=h(y,x) = h(x,y).$$
Thus, the trace of $\J(T)$ is now easy to calculate when $T= h \otimes id =h \otimes 1_V.$

   In general, we see that for any symmetric $h$ above, we have, still using $c(h)$ for the contraction of $h,$ and letting $d=dim(V),$
  
\begin{equation}\label{pc27}
\mbox{if } T=h \otimes 1_V, \mbox{ then }trace \circ  [\tilde{\bf \J}(T)]=\frac{1}{3}[(d-2)h+c(h) g].
\end{equation}
Since $c(g)=d,$ taking the case $h=g$ gives

$$trace \circ  [\tilde{\bf \J}(g \otimes 1_V)]=\frac{2}{3}[(d-1)g].$$
and therefore

$$\mbox{if } T_0=\frac{c(h)}{2(d-1)}[g \otimes 1_V], \mbox{ then } trace \circ \tilde{\bf \J}(T_0)=\frac{1}{3}c(h)g,$$
and of course, $R[[Q^+_{\x}(T_0)]]$ is an algebraic curvature tensor as is $R_T$, since $\tilde{T}=T$ and $T^*=T.$  Then

$$trace \circ \tilde{\bf \J}(T-T_0)=\frac{1}{3}(d-2)h.$$
Now set

\begin{equation}\label{pc28}
T_h=\frac{3}{d-2}[T-T_0]=\frac{3}{d-2}[h \otimes 1_V -\frac{c(h)}{2(d-1)}g \otimes 1_V].
\end{equation}
Then

\begin{equation}\label{pc29}
trace \circ \tilde{\bf \J}(T_h)=h, \mbox{ for any symmetric } h \in L^2(V;\bR).
\end{equation}
\bigskip

The relation of the Ricci tensor to the Riemann curvature operator becomes more direct through the Jacobi curvature operator, because $$Ricci=trace(J_R),$$ if $R$ is an algebraic Riemann curvature tensor, and in particular, we see that the Ricci tensor has no dependence on the metric tensor.  It is useful to be able to break away the essential part of the curvature operator which has the Ricci curvature.  The simplest way to do this is to take $h=Ricci,$ in the above calculations and then apply the trace. 
But, for any symmetric $h \in L^2(V;\bR),$ we know that $R[[Q^+_{\x}(T_h)]]$ is a metric$-g$ acyclic curvature operator whose Ricci tensor is now $$h=trace \circ {\bf \J}(T_h).$$  If we begin with any metric$-g$ acycliic curvature operator $R,$ then its Ricci tensor is $Ric_R=trace \circ J_R,$ so with $h=Ric_R,$ we have $trace \circ T_h=Ric_R,$ and therefore, $R-R[[Q^+_{\x}(T_h)]]$ has vanishing Ricci tensor.  Thus, $W_R=R-R[[Q^+_{\x}(T)]],$ where $T=T_{Ric_R}$ is the Weyl curvature operator for the metric$-g$ Riemann curvature operator $R.$  So, $W_R$ is a metric$-g$ acyclic Riemann curvature operator in $RC(V,g)$ whose Ricci tensor vanishes.  

Now, if $T=T^*$ and $\tilde{T}=T,$ then as we observed from Proposition \ref{pc8}, 

$$R=(4/3)Q^-_{\gt} \nu Q^+_{\x}(T)$$
is a metric$-g$ acyclic Riemann curvature operator and $J_R=\tilde{\bf \J}(T).$
Thus, to get $W_R,$ we need to compute

$$(4/3)Q^-_{\gt} \nu Q^+_{\x}=\frac{1}{3}[(1-\gt) \nu (1+ \x)]$$
$$=\frac{1}{3}[\nu - \gt \nu + \nu \x -\gt \nu \x].$$ 

Now,

$$\nu=[1,3],$$
$$\gt \nu=[1,2][1,3]=[2,1,3]$$
$$    \nu  \x=[1,3][1,3][2,4]=[2,4]$$
$$\gt \nu \x= [1,2][2,4]=[1,2,4].$$

Therefore,

$$g(R(u,v)w,x)=\frac{1}{3}[g(T(w,v)u,x)-g(T(w,u)v,x) +g(T(u,x)w,v)-g(T(v,x)w,u)].    $$
Taking the case where $T=h \otimes 1_V,$ where $h \in L^2(V;\bR)$ is symmetric, we have

$$g(R(u,v)w,x)=\frac{1}{3}[h(w,v)g(u,x)-h(w,u)g(v,x)+h(u,x)g(w,v)-h(v,x)g(w,u)].$$

Applying this to $T=T_h$ given by (\ref{pc28}), we get the difference $R_1-R_0,$ where $R_0$ simply results from the case of $h=[3c(h)/[2(d-2)(d-1)]g $ and $R_1$ results from the case where $h$ is replaced by $[3/(d-2)]h.$  Thus,

$$g(R_0(u,v)w,x)=\frac{c(h)}{2(d-2)(d-1)}[g(w,v)g(u,x)-g(w,u)g(v,x)+g(u,x)g(w,v)-g(v,x)g(w,u)]$$
and
$$g(R_1(u,v)w,x)=\frac{1}{d-2}[h(w,v)g(u,x)-h(w,u)g(v,x)+h(u,x)g(w,v)-h(v,x)g(w,u)].$$
Finally, this means that the Weyl curvature operator for the metric$-g$ acyclic curvature operator $R$ must be given by

\begin{equation}
W_R= R-R_1+R_0, \mbox{ with } h=Ricci \mbox{ so } c(h) \mbox { is the scalar curvature.}
\end{equation}

\begin{proposition}\label{nuQT}

If $T \in PC(V)$ with $Q^{\pm}T=T,$ by which we mean that either $Q^+T=T$ {\bf or} $Q^-T=T,$ then

$$(\tau \alpha) T^{\#} = T^{\#}  \mbox{ and } \tau [T^{\#}] = ( T^{\#}\tilde{)}  = \alpha [T^{\#}] = (T^{\#})^*,$$
and moreover, $T^{\#}$ has the metric exchange property,

$$(T^{\#})^{\dag} = T^{\#}.$$

\end{proposition}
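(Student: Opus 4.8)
The plan is to reduce the whole statement to one short permutation computation, exploiting that $T^{\#}=\nu(T)$ and that the hypothesis $Q^{\pm}T=T$ completely pins down how $\tau$, $\alpha$ and $\chi$ act on $T$.

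First I would unpack the hypothesis. By (\ref{pc17})--(\ref{pc18}) and, in the minus case, Corollary \ref{cpc8} together with (\ref{pc22}), one has $Q^{+}Q^{+}_{\tau}=Q^{+}=Q^{+}_{\tau}Q^{+}$, $Q^{+}Q^{+}_{\alpha}=Q^{+}=Q^{+}_{\alpha}Q^{+}$, $Q^{+}_{\chi}Q^{+}=Q^{+}=Q^{+}Q^{+}_{\chi}$, and correspondingly $Q^{-}Q^{-}_{\tau}=Q^{-}$, $Q^{-}Q^{-}_{\alpha}=Q^{-}$, $Q^{+}_{\chi}Q^{-}=Q^{-}$. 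Hence $Q^{+}T=T$ forces $\tau T=T$, $\alpha T=T$ and $\chi T=T$, whereas $Q^{-}T=T$ forces $\tau T=-T$, $\alpha T=-T$ and $\chi T=T$. In either case the signs attached to $\tau$ and $\alpha$ coincide, so $(\tau\alpha)T=T$, and in both cases $\chi T=T$.

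Next I would turn $\chi T=T$ into information about $T^{\#}$. Since $\chi=\beta\nu=\nu\beta$ by (\ref{involution1}), we get $\beta(T^{\#})=\beta\nu(T)=\chi(T)=T$, and as $\beta$ is an involution this says $\beta T=T^{\#}$. Now I would invoke the permutation identity $(\tau\alpha)\nu=\beta(\tau\alpha)$ of (\ref{nu tau alpha}): since $(\tau\alpha)T=T$,
$$(\tau\alpha)T^{\#}=(\tau\alpha)\nu(T)=\beta(\tau\alpha)(T)=\beta T=T^{\#},$$
which is the first asserted equation. Applying $\tau$ to $(\tau\alpha)T^{\#}=T^{\#}$ and using that $\tau$ is an involution gives $\alpha[T^{\#}]=\tau[T^{\#}]$; rewriting this through the notational conventions introduced in (\ref{pc2})--(\ref{pc5}) yields the chain $\tau[T^{\#}]=(T^{\#}\tilde{)}=\alpha[T^{\#}]=(T^{\#})^{*}$.

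Finally, for the metric exchange property I would compute directly $(T^{\#})^{\dag}=\chi(T^{\#})=\chi\nu(T)=\beta(T)=T^{\#}$, using once more $\chi\nu=\beta$ from (\ref{involution1}) and the relation $\beta T=T^{\#}$ established above. The whole argument is mechanical; the only point demanding care is the permutation bookkeeping — in particular invoking $\chi=\beta\nu$, $\chi\nu=\beta$ and $(\tau\alpha)\nu=\beta(\tau\alpha)$ correctly rather than a near-miss variant — so I do not expect a genuine obstacle beyond keeping the indices straight.
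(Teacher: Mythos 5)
Your proof is correct and follows essentially the same route as the paper, which simply cites the identity $(\tau\alpha)\nu Q^{\pm}=\nu Q^{\pm}$ (equation (\ref{nuQ})) and the commutation $\chi\nu=\nu\chi$; your argument amounts to re-deriving that identity in this instance from $(\tau\alpha)\nu=\beta(\tau\alpha)$ and the absorption of $\tau,\alpha,\chi$ by $Q^{\pm}$. The permutation bookkeeping checks out, so nothing further is needed.
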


\begin{proof}
This is an immediate consequence of (\ref{nuQ}) and the fact that $\chi$ and $\nu$ commute.

\end{proof}

We remark here, that any metric Jacobi or metric Riemann curvature operator satisfies the hypothesis of Proposition \ref{nuQT}.



\section{THE RIEMANN AND JACOBI CURVATURE OPERATOR FIELDS}

\med

In this section we give a fairly self-contained discussion of the Riemann and Jacobi operator fields on a semi-Riemannian manifold for the reader who does not want to read the previous section on algebraic curvature operators.  However, some of the results of the previous section do make it easier to manipulate curvature operators.

For $v \in T_mM,$ we denote by $\nabla_v$ the covariant
differentiation operator with respect to $v$ at $m.$ We have then
the Riemann curvature operator field, $\R,$ given by

\begin{equation}\label{curvatureoperator}
\R(u,v)=[\nabla_u,\nabla_v]-\nabla_{[u,v]},
\end{equation}
where $u$ and $v$ are any tangent vector fields on an open subset
$U$ of $M.$ We note that $\R(u,v)$ actually defines a vector
bundle map of the tangent bundle $TM|U$ to itself covering the
identity map of $U,$ and it as well then determines the Riemann
curvature tensor, $Riemann,$ of fourth rank, which means that $\R$
is itself an alternating second rank tensor field on $M$ which at
each point $m \in M$ gives a linear transformation valued tensor
on $T_mM.$ In particular, this means that $\R(u,v)$ is defined,
giving a linear transformation of $T_mM$ for any pair of tangent
vectors $u,v \in T_mM.$  Thus, $\R(u,v)$ is a smooth section of the vector bundle $L(\tau_M;\tau_M),$ and therefore $\R$ itself is an $L(\tau_M;\tau_M)-$valued 2-form on $M.$ At each point $q \in M,$ the Riemann curvature operator field defines a metric algebraic Riemann curvature operator $\R[q]$ on the tangent space $T_qM$ with respect to the inner product $g[q].$  If we denote $L=L(\tau_M;\tau_M)$ for short, for the moment, then the second Bianchi identity is simply
\begin{equation}\label{2Bianchi}
d_L \R=0,
\end{equation}
where $d_L$ is the exterior covariant derivative operator on alternating $L-valued$ forms.  For more on this, see Appendix I.  In any case, the vanishing of an exterior covariant derivative should be viewed as a generalized conservation law, so we view the second Bianchi identity as a generalized conservation law for the Riemann curvature operator, $\R.$  In fact, the fact that $d\R=0$ is the basis for the characteristic cohomology classes which are actual closed differential forms and can therefore be integrated.  

One of our main concerns is the certain
contraction of $Riemann$ known as the Ricci tensor, $Ric.$ In fact
in any frame at $m \in M$ with basis $(e_{\alpha})$ for $T_mM$ and
dual basis $(\omega^{\alpha}),$ we have, using the summation
convention,

\begin{equation}\label{ricci1}
Ric(u,v)=\omega^{\alpha}(\R(e_{\alpha},u)v),~~u,v \in T_mM.
\end{equation}
Our notation is chosen to emphasize we are not restricting
ourselves to coordinate frames nor to orthonormal frames unless
explicitly stated. Thus, we will refrain from using the abstract
index notation, as it is too often restricted to imply coordinate
framing. In particular, for any pair of tangent vectors $v,w \in
T_mM,$ the curvature operator defines another linear
transformation $\K(v,w)$ of $T_mM$ defined by

\begin{equation}\label{curv1}
\K(v,w)z=\R(z,v)w=R^{\#}(w,v)z,\,\ z \in T_mM, \mbox{ so } \K(v,w)=R^{\#}(w,v).
\end{equation}
Then, using the basic symmetries of the curvature tensor, one
finds easily

\begin{equation}\label{curv2}
\K(v,w)^*=\K(w,v),\, v,w \in T_mM.
\end{equation}

But, in terms of our previous work on algebraic curvature operators and Proposition \ref{pc8}, we know that as 

$$Cycl(\R)=0,~~ \tilde{\R}=-\R, \mbox{ and } \R^{\dag}=\R,$$ 
it follows that 

$$\J=Sym(\R^{\#})$$
is an algebraic Jacobi operator (field).  Moreover, in view of Proposition \ref{pc8} and Theorem \ref{pc11}, we see right away that $\K^*=\tilde{\K}=\R^{\#},$ so that

\begin{equation}\label{algcurvop}
\J=Sym(\tilde{\K})=Sym(\K) \mbox{ and }  \R=[\R]_{\J}=(4/3)Alt([\J]^{\#}).
\end{equation}

We also now have

\begin{equation}\label{curv3}
Ric(v,w)=trace [\,\K(v,w)],
\end{equation}
which means in terms of $\K$ we simply find $Ric=trace \circ \K.$
The symmetry of $Ric$ then follows immediately from (\ref{curv2}).
Moreover, if $u$ is any tangent vector, then $\K(u,u)$ is
self-adjoint or symmetric, clearly vanishes on the line through
$u,$ and therefore has $u^{\perp}$ as an invariant subspace. Thus
$\K(u,u)$ really "lives" on $u^{\perp},$ the orthogonal complement
of $u$ in $T_mM.$ We shall denote by $A_u^{(geo)}$ the restriction
of $-\K(u,u)$ to $u^{\perp}.$  Thus, in the Lorentz case, if $u$ is timelike, then $A_u^{(geo)}:u^{\perp} \lra
u^{\perp}$ is a self-adjoint linear transformation of the
Euclidean space $u^{\perp}.$ Thus for this case that $u$ is a time-like unit
vector, the metric tensor is positive definite on this orthogonal
complement, and it follows that $Ric(u,u)$ is simply the sum of
the eigenvalues of $-A_u$ or of $\K(u,u).$ In general, in the Lorentz case, if
$(u,e_1,e_2,...,e_n)$ is an orthonormal frame with $u$ a time-like
unit vector, then we note that

\begin{equation}\label{sectcurv1}
g(e_k,\R(e_k,u)u)=g(e_k,\K(u,u)e_k)
\end{equation}
is the negative of the Riemann sectional curvature of the span of
$u$ and $e_k,$ in $T_mM,$ because $g(u,u)=-1.$ Thus, the
eigenvalues of $A_u^{(geo)}$ are the principal Riemann sectional
curvatures through $u.$  We can now symmetrize and define the {\it Jacobi curvature operator field}, $\J=Sym(\K),$ so

\begin{equation}\label{symcurv}
\J(v,w)=Sym(\K(v,w)=\frac{1}{2}[\K(v,w)+\K(w,v)],\, v,w \in T_mM.
\end{equation}
We see immediately from (\ref{curv2}) that $\J$ is a symmetric
$L(\tau_M;\tau_M)-$valued tensor whose values are themselves
self-adjoint transformations of $T_mM.$  In particular, at each point $q \in M,$ the Jacobi operator field evaluated at that point, $\J[q],$ defines the algebraic Jacobi curvature operator due to the algebraic Riemann curvature operator $\R[q].$  Moreover, we also have
$\J(v,v)=\K(v,v)$ for each $v \in T_mM,$ whereas, $Ric(v,w)=trace
\, \J(v,w),$ for any $v,w \in T_mM.$  Consequently, in terms of the Jacobi operator we have

\begin{equation}\label{RiceqtraceJacobi}
Ric=trac \circ \J.
\end{equation}

Notice that the observer principle can be applied to $\J,$  the Jacobi curvature operator (\ref{symcurv}), as well as to $Ric,$ as tensors on $T_mM.$ Thus,
the observer principle says in a sense that these symmetric
tensors are {\it observable}, in the sense that they are
completely determined at a given event by knowing how all
observers at the event see their monomial forms.

As before, for any permutation $\sigma$ of $\{1,2,...,r\},$ and any tensor $A$ of rank $r,$ let $\sigma(A)$ denote the result of permuting the arguments of $A$ via $\sigma,$ so

$$[\sigma(A)](v_1,v_2,...,v_r)=A(v_{\sigma(1)},v_{\sigma(2)},...,v_{\sigma(r)}).$$ 
Let $sgn(\sigma)$ denote the sign of the permutation $\sigma,$ defined as $+1$ in case $\sigma$ is an even permutation and $-1$ otherwise, so $sgn$ is a group homomorphism of the permutation group into the group $\{-1,1\}.$  Thus, the alternation operator $Alt$ is defined by

$$Alt(A)=\frac{1}{r!}\sum sgn(\sigma)[\sigma(A)].$$
Let $\sigma_1$ denote the permutation of $\{1,2,3\}$ which simply interchanges 2 with 3, fixing 1, so $\sigma_1=[2,3].$  Then in case $r=3,$ the full permutation group consists merely of the cyclic subgroup of order 3 generated by a non-trivial cycle and its coset of $\sigma_1$ consisting of the three odd permutations.  Thus, for any third rank tensor $A,$ we have

$$Alt(A)=\frac{1}{3!}[Cycl(A)-Cycl(\sigma_1(A))].$$  

From $\R$ and $\J$ we form the third rank $TM-$valued covariant tensors ${\bf R}$ and ${\bf J}$ by evaluation, so

\begin{equation}\label{Riemannthirdrank}
{\bf R}(w,u,v)=\R(u,v)w,
\end{equation}
and

\begin{equation}\label{Jacobithirdrank}
{\bf J}(w,u,v)=\J(u,v)w.
\end{equation}

Thus,

\begin{equation}\label{thirdrank2}
([1,3,2]{\bf R})(u,v,w)=\R(u,v)w \mbox{ and } ([1,3,2]{\bf J})(u,v,w)=\J(u,v)w.
\end{equation}

In particular, as $\R$ is anti-symmetric we have $\sigma_1({\bf R})=-{\bf R}$ and therefore

$$Alt({\bf R})=\frac{1}{3}Cycl({\bf R}).$$

The first Bianchi identity is usually stated as $Cycl({\bf R})=0,$ so we see it is equivalent to

\begin{equation}\label{first Bianchi}
Alt({\bf R})=0.
\end{equation}
Similarly, the second Bianchi identity $d_L\R=0$ is now seen equivalent to $Cycl(\nabla \R)=0,$ which is the form in which it is usually stated.

The {\it Jacobi curvature tensor}, $J,$ is customarily defined as the fourth rank tensor given by

\begin{equation}\label{jacobitensor}
J(\omega,u,v,w)=\omega(\J(v,w)u)=\omega ({\bf J}(u,v,w)).
\end{equation}
Clearly knowing any one of $\R,~{\bf R},$ and $Riemann$ determines all the others and likewise for $\J,~{\bf J},$ and $J.$  But it is useful to keep in mind that $\R, ~{\bf R},$ and $Riemann$ are all different mathematical objects, and likewise for $\J,~{\bf J},$ and $J.$  As $\R$ determines $\K$ which determines $\J,$ it is clear that $\R$ determines $\J.$  However, by Theorem \ref{pc11}, $\R$ has enough symmetries to in turn be determined by $\J.$  Since  $\J=Sym(\K),$ for those who skipped the preceding section,
$$\J(u,v)w=\frac{1}{2}[\R(w,v)u+\R(w,u)v],$$ and therefore
$$2\J(w,v)u-2\J(w,u)v=\R(u,w)v+\R(u,v)w-\R(v,w)u-\R(v,u)w$$
$$=\R(u,w)v+2\R(u,v)w-\R(v,w)u,$$  where we have used the anti-symmetry of $\R$ to combine the second and fourth terms. Using the first Bianchi identity, the third term can be replaced by $\R(u,v)w+\R(w,u)v$ giving
$$2[\J(w,v)u-\J(w,u)v]=\R(u,w)v+2\R(u,v)w+\R(u,v)w+\R(w,u)v=3\R(u,v)w.$$
Thus (see \cite{MTW}, (11.37), page 287)

\begin{equation}\label{R from J}
\R(u,v)w=\frac{2}{3}[\J(w,v)u-\J(w,u)v]=\frac{2}{3}[\J(v,w)u-\J(u,w)v]=\frac{2}{3}[\J(v,w)u-\J(w,u)v].
\end{equation}
The last expression can be viewed as saying to get the Riemann curvature operator from the Jacobi curvature operator acting on a vector, the result is two thirds the result of cycling left minus cycling right.
Designating $\sigma_3=[1,2]$ as the transposition which merely interchanges 1 an 2 and with $\lambda=[1,2,3]$ as the appropriate cycle,
$${\bf R}=\frac{2}{3}\lambda[{\bf J}-\sigma_3({\bf J})].$$
Indeed, direct calculation gives, using the symmetry of $\J,$ and either (\ref{R from J}) or (\ref{algcurvop}),

$$\frac{2}{3} \lambda [{\bf J} -\sigma_3({\bf J})](u,v,w)=\frac{2}{3}[{\bf J}-\sigma_3({\bf J})](v,w,u)=\frac{2}{3} [{\bf J}(v,w,u)-{\bf J}(w,v,u)]$$

$$=\frac{2}{3}[\J(w,u)v-\J(v,u)w]=\frac{2}{3}[\J(u,w)v-\J(u,v)w]=\frac{2}{3} [\J^{\#}(v,w)u-\J^{\#}(w,v)u] $$

$$=\frac{4}{3} [Alt(\J^{\#})](v,w)u=\R(v,w)u={\bf R}(u,v,w).$$

We need to emphasize here that so far, other than or discussion of the relation to sectional curvatures, this correspondence between Riemann curvature operators and tensor fields and Jacobi curvature operator and tensor fields is completely independent of any metric and only requires the connection be torsion free so as to obtain the first Bianchi identity.  However, in case of the Levi Civita connection on a semi Riemannian manifold, the Riemann curvature operator also satisfies the metric exchange property

$$g(\R(u,v)w,x)=g(\R(w,x)u,v),$$
so that $\R$ is a field of metric-$g$ algebraic Riemann curvature operator fields, and consequently $\J$ is also a metric-$g$ Jacobi curvature operator field and thus also satisfies the metric exchange property

$$g(\J(u,v)w,x)=g(\J(w,x)u,v),$$
valid for all vector fields $u,v,w,x$ on the semi Riemannian manifold $M,$ and these vector fields do not even need to be continuous.

\bigskip



\section{COMPARISON OF KOZUL CONNECTIONS AND CURVATURE OPERATORS}

Suppose that $B$  is a connection on a smooth manifold $M$ and we are presented with a new Kozul connection $\nabla$ on $M$.  Their difference $H$ is actually a tensor field on $M.$  That is, with

$$H=\nabla - B, \mbox{ so } H_u v=\nabla_u v- B_u v, \mbox{ for all smooth tangent vector fields } u \mbox{ and } v \mbox{ on } M,$$

$$H(u,v)=H_u v$$
is bilinear over the ring of smooth functions on $M$ and therefore defines a tensor field on $M.$  Indeed, if $f$ is any smooth function on $M,$ then obviously

$$H(fu,v)=fH(u,v), \mbox{ as } \nabla_{fu} v= f\nabla_u v  \mbox{ and } B_{fu} v=f B_u v,$$
but also,
$$H(u,fv)=[D_u f]v + f\nabla_u v -[D_uf]v -fB_u v= f [\nabla_u v -B_u v].$$
Thus, we have

$$\nabla=B +H, \mbox{ and } H \mbox{ is a smooth section of } L(TM;L(TM;TM))$$
To compare the curvature operators of $\nabla$ and $B$, we can write

$$\R_{\nabla}(u,v)=[\nabla_u,\nabla_v]-\nabla_{[u,v]}$$

$$\R_B(u,v)=[B_u,B_v]-B_{[u,v]}, \mbox{ and } \R_H(u,v)=[H_u,H_v]-H_{[u,v]}.$$

Thus,

$$\R_{\nabla}(u,v)=[B_u+H_u,B_v+H_v]-B_{[u,v]} -H_{[u,v]}$$

$$\mbox{so, }\R_{\nabla}(u,v)=\R_B(u,v)+\R_H(u,v)+[B_u,H_v]+[H_u,B_v]$$

$$\mbox{and thus, } \R_{\nabla}(u,v)=\R_B(u,v)+\R_H(u,v)+[B_u,H_v]-[B_v,H_u]$$
Now, as $B$ is a connection, it naturally extends to all tensor fields on $M$ via the product rule for differentiation, so for any smooth tangent vector fields $x,y,z,$

$$(B_x[H_y])z=B_x [H_y z] -B_x [H_y z]=[B_x,H_y]z.$$
The result is that

\begin{equation}\label{cnxcurv1}
\R_{\nabla}(u,v)=\R_B(u,v)+\R_H(u,v)+B_u[H_v]-B_v[H_u].
\end{equation}
Of course, the usual application of (\ref{cnxcurv1}) is to the case where $M$ is replaced by one of its open subsets, say $U,$ diffeomorphic to an open subset of a Euclidean space and $B=D$ is the flat connection resulting from the transfer of the standard flat connection on the Eucllidean space to $U.$  Then $\R_B=\R_D=0,$ and the result becomes simply

$$\R_{\nabla}(u,v)=\R_H(u,v) +D_u[H_v]-D_v[H_u].$$
In particular,

\begin{equation}\label{cnxcurv2}
\R_{\nabla}(u,v)=[H_u,H_v]+D_u[H_v]-D_v[H_u], \mbox{ if } [u,v]=0,
\end{equation}
as is the case when $u$ and $v$ are chosen as coordinate vector fields from the diffeomorphism of $U$ with an open subset of Euclidean space.  Notice that in a local coordinate system, the components of $H$ will be the usual Christoffel symbols which are generally thought of as not defining a true tensor.  That is clearly not the case as we see here.  The reason for a problem here is that if we change the diffeomorphism of $U$ into an open subset of a Euclidean space, then the transferred flat connection is different, it is no longer $D,$ but rather a new flat connection say $D'$ and then the difference $\nabla-D'$ is no longer equal to $H$ but gives rather a new difference tensor $H'.$  It is interesting here to note that some mathematicians, physicists, and philosophers had objected to Einstein's criterion of general covariance for the expression of physical laws on the grounds that once a specific coordinate system is chosen, any specification of components defines a tensor in that coordinate system and therefore in all coordinate systems, so that consequently, the covariance criterion was meaningless.  In fact, the criterion was far from meaningless, it is just that it was not sufficiently spelled out.  If you define a tensor by choosing a specific coordinate system, it is then defined in all coordinate systems, but the real idea of covariance is that the tensor has invariant natural meaning independent of any choice of coordinates.  Thus, choosing a coordinate system to produce a flat connection $D$ does not produce a natural $H$ in any real sense, it is completely dependent on the coordinate system used to construct $D.$  Thus, we see that true physical laws should be formulated without coordinates at all.  In order to be able to do that, one needs to be proficient in understanding differential geometry without recourse to coordinates.  Moreover, it is only by working without coordinates that we begin to see the mathematical structures with which we are dealing which enables us to properly formulate laws and strategies for using them.  A student of differential geometry limited to index manipulations seeing tensors as merely indexed collections of numbers cannot properly appreciate the structures he is dealing with and is analogous to an art student who only sees a painting as a collection of colors and who thus cannot tell the significance of the difference between a Rembrandt and a Jackson Pollock.

\bigskip



\section{COMPARISON OF CONNECTIONS FROM TWO METRIC TENSORS ON A MANIFOLD}

Here we consider the general problem of relating two metric tensors $g$ and $h$ and their resulting connections and curvature operators, on any semi Riemannian manifold $M.$  In fact, the two metric tensors need not have the same signature, as in the case of vector bundles where it was useful to consider a new positive definite metric tensor on the bundle.  But here, we just have two metric tensors $g$ and $h.$  Let us think of $h$ as the original metric tensor and $g$ as some new metric tensor, both are semi Riemannian.  Now, since both are non-degenerate, this means there is a unique isomorphism $G$ of the tangent bundle of $M$ so that 

$$g(u,v)=h(Gu,v), \mbox{ for all smooth tangent vector fields } u,v \mbox{ on } M.$$
Thus, as we are assuming our metric tensors are smooth, this means that $G$ is a smooth section of $L(TM;TM)$ whose values are invertible linear transformations at each point of $M.$  Moreover, symmetry of $g$ means that $G$ is self adjoint relative to $h,$ so $h(Gu,v)=h(u,Gv),$ for all vector fields $u$ and $v$ on $M.$

Suppose now that $\nabla^g$ is a connection on $M$ with $\nabla^g g=0.$  Let $T$ be the torsion tensor for $\nabla^g.$  Then of course we have for any smooth tangent vector fields $u,v,w,$

\begin{equation}\label{geo6}
g(\nabla^g_u v,w)=\frac{1}{2} [ D_u [g(v,w)] +D_v [g(u,w)] - D_w [g(u,v)] +g([w,u],v)+g([w,v],u)+g(w,[u,v]))
\end{equation} 
$$+g(T(w,u),v)+g(T(w,v),u)+g(w,T(u,v)) ].$$ 
and (\ref{geo6}) then determines $\nabla^g$ if $T$ is given.  And the same equation works for $h,$ if $\nabla^h=\nabla$ is the Levi Civita connection for $h,$ so in that case, torsion vanishes.  So let us assume that the torsion $T=0$ as well, so that $\nabla^g$ is the Levi Civita connection for $g$ as well.   We then have, expressing $g$ in terms of $h$ and $G,$

\begin{equation}\label{geo7}
h(\nabla^g_u v,Gw)=\frac{1}{2} [D_u [h(Gv,w)] + D_v [h(Gu,w)]-D_w [h(Gu,v)]+h([w,u],Gv)+h([w,v],Gu)+h(Gw,[u,v]).
\end{equation}
and likewise

\begin{equation}\label{geo8}
h(\nabla_u v,Gw)=\frac{1}{2} [D_u [h(v,Gw)]+D_v[h(u,Gw)-D_{Gw}[h(u,v)]+h([Gw,u],v)+h([Gw,v],u)+h(Gw,[u,v])]
\end{equation}
Since $\nabla h=0$ and again, $\nabla$ is torsion free, it follows for any tangent vector fields $x,y,z$ we have 

\begin{equation}\label{geo9}
D_x [h(Gy,z)]=h([\nabla_xG]y,z)+h(G \nabla_x y,z) + h(Gx, \nabla_x z) \mbox{ and } [x,y]=\nabla_x y - \nabla_y x,
\end{equation}
$$\mbox{and therfore } \nabla_{Gw} x=\nabla_x [Gw] +[Gw,x]=[\nabla_xG]w+G\nabla_x w +[Gw,x].$$
Applying the relations (\ref{geo9}), we find

\begin{equation}\label{geo10}
D_{Gw}[h(u,v)]=h([\nabla_uG]w,v)+h(G\nabla_u w,v)+h([Gw,u],v)
\end{equation}
$$  +h(u,[\nabla_vG]w)+h(u,G\nabla_v w)+ h(u,[Gw,v]). $$
When we substitute (\ref{geo10}) into (\ref{geo8}) and cancelling, we then have

\begin{equation}\label{geo11}
h(\nabla_u v,Gw)=\frac{1}{2} [D_u [h(Gv,w)] + D_v [h(Gu,w)]+h(Gw,[u,v])
\end{equation}
$$-h([\nabla_uG]w,v)-h(G\nabla_u w,v)-h(u,[\nabla_vG]w)-h(u,G\nabla_v w)].$$
We can now subtract (\ref{geo11}) from (\ref{geo7}) and cancel to find (remember $G$ is self adjoint)

\begin{equation}\label{geo12}
h(\nabla^g_u v -\nabla_u v, Gw)=\frac{1}{2} [ h([w,u],Gv)+h([w,v],Gu)-D_w [h(Gu,v)]
\end{equation}
$$+h([\nabla_uG]w,v)+h(\nabla_u w,Gv)+h(u,[\nabla_vG]w)+h(Gu,\nabla_v w)].$$
Now, as $[w,x]+ \nabla_x w=\nabla_w x$ for any tangent vector field $x,$ we find (\ref{geo12}) simplifies to

\begin{equation}\label{geo13}
h(\nabla^g_u v -\nabla_u v, Gw)=\frac{1}{2} [h(\nabla_w u,Gv)+h(\nabla_w v,Gu)-D_w[h(Gu,v)]+h([\nabla_u G]w,v)+h(u,[\nabla_v G]w)]
\end{equation}
Again using $\nabla h=0,$

\begin{equation}\label{geo14}
D_w [h(Gu,v)]=h([\nabla_w G]u,v)+h(G\nabla_w u,v)+h(Gu, \nabla_w v),
\end{equation}
so with this final cancellation in (\ref{geo13}), the result is
\begin{equation}\label{geo15}
h(\nabla^g_u v -\nabla_u v, Gw)=\frac{1}{2} [h([\nabla_u G]w,v)+h(u,[\nabla_v G]w)-h([\nabla_w G]u,v)]
\end{equation}
Now, as $G$ is self adjoint, it follows that $\nabla_x G$ is also self adjoint for any tangent vector field $x$ on $M,$  so

\begin{equation}\label{geo16}
h(G[\nabla^g_u v -\nabla_u v], w)=\frac{1}{2} [h([\nabla_u G]v,w)+h([\nabla_v G]u,w)-h([\nabla_w G]u,v)]
\end{equation}
If we express $\nabla G$ as a section of $L(TM,TM;TM)$ with $[\nabla G](x,y)=[\nabla_y G]x,$ then view $$[\nabla G] \in L(TM;L(TM;TM)),$$ so $[\nabla G]x \in L(TM;TM)$ and with only a slight violation of standard notational practice, we write 

$$([\nabla G]x)y=[\nabla_y G]x,$$
so then we can write $$h([\nabla_w G]u,v)=h([[\nabla G]u]w,v)=h([[\nabla G]u]^*v,w)$$ 
and  then (\ref{geo16}) becomes 

\begin{equation}\label{geo17}
h(G[\nabla^g_u v -\nabla_u v], w)=\frac{1}{2} [h([\nabla_u G]v,w)+h([\nabla_v G]u,w)-h([[\nabla G]u]^*v,w)].
\end{equation}
As this is valid for all tangent vector fields $u,v,w,$ it follows that

\begin{equation}\label{geo18}
G[\nabla^g_u v -\nabla_u v] =\frac{1}{2} [[\nabla_u G]v+[\nabla_v G]u-[[\nabla G]u]^*v].
\end{equation}
and finally,

\begin{equation}\label{geo18}
\nabla^g_u v -\nabla_u v=\frac{1}{2} G^{-1} [[\nabla_u G]v+[\nabla_v G]u-[[\nabla G]u]^*v].
\end{equation}

\bigskip


\section{GEOMETRIZATION OF ALGEBRAIC CURVATURE OPERATORS}

If $V$ is a vector space and $B$ is an algebraic curvature operator on $V,$ in either Jacobi or Riemann case, then it is natural to ask if there is a manifold $M$ with Kozul connection $\nabla$ such that the curvature operator due to $\nabla$ of the corresponding type is $B.$  According to our result on algebraic curvature operators, we can begin with an algebraic Jacobi curvature operator $J,$ for if $B$ is a Riemannian algebraic curvature operator and if $\R_{\nabla}(q)=B,$  then $\J_{\nabla}(q)=J[[B]],$ so if we used $B=R[[J]],$ then $J[[B]]=J$ and we have found $\J_{\nabla}(q)=J.$  In this situation we would say we have a geometric model of $J,$ or in short, we have geometrized $J.$  If we wish to geometrize a Riemann algebraic curvature operator $B,$ and if we find $\nabla$ with  $\J_{\nabla}(q)=J[[B]],$ then $\R_{\nabla}(q)=R[[J[[B]]]]=B,$ so we have geometrized $B.$  That is, geometrizing a curvature operator in either the Jacobi or Riemann type automatically geometrizes both corresponding types because of Theorem \ref{pc11}.  So what we will find convenient is to begin with $J$ and simply find $\nabla$ with $\R_{\nabla}(q)=R[[J]].$  When there is no metric to deal with, it is easy to do.

To begin then, let us simply assume that $J$ is a symmetric precurvature operator on $V,$ so 

$$J \in L_{sym}^2(V;L(V;V)).$$
We will find $\nabla$ which geometrizes $R[[J]].$

We take $V$ itself as the manifold with its trivial tangent bundle, so we can identify vector fields with their principal parts which are then simply $V$ valued functions on open subsets of $V.$  We will simply take $q=0.$  Thus, to define $\nabla$ we have to define $\nabla_v w$ for any function $v$ on an open subset $U$ of $V$  having values in $V,$ and $w$ any smooth function on $U$ with values in $V.$  We denote by $D$ the ordinary differentiation operator, so $D$ is the standard flat connection.  We can then define $1_V$ to be the identity function on $V$ itself which is clearly smooth and define $\nabla$ by

\begin{equation}\label{geo1}
\nabla_v w=D_v w + \frac{2}{3}J(v,w)1_V, 
\end{equation} 
Here we will follow the convention that if $v$ and $w$ and $z$ are functions with different domains contained in $V$ and with values in $V,$ then the domain of $J(v,w)z$ is the intersection of those domains, the domain they have in common, which if that is $U \subset V,$

\begin{equation}\label{geo2}
\mbox{then }  [\nabla_v w](x)= [D_v w](x) + \frac{2}{3} J(v(x),w(x))x, \mbox{ for any } x \in U, \mbox{ with } U \mbox{ open in } V.
\end{equation}

Now since $J$ is symmetric, and $D$ is torsion free, it follows obviously that $\nabla$ is torsion free, and therefore that $\R_{\nabla}$ satisfies the first Bianchi identity, so $\R_{\nabla}$ is a field of algebraic Riemann curvature operators on the tangent bundle of $V.$  We claim that in fact, $\R_{\nabla}(0)=R[[J]]$ from which $\J_{\nabla}(0)=J$ is then immediate, in case that $J$ satisfies the first Bianchi identity making it an algebraic Jacobi curvature operator.

First note that obviously,

\begin{equation}\label{geo3}
[\nabla_v w ](0)= [D_v w ](0).
\end{equation}

If $u,v,w$ are all smooth $V-$valued functions on open subsets of $V$, then, as $D_u 1_V=u,$

$$D_u\nabla_v w= D_uD_v w +\frac{2}{3} [J(D_u v,w)1_V+J(v,D_u w)1_V+J(v,w)u],$$
and

$$\nabla_u \nabla_v w=D_u \nabla_v w + \frac{2}{3}J(u,\nabla_v w)1_V,$$
so

$$\nabla_u \nabla_v w=D_u D_v w +\frac{2}{3} [J(D_u v,w)1_V + J(v, D_u w)1_V + J(v,w)u] +\frac{2}{3}J(u,D_v w +[2/3]J(v,w)1_V)1_V$$
and likewise

$$\nabla_v \nabla_u w=D_v D_u w +\frac{2}{3}[J(D_v u, w)1_V + J(u, D_v w)1_V +J(u,w)v] + \frac{2}{3}J(v, D_u w +[2/3]J(u,w)1_V)1_V.$$
and therefore cancellations give

$$[\nabla_u,\nabla_v]w=[D_u,D_v] w +\frac{2}{3}[J([u,v],w)1_V+ J(v,w)u-J(u,w)v]    + \frac{4}{9}[J(u,J(v,w)1_V)1_V -J(v,J(u,w))1_V)1_V].$$
Also,

$$\nabla_{[u,v]} w=D_{[u,v]} w + \frac{2}{3} J([u,v], w)1_V.$$
So as 

$$\R_{\nabla}(u,v)=[\nabla_u, \nabla_v]-\nabla_{[u,v]}, \mbox{ and } \R_D(u,v)=[D_u,D_v]-D_{[u,v]},$$
we find
$$\R_{\nabla}(u,v w)=\R_D(u,v) w+\frac{2}{3}[J(v,w)u-J(u,w)v] +\frac{4}{9}[J(u,J(v,w)1_V)1_V-J(v,J(u,w)1_V)1_V ].$$
Making use of the symmetry of  $J$ and the fact that $D$ is flat so $\R_D=0,$ we now have

$$\R_{\nabla}(u,v) w=\frac{2}{3}[J(w,v)u-J(w,u,)v] +\frac{4}{9}[J(u,J(v,w)1_V)1_V-J(v,J(u,w)1_V)1_V ].$$
But, $J(w,v)u=J^{\#}((u,v)w$ and likewise, $J(w,u)v=J^{\#}(v,u)w,$ so

$$\R_{\nabla}(u,v) w = \frac{2}{3} [ J^{\#}(u,v)w - J^{\#}(v,u)w] +\frac{4}{9}[J(u,J(v,w)1_V)1_V-J(v,J(u,w)1_V)1_V ]$$$$=\frac{4}{3} [Alt [J^{\#}]](u,v)w +\frac{4}{9}[J(u,J(v,w)1_V)1_V-J(v,J(u,w)1_V)1_V ]$$
$$=\frac{4}{3} [Alt [J^{\#}]](u,v)w +\frac{4}{9}[J(J(v,w)1_V,u)1_V-J(J(u,w)1_V,v)1_V ]$$
$$=\frac{4}{3} [Alt [J^{\#}]](u,v)w +\frac{4}{9}[J^{\#}(1_V,u)J(v,w)1_V-J^{\#}(1_V,v) J(u,w)1_V]$$
$$=\frac{4}{3} [Alt [J^{\#}]](u,v)w +\frac{4}{9}[J^{\#}(1_V,u)J(w,v)1_V-J^{\#}(1_V,v) J(w,u)1_V ]$$
$$=\frac{4}{3} [Alt [J^{\#}]](u,v)w +\frac{4}{9}[J^{\#}(1_V,u)J^{\#}(1_V,v)w-J^{\#}(1_V,v) J^{\#}(1_V,u)w ]$$
$$\mbox{so }  \R_{\nabla}(u,v) w =\frac{4}{3} [Alt [J^{\#}]](u,v)w +\frac{4}{9}[J^{\#}(1_V,u),J^{\#}(1_V,v)]w, \mbox{ for every }w,$$
where the bracket in the second term is indicating the commutator of operators, and at last we find then,

\begin{equation}\label{geo4}
\R_{\nabla}(u,v)  =\frac{4}{3} [Alt [J^{\#}]](u,v) +\frac{4}{9}[J^{\#}(1_V,u),J^{\#}(1_V,v)]=R[[J]](u,v)+\frac{4}{9}[J^{\#}(1_V,u),J^{\#}(1_V,v)].
\end{equation}
Finally, since $1_V(0)=0$ means the second term vanishes at $0 \in V,$

$$\R_{\nabla}(0)=\frac{4}{3}Alt[J^{\#}]=R[[J]].$$
This means that with the connection $\nabla=\nabla^J,$ defined by (\ref{geo1}), we find the curvature tensor on $V$ with value $R[[J]]$ at $0 \in V,$ In particular then, $\R_{\nabla}(0)=R[[J]].$  Now if $J$ is an algebraic Jacobi curvature operator, meaning here that it also satisfies the first Bianchi identity, then $\J_{\nabla}=J[[\R_{\nabla}]]=J[[R[[J]]]]=J$ so we have geometrized $J$ as well.

To summarize, we have

\begin{theorem}\label{geo5}
If $J$ is a symmetric algebraic precurvature operator on the vector space $V,$ then the Kozul connection $\nabla^J=\nabla$ on $V$ considered as a manifold given by (\ref{geo1}) has Riemannian curvature $\R_{\nabla}$ given by (\ref{geo4}) and in particular, $\R_{\nabla}(0)=R[[J]],$ and if in addition $J$ is an algebraic Jacobi curvature operator, then $\J_{\nabla}(0)=J.$
\end{theorem}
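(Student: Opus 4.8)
The plan is to verify Theorem \ref{geo5} by a direct computation of the curvature operator of the explicitly defined connection $\nabla = \nabla^J$ and then invoke the algebraic machinery of Section~9 (Theorem \ref{pc11}) to pass between the Riemann and Jacobi pictures. First I would record the defining formula $\nabla_v w = D_v w + \tfrac{2}{3} J(v,w)\mathbf{1}_V$, observe that $D\mathbf{1}_V = \mathrm{id}$ (since $\mathbf{1}_V$ is the identity function, its derivative in any direction $u$ is $u$), and note that because $J$ is symmetric and $D$ is torsion free, $\nabla$ is torsion free; hence $\R_\nabla$ automatically satisfies the first Bianchi identity, so it lies in $RC(T_xV)$ pointwise.

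Next I would compute $\nabla_u \nabla_v w$ by applying $\nabla_u$ to the expression $\nabla_v w = D_v w + \tfrac{2}{3} J(v,w)\mathbf{1}_V$, using the product rule for $D$ on the bilinear term $J(v,w)\mathbf{1}_V$ and then adding the extra $\tfrac{2}{3} J(u, \nabla_v w)\mathbf{1}_V$ term from the definition of $\nabla$. Doing the same for $\nabla_v \nabla_u w$ and for $\nabla_{[u,v]} w$, then forming $\R_\nabla(u,v)w = [\nabla_u,\nabla_v]w - \nabla_{[u,v]}w$, the $D$-only terms collect into $\R_D(u,v)w = 0$ (flatness), the mixed first-order terms cancel in pairs using $[u,v] = D_u v - D_v u$, and what survives is a $\tfrac{2}{3}$-term of the form $J(w,v)u - J(w,u)v$ plus a $\tfrac{4}{9}$-term built from $J(u, J(v,w)\mathbf{1}_V)\mathbf{1}_V$ minus its $u\leftrightarrow v$ swap. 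Then I would rewrite $J(w,v)u = J^{\#}(u,v)w$ and $J(w,u)v = J^{\#}(v,u)w$ using the definition \eqref{pc5} of the hash operation, so that $\tfrac{2}{3}[J^{\#}(u,v)w - J^{\#}(v,u)w] = \tfrac{4}{3}[Alt_\tau(J^{\#})](u,v)w = R[[J]](u,v)w$ by \eqref{pc7}; and similarly massage the quadratic term, using symmetry of $J$ repeatedly and the identities $J(a,b) = J^{\#}(\cdot,\cdot)$-type rewrites, into the operator commutator $[J^{\#}(\mathbf{1}_V,u), J^{\#}(\mathbf{1}_V,v)]$. This yields \eqref{geo4}: $\R_\nabla(u,v) = R[[J]](u,v) + \tfrac{4}{9}[J^{\#}(\mathbf{1}_V,u), J^{\#}(\mathbf{1}_V,v)]$.

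Finally, since $\mathbf{1}_V(0) = 0$, the function $J^{\#}(\mathbf{1}_V(x), u)$ vanishes at $x = 0$, so the commutator term vanishes at the origin, giving $\R_\nabla(0) = R[[J]]$. Then if $J$ additionally satisfies the first Bianchi identity — i.e.\ $J$ is an algebraic Jacobi curvature operator, $\mathrm{Cycl}(J) = 0$ — Theorem \ref{pc11} (more precisely Proposition \ref{pc8}(7) / the statement $\J \circ \R = \mathrm{id}$ on $JC(V)$) gives $\J_\nabla(0) = J[[\R_\nabla(0)]] = J[[R[[J]]]] = J$, completing the proof.

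The routine bookkeeping in the second-order computation is tedious but mechanical; the one step that needs genuine care is the algebraic rewriting of the quadratic $\tfrac{4}{9}$-term into the clean operator commutator $[J^{\#}(\mathbf{1}_V,u), J^{\#}(\mathbf{1}_V,v)]$ — one must track exactly which slot of $J$ the inner output $J(v,w)\mathbf{1}_V$ lands in and apply the $\#$-identity \eqref{pc5} together with symmetry of $J$ in the right order, since a misplaced argument changes a hash into something that is not an operator applied to $w$. Once that identification is made correctly, everything else follows from Section~9 with no further analysis.
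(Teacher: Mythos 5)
Your proposal is correct and follows essentially the same route as the paper: define $\nabla_v w = D_v w + \tfrac{2}{3}J(v,w)1_V$, note torsion-freeness from symmetry of $J$, compute the curvature directly to obtain the $\tfrac{2}{3}$-term giving $\tfrac{4}{3}Alt[J^{\#}] = R[[J]]$ plus the $\tfrac{4}{9}$ commutator term $[J^{\#}(1_V,u),J^{\#}(1_V,v)]$, observe that the latter vanishes at the origin since $1_V(0)=0$, and conclude $\J_{\nabla}(0)=J[[R[[J]]]]=J$ via Theorem \ref{pc11}. No substantive differences from the paper's argument.
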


\begin{proof}
This is an immediate consequence of (\ref{geo4}).
\end{proof}

\bigskip



\section{GEOMETRIZATION OF METRIC ALGEBRAIC CURVATURE OPERATORS}

The geometrization of a metric algebraic curvature operator presents a slightly more complicated problem than in the purely algebraic case.   Given a semi Riemannian inner product $b$ on the vector space $V,$ we need to 
find an open subset $U$ of $V$ with $0 \in U$ and a semi Riemannian metric tensor $g$ on the manifold $U$ whose Levi Civita connection has Riemannian curvature at $0$ equal to a prescribed given metric$-b$ algebraic curvature operator.  So we begin with $J,$ a metric$-b$ algebraic Jacobi curvature operator, and we have to use it to produce a metric tensor, $g$ on a suitable open neighborhood $U$ of $0 \in V$ so that for $\nabla=\nabla^g$ the Levi Civita connection on $U$ determined by $g$ we have $\R_{\nabla}(0)=R[[J]],$ and  of course we want $g|_0=b,$ that is the metric tensor specifes $b$ as the inner product on the tangent space $T_0U=V.$

Part of the problem is a special case of the general problem of relating two metric tensors $g$ and $h$ and their resulting connections and curvature operators, on any semi Riemannian manifold $M.$  In fact, the two metric tensors need not have the same signature, as in the case of vector bundles where it was useful to consider a new positive definite metric tensor on the bundle.   But here we will make use of (\ref{cnxcurv1}) and (\ref{geo18}) to produce a new connection and calculate its curvature by using $J$ to produce $G.$  First, we begin by taking $V$ as a manifold, and look for an appropriate smooth $G: V \lra L(V;V)$ which then defines a smooth vector bundle map of $TV$ into itself over $V.$  The map which works is given by

\begin{equation}\label{mgeo1}
G(x)=1_V-\frac{1}{3}J(x,x), \mbox{ for all } x \in V.
\end{equation}

We then observe that $G(0)=1_V$ where $1_V$ is still denoting the identity transformation on $V,$ and therefore $G(0)$ is invertible.  Since $G$ is obviously smooth and therefore continuous, and as the set of invertible linear transformation is an open subset of $V,$ we define $U$ to be the subset of $V$ consisting of points $x \in V$ for which $G(x)$ is invertible, and therefore $U$ is an open subset of $V$ and then $G$ defines an isomorphism of $TU$ with itself covering the identity map $1_U$ of $U.$  Thus, we define our new metric tensor on $U$ to be $g$ where

\begin{equation}\label{mgeo2}
g(u(x),v(x))=b(G(x)u(x),v(x))=b(u(x),v(x))-\frac{1}{3}b(J(x,x)u(x),v(x)), \mbox{ for any } u,v:U \lra V.
\end{equation}
Of course, just as in the algebraic case, since $U$ is an open subset of $V,$ the tangent bundle $TU$ is trivial with fiber $V$ and so identifying each tangent vector field with its principal part, the tangent vector fields are simply functions from $U$ to $V.$  It is now convenient to use $1_U$ to denote the identity map on the open subset $U$ of $V.$  Then we can write (\ref{mgeo2}) as 

\begin{equation}\label{mgeo3}
g(u,v)=b(Gu,v)=b(u,v)-\frac{1}{3}b(J(1_U,1_U)u,v), \mbox{ for any } u,v:U \lra V.
\end{equation}

Notice as $J$ is a metric$-b$ algebraic Jacobi curvature operator we can apply the metric exchange property and rewrite (\ref{mgeo3}) as

\begin{equation}\label{mgeo4}
g(u,v)=b(u,v)-\frac{1}{3}b(J(u,v)1_U,1_U),  \mbox{ for any } u,v:U \lra V.
\end{equation} 
We again let $D$ denote the standard flat connection on $U$ given by ordinary differentiation, which is of course the Levi Civita connection on $U$ prescribed by the metric tensor $b$ on $U.$  Applying (\ref{geo18}) with $h=b,~D=\nabla,$ and $\nabla^g$ the Levi Civita connection on $U$ given by $g,$

\begin{equation}\label{mgeo5}
\nabla^g_u v -D_u v=\frac{1}{2} G^{-1} [[D_u G]v+[D_v G]u-[[DG]u]^*v].
\end{equation}
We need to compute $[[DG]u]^*v]$ in terms of $J$ which will simplify by the metric exchange property.  If $w$ is also a vector field on $U,$ then

$$b([[DG]u]^*v,w)=b(v,[DG]uw)=b(v,[D_wG]u)=b([D_wG]u,v).$$
\medskip

As $G(x)=1_V-(1/3)J(x,x),$ it follows from symmetry of $J$ that

$$[D_wG](x)=-\frac{2}{3}J(x,w(x)), \mbox{ for all } x \in U,$$
so

\begin{equation}\label{DG}
D_wG=-\frac{2}{3}J(1_U,w), \mbox{ for any } w:U \lra V,
\end{equation}
and therefore, in particular, for constant vectors,

\begin{equation}\label{DDG}
[D_vD_wG](x)=-\frac{2}{3}J(v,w),  \mbox{ for any } x \in U, v,w \in V.
\end{equation}
As an aside, this means that viewing $g$ as a function from $U$ to $L(V,V;\bR),$

\begin{equation}\label{DDg}
g''(x)vw=[D_vD_w g](x)=-\frac{2}{3}[bJ](v,w),
\end{equation}
where $bJ$ denotes the bilinear map in $L(V,V; (L(V,V;\bR))$ given by $bJ(u,v)(w,x)=b(J(u,v)w,x),$ for any $u,v,w,x \in V.$
Returning to our present general vector fields, using the metric exchange property for $J,$

$$b([[DG]u]^*v,w)=b([D_wG]u,v)=-\frac{2}{3}b(J(1_U,w)u,v)=-\frac{2}{3}b(J(u,v)1_U,w).$$
The result is

\begin{equation}\label{mgeo6}
[[DG]u]^*v]=-\frac{2}{3}J(u,v)1_U,
\end{equation}
so from (\ref{mgeo5}) we obtain

\begin{equation}\label{mgeo7}
\nabla^g_u v -D_u v=\frac{1}{2} G^{-1} [[D_u G]v+[D_v G]u+\frac{2}{3}J(u,v)1_U].
\end{equation}
Keep in mind that $G(0)=1_V$ and as $D_wG=-(2/3)J(1_U,w),$ for any $w:U\lra V,$ it follows from $1_U(0)=0,$ that

$$[\nabla^g_u v -D_u v](0)=0, \mbox{ that is } [\nabla^g_u v](0) =[D_u v](0).$$
Writing

$$H=\nabla^g-D$$

\begin{equation}\label{mH}
H_u v=\nabla_u^g v-D_u v=\frac{1}{2} G^{-1} [[D_u G]v+[D_v G]u+\frac{2}{3}J(u,v)1_U],
\end{equation}
we therefore have $H(0)=0$ by (\ref{DG}) and by (\ref{cnxcurv1}),

\begin{equation}\label{mgeo8}
\R_{\nabla^g}(u,v)=\R_D(u,v)+\R_H(u,v)+D_u[H_v]-D_v[H_u].
\end{equation}
But, $H(0)=0$ implies $\R_H(0)=0,$ and $\R_D=0$ because $D$ is flat, so finally

\begin{equation}\label{mgeo8}
\R_{\nabla^g}(u,v)(0)=[D_u[H_v]-D_v[H_u]](0).
\end{equation}
Now, again applying (\ref{DG}), we have from (\ref{mH}) that

\begin{equation}\label{mH2}
H_u v=\frac{1}{3}G^{-1}[J(u,v)1_U-J(1_U,u)v-J(1_U,v)u]=\frac{1}{3}G^{-1}[J(u,v)1_U-J^{\#}(v,u)1_U-J^{\#}(u,v)1_U.
\end{equation}
$$=\frac{1}{3}G^{-1}[J(u,v)1_U-2J_J(u,v)1_U].$$
By (\ref{cycl12J}) in Proposition \ref{cyclprop3}, $2J_J=-J,$ therefore we finally get

\begin{equation}\label{mH3}
H_u v=\frac{2}{3}G^{-1}J(u,v)1_U= \frac{2}{3} G^{-1}J(v,u)1_U=\frac{2}{3} G^{-1}J^{\#}(1_U,u)v, \mbox{ so } H_u=\frac{2}{3} G^{-1}J^{\#}(1_U,u).
\end{equation}
Notice how by comparison with (\ref{geo1}) the result just has the extra factor of $G^{-1}$ which was not needed in the non-metric case.

We only need next to compute the derivatives $D_u H_v,$ which is now easy using the final form of (\ref{mH3}), the result being immediately now,

\begin{equation}\label{mH4}
D_uH_v=\frac{2}{3}[D_u [G^{-1}]]J^{\#}(1_U,v)+\frac{2}{3} G^{-1}J^{\#}(u,v)+\frac{2}{3} G^{-1}J^{\#}(1_U,D_u v)
\end{equation}

Of course, from $G^{-1}G=1_V$ being a constant function into $L(V;V),$ we differentiate both sides to get

\begin{equation}\label{mH5}
D_u[G^{-1}]=-G^{-1}[D_uG]G^{-1}=\frac{2}{3}G^{-1}J(1_U,u)G^{-1}.
\end{equation}
Since $1_U(0)=0$ and $G(0)=1_V,$ we now clearly have

\begin{equation}\label{mH6}
D_uH_v|_{x=0}= \frac{2}{3}J^{\#}(u|_{x=0},v|_{x=0})=\frac{2}{3}J^{\#}(u,v)|_{x=0}
\end{equation}
so from (\ref{mgeo8}), we get

\begin{equation}\label{mH7}
\R_{\nabla^g}(u,v)(0)=[D_u[H_v]-D_v[H_u]](0)=\frac{4}{3}Alt[J^{\#}](u,v)|_{x=0}=R_J(u,v)|_{x=0},
\end{equation}
and finally therefore,

\begin{equation}\label{mH8}
\R_{\nabla^g}(0)=R_J  \mbox{ and } \J_{\nabla^g}(0)=J[[R_J]]=J.
\end{equation}
We can also use the preceding equations to obtain expressions for $\R_{\nabla^g}$ on all of $U,$ since the preceding results make the calculation of $\R_H$  and of $D_uH_v-D_vH_u$ a simple matter.  From (\ref{mH3}) and (\ref{mH4}) with 

$$2Q_{\nu}^+(J)=J+J^{\#},$$
we find due to the cancelation of the last terms (\ref{mH4}) of $D_uH_v-D_vH_u$ with the bracket term of $\R_H,$  and the middle terms giving $G^{-1}R_J(u,v),$ that

\begin{equation}\label{mH9}
\R_{\nabla^g}=G^{-1}R_J +\frac{16}{9}Alt (K_J),
\end{equation}
where

\begin{equation}\label{mHK}
K(u,v)=\frac{8}{9} G^{-1}[Q_{\nu}^+(J)](1_U,u)G^{-1}J^{\#}(1_U,v), \mbox{ for any } u,v: U \lra V.
\end{equation}
Of course as $1_U(0)=0$ and $G(0)=1_V,$ we see immediately, that (\ref{mH9}) reduces to (\ref{mH8}) at $x=0$ in $U.$

\bigskip

\begin{theorem}\label{mgeothm}
If $V$ is a vector space and $b$ is a non degenerate inner product on $V$ with $J$ a metric$-b$ algebraic Jacobi curvature operator on $V,$ then defining $G(x):V \lra L(V;V)$ by (\ref{mgeo1}) and defining $U_J \subset V$ to be the set of all $x \in V$ for which $G(x)$ is invertible, we define the the pseudo Riemannian metric tensor field $g$ on $U_J$ by (\ref{mgeo2}) and find that the Levi Civita connection $\nabla^g$ of $U_J$ with metric tensor $g$ is given by (\ref{mgeo5}).  Moreover, the curvature tensor $\R_{\nabla^g}$ is given by (\ref{mH9}) and (\ref{mHK}) so 

$$\R_{\nabla^g}(0)=R_J \mbox{ and } \J_{\nabla^g}(0)=J.$$

\end{theorem}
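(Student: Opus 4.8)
The plan is to assemble Theorem \ref{mgeothm} almost entirely from equations already derived in the preceding two sections, so the "proof" is really a matter of bookkeeping and citing the correct chain of implications. First I would note that the hypotheses guarantee $G(0) = 1_V$ is invertible, so by continuity of $G$ (which is manifest from the polynomial/quadratic formula (\ref{mgeo1})) the set $U_J$ is open and contains $0$; this makes $g$ as defined in (\ref{mgeo2}) a smooth symmetric bilinear form, and it is nondegenerate on $U_J$ precisely because $G(x)$ is invertible there and $b$ is nondegenerate, so $g$ is a genuine pseudo-Riemannian metric tensor field. The symmetry of $g$ follows from self-adjointness of $G$ relative to $b$, which in turn follows from symmetry of $J$ together with the metric-$b$ exchange property; alternatively one just reads it off the symmetric form (\ref{mgeo4}).

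Next I would invoke the general comparison formula (\ref{geo18}) with $h = b$, $\nabla = D$ the flat connection on $U_J$, and $\nabla^g$ the Levi-Civita connection of $g$, to obtain (\ref{mgeo5}); the input needed is that $D$ is the Levi-Civita connection of the flat metric $b$ on the vector-space manifold $U_J$, which is standard. Then the displayed computations leading to (\ref{mH3}) — using (\ref{DG}), the metric exchange property to get (\ref{mgeo6}), and Proposition \ref{cyclprop3} equation (\ref{cycl12J}) to collapse $2J_J = -J$ — give the clean closed form $H_u = \frac{2}{3} G^{-1} J^{\#}(1_U, u)$ for the difference tensor $H = \nabla^g - D$. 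The curvature comparison (\ref{cnxcurv1}) with $\R_D = 0$ then reduces the whole curvature to $\R_H + D_u[H_v] - D_v[H_u]$, and differentiating (\ref{mH3}) via (\ref{mH4}) and (\ref{mH5}) yields the global formula (\ref{mH9})--(\ref{mHK}). Finally, evaluating at $x = 0$, where $1_U(0) = 0$ and $G(0) = 1_V$, kills the $K_J$ term and the $G^{-1}$ prefactor, giving $\R_{\nabla^g}(0) = \frac{4}{3}\mathrm{Alt}[J^{\#}] = R[[J]] = R_J$; and then $\J_{\nabla^g}(0) = J[[R_J]] = J$ by Theorem \ref{pc11} (the mutual inverse property of $\mathbf{\R}$ and $\mathbf{\J}$), using that $J$ is an acyclic metric-$b$ Jacobi curvature operator so it lies in the domain where $\mathbf{\J} \circ \mathbf{\R} = \mathrm{id}$.

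The main obstacle I anticipate is not conceptual but organizational: every step above relies on repeatedly applying the metric exchange property of $J$ to move $1_U$ in and out of arguments of $J$ and $J^{\#}$ (e.g. the identity $J(1_U, w)u$ versus $J(u,v)1_U$ in deriving (\ref{mgeo6}) and (\ref{mH3})), and on keeping straight the $\nu$-action notation $J^{\#}(u,v)w = J(w,v)u$. One must be careful that these rewritings are legitimate for the non-constant vector fields $u, v, w$ that appear, and that the metric exchange property genuinely transfers from $J$ to $\R_H$-type expressions; I would handle this by invoking Proposition \ref{pc8} and Proposition \ref{nuQT}, which certify that $J^{\#}$ inherits the exchange property. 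The one genuine computation that cannot be skipped is verifying that the last term of (\ref{mH4}) cancels against the commutator bracket $[H_u, H_v]$ in $\R_H$ and that the middle terms assemble into $G^{-1} R_J$; I would present that cancellation explicitly rather than assert it, since it is the crux of getting (\ref{mH9}). Everything else is citation of (\ref{geo18}), (\ref{cnxcurv1}), (\ref{cycl12J}), and Theorem \ref{pc11}.
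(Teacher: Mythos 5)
Your proposal is correct and follows essentially the same route as the paper, whose proof of Theorem \ref{mgeothm} is precisely the chain of computations (\ref{mgeo1})--(\ref{mHK}) preceding the statement: the comparison formula (\ref{geo18}) with $h=b$ and $\nabla=D$, the exchange-property reduction to $H_u=\frac{2}{3}G^{-1}J^{\#}(1_U,u)$ via (\ref{cycl12J}), the curvature comparison (\ref{cnxcurv1}), and evaluation at the origin where $1_U(0)=0$ and $G(0)=1_V$, followed by Theorem \ref{pc11} to recover $\J_{\nabla^g}(0)=J$. Your instinct to write out explicitly the cancellation producing (\ref{mH9}) is sound, since the paper only asserts it.
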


\begin{proof}
Of course, the proof is contained in the discussion preceding the statement of the theorem.
\end{proof}

\begin{corollary}\label{osculate}
If $M$ is a semi-Riemannian manifold with metric tensor $h,$ Jacobi curvature operator $\J_h,$ and Riemann curvature operator $\R_h,$ we can take $V=T_qM,$ the tangent space of $M$ at $q$, take $b=h(q),$ and set $U_M$ to be the inverse image of $U_q:=U$ under the exponential map with origin $q$.  As the derivative of the exponential map at $q$ is the identity map on $V=T_qM,$ we have that the semi Riemannian manifolds $U_q$ and $U_M$ agree to second order at $q \in M$ as their Riemann and Jacobi curvature operators agree at $q,$ that is,

\begin{equation}\label{osculatecurvature}
\R_h(q)=\R_{\nabla^g}(0) \mbox{  and  }  \J_h(q)=  \J_{\nabla^g}(0).
\end{equation}

Thus we can call $U_q$ the {\bf Osculating Tangent Manifold} to $M$ at $q \in M.$

\end{corollary}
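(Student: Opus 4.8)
The plan is to deduce Corollary \ref{osculate} directly from Theorem \ref{mgeothm} together with elementary facts about the exponential map. The situation is: we are given a semi-Riemannian manifold $M$ with metric $h$, we fix $q \in M$, and we set $V = T_qM$ and $b = h(q)$. The Jacobi curvature operator $\J_h(q)$ is then a metric$-b$ algebraic Jacobi curvature operator on $V$ (this is exactly the content, proved earlier, that the curvature operator field at a point gives an algebraic metric curvature operator), so Theorem \ref{mgeothm} applies with $J = \J_h(q)$. That theorem hands us the open set $U = U_J \subset V$, the metric tensor $g$ on $U$, and its Levi-Civita connection $\nabla^g$ with $\R_{\nabla^g}(0) = R_J = R[[\J_h(q)]]$ and $\J_{\nabla^g}(0) = \J_h(q)$. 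So the only remaining task is to verify that $\R_h(q) = \R_{\nabla^g}(0)$ (the Jacobi statement then follows by applying $J[[\cdot]]$, or equivalently by the same argument with $\J$ in place of $\R$).

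The key step is therefore to identify $\R_h(q)$ with $R_J = R[[\J_h(q)]]$. But this is immediate: by Theorem \ref{pc11} (or the explicit formula (\ref{R from J})) the Riemann curvature operator is recovered from the Jacobi curvature operator via $\R = R[[\J]]$, applied pointwise. So $\R_h(q) = R[[\J_h(q)]] = R_J = \R_{\nabla^g}(0)$, and likewise $\J_h(q) = J[[\R_h(q)]] = J[[R_J]] = \J_{\nabla^g}(0)$, which is precisely (\ref{osculatecurvature}). First I would state this identification cleanly, then invoke Theorem \ref{mgeothm}.

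The remaining content of the corollary concerns $U_M$: we set $U_M$ to be the inverse image of $U_q$ under the exponential map $\exp_q$ with origin $q$, and claim $U_q$ and $U_M$ "agree to second order at $q$". Here I would argue as follows. The exponential map $\exp_q$ is a diffeomorphism from a neighborhood of $0 \in T_qM$ onto a neighborhood of $q$, whose differential at $0$ is the identity on $T_qM$. Transporting $h$ back through $\exp_q$ gives a metric on $U_M$ whose value at $0$ is $b = h(q)$ and whose curvature operators at $0$ equal those of $(M,h)$ at $q$ (since $\exp_q$ is a local isometry-preserving identification carrying $q$ to $0$ and the identity differential there). Meanwhile $U_q$ carries the metric $g$ with the same value $b$ at $0$ and, by the curvature identity just established, the same Riemann and Jacobi curvature operators at $0$. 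Hence the two semi-Riemannian manifolds have identical metric tensors and identical first and second derivatives of the metric at $0$ — by (\ref{g''(0)}), or by (\ref{DDg}), the second derivative of the metric at the origin of normal coordinates is determined by the Jacobi curvature operator — so they "osculate" at $q$ to second order, justifying the name Osculating Tangent Manifold.

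The main obstacle, such as it is, is purely expository rather than mathematical: one must be careful about exactly what "agree to second order" means and in which coordinates, and one must make sure the identification via $\exp_q$ is the right one (it identifies $T_0U_M$ with $T_qM = V$ via the identity, which is what makes the curvature operators literally equal, not merely isomorphic). Once that bookkeeping is pinned down, every substantive claim is an immediate consequence of Theorem \ref{mgeothm} and Theorem \ref{pc11}, so the proof is short.

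\begin{proof}
Set $V = T_qM$ and $b = h(q)$. As shown in the discussion of the Riemann and Jacobi curvature operator fields, $\J_h(q)$ is a metric$-b$ algebraic Jacobi curvature operator on $V$, and by Theorem \ref{pc11} (equivalently, by (\ref{R from J})) the Riemann curvature operator is recovered pointwise from the Jacobi curvature operator, so
\begin{equation}\label{osc1}
\R_h(q) = R[[\J_h(q)]] = R_{J}, \mbox{ where } J = \J_h(q).
\end{equation}
Apply Theorem \ref{mgeothm} with this $J$: defining $G$ by (\ref{mgeo1}), letting $U_q = U_J$ be the set of $x \in V$ on which $G(x)$ is invertible, and letting $g$ be the metric tensor on $U_q$ given by (\ref{mgeo2}) with Levi-Civita connection $\nabla^g$, we obtain
$$\R_{\nabla^g}(0) = R_J \mbox{ and } \J_{\nabla^g}(0) = J[[R_J]] = J.$$
Combining with (\ref{osc1}) gives
$$\R_h(q) = \R_{\nabla^g}(0) \mbox{ and } \J_h(q) = \J_{\nabla^g}(0),$$
which is (\ref{osculatecurvature}).

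It remains to explain the relation with $M$. Choose the neighborhood $W$ of $0$ in $T_qM$ small enough that $\exp_q$ restricts to a diffeomorphism onto a neighborhood of $q$, and set $U_M = \exp_q^{-1}(\exp_q(U_q \cap W))$, so that $\exp_q$ identifies $U_M$ (a neighborhood of $0$ in $V$) with an open submanifold of $M$ containing $q$, carrying $0$ to $q$. Since the differential of $\exp_q$ at $0$ is the identity map on $T_qM = V$, this identification carries $h(q)$ to $b$ and carries $\R_h(q), \J_h(q)$ to the Riemann and Jacobi curvature operators at $0$ of $U_M$ with the pulled-back metric. By the computation above, $U_q$ with the metric $g$ has the same metric value $b$ at $0$ and, by (\ref{osculatecurvature}), the same Riemann and Jacobi curvature operators at $0$; and by (\ref{DDg}) the second-order Taylor data of the metric at $0$ is determined by the value of $b$ together with the Jacobi curvature operator there. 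Hence $U_q$ and $U_M$ agree to second order at $q$, and we are justified in calling $U_q$ the Osculating Tangent Manifold to $M$ at $q$.
\end{proof}
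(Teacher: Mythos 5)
Your proof is correct and follows the same route as the paper, which simply declares the corollary an immediate consequence of Theorem \ref{mgeothm}; you have merely spelled out the two details the paper leaves tacit, namely that $\R_h(q)=R[[\J_h(q)]]$ by Theorem \ref{pc11} and that the identification of $U_M$ with a neighborhood of $q$ via $\exp_q$ (whose differential at the origin is the identity on $V$) is what makes the curvature operators literally equal rather than merely isomorphic. No gaps.
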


\begin{proof}
This is an immediate consequence of Theorem \ref{mgeothm}.
\end{proof}

For any semi Riemannian manifold $M$ we can form the {\bf Osculating Tangent Bundle} denoted $O(M)$ by requiring that $O(M)$ be the subbundle of the tangent bundle $TM$ having fiber $U_q$ over $q \in M$ and given the structure of semi Riemannian manifold with metric tensor determined by (\ref{mgeo1}) and (\ref{mgeo2}) on $U_q,$ making $O(M)$ into a bundle of semi Riemannian manifolds.  Thus, $O(M)$ is an open subset of $TM$ so is a submanifold of $TM$ foliated by semi Riemannian manifolds each osculating with $M$ at its point of origin.

\begin{corollary}\label{normalcoord}
If $M$ is a semi-Riemannian manifold with metric tensor $h$ and Jacobi curvature operator field $\J=\J_h,$ and if $q \in M,$ in Riemann normal coordinates with origin $q$ we have (see also \cite{MTW} page 286 equation (11.32))

\begin{equation}\label{secderivmetricnormcoord}
h_{\alpha  \beta , \gamma \lambda}(q)=-\frac{2}{3}[\J_{\alpha  \beta  \gamma \lambda}(q)]
\end{equation}
giving the second partial derivatives of the metric tensor components in terms of the components of the Jacobi curvature operator components.  In particular, by the metric exchange property of the Jacobi curvature operator we have therefore, in Riemann normal coordinates with origin $q,$

\begin{equation}\label{metricderivexchange}
h_{\alpha  \beta , \gamma \lambda}(q)=h_{\gamma \lambda ,  \alpha  \beta}(q).
\end{equation}

\end{corollary}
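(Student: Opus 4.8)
The plan is to derive \eqref{secderivmetricnormcoord} directly from the already-established geometrization machinery, specializing Theorem \ref{mgeothm} and Corollary \ref{osculate} to the tangent-manifold picture, and then to read off \eqref{metricderivexchange} as a trivial consequence of the metric exchange property. First I would invoke Corollary \ref{osculate}: at a fixed event $q \in M$ we set $V = T_qM$, $b = h(q)$, and $J = \J_h(q)$, which is a metric-$b$ algebraic Jacobi curvature operator because the Levi-Civita connection is torsion-free (first Bianchi identity, so $J$ is acyclic) and metric (exchange property). By that corollary the osculating tangent manifold $U_q$ carries the metric tensor $g$ of \eqref{mgeo2}, and in Riemann normal coordinates with origin $q$ the metric $h$ of $M$ and the metric $g$ of $U_q$ agree to second order at $q$; in particular their second partial derivatives at the origin coincide. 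So it suffices to compute $g_{\alpha\beta,\gamma\lambda}(0)$ for the explicit metric $g(u,v) = b(u,v) - \tfrac{1}{3} b(J(1_U,1_U)u,v)$ on $U_q \subset V$.

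Next I would carry out that computation, which is essentially already recorded in \eqref{DDg}. Viewing $g: U_q \to L(V,V;\bR)$, the first term $b(u,v)$ is constant in $x$ and contributes nothing to second derivatives. For the second term, since $1_U(x) = x$ and $J$ is bilinear and symmetric, $D_w G(x) = -\tfrac{2}{3} J(x, w(x))$ from \eqref{DG}, and differentiating once more along a constant vector $v$ gives $[D_v D_w G](x) = -\tfrac{2}{3} J(v,w)$ as in \eqref{DDG}, independent of $x$. Hence $g''(0) v w = [D_v D_w g](0) = -\tfrac{2}{3} b(J(v,w)\cdot,\cdot)$, i.e. in normal coordinate frame components $h_{\alpha\beta,\gamma\lambda}(q) = g_{\alpha\beta,\gamma\lambda}(0) = -\tfrac{2}{3} \J_{\alpha\beta\gamma\lambda}(q)$, where $\J_{\alpha\beta\gamma\lambda}(q) = b(J(e_\gamma,e_\lambda)e_\alpha, e_\beta)$ in the notation already fixed in the excerpt (consistent with \eqref{g''(0)}). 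This is exactly \eqref{secderivmetricnormcoord}, and it matches \cite{MTW} page 286 equation (11.32), which I would cite as corroboration.

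Finally, \eqref{metricderivexchange} follows immediately: $\J$ at $q$ is a metric-$g$ acyclic Jacobi curvature operator, so it has the metric exchange property $g(\J(u,v)w,x) = g(\J(w,x)u,v)$ (established in the section on the Jacobi curvature operator field and again in Proposition \ref{pc8}/Theorem \ref{pc11}). In index terms this says $\J_{\alpha\beta\gamma\lambda}(q) = \J_{\gamma\lambda\alpha\beta}(q)$, and combining with \eqref{secderivmetricnormcoord} gives $h_{\alpha\beta,\gamma\lambda}(q) = -\tfrac{2}{3}\J_{\alpha\beta\gamma\lambda}(q) = -\tfrac{2}{3}\J_{\gamma\lambda\alpha\beta}(q) = h_{\gamma\lambda,\alpha\beta}(q)$.

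The only genuinely delicate point — the thing I would be most careful about — is bookkeeping of index conventions: making sure that the placement of the four indices on $\J_{\alpha\beta\gamma\lambda}$ (which pair is the "operator arguments" $(v,w)$ and which pair is the "input/output" $(u,x)$) is used consistently between \eqref{g''(0)}, \eqref{DDg}, the definition $J(\omega,u,v,w) = \omega(\J(v,w)u)$, and \cite{MTW}; a transposition error there would spoil the clean statement even though the content is correct. Everything else is routine differentiation of an explicitly given quadratic-in-$x$ metric, already done in the preceding section, so I expect no real obstacle beyond that clerical care.
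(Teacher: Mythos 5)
Your proposal is correct and follows essentially the same route as the paper: compute $g''(0)=-\tfrac{2}{3}bJ$ for the osculating metric via (\ref{DG})--(\ref{DDg}), transfer to $M$ by Corollary \ref{osculate}, and read off (\ref{metricderivexchange}) from the metric exchange property. The only (shared) point left implicit in both arguments is that the linear coordinates on $U_q$ really are Riemann normal coordinates for $g$ and that second-order agreement of curvature at $q$ suffices to identify the second partials of the two metrics there; the paper glosses this exactly as you do.
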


\begin{proof}
The second derivative at zero in $V$ of the metric tensor $g$ given by (\ref{mgeo1}), (\ref{mgeo2}), (\ref{DDG}), and (\ref{DDg}), is obviously given by $g''(0)=-(2/3)bJ,$ as a bilinear map on $V,$ so now apply Corollary \ref{osculate}.

\end{proof}

We can remark that using components, the proof of Theorem \ref{mgeothm} can be given rather briefly as for instance in \cite{GILKEY1}, by making use of formulas for the Riemann curvature components in normal coordinates, but, expressing the metric in terms of the algebraic Jacobi operator instead of the Riemann curvature operator and making use of the resulting metric exchange property and the obvious $g''(0)=-(2/3)bJ,$ even makes that proof in components quite a bit more efficient for those of you who are addicted to index manipulations.

\bigskip



\section{THE JACOBI CURVATURE OPERATOR AND THE LEVI CIVITA CONNECTION}

It is useful to have an expression for the Jacobi curvature operator $\J$ expressed in terms of the Levi Civita connection $\nabla^g=\nabla$ on any pseudo Riemannian manifold $M.$  To do this we begin with the Riemann Curvature operator $\R$ and calculate $\R^{\#}$ in terms of $\nabla.$  For any tangent vector fields $u,v,w$ on $M,$ since

$$\R(u,v)=[\nabla(u,\nabla_v]-\nabla_{[u,v]},$$
we have

\begin{equation}\label{JL1}
\R^{\#}(u,v)w=\R(w,v)u=[\nabla_w,\nabla_v]u-\nabla_{[w,v]} u.
\end{equation}
Now, as $\nabla u$ is a section of $L(TM;TM),$ we have $\nabla_v u=[\nabla u]v.$  Moreover, by the product rule for covariant differentiation, keeping in mind that compositions of linear transformation tensor fields are really contractions of tensor products and covariant differentiation commutes with all contractions, we have

\begin{equation}\label{JL2}
\R^{\#}(u,v)w=\nabla_w(\nabla_v u)-\nabla_v (\nabla_w u)-[\nabla u][w,v]=[\nabla (\nabla_v u) ]w- \nabla_v([\nabla u]w)-  [\nabla u][\nabla_w v -\nabla_v w]
\end{equation}
$$=[\nabla( \nabla_v u)]w - [\nabla_v(\nabla u)]w-[\nabla u][\nabla_v w]-[\nabla u][\nabla_w v] + [\nabla u][\nabla_v w]$$
$$=[\nabla( \nabla_v u)]w - [\nabla_v(\nabla u)]w-[\nabla u][\nabla_w v]=[\nabla( \nabla_v u)]w - [\nabla_v(\nabla u)]w-[\nabla u][\nabla v]w,$$
from which we conclude

\begin{equation}\label{JL3}
\R^{\#}(u,v)=[\nabla ( \nabla_v u)] - [\nabla_v (\nabla u)]-[\nabla u][\nabla v].
\end{equation}
At this point we take the liberty of defining the operator

\begin{equation}\label{JL4}
[\nabla, \nabla_v]=-[\nabla_v, \nabla] \mbox{ by } [\nabla, \nabla_v]u=\nabla(\nabla_v u)-\nabla_v(\nabla u),
\end{equation}
resulting in the expression

\begin{equation}\label{JL5}
\R^{\#}(u,v)=[\nabla, \nabla_v]u-[\nabla u][\nabla v].
\end{equation}
Since $\J=Sym(\R^{\#}),$ this gives finally

\begin{equation}\label{JL6}
\J(u,v)=\frac{1}{2} ( [\nabla, \nabla_u ]v +[\nabla , \nabla_v] u -[\nabla u][\nabla v] -[\nabla v][\nabla u]). 
\end{equation}
In particular,

\begin{equation}\label{JL7}
\J(u,u)=[\nabla , \nabla_u] u -[\nabla u]^2,
\end{equation}
and of course, (\ref{JL7}) completely determines $\J$ by the observer principle, but even more simply here we can just apply the polarization identity for symmetric bilinear mappings.  Obviously, (\ref{JL7}) is equivalent to (\ref{Ray1}) and (\ref{Ray2}).

\bigskip



\section{THE EQUATION OF GEODESIC DEVIATION}

If $M$ is any smooth manifold and if $\nabla$ is any torsion free connection on $M,$ then consider a smooth vector field $v$ is geodesic if $\nabla_vv=0.$  Suppose that $s$ is another smooth vector field on $M$ and that the Lie bracket $[v,s]=0.$  Then as $\nabla$ is torsion free, we have

$$\nabla_vs- \nabla_sv=[v,s]=0, \mbox{ so } \nabla_vs=\nabla_sv.$$
But also, as $[v,s]=0,$ it follows that $\R(v,s)=[\nabla_v, \nabla_s].$  Thus, we have

$$\J(v,v)s=\R(s,v)v=\nabla_s \nabla_v v - \nabla_v \nabla_s v=-\nabla_v^2s.$$
It follows that $\J(v,v)$ acts as $-\nabla_v^2 $ on any smooth vector field which commutes with $v$ in the sense that its Lie bracket with $v$ vanishes.  In a real sense, we can think of $-\J(v,v)$ as a {\bf tensorization} of the differential operator $\nabla_v^2$ which results from restricting its domain from smooth vector fields to merely the vector space of smooth vector fields commuting with $v.$  Notice also that as $\J$ is symmetric, by the observer principle, $\J$ is completely determined by the quadratic forms $\J(v,v)$ for all vector fields $v.$  As $\J$ is a tensor, this means that for any point $q \in M,$ we can determine the algebraic Jacobi curvature operator $\J|_q$ on $T_qM$ by determining $\J(v,v)$ for each tangent vector $v \in T_qM.$  We can extend $v \in T_qM$ to a geodesic vector field and then determine $\J(v,v)s$ with any vector field extending $s \in T_qM$ in a way which commutes with $v$ in a neighborhood of $q.$ 

In case that $M$ is a semi Riemannian manifold with metric tensor $g,$ and $\nabla$ is the Levi Civita connection, so $\nabla$ is completely determined by the fact that it is torsion free and $\nabla g=0$,  if we have a local orthonormal frame field $(e_{\alpha})$, then it now seems reasonable to consider the "tensorization" of the d'Alembertian operator

$$\square=\sum_{\alpha} g(e_{\alpha}, e_{\alpha}) \nabla_{e_{\alpha}}^2$$
as the tensor operator

$${\bf \square}_J= - \sum_{\alpha}  g(e_{\alpha}, e_{\alpha}) \J(e_{\alpha},e_{\alpha}),$$
which is then a local smooth section of $L(TM;TM),$ and a smooth field of self adjoint linear operators.  More generally, then for any local smooth frame field $(v_{\alpha})$, with dual frame field $(v^{\alpha})$ we can set

\begin{equation}\label{dalembertian1}
{\bf \square}_J =-\sum_{\alpha, \beta} g(v^{\alpha}, v^{\beta}) \J (v_{\alpha} , v_{\beta}).
\end{equation}
Thus, ${\bf \square}_J$ is simply an operator valued contraction of $\J,$ which we can think of as the tensorized d'Alembertian operator, which would seem to be some sort of wave operator.  Since the values of $\J$ are all self adjoint operators, it follows that ${\bf \square}_J$ is a well defined smooth field of self adjoint operators on $M$ which is a smooth section of $L(TM;TM).$  Next, we define the associated $L(TM,TM;\epsilon(M;\bR))-$valued operator field $g{\bf \square}_J$ via

\begin{equation}\label{gblockJ}
[g{\bf \square}_J](u,v)=g([{\bf \square}_J ]u,v).
\end{equation}
Then by Theorem \ref{mgeothm} and its corollaries and (\ref{DDg}) we have

\begin{proposition}\label{dalembertian2}
If $M$ is a semi Riemannian manifold with metric tensor $g,$ and if $q \in M,$ in normal coordinates with origin $q,$ we have at $q,$ 

\begin{equation}\label{dalembertian3}
[\square g]|_q = \frac{2}{3} [g{\bf \square}_J]|_q,
\end{equation}
where of course,  $\square$ on the left side of the equation is the ordinary d'Alembertian in normal coordinates being applied to the metric tensor matrix in normal coordinates.

\end{proposition}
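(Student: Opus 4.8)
The plan is to compute both sides of (\ref{dalembertian3}) at the origin $q$ of a normal coordinate system and show they agree. The key fact, already established in Corollary \ref{normalcoord} (and recorded in (\ref{secderivmetricnormcoord}), (\ref{metricderivexchange}), and the proof of Theorem \ref{mgeothm} via (\ref{DDg})), is that in Riemann normal coordinates with origin $q$ the second partial derivatives of the metric components satisfy
\begin{equation*}
h_{\alpha\beta,\gamma\lambda}(q) = -\tfrac{2}{3}\J_{\alpha\beta\gamma\lambda}(q) = -\tfrac{2}{3}\, g(\J(v_\gamma,v_\lambda)v_\alpha, v_\beta),
\end{equation*}
where $(v_\alpha)$ is the coordinate frame at $q$ (which is $g$-orthonormal at $q$ since we are in normal coordinates). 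First I would recall that the ordinary d'Alembertian applied to the metric component matrix is, at $q$, simply $[\square g]_{\alpha\beta}|_q = g^{\gamma\lambda}(q)\, h_{\alpha\beta,\gamma\lambda}(q)$, because at the origin of normal coordinates the Christoffel symbols vanish, so $\square$ reduces to the coordinate-Laplacian $g^{\gamma\lambda}\partial_\gamma\partial_\lambda$ acting componentwise. Substituting the displayed formula gives $[\square g]_{\alpha\beta}|_q = -\tfrac{2}{3} g^{\gamma\lambda}(q)\, g(\J(v_\gamma,v_\lambda)v_\alpha,v_\beta)$.

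Next I would unwind the right-hand side. By definition (\ref{dalembertian1}), ${\bf\square}_J = -\sum_{\gamma,\lambda} g^{\gamma\lambda}\,\J(v_\gamma,v_\lambda)$ (writing $g^{\gamma\lambda}=g(v^\gamma,v^\lambda)$), and by (\ref{gblockJ}), $[g{\bf\square}_J](v_\alpha,v_\beta) = g({\bf\square}_J v_\alpha, v_\beta) = -\sum_{\gamma,\lambda} g^{\gamma\lambda}\, g(\J(v_\gamma,v_\lambda)v_\alpha, v_\beta)$. Evaluating at $q$, this is exactly $\tfrac{3}{2}[\square g]_{\alpha\beta}|_q$ by the previous paragraph, i.e. $[\square g]|_q = \tfrac{2}{3}[g{\bf\square}_J]|_q$, which is the claim. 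Since both sides are symmetric bilinear forms on $T_qM$ and we have checked equality on all pairs of frame vectors, the identity holds as bilinear forms; one could alternatively phrase the last step through the observer principle, but a direct frame comparison suffices here.

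The only genuine subtlety — and the step I would be most careful about — is the bookkeeping of which contraction is meant on the left-hand side and the reconciliation of index positions with the metric exchange property. The statement explicitly says $\square$ on the left "is the ordinary d'Alembertian in normal coordinates being applied to the metric tensor matrix," so I must be sure that this means $g^{\gamma\lambda}\partial_\gamma\partial_\lambda h_{\alpha\beta}$ with the derivatives landing on the component functions (not a covariant wave operator, which would annihilate $g$ trivially and make the statement vacuous or false). Because $\nabla g = 0$, any covariant d'Alembertian kills $g$, so the content is entirely in the coordinate expression, and the vanishing of the connection coefficients at $q$ is what makes the coordinate Laplacian and the "naive" double partial derivative coincide there. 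The metric exchange property $\J_{\alpha\beta\gamma\lambda} = \J_{\gamma\lambda\alpha\beta}$ (from Proposition \ref{nuQT} / Corollary \ref{normalcoord}) guarantees that it does not matter whether we contract the first pair or the last pair of indices of $\J$ against $g^{\gamma\lambda}$, which is implicitly used when passing between the two descriptions; I would note this explicitly to close the argument cleanly.

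\begin{proof}
In Riemann normal coordinates with origin $q$, the coordinate frame $(v_\alpha)$ is $g$-orthonormal at $q$ and the Christoffel symbols vanish at $q$, so the ordinary d'Alembertian applied to the component matrix of $g$ reduces at $q$ to $[\square g]_{\alpha\beta}|_q = g^{\gamma\lambda}(q)\, h_{\alpha\beta,\gamma\lambda}(q)$, where $g^{\gamma\lambda}=g(v^\gamma,v^\lambda)$. By Corollary \ref{normalcoord}, $h_{\alpha\beta,\gamma\lambda}(q) = -\tfrac{2}{3}\J_{\alpha\beta\gamma\lambda}(q) = -\tfrac{2}{3}\, g(\J(v_\gamma,v_\lambda)v_\alpha, v_\beta)$, so
\begin{equation*}
[\square g]_{\alpha\beta}|_q = -\frac{2}{3}\, g^{\gamma\lambda}(q)\, g(\J(v_\gamma,v_\lambda)v_\alpha, v_\beta).
\end{equation*}
On the other hand, from (\ref{dalembertian1}) and (\ref{gblockJ}),
\begin{equation*}
[g{\bf\square}_J](v_\alpha,v_\beta)|_q = g(({\bf\square}_J v_\alpha), v_\beta)|_q = -\sum_{\gamma,\lambda} g^{\gamma\lambda}(q)\, g(\J(v_\gamma,v_\lambda)v_\alpha, v_\beta),
\end{equation*}
which equals $\tfrac{3}{2}[\square g]_{\alpha\beta}|_q$. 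Both $[\square g]|_q$ and $[g{\bf\square}_J]|_q$ are symmetric bilinear forms on $T_qM$ (symmetry of the latter follows from the metric exchange property of $\J$, which also shows the contraction is independent of which index pair is used), and we have verified $[\square g]|_q = \tfrac{2}{3}[g{\bf\square}_J]|_q$ on every pair of frame vectors; hence the identity holds.
\end{proof}
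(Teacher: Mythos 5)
Your proof is correct and follows essentially the route the paper intends: the paper's own justification is only a one-line citation to Theorem \ref{mgeothm}, its corollaries, and (\ref{DDg}), and your argument simply supplies the details via Corollary \ref{normalcoord} (the formula $h_{\alpha\beta,\gamma\lambda}(q)=-\tfrac{2}{3}\J_{\alpha\beta\gamma\lambda}(q)$), the vanishing of the Christoffel symbols at the origin of normal coordinates, and the metric exchange property to reconcile which index pair is contracted. Your explicit remarks on these points, particularly that the coordinate d'Alembertian (not the covariant one, which annihilates $g$) is meant on the left, are exactly the bookkeeping the paper leaves implicit.
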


In view of \ref{dalembertian3}, we can think of $ \frac{2}{3} [g{\bf \square}_J]|_q$ as a "tensorization" of the ordinary d'Alembertian operator as it applies to the metric tensor.


\med

\section{THE RICCI TENSOR AND SPATIAL DIVERGENCE}

In order to see in more detail how the Ricci tensor enters into the theory of
gravity, we should recall the equation of geodesic deviation again in more detail which is useful on any semi-Riemannian manifold. If
$[-a,a]$ and $[-b,b]$ is a pair of intervals in $\bR,$ then a
Jacobi field is a smooth map $\mathbb{J}:[-a,a] \times [-b,b] \lra M$ such
that for each fixed $\sigma \in [-b,b]$ the map $\mathbb{J}_{\sigma} ;
[-a,a] \lra M,$ given by $\mathbb{ J}_{\sigma}(\tau)=\mathbb{ J}(\tau,\sigma),$ is a
unit speed geodesic in $M.$  We can then form local vector fields
$u,s$ on an open neighborhood of the image of $\mathbb{ J}$ in $M,$ denoted
$Im~\mathbb{ J},$ so that

\begin{equation}
u(\mathbb{J}(\tau,\sigma))=\del_{\tau}\mathbb{ J}(\tau,\sigma),~~~s(\mathbb{ J}(\tau,\sigma))=\del_{\sigma}\mathbb{ J}(\tau,\sigma).
\end{equation}
Thus we must have $[s,u]=0$ and $\nabla_uu=0,$ on $Im~\mathbb{J},$ so we
find

\begin{equation}\label{Jacobi1}
\R(s,u)u=-\nabla_u\nabla_su.
\end{equation}
We will call $s$ in this situation a tangent Jacobi field along
$\mathbb{ J}_0,$ and at each point it gives the infinitesimal separation
vector. In fact, given $m$ a point on $\mathbb{J}_0$ and any unit vector
$s_m \in T_m$ which is orthogonal to $u(m),$ we can arrange that
$s(m)=s_m.$

Since our connection is assumed to be the unique torsion free
metric connection, we have $$[s,u]=\nabla_su-\nabla_us,$$ so the
condition that $[s,u]=0$ gives $\nabla_su=\nabla_us$ in our
present case. In view of (\ref{Jacobi1}), we then find the
equation of geodesic deviation on $Im~\mathbb{J},$

\begin{equation}\label{Jacobi2}
\K(u,u)s+\nabla_u \nabla_u s=\R(s,u)u+\nabla_u \nabla_u s=0.
\end{equation}
But, $\K(u,u)=\J(u,u),$ so we can and should write (\ref{Jacobi2}) as

\begin{equation}\label{Jacobi3}
\J(u,u)s+\nabla_u \nabla_u s=\R(s,u)u+\nabla_u \nabla_u s=0.
\end{equation}

In other words, $\nabla_u^2$ "is" the quadratic form of $-\J,$ the negative Jacobi curvature operator,
applied to $u,$ which is $-\J(u(m),u(m)).$  Restrict now to the Lorentz case.  Thus, if $u$ is timelike, we see
$A_{u(m)}^{(geo)}=-\J(u(m),u(m))$ is the linear transformation of $E(u,m)$ giving
the infinitesimal tidal acceleration field, ${\bf
a}_{u(m)}=A_{u(m)}^{(geo)}$ at $m,$ a vector field on $E(u,m)$
defined by ${\bf a}_{u(m)}(s)=A_{u(m)}^{(geo)}s,$ for any
separation vector $s \in E(u,m).$

Of interest to operator theorists here (see \cite{DUPRE0} for
spectral theory and functional calculus of operator fields) could
be the observation that in some sense we have found a relationship
between $\nabla_{u(m)}$ and $(A_{u(m)}^{(geo)})^{1/2}.$

Now, we simply combine the little proposition
\ref{flatspacedivcurl} together with (\ref{curv3}), and find at
$m,$ with $0_m$ denoting the zero vector of $T_mM,$

\begin{equation}\label{ricci2}
Ric(u(m),u(m))=trace(\J(u(m),u(m))=-trace(A_{u(m)}^{(geo)})=-div_{E(u,m)}
A_{u(m)}^{(geo)}(0_m).
\end{equation}

We are interpreting this result as relating to the observer's flat
Euclidean space divergence of his flat Euclidean space
infinitesimal tidal acceleration field.  Moreover, by
(\ref{curv2}) $A_{u(m)}^{(geo)}$ is a self-adjoint linear
transformation of the Euclidean space $E(u,m).$

For comparison, we point out that our discussion above for
(\ref{ricci2}) is also the content of results in \cite{ONEIL},
pages 225-219 and 8.9, page 219, as well as \cite{SACHSWU}, 4.2.2,
page 114. We can notice here that by the observer principle, Theorem \ref{gen top anal cont},
knowledge of ${\bf a}_{u(m)}=A_{u(m)}^{(geo)}$ for every possible
(even just future pointing) time-like unit vector $u \in T_mM$
would, by (\ref{Jacobi3}) determine $\J$ and hence $\R$ and the full Riemann curvature tensor by (\ref{R from J}) and Theorem \ref{pc11}.

\med


\section{NEWTON'S LAW OF INFINITESIMAL \\ TIDAL ACCELERATION}

\med

Let us briefly review how Newton's Law of gravity is formulated
and how it can be recast in terms of the infinitesimal tidal
acceleration field.  Here we have a Euclidean space, $E,$ of
dimension $n$ and a smooth time dependent gravitational vector
field ${\bf f}$ defined (at time $t$) on an open subset $U_t$ of $E,$ where $U_t$ is
just $E$ with possibly a finite set of points removed which
represent point masses.  Thus the energy density $\rho$ is a
smooth function on $U_t$ and, in case $n=3,$ Newton's Law says that
$div_E  {\bf f}=-4\pi G \rho$ and ${\bf f}=-\nabla \Phi.$  Now this
last condition is easily equivalent to $curl {\bf f}=0,$ since $U_t$
is simply connected.  On the other hand, keeping in mind that
${\bf f}$ represents an acceleration field (or force per unit
mass), if $s \in T_mE=E$ is a separation vector, then $D_s{\bf
f}(m)$ is the infinitesimal gravitational tidal acceleration with
separation vector $s$ as in \cite{MTW}, pages 272-273 (see also
\cite{OHANIAN&RUFFINI}, pages 38-42). This means that the
infinitesimal gravitational tidal acceleration field is just
$A_m^{(grv)}=T_m{\bf f},$ the tangent map of ${\bf f}$ at $m,$ viewed as a vector field on $E.$ From
 Proposition \ref{flatspacedivcurl} concerning (\ref{flatspacediv}) and the
$curl,$ we see that Newton's Law for gravity in terms of the
infinitesimal gravitational tidal acceleration field $A_m^{(grv)}$
at $m$ says simply both 

$$trace(A_m^{(grv)})=-k_n \rho(m)$$ 
and
$$A_m^{(grv)}=[A_m^{(grv)}]^*.$$
Of course, this also makes sense
for any spatial dimension $n,$ as in (\ref{flatspacedivcurl}).
Here, $k_n$ is a constant which only depends on the spatial
dimension $n.$ Since $A_u^{(geo)}$ is self-adjoint (\ref{curv2}),
the obvious way to geometrize gravity in any spacetime dimension is simply to identify (in other words replace)
$A^{(grv)}$ with $A^{(geo)}.$

Thus on any Lorentz manifold $M,$ we say that the Newton-Einstein
Law for infinitesimal tidal acceleration holds at $m$ for the
observer $u \in T_mM$ provided that $trace(A_u^{(geo)})=-k_n
\rho_u(m),$ where we now view $A_u^{(geo)}$ as the observer's
infinitesimal gravitational tidal acceleration and where
$\rho_u(m)$ is the energy density of matter and fields other than
gravity $u$ observes at the event $m.$ Now, one relativistic
problem with Newton's Law for gravity is the fact that it amounts
to instantaneous action at a distance which conflicts with
relativity. We will assume that this problem is surmounted by {\it
only} requiring the equation to hold at the observer's event $m.$
He can say nothing about events other than his location event as
far as the law of gravity is concerned. This means henceforth, by
definition, and in view of (\ref{ricci2}) that the Newton-Einstein
Law of infinitesimal tidal acceleration at $m \in M$ for the
time-like unit vector $u \in T_mM$ simply states

\begin{equation}\label{newtonlawtides1}
Ric(u,u)=trace(\J(u,u))=-trace(A_u^{(geo)})=k_n \rho_u(m),
\end{equation}
which for short we refer to simply as NEIL at $m$ for observer $u
\in T_mM.$

We say that NEIL holds at $m \in M$ provided that it holds for
each observer $u \in T_mM.$ In this way, we overcome the
relativists claim that there should be no preferred observer. We
say that NEIL holds on $M$ provided that NEIL holds at each point
of $M.$ In this way we make the NEIL into a relativistic universal
law of gravity, which we think of as the Newton-Einstein Law of
gravity. Notice that each observer only claims
(\ref{newtonlawtides1}) to hold at his own event.

We have in fact almost arrived at the correct law, but relativists
could claim we have failed to include all the source energy on the
right hand side of the equation. That is to say, NEIL should be
{\it corrected by requiring that the part of the gravitational field energy
density which serves as gravity source as observed by each observer is also included in the source
term energy density}. We shall call this the corrected NEIL the
Einstein-Hilbert-Newton Law of infinitesimal tidal acceleration or
EHNIL.  Anticipating the ability of the observer $u$ at $m$ to find
the gravity source energy density of the gravitational field,
$\rho_u^{(grv)}(m),$ we say that the EHNIL holds at $m$ for
observer $u$ provided that

\begin{equation}\label{crrtdnewtnlawtides1}
Ric(u,u)=k_n[\rho_u(m)+\rho_u^{(grv)}(m)].
\end{equation}
Naturally, we then say the EHNIL holds at $m \in M$ provided it
holds for each observer $u \in T_mM,$ and say that the EHNIL holds
for $M$ provided that it holds at each event $m \in M.$  This then
is our {\it universal law of gravitation},  the
Einstein-Hilbert-Newton Infinitesimal Law of Gravity. Of course,
this naturally leads to the question as to the gravitational source energy density of
the gravitational field which an observer sees at his location
event, which we turn to in the next section.

Finally here, let us look at Newton's Law for gravity in an alternate way. If $\phi$ is the potential function for the Newtonian gravitational field, then Newton's Law says that $\nabla^2 \phi=4 \pi \rho.$   We have two gravitational energy terms in this equation, one on each side.  Why is it not the case that each side is itself an energy?  Why do we view the $\phi$ as an energy and $\rho$ as an energy?  It seems that if $\rho$ is an energy, then also $(1/4 \pi G)\nabla^2 \phi=-(1/4 \pi G)div_E{\bf f}=-(1/4 \pi G)trace(A_m^{(grv)})$ should also be an energy, and if so, why should just the diagonal components of $-A_m^{(grv)}/(4 \pi G)$ contribute to energy?  It seems on purely mathematical grounds here that Newton's Law is begging for us to interpret all of $-(1/4 \pi G)A_m^{(grv)}$ as energy.  In that light, Newton's Law of gravity says that it is $-trace(A_m^{(grv)}/(4 \pi G))$ that gives the part of $-A_m^{(grv)}/(4 \pi G)$ acting as the source for gravity.  Here, when we replace $-A_m^{(grv)}$ by $-A_u^{(geo)},$ we see that we must have $(1/4 \pi G)\J(u,u)$ as the total energy in a spacetime model of gravity.  Its trace then gives the gravitational source energy.  This points to the idea that, by the observer principle,  $(1/4 \pi G)\J$ should be viewed as the total energy momentum stress tensor of all matter and fields including gravity, in any spacetime model.  In case we do take such a view, then the total conservation of energy should be simply the second Bianchi identity, which in view of the fact that 

$$R_{\J}=\R,$$
and with $d=d_L,$ denoting exterior differentiation of $L$ valued forms, where $L=L(\tau_M;\tau_M),$ says simply

$$0=d\R=dR_{\J}=0.$$

\med


\section{THE GRAVITATIONAL SOURCE ENERGY DENSITY \\
 OF THE GRAVITATIONAL FIELD}

\med

Again, the discussions here are those of \cite{DUPRE} and \cite{DUPRE2}, but modified for the point of view that we distinguish between the overall energy of the gravitational field and the energy of the gravitational field which can act as source of gravity.

In order to deal with the gravity source energy density of the gravitational
field, we must think of how the tension energy of the gravitational field can be thought of as a mass. First think in terms of the basic assumption of the
geometric notion of gravity which is that "free test" particles
must follow geodesics. To partially paraphrase J. A. Wheeler,
spacetime tells matter how to move.  That is its job. But
spacetime is the physical manifestation of the gravitational
field, so it is really the job of the gravitational field to tell
matter how to move.  That is, the geodesic hypothesis, that
gravity is responsible for making free particles follow geodesics
means that when particles fail to follow geodesics, the
gravitational field is put in tension which should result in a type of
energy in the gravitational field which is equivalent to a form of mass.  But in general relativity, the
only thing that can cause particles to fail to follow geodesic
motion is forces or, in density terms, pressures.  Specifically, we should regard the pressure required to cause deviation from geodesic motion as a reflection of the energy that the gravitational field is exerting to try and maintain geodesic motion.  Thus, the
principal $u-$spatial directions and $u-$spatial eigenvalues of
$T$ in the spacetime model $(M,g,T)$ should tell us the the energy
density of the gravitational field observed by $u \in T_mM.$  Thus, in particular, dust particles follow geodesic motion since there is no pressure in dust, by definition.

For instance, the pressure you feel on your bottom when sitting in
a chair is a manifestation of the energy density of the
gravitational field at those points on your chair. In a sense
then, we could say that if the surface of your chair were replaced
by an infinitesimally thin slab sitting on top of an
infinitesimally lower chair, then the mass energy of the slab
required to hold you in place divided by the volume of the slab is
a reflection of the gravitational source energy density of the gravitational field
there. What is the minimum mass which can take care of this job?

In fact, the material the chair is made of in some sense is a
reflection of the gravity source energy density of the gravitational field right
where your chair is located. Even primitive people have an
intuitive idea of the strength of material needed to make a chair,
and thus have a working idea of the energy density of the
gravitational field. We should therefore think of the least mass
energy of material required to make a chair as a rough measure of
the gravity source energy density of the gravitational field where the chair is
to be used.  More generally, in the immediate vicinity of a specific event, strength of building material requirements there are reflections of the energy of the gravitational field there.  For instance, strength requirements of building materials on the surface of the Earth are clearly different than on the surface of the Moon as evidenced by the construction of the Apollo 11 Moon Lander seen worldwide on TV.

More generally, imagine an observer located at $m \in M,$
ghost-like inside a medium with energy momentum stress tensor $T$
and suppose that his velocity at $m$ is $u.$ Then $u^{\perp}
\subset T_mM,$ is the orthogonal complement of $u$ in $T_mM,$ and
the $u-$spatial principal pressures are given by the restriction of $T$ to
$u^{\perp} \times u^{\perp}.$ Thus, there is an orthonormal frame
$(e_1,e_2,...,e_n)$ for $u^{\perp}$ with the property that
$T(e_a,e_b)=p_a \delta^a_b,$ for $a,b \in \{1,2,...,n\}.$ These
observed $u-$spatial eigenvalues of $T$ are, of course, the
principal pressures observed by $u,$ and their average is referred
to as the observed {\it isotropic pressure}, $p_u.$ Thus, $np_u$
is the sum of the principal pressures as seen by the observer with
velocity $u.$ Imagine scooping out a tiny infinitesimal box in $M$
at $m$ whose edges are parallel to these $u-$spatial principal
axes of this spatial part of the energy momentum stress tensor.  What is the minimum mass-energy required to maintian the scooped out space.  We
can imagine putting infinitesimally thin $(n-1)-$dimensional
reflecting mirrors for walls of the box and filling the box with
laser beams reflecting back and forth in directions parallel to
the edges of the box with enough light pressure in each direction
to balance the force from outside on these reflecting walls.

In a sense, we have standardized a system to balance the pressures
acting to disturb the gravitational field, so we {\it define} the gravitational source
energy density of the gravitational field as seen by our observer
to be the energy density of the light in this little box. The fact
that a photon has zero rest mass should mean that the light energy
constitutes a minimum amount of energy to accomplish this task of
balancing the gravitational energy. However, for the case $n=3,$ it is a problem in elementary physics to see that the energy density of the light
along a given axis is exactly the pressure in that principal
direction.  That is, if we call one of these principal directions
the $x-$direction, then $p_x,$ the principal pressure or
$u-$spatial eigenvalue of $T$ in that direction is the energy
density of the laser beams parallel to that direction.  For instance, see \cite{DUPRE} or \cite{DUPRE2}, for further discussion and details.

If $p_x$ is negative, a similar argument using opposite charge
distributions on the opposite walls of the box along the
$x-$direction would have the opposite walls behaving like a
capacitor and again, elementary calculations (the freshman physics
"pillbox" argument using Gauss' Law for electric flux) easily lead
to the conclusion that the energy density of the electric field of
the capacitor has the same absolute value as the negative
stretching pressure of the medium, and here it seems that the
energy due to this stretching pressure (like the pressure in a
stretched rubber band) should count as negative energy. Thus, when
pressure is negative, the pressure is serving to reduce the energy
of the gravitational field as it is "working with" the
gravitational field. We could imagine for instance in case the
capacitor is overcharged, that allowing this scooped out capacitor
to "snap shut" then supplies the capacitor energy to the
gravitational field and also lowers the energy of the
gravitational field. So in this case of negative pressure, it must
be that the energy density should be negative just as is the
pressure.

We can therefore take it to be the case that the principal
pressure in the $x-$direction gives the gravitational source energy density
contribution for that direction in any case. Likewise for the
other two axes, consequently we see that the total gravitational source energy density
in the box is the sum of the pressures that the beams and fields
are balancing, that is the trace of the observer's spatial part of
the energy-stress tensor, $p_x+p_y+p_z,$ for the case spatial
dimension $n=3.$ For a more extended and detailed discussion of
these physical motivations, we refer the interested reader to
\cite{DUPRE2}.

More generally, in light of the preceding heuristic arguments, for
any $n,$ we {\it define} the gravitational source energy density seen by
the observer $u \in T_mM$ to be the sum of the principal
pressures, $\rho_u^{(grv)}(m)=np_u.$ We now state this as a formal
postulate.

\begin{postulate}\label{postltgrvenrgdnst} {\bf GRAVITATIONAL SOURCE
ENERGY DENSITY POSTULATE.} For any spacetime model $(M,g,T),$ at
each event $m \in M,$ each observer $u \in T_mM$ observes the gravitational source
energy density of the gravitational field as being
$\rho_u^{(grv)}(m)=np_u,$ the sum of the principal pressures of
$T$ at event $m$ in $M.$
\end{postulate}

At this point we can notice that we are already dealing with a
physically intuitive description of the gravitational source energy of the gravitational field which
implies the Cooperstock hypothesis \cite{COOPERSTOCK}, \cite{COOPERSTOCK2}, \cite{COOPERSTOCK3} which says the gravitational
field has no energy in the vacuum, because indeed, in the vacuum
there is certainly no pressure. That is obviously postulate
\ref{postltgrvenrgdnst} implies the Cooperstock hypothesis.  However, here for the Cooperstock hypothesis, we must interpret the energy in the vacuum as refering to gravitational {\it source} energy.  Thus, the Cooperstock hypothesis implies that gravity waves have no gravitating energy in the vacuum.  To be more clear here, the simple point is that in the vacuum there is certainly no source gravitational energy for otherwise you clearly do not have a vacuum, and consequently, the Cooperstock Hypothesis takes on the character of a simple tautology regarding the meaning of the word vacuum.

Consider now the example of a system of isolated gravitating masses interacting purely gravitatationally.  The attraction of the masses for each other due to gravity can be thought of as a form of pressure, so considered as a fluid, the pressure is negative, so the gravitational field in this fluid has negative energy density.  Thus any form of integral of the total energy density of all the energy of matter and fields and gravity should result in a value less than the integral over the same domain for purely the masses and non-gravitational fields.

\medskip


\section{THE GRAVITATIONAL SOURCE \\ ENERGY MOMENTUM STRESS TENSOR\\ OF THE GRAVITATIONAL FIELD}

\med

In view of the results of the preceding section we can now prove
our elementary theorem on the gravitational source energy density of the gravitational
field.  Again, this section is the argument in \cite{DUPRE2} modified to distinguish between overall gravitational energy momentum stress and that which merely serves as the source of gravity.

\begin{theorem}\label{enrgmmntmstrsgravtheorem}
If $(M,g,T)$ is a spacetime model, then assuming the Gravitational Source 
Energy Density Postulate \ref{postltgrvenrgdnst} is equivalent to
assuming the covariant symmetric tensor

\begin{equation}\label{gravenergydensity}
T_g=T-c(T)g
\end{equation}
is the unique symmetric tensor giving the gravitational source energy momentum stress
tensor of the gravitational field.
\end{theorem}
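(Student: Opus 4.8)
The plan is to show the two implications of the asserted equivalence. First I would recall the setup: we have an observer $u \in T_mM$ with $g(u,u) = -1$, and by equations (\ref{trace0}) and (\ref{contract1}) we know $c(T) = trace(B_T) = -\rho_u + np_u$, where $\rho_u = T(u,u)$ is the energy density and $p_u$ is the isotropic pressure for $u$, so that $np_u$ is the sum of the principal pressures. The key computational identity is: for any symmetric tensor $S$ and any timelike unit vector $u$,
\begin{equation}\label{keycomp}
[S - c(S)g](u,u) = S(u,u) - c(S)g(u,u) = S(u,u) + c(S).
\end{equation}
Applying this with $S = T$ gives $T_g(u,u) = \rho_u + c(T) = \rho_u + (-\rho_u + np_u) = np_u$, so the quadratic form of $T_g$ on $u$ is exactly the sum of the principal pressures observed by $u$.

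Next I would argue the forward direction. Assume the Gravitational Source Energy Density Postulate \ref{postltgrvenrgdnst}. Whatever the gravitational source energy momentum stress tensor $V$ of the gravitational field is — here we take as given (from the physical discussion) that it must be a symmetric covariant tensor for which each observer $u$ reads off $V(u,u)$ as the gravitational source energy density he observes — the postulate says $V(u,u) = np_u$ for every observer $u \in T_mM$. By the computation above, $T_g(u,u) = np_u$ as well, so $V(u,u) = T_g(u,u)$ for every timelike unit vector $u$ at $m$. Since both $V$ and $T_g$ are symmetric second rank tensors, the Observer Principle (Corollary \ref{obsv2}) forces $V = T_g$ at $m$; as $m$ was arbitrary, $V = T_g = T - c(T)g$ identically, and this also establishes uniqueness. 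Conversely, assuming that $T_g = T - c(T)g$ is the gravitational source energy momentum stress tensor of the gravitational field, we simply evaluate its quadratic form: again by (\ref{keycomp}) and (\ref{contract1}), $T_g(u,u) = \rho_u + c(T) = np_u$ for each observer $u$, which is precisely the statement of Postulate \ref{postltgrvenrgdnst}.

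The main obstacle here is not the algebra, which is a one-line contraction computation, but making precise the meaning of the phrase "the unique symmetric tensor giving the gravitational source energy momentum stress tensor of the gravitational field" — i.e. what structural hypotheses we are allowed to assume about such a $V$ before invoking the Observer Principle. The clean assumption, consistent with the paper's earlier discussion (the defining property stated when $S$ was introduced, namely that an observer with velocity $u$ sees $S(u,u)$ as the relevant energy), is that $V$ is a symmetric second rank covariant tensor whose quadratic form $V(u,u)$ is the gravitational source energy density seen by $u$. With that in hand, uniqueness is immediate from Corollary \ref{obsv2}, and I would state this hypothesis explicitly at the start of the proof so that the appeal to the Observer Principle is rigorous. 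The remainder is just the symmetric bookkeeping of the two directions as above.
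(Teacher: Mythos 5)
Your proof is correct and follows essentially the same route as the paper's: the one-line contraction computation $T_g(u,u)=T(u,u)+c(T)=np_u$ via (\ref{contract1}), combined with the Observer Principle (Corollary \ref{obsv2}) to pin down the symmetric tensor uniquely, with the converse being the same computation read backwards. Your explicit framing of the structural hypothesis on $V$ (that it be a symmetric second rank covariant tensor whose quadratic form on observers is the observed source energy density) makes precise what the paper leaves implicit, and the paper's additional remark that existence is supplied by the formula itself is implicitly covered in your argument.
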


For the proof, suppose that $m \in M$ is any event and $u \in
T_mM$ is an observer at $m \in M.$ Suppose that $T$ is the second
rank covariant energy momentum stress tensor for the matter and
fields other than gravity. Our task is to find the covariant
second rank symmetric tensor $T_g,$ which gives the source energy
momentum stress of the gravitational field from our previous
physical argument that every observer should see it as the sum of
the principal pressures.  Thus, by the observer principle and the
gravitational  source energy density postulate \ref{postltgrvenrgdnst},
$T_g$ is uniquely determined by the requirement that
$T_g(u,u)=np_u$ for each time-like unit vector $u \in T_mM$ no
matter which $m \in M.$

Now, applying (\ref{contract1}) to compute $c(T)$  we find that
$$c(T)=-T(u,u)+np_u,$$ and therefore,
$$np_u=T(u,u)+c(T)=T(u,u)-c(T)g(u,u),$$ which is to say finally that
the second rank symmetric covariant tensor $$T_g=T-c(T)g$$ does
indeed do the job, by The Observer Principle.

We point out here, that in general, such uniqueness does not imply
existence, but here we have existence of the required tensor we
seek from equation (\ref{gravenergydensity}) itself. That is
really the assumption that there is a covariant symmetric tensor
$T$ giving the energy momentum stress tensor of all matter and
fields other than gravity is also giving the existence of the
tensor $T_g$ through equation (\ref{gravenergydensity}).

In the reverse direction, by (\ref{contract1}), if we assume that
(\ref{gravenergydensity}) is the gravitational source energy momentum stress tensor for
the gravitational field, then the gravitational source energy density
postulate \ref{postltgrvenrgdnst} is an immediate consequence.
This completes the proof of the theorem
\ref{enrgmmntmstrsgravtheorem}.

In view of (\ref{gravenergydensity}) we define the {\it total gravitational source
energy momentum stress tensor} of all matter and fields including
gravity to be

\begin{equation}\label{totalenrgmmntmstrss}
H=T+T_g=2T-c(T)g=2[T-(1/2)c(T)g].
\end{equation}
Thus, by Theorem 6.1 and the observer principle, we know $H$ must
be the symmetric tensor which should serve as the source term for
the gravitation equation, since for every $m \in M$ and observer
$u \in T_mM$ we have

\begin{equation}\label{totalenrgdnsty}
H(u,u)=\rho_u(m)+\rho_u^{(grv)}(m).
\end{equation}

\med


\section{THE DERIVATION AND PROOF OF THE EINSTEIN EQUATION}

\med

We are now in a position to state and prove our theorem as regards
the Einstein equation.

\begin{theorem}\label{theoremeinstein}
If $(M,g,T)$ is spacetime model, then the
EHNIL(\ref{crrtdnewtnlawtides1}) together with the Gravitational Source
Energy Density Postulate \ref{postltgrvenrgdnst} is equivalent to
the assumption that the equations

\begin{equation}\label{einsteinequation2}
 Ric=k_nH=k_n[2T-c(T)g]=2k_n[T-(1/2)c(T)g]
 \end{equation}
 hold on $M$ with $H=(1/k_n)Ric$ being the total source energy momentum stress
 tensor of gravity and all matter and fields for the model. In particular, if
 $n=3$ so spacetime is four dimensional, then automatically
 $div(T)=0$ as a consequence of these assumptions.  If $n$ is not
 3, then these assumptions and the assumption that $div(T)=0$
 imply that $dR=0,$ where $R$ is the scalar curvature of $M.$
 \end{theorem}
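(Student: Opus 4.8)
The plan is to unwind the chain of equivalences already assembled in the paper and then read off the divergence consequences from the contracted second Bianchi identity. First I would observe that by Theorem~\ref{enrgmmntmstrsgravtheorem}, the Gravitational Source Energy Density Postulate~\ref{postltgrvenrgdnst} is exactly equivalent to the statement that $T_g=T-c(T)g$ is the gravitational source energy momentum stress tensor of the gravitational field, hence that $H=T+T_g=2T-c(T)g$ is the total source energy momentum stress tensor, and that $H(u,u)=\rho_u(m)+\rho_u^{(grv)}(m)$ for every observer $u$ by \eqref{totalenrgdnsty}. Then EHNIL~\eqref{crrtdnewtnlawtides1} says precisely $Ric(u,u)=k_n[\rho_u(m)+\rho_u^{(grv)}(m)]=k_nH(u,u)$ for every timelike unit $u\in T_mM$ at every $m$. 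Since both $Ric$ and $k_nH$ are symmetric second rank covariant tensors, the Observer Principle (Corollary~\ref{obsv2}) promotes this pointwise scalar identity on all observers to the tensor identity $Ric=k_nH=k_n[2T-c(T)g]$ on $M$; conversely, evaluating the tensor equation on $(u,u)$ recovers EHNIL, and via Theorem~\ref{enrgmmntmstrsgravtheorem} recovers the postulate. This establishes the claimed equivalence, which is the bulk of the statement.

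Next I would turn to the divergence consequences, which follow from the contracted second Bianchi identity $div\bigl(Ric-\tfrac12 c(Ric)\,g\bigr)=0$, i.e. $div(E)=0$ where $E$ is the Einstein tensor; this is a standard fact and the paper invokes it freely. Trace-reversing the equation $Ric=k_n[2T-c(T)g]$: taking contractions gives $c(Ric)=k_n[2c(T)-c(T)\,c(g)]=k_n[2-(n+1)]c(T)=-k_n(n-1)c(T)$, since $c(g)=\dim M=n+1$. Then $Ric-\tfrac12 c(Ric)g=k_n[2T-c(T)g]+\tfrac12 k_n(n-1)c(T)g=2k_n\bigl[T+\tfrac{n-3}{4}c(T)g\bigr]$. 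So $div(E)=0$ becomes $div\bigl(T+\tfrac{n-3}{4}c(T)g\bigr)=0$, that is $div(T)+\tfrac{n-3}{4}\,d\,c(T)=0$, using $\nabla g=0$ so that $div(fg)=df$ for a scalar $f$. For $n=3$ the second term drops out identically and we get $div(T)=0$ with no further hypothesis. For $n\ne 3$, imposing $div(T)=0$ forces $\tfrac{n-3}{4}\,d\,c(T)=0$, hence $d\,c(T)=0$; and since the scalar curvature $R$ is a nonzero constant multiple of $c(T)$ by the contraction computation above (namely $c(Ric)=-k_n(n-1)c(T)$), this is equivalent to $dR=0$.

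The only mildly delicate point is bookkeeping the contraction conventions: the paper writes the Einstein equation as $Ric=(4\pi G)[2T-c(T)g]$ earlier and as $E=(8\pi G)T$ after trace reversal, so I should make sure the trace-reversal operation used here ($X\mapsto X-\tfrac12 c(X)g$, involutory in dimension $n+1=4$ but not otherwise) is applied consistently, and that $k_n$ plays the role of $4\pi G$. I expect this to be the main obstacle only in the sense of avoiding sign and factor slips; there is no conceptual difficulty, since every ingredient—Theorem~\ref{enrgmmntmstrsgravtheorem}, the Observer Principle, $c(g)=\dim M$, and the contracted Bianchi identity $div(E)=0$—is already available. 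I would close by remarking explicitly that the $n=3$ case is what singles out spacetime dimension four: it is exactly the dimension in which the coefficient $\tfrac{n-3}{4}$ vanishes, so that conservation $div(T)=0$ of ordinary matter and fields is an automatic corollary rather than an extra assumption.
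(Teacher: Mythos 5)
Your proposal is correct and follows essentially the same route as the paper: the equivalence is obtained by combining Theorem~\ref{enrgmmntmstrsgravtheorem} with the Observer Principle (Corollary~\ref{obsv2}), and the divergence statements come from the contracted Bianchi identity $div(Einstein)=0$ together with $c(g)=n+1$ and $R=(1-n)k_n c(T)$. Your packaging of the computation via the trace-reversed form $E=2k_n[T+\tfrac{n-3}{4}c(T)g]$ is algebraically the same as the paper's step of taking the divergence of $Ric-\tfrac{1}{n-1}Rg=2k_nT$, so there is nothing further to add.
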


 To prove the theorem \ref{theoremeinstein} use theorem  \ref{enrgmmntmstrsgravtheorem}. Assuming
 the EHNIL(\ref{crrtdnewtnlawtides1}) holds for all observers
 everywhere,  we now have by (\ref{totalenrgdnsty}),

 \begin{equation}\label{einsteinequation1}
 Ric(u,u)=k_nH(u,u),
 \end{equation}
 for every
 observer $u \in T_mM$ at $m \in M,$
 where $k_n$ is a universal constant depending only on $n.$ As an aside,
 beyond \ref{enrgmmntmstrsgravtheorem} and the EHNIL,
 the real reason behind everything here
 is the fact that $Ric(u,u)$ is the
 negative Euclidean divergence of the tidal acceleration
 (\ref{ricci2}), so in physical terms we are
 using the tracial identification of the gravitational tidal acceleration with the geometric tidal
 acceleration. But let us return to the proof.
 Thus, (\ref{einsteinequation1}) merely
 says that any observer $u$ at any $m \in M$ sees the EHNIL to hold.
 Since (\ref{einsteinequation1}) is
 true for $u$ being any time-like unit
 vector, by the observer principle (corollary \ref{obsv2}), we
 must have (\ref{einsteinequation2}) as an immediate consequence.

 For the reverse direction, if we assume that the equation (\ref{einsteinequation2}) holds
 with $H$ giving the total gravitational source energy momentum stress tensor of all
 gravity all matter and all fields, then as $T$ is the energy
 momentum stress tensor of all matter and fields other than
 gravity, and as $H=2T-c(T)g,$ we must have $$T_g=H-T=T-c(T)g$$
 which by Theorem \ref{enrgmmntmstrsgravtheorem}
 then implies the gravitational source energy density postulate \ref{postltgrvenrgdnst} and then
 (\ref{einsteinequation1}) holds for any observer $u$ which now by (\ref{totalenrgdnsty})
 says the EHNIL(\ref{crrtdnewtnlawtides1}) holds for all observers.

In case $n+1=4,$ it is customary to write $k_3=4\pi G,$ so then

 \begin{equation}\label{einsteinequation3}
 Ric=4\pi G H=8\pi G[T - (1/2)c(T)g],\,\ n=3,
 \end{equation}
 which is a well-known form of Einstein's equation. In general, as $c(g)=n+1$
 and $c(Ric)=R,$ where as usual, $R$ is the scalar curvature, we
 find that $$R=(1-n)k_n[c(T)],$$ so the equation can be also written
 as $$Ric=k_n[2T]+(1/(n-1))Rg,$$ and this results in

\begin{equation}\label{einsteineqforgendim=n}
Ric-\frac{1}{n-1}Rg=2k_nT.
\end{equation}

The gravitational source energy momentum stress tensor $T_g=T-c(T)g$ can be expressed in terms
of the Ricci tensor and scalar curvature using
(\ref{einsteineqforgendim=n}) and the result is

\begin{equation}\label{gravengdensgendim=n}
T_g=(\frac{1}{2k_n})[Ric+(\frac{1}{n-1})Rg].
\end{equation}

As usual, we define the {\it Einstein tensor} by
$$Einstein=Ric-(1/2)Rg,$$ which has the property that
$$div(Einstein)=0,$$ no matter the value of $n.$ But in case $n=3,$
we find that the left hand side of (\ref{einsteineqforgendim=n})
is the Einstein tensor. In this case,
 with $k_3=4\pi G,$ we find the most
 familiar form of the Einstein equation

 \begin{equation}\label{einsteinequation4}
 Einstein=Ric-(1/2)Rg=8\pi G~T,\,\ n=3.
 \end{equation}

 Notice that we have not used local conservation of energy,
 $div(T)=0.$  Since the left side of (\ref{einsteinequation4}), the Einstein
 tensor, $Einstein,$ is divergence free, we find $div(T)=0$ as a consequence of our
 derivation, in the case where $n=3.$   On the other hand, it appears that
 for $n$ not equal to 3 we would have that $div(T)$ is in general not zero.  That is, it is only in
 spacetime dimension 4 that the energy momentum stress tensor of matter and
 fields other than gravity can be infinitesimally conserved
 without automatically putting severe restrictions on spacetime. Specifically,
 in case $n+1$ is not equal to 4, we find immediately that $div (T)=0$ implies $dR=0,$ and
 hence the scalar curvature of spacetime must be constant if the energy stress tensor
 of matter and fields other than gravity has zero divergence.

 To see this, in more
 detail, a simple calculation shows that for any smooth scalar
 function $f$ we have $div(fg)=df,$ since $\nabla g=0.$
 The Einstein tensor has vanishing divergence in any dimension
 (due to the second Bianchi identity), and this is clearly
 equivalent to $div(Ric)=(1/2)dR.$  Therefore, taking the
 divergence of both sides of (\ref{einsteineqforgendim=n}) gives
 $$(\frac{1}{2}-\frac{1}{n-1})dR=2k_n div(T).$$  Thus, if we
 assume that $div(T)=0,$ then either $dR=0$ or $n=3,$ and on the
 other hand, if we assume $n=3,$ then as the Einstein tensor has
 vanishing divergence, then so must $T.$ This completes the proof
 of our theorem \ref{theoremeinstein}.

 Before proceeding further, let us rewrite (\ref{einsteineqforgendim=n}) using (\ref{ricci2}).  The result is that

 \begin{equation}\label{JacobiEinstein1}
 trace(\J)=k_nH.
 \end{equation}
 This means that as $H$ is the total gravitational source energy momentum stress, that part of $\J,$ namely the "diagonal components" are gravitational source energy momentum stress.  Why should only the diagonal be considered as comprising energy momentum stress.  It seems that the mathematics is asking us to think of the whole Jacobi operator as proportional to energy momentum stress.  We thus will define $(1/k_n)\J$ to be the total energy momentum stress in any spacetime model and define $trace((1/k_n)\J)=(1/k_n)Ric$ to be the total gravitational source energy momentum stress tensor for the spacetime model.

 There is another reason why all the components of $\J$ should be considered as energy momentum stress.  This is because by the observer principle, $\J$ is determined on $T_mM$ by knowing $\J(u,u)$ for each timelike unit vector $u \in T_mM.$ But $\J(u,u)$ is a self-adjoint linear transformation of $u^{\perp},$ the Euclidean space orthogonal to $u$ in $T_mM,$ so is diagonalizable, and thus the principal axes are physically real manifestations of curvature at $m$ and observer $u$ can choose an orthonormal frame for $u^{\perp}$ in which $\J(u,u)$ is diagonal.  Thus, the total gravitational source energy observed by $u,$ being proportional to the trace of $\J(u,u)$ is really the sum of all the non-zero components of $\J(u,u)$ in this frame, so $\J(u,u)$ is all gravitational source energy in this frame.  Thus, if $u$ uses a frame in which $\J(u,u)$ on $u^{\perp}$ is not diagonal, the off-diagonal components are mixtures of the diagonal components from the diagonal frame with coefficients simply determined by the linear transformation to the non-diagonal frame, and thus should still be considered to contain energy.  It is simply that the trace is an invariant, so the total of the energy in the trace is the total source energy for gravity.  But, the whole transformation $\J(u,u)$ is the proper representation of the energy as seen by $u,$ because the break up of the trace as a sum of terms is brought about by the physical curvatures.  For instance, if we have a system consisting of a gas with energy momentum stress $T_{gas}$ together with an electromagnetic field with energy momentum stress tensor $T_{EM},$ the total energy momentum stress tensor is the sum of these two tensors plus that due to the gravitational field.  But, we still refer to $T_{EM}$ alone as energy momentum stress, even though there is other energy momentum stress present.  Gravity only sees the total energy momentum stress as source, but physically, $\J(u,u)$ is telling observer $u$ that this total energy in light of curvature also has a natural structure as a sum of individual energies in the orthonormal frame.  Of course, the orthonormal frame which diagonalizes $\J(u,u)$ could be different from that which diagonalizes $Ric$ or $T.$

At this point, we should remark that it is often
thought that as the Weyl curvature need not vanish in the vacuum,
that it should enter into the expression for $T_g.$ However, we
have a physical expression for the gravitational source energy density of the field as
seen by any observer, so that determines what the expression for
$T_g$ will be. If the Weyl curvature is not part of the result, we
must accept the fact that Weyl curvature cannot generate
gravitational source energy momentum stress, on this view.  However, and this is important for considering the Weyl tensor, it is the case that the Weyl tensor and the Ricci tensor together determine the full Riemann curvature tensor and thus, equivalently, the Jacobi tensor, and as well, the Jacobi curvature operator determines the full Riemann curvature tensor and hence the Ricci tensor and Weyl tensor.  Since in our view, the complete total energy momentum stress of everything is proportional to the Jacobi tensor, this means that the Weyl tensor contains the gravitational non-source energy of the gravitational field.  However, in our opinion, the fact that as the Jacobi tensor is directly related to the tidal acceleration operator on spacetime and as it obeys appropriate symmetry for an energy momentum stress tensor, it is better to think of the Jacobi operator as being proportional to the total energy momentum stress tensor of everything, that is $(1/ k_n)\J$ is the total energy momentum stress of gravity and all matter and fields, and its trace, $(1/k_n)Ric$  gives the gravitational source energy momentum stress tensor.

To put the Weyl tensor in perspective here, let $\W=\W_{\R}$ denote the Weyl tensor due to the Riemann curvature operator $\R.$  Put $RIC=\R-\W,$ so we know $\R=RIC+ \W$ is an orthogonal decomposition of $\R,$ and our isometric isomorphism of metric Riemann curvature operators onto the metric Jacobi curvature operators gives the orthogonal decomposition

$$\J=J[[RIC]] + J[[\W]],  \mbox{ with } trace(J[[RIC]])=Ric.$$
Thus, we should now think of $(k_n^{-1}J[[\W]]$ as the gravitational energy momentum stress tensor of the vacuum itself and $(k_n^{-1})J[[RIC]]$ as the energy of the ordinary matter and fields whose trace is the source of gravity.  In this view, the Weyl tensor is the real embodiment of vacuum energy, it is just that it needs to be in the form of an energy operator, $J[[\W]].$

Additionally, we should keep in mind that including appropriate
boundary conditions, when $n=3,$ the Einstein equation determines
the full Riemann curvature tensor, therefore including the Weyl
curvature tensor. In particular, the reader should note equations
(4.28) and (4.29) on page 85 of \cite{HAWKINGELLIS} for the Weyl
curvature tensor, which follow from the Bianchi identities and are
similar in form to the Maxwell equations for the electromagnetic
field tensor. Thus, the Weyl curvature which gives the curvature
in the vacuum, as the Ricci curvature vanishes in the vacuum, is
contained in the boundary conditions under the Einstein equation.

As in our view here, the whole Jacobi curvature operator should be viewed as consisting of energy, momentum and stress, it is reasonable to notice that for the observer with timelike unit vector $u$ at event $q \in M,$ as $\J(u,u)=\J(u,u)^*$ has components of energy, and as $trace(\J(u,u)$ vanishes in the vacuum, we should really think of

$$|\J(u,u)|=\sqrt{\J(u,u)^*\J(u,u)}=\sqrt{(\J(u,u)^2)}$$ 
as the total energy including potential energy available which is not in the gravitational source energy.  This is because in a diagonal frame for $\J(u,u),$ the curvatures along the principal axes are all contributing to the energy of deformation of any virtual test object placed there in the spacetime.  For instance, in the vacuum near a very large spherically symmetric object, the curvatures in the two principle directions orthogonal to the radial direction must be the same and their total added to that along the radial direction must be zero, hence the radial principal curvature is twice the absolute value of that in the direction orthogonal to the radial direction.  An object placed in the vacuum at an event near such an object would receive an enormous energy of deformation as the principal directions of curvature would operate on the virtual object independently in the three principal directions, for all practical purposes.  Thus it is the sum of the absolute values of the eigenvalues of $\J(u,u)$ that really should be considered as the true total energy density $E_u$ in the spacetime at event $q \in M,$ including potential energy which does not gravitate, and that is

\begin{equation}\label{JacobiEnergy}
E_u=trace(|\J(u,u)|).
\end{equation}
If $E_u=trace(|\J(u,u)|)$ vanishes throughout a region of spacetime for all timelike unit vectors, then $|\J(u,u)|$ vanishes in that region for all timelike unit vectors, and thus by the observer principle, it follows that $\J=0$ and hence that region is flat.  Of course, by the observer principle, it is enough that $\J(u,u)$ itself vanishes for an open set of timelike unit vectors in order to conclude that a region of spacetime is flat.

We will until further notice now
restrict to the case $n+1=4,$ so we have ordinary spacetime, and
therefore $div(T)=0.$  In the case of $n+1=4,$ if the condition that $div(T)=0$ is dropped in the usual derivation where one equates $T$ to a linear combination of the metric tensor, the Ricci tensor and the product of the scalar curvature with the metric tensor, and only requires the time components give Newtonian gravity in the Newtonian limit, the result is a generalization of the Einstein equation with a new free parameter which when equal to 1 gives the usual Einstein equation.  This has been investigated as to its ramifications for cosmology \cite{ALRAWAF}. But, this does mean that the assumption $div(T)=0$ is necessary for this type of derivation of the Einstein equation, since without it the free parameter may be other than unity.

From (\ref{gravengdensgendim=n}) we now have

 \begin{equation}\label{gravenrgydensitygeoform}
 T_g=(1/8\pi G)[Ric+(1/2)Rg]=(1/8\pi G)(Einstein+Rg).
 \end{equation}
 From the last expression on the right, we see, as $T$ and the Einstein
 tensor, $Einstein=Ric-(1/2)Rg$ both have zero divergence, that

 \begin{equation}\label{energydivergence}
 div(H)=div(T_g)=(1/8\pi G)div(Rg)=(1/8\pi G)dR=-d[c(T)].
 \end{equation}
 So even though the total gravitational source energy momentum stress and gravitational source energy momentum stress due to gravity itself are not
 infinitesimally conserved, the divergence is simply proportional to the exterior
 derivative of  the scalar curvature. Of
 course,  $div(T_g)=-d[c(T)]$ is obvious
 from the definition, (\ref{gravenergydensity}), once we accept $div(T)=0.$
 In particular, as $d^2=0,$ this means that

 \begin{equation}\label{exteriorderivative of grav divergence}
 d[div(H)]=d[div(T_g)]=0,
 \end{equation}
but (\ref{energydivergence}) is even better as it shows $div(T_g)$
is an exact 1-form on $M.$

On the other hand, the equation (\ref{energydivergence}), when
written

\begin{equation}\label{divgravenergy1}
div(T_g)+d[c(T)]=0
\end{equation}
has another interpretation.  In classical continuum mechanics
written in four dimensional form of space plus time, the
divergence of the energy stress tensor equals the density of
external forces.  Of course in relativity, the energy momentum
stress tensor $T$ contains everything and there are no external
forces, as gravity is not a force.  But, we can view
(\ref{divgravenergy1}) as saying that from the point of view of
the gravitational field, the matter and fields represented by $T$
are acting on the gravitational field as an external force density
of $-d[c(T)].$  In classical continuum mechanics, the external
force density has zero time component, but relativistically such
is not the case, the force only has zero time component in the
instantaneous rest frame of the object acted on. We can therefore
view (\ref{divgravenergy1}) as saying that the divergence of the
the gravitational field's  source energy momentum stress tensor is being balanced
by the rate of increase of $-c(T).$ If $p_x,p_y,p_z$ are the
principal pressures in the frame of an observer with velocity $u,$
where $g(u,u)=-1,$ then $\rho_u=T(u,u)$ is the energy density
observed, and $div(T_g)(u)$ is then the power loss density of the
gravitational field.

Now $c(T)=-\rho_u+p_x+p_y+p_z=-\rho_u+3p_u,$ where $p_u$ is the
isotropic pressure, so (\ref{divgravenergy1}) becomes

\begin{equation}\label{divgravenergy2}
div(T_g)(u)=D_u\rho-3D_up_u.
\end{equation}
Thus, the observer sees the divergence of source energy of the
gravitational field is exactly the rate of increase of energy
density of the matter and fields less the rate of increase of
principal pressures.  In particular, in any dust model of the
universe (pressure zero), the source gravitational energy dissipation is
exactly balanced by the rate of increase of energy density of the
matter and fields. If $T$ is purely the electromagnetic stress
tensor in a region where there are only electromagnetic fields,
then $c(T)=0,$ and the gravitational source energy momentum stress tensor has zero
divergence, so is then infinitesimally conserved.

To compare our gravitational source energy momentum stress tensor of the gravitational
field with the various gravitational energy pseudo-tensors, keep
in mind that all examples of such gravitational energy
pseudo-tensors can be made to vanish by appropriate choice of
coordinates and therefore cannot represent real energy of any kind
in relativity.    Such pseudo-tensors generally obey coordinate
conservation laws in appropriately chosen coordinates making them
useful in certain calculations, but they cannot represent real
energy as in relativity, real energy cannot just be transformed
away by some choice of coordinates.  By contrast, our $T_g$ is
fully covariant and represents localizable gravitating energy
density as seen by each observer, but in general is not conserved,
as $div (T_g)$ may not vanish in general.

Looking back at the derivation and proof, one can now see that if
there is a distinction between active gravitational mass and
inertial gravitational mass, then in equation
(\ref{crrtdnewtnlawtides1}), the first of the two terms is the
energy density due to active gravitational mass and the second of
the two terms being the sum of principal pressures is therefore an
inertial mass, as it is inertial mass not following geodesic
motion which creates pressure.  This would mean that in the
equation $Ric=4 \pi G [T+T_g]$ the second term on the right is the
tensor which has the inertial mass whereas the first term is the
term with the active gravitational mass.  But, this would seem to
lead to a violation of the principle of relativity, as the
pressures would be different in different reference frames leading
to conversion between active gravitational and inertial masses
depending on the observer.  Thus, this derivation seems to
indicate the equality of active gravitational mass with inertial
mass, a point which is not addressed in the usual derivations of
Einstein's equation. Possibly an improved version of this
derivation might derive the equality of inertial and active
gravitational mass.  Of course, the geodesic hypothesis itself
makes the passive gravitational mass equal to the inertial mass,
which seems to be the reason why the problem of equality of active
gravitational and inertial mass is often overlooked in elementary
treatments of general relativity.

Finally here, we should point out that Einstein's original
Equivalence Principle is often misconstrued to say that
gravitational fields can be transformed away by choice of
coordinates, and this is certainly not the case, as Frank Tipler
has stated on many occasions.  This is well known to experts in
general relativity. Gravity in general relativity is curvature of
spacetime, and curvature cannot be transformed away. If we view
connection coefficients as "gravitational forces", then using
normal coordinates at a point (that is in a neighborhood of a given point) makes the connection coefficients vanish at that point (event), but not necessarily in a neighborhood of that point.  And this
merely reflects the fact that gravitational forces do not exist in
general relativity, virtually by definition. Einstein used the
example of an accelerating coordinate system to effectively
transform away a uniform gravitational field in which there is no
actual curvature of spacetime and therefore no real gravity.  Of course, using continuity, any function vanishing a a given point can be made arbitrarily small in a sufficiently small neighborhood of that point.  Therefore, at
each event, given a specified limit in level of measurement
accuracy, there is a neighborhood in which curvature effects
cannot then be measured, and in such neighborhoods of an event,
the equivalence principle may be used effectively.  One must be
careful of subtle pitfalls, since the curvature may actually be non-zero. For instance, when Einstein used the
elevator thought experiment to reason that light would bend in a
gravitational field, he was using the fact that in the accelerated
reference frame the null geodesics appear curved and then
generalizing to arbitrary gravitational fields.

In fact, the elevator thought experiment merely gives the result
for the bending of light that Newtonian gravity in flat Euclidean
space would give under the assumption that photons have inertial
mass. It takes the full Schwarzschild solution to arrive at the
correct answer for the bending of light, which Einstein
fortunately realized before the experimental measurements were
made.

Another way to look at this light bending problem would be that in
NEIL we have out in the near vacuum of space that for the light
beam the law of gravity is $Ric=4 \pi G T_{EM}$ where $T_{EM}$
denotes the energy momentum stress tensor of the electromagnetic
field. But, as $c(T_{EM})=0,$ we have $(T_g)_{EM}=T_{EM},$ so
Einstein's equation, EHNIL, becomes $Ric=4 \pi G [T_{EM}+T_{EM}]=4
\pi G [2T_{EM}]$ which means that the photon's electromagnetic
field gives twice the curvature of the spacetime at points along
its track as would be the case in NEIL, which should reasonably
lead to the doubling of the bending angle. Of course this is a
nonsense argument, since the light bending has to do with tracks
of null geodesics in the gravitational field of a large
gravitating object and not the gravitational field of an
electromagnetic wave itself. But, if we think of a photon passing
a planet, theoretically, we are allowed to think of the planet as
following a path in the gravitational field of the photon, and the
preceding analysis says the planet's path should be bent twice as
much in Einstein's theory as in Newton's theory, so reciprocally,
the photon's track should be bent twice as much. Maybe the
argument is not so specious, and should be examined further. On
the other hand, this does tell us that the effective gravitational
mass of pure electromagnetic radiation or laser light is double
its inertial mass, which possibly could be detected using powerful
lasers in an inertial confinement fusion laboratory, thus leading
to another test of Einstein's theory. Tolman (\cite{TOLMAN},
Chapter VIII) has noticed the prevalence of this doubling effect
for electromagnetic radiation in many examples, all calculated
using the weak field approximation. But, now using Theorems
\ref{enrgmmntmstrsgravtheorem} and \ref{theoremeinstein}, we see
that the EHNIL is telling us the effective gravitational
mass-energy of electromagnetic fields is very generally double the
inertial mass-energy.

More generally, our conclusion here is that $\rho_u+3p_u$ is the
effective  gravitational source mass-energy density observed by an
observer with velocity $u.$ For that is what is dictated by
Einstein's equation, since it is equivalent to the EHNIL and the
gravitational energy density postulate, by Theorem
\ref{theoremeinstein}.

\med


\section{THE GRAVITATION CONSTANT $G$}

\med

Our development here is the same as that in \cite{DUPRE} and \cite{DUPRE2}, except modified to reflect the required modifications for gravitational energy which serves as source of gravity.

So far, we have not said anything about the determination of the
gravitation constant $G.$  To evaluate this, we merely need to
check the results of experiments with attractive "forces" between
masses. But it is much simpler to just use Newtonian gravity in an
easy example where the results should be obviously approximately
the same.  Consider an observer situated at the center of a
spherical dust cloud of uniform density $\rho,$ and calculate the
tidal or separation acceleration field using Newton's law of
gravitation. We can observe here that the matter energy momentum stress
tensor satisfies $T(v,w)=\rho g(v,u)g(v,w)$ where $u$ is the
velocity field of the dust cloud.  Thus we calculate easily that
$T_g(u,u)=0$ meaning that a co-moving observer sees the
gravitational field as having gravitational source energy density zero.  In this case,
the NEIL and EHNIL coincide for $u$ and thus as it seems
reasonable that the NEIL should have the Newton gravitation
constant as its constant, then that means $G=G_N.$

It is easy to give a more elementary argument here. At distance
$r$ from the center, but inside the cloud, the mass acting on test
particles at radial distance $r$ is simply the mass inside that
radius, $M(r),$ by spherical symmetry, as is well-known in
Newtonian gravitation. Here, we have $M(r)=(4/3)\pi r^3 \rho.$

But Newton's Law says the acceleration of a test mass near the
center of the dust cloud is radially inward, and if $r$ is the
distance from the center, then the radial component of
acceleration is given by

\begin{equation}\label{Newton1}
a_r(r)=-G_N\frac{M(r)}{r^2}=-G_N \frac{4 \pi \rho r}{3}.
\end{equation}
Here, $G_N$ is the Newtonian gravitation constant.

On the other hand, considering an angular separation of $\theta,$
the spatial separation is $s=r\theta,$ so the relative
acceleration of nearby test particles in the $s-$ direction
perpendicular to the radial direction is therefore

\begin{equation}\label{Newton2}
a_s(r)=\theta a_r(r)=-G_N\frac{4 \pi \rho r \theta}{3}=-G_N\frac{4
\pi \rho s}{3}.
\end{equation}
Thus the rate of change of separation acceleration of nearby
radially separated test particles in the radial direction at given
$r$ is by (\ref{Newton2}),

\begin{equation}\label{Newton3}
\frac{da_r}{dr}=-G_N \frac{ 4 \pi \rho}{3},
\end{equation}
whereas in the $s$ direction we have the rate of change of
separation acceleration is

\begin{equation}\label{Newton4}
\frac{da_s}{ds}=-G_N \frac{ 4 \pi \rho}{3},
\end{equation}
the same result again.  But there are two orthogonal directions
perpendicular to the radial, so now we see that if ${\bf a}_u$
denotes the spatial or tidal acceleration field around our
observer at the center of the dust cloud, then

\begin{equation}\label{Newton5}
div_u({\bf a}_u)=-G_N 4 \pi \rho.
\end{equation}

As we are dealing with dust, the pressures are zero, so there is
no gravitational source energy density of gravity, and thus $\rho$ is now the total source
energy density seen by our observer.  Thus, we have by
(\ref{ricci2}), that $Ric(u,u)=G_N 4 \pi \rho=4 \pi G_N ~H(u,u).$
But now comparing this result with (\ref{einsteinequation1}), with
$k_3=4 \pi G,$ we see that we must have $G=G_N.$

At this point, we can simply choose units such that $G=1$ and we
henceforth drop this factor from the equation for simplicity.

\med


\section{THE EINSTEIN DERIVATION}

\med

It is interesting that Einstein realized fairly early in his
search for the gravitation equation that the vacuum equation
should be $Ric=0.$  This lead him to try the equation $Ric=4\pi
G\,\ T,$ as the general gravitation equation when matter is
present. In fact, this equation obviously results from the
observer principle if we assume spacetime satisfies NEIL instead
of EHNIL, that is if our observers neglect the gravitational source energy density of
the gravitational field. He soon rejected this as not being
compatible with reality, partly due to the fact that it would
require that $div\,Ric=0.$

He also knew that the energy density of the gravitational field
should be included in the source, so if he had found the
expression we have for the gravitational source energy momentum stress tensor of the
gravitational field, he would have surely arrived at the final
equation at this time.

As it was, in summary, he finally \cite{EINSTEIN2} took the
already accepted vacuum equation $Ric=0$ and for special
coordinate frames, he was able to rewrite the vacuum equation in
the form $s=k(t-(1/2)c(t)g)$ where $t$ is his pseudo-tensor whose
coordinate divergence is zero, and where $s$ itself is a
coordinate divergence of a third rank pseudo-tensor. Let us call a
coordinate system {\it isotropic} provided that in these
coordinates we have $det(g_{\alpha \beta})=-1.$  Thus Einstein
found it useful to restrict to isotropic coordinates.

In fact, he found $Ric=s-k(t-(1/2)c(t)g),$ to be true in any
isotropic coordinate system. He therefore interpreted $t$ as the
energy density of the vacuum gravitational field and interpreted
the new form of the vacuum equation as making $t$ the source.  He
then merely guesses that in the presence of matter with energy
momentum stress tensor $T$ the source should be $t+T$ instead of
merely $t.$ Thus, when $t$ is replaced by $t+T$ in the new form of
the vacuum equation we have $s=k[(T+t)-(1/2)c(T+t)g]$ as the
candidate for the general non-vacuum equation.  We then see that
moving the terms involving the pseudo tensor back to the left side
of the equation results in $Ric=k[T-(1/2)c(T)g],$ true in any
isotropic coordinates.  But this last equation is a fully a
covariant equation.

In a sense, his derivation begins with and is based on the pseudo
tensor for the energy density of the gravitational field.
Technically his equation was $-Ric=k[T-(1/2)c(T)g],$ because he
used a metric with signature $(+---).$ Of course, our summary has
left out the Hamiltonian method he used to arrive at his pseudo
tensor and the considerable technical calculations required to
arrive at the vacuum equation in terms of the pseudo tensor. But,
in outline, it is really quite a nice derivation, and in many ways
superior to most of the modern derivations.

The fact that the coordinate divergence of the pseudo tensor
vanishes means that the general Stokes' theorem (sometimes in this
particular setting called the divergence theorem or Gauss'
theorem) can be applied to give macroscopic conservation of
gravitational energy. On the other hand, once the source $t$ is
replaced by $t+T,$ it is no longer the case that this latter
gravitational pseudo tensor has vanishing coordinate divergence,
so the gravitational pseudo tensor loses its conservation law in
the presence of matter.  It is rather Einstein's total energy
stress pseudo tensor, $T+t,$ material and gravitational, whose
coordinate divergence vanishes. It seems this lead Einstein to
question the need and even the validity for general covariance in
his formulation, since the vanishing of the covariant divergence
could not be integrated to give any macroscopic conservation law.

In our opinion, the real major weakness in the argument is the
reliance on a variational argument using a Hamiltonian to obtain
the pseudo tensor, since there is no apparent way to justify this,
other than picking something that seems simple out of thin air.
Specifically, he chose the integrand to be $g^{\mu
\nu}\Gamma^{\alpha}_{\mu \beta} \Gamma^{\beta}_{\nu \alpha}$ for
his variational integral, where here we can take
$\Gamma^{\alpha}_{\beta
\gamma}=\omega^{\alpha}(\nabla_{e_{\gamma}}e_{\beta}),$ with
$(e_{\alpha})$ the coordinate frame basis and with
$(\omega^{\alpha})$ the corresponding dual frame basis.

For instance, in the Hilbert argument using the scalar curvature
as the integrand, it is certainly simple to write down and after
the fact, it does give the correct equation. But, what is the
physical basis for choosing the scalar curvature for the
variational argument? Without any physical justification, we have
to admit it is just a lucky guess based on trying the simplest
thing, which, of course, is always a good idea when you have
nothing else to go on. Just because you try something simple and
it happens to work does not mean you understand why it works.
After nearly a century of general relativity, we are quite
confident of the results of action principles in general
relativity, but for deriving the equation, it is unsatisfactory.
For instance, the Einstein derivation evolved out of Einstein's
consideration of various physical problems and possibilities and
he happened to arrive at the result at almost the same time as
Hilbert.  Now if Hilbert had proposed his action method two years
earlier, would Einstein have believed the equation was the correct
equation?  Maybe and maybe not. It is putting the cart before the
horse. In fact, setting $Einstein=E,$ if we simply want a simple
derivation, as $E(u,u)$ is half the scalar curvature of $W_R,$ the
exponential Riemannian space orthogonal to $u,$ for any time-like
unit vector $u,$ the simplest derivation is just to guess each
observer sees his spatial curvature proportional to his observed
mass density with a universal constant of proportionality.  This
immediately gives $E(u,u)=2k_3T(u,u),$ for each time-like unit
vector $u$ from which we conclude $E=2k_3T,$ for some constant
$k_3,$ by the observer principle. Instant derivation of the
Einstein equation. But why should we have spatial curvature
proportional to energy density? If you are aware of the observer
principle, it is the obvious guess, but you have no way to know
you are correct, since there is no physics in the argument-it is
just mathematics.

We are not claiming that the Einstein Hilbert Lagrangian method
has no value. It surely has value for certain calculations,
especially since the Lagrangian terms for many fields are known
and can be added in to the calculations. We are simply pointing
out, that if you did not know the equation before such a
derivation, you still might not be convinced.  The derivation we
have presented here seems to have the convincing property that it
is the only way to very generally and naturally "push" Newton's
Law of gravity into a general relativistic framework which
includes the gravitational source energy density of the gravitational field as seen by
all observers. It would seem to us that the fact that $div(T)=0$
is an immediate consequence of this derivation makes it all the
more attractive and convincing.

\med


\section{THE COSMOLOGICAL CONSTANT}

\med

If we include the cosmological constant $\Lambda$ in the Einstein
equation, it becomes

\begin{equation}\label{cosmo1}
Ric -(1/2)Rg+\Lambda g=8\pi T,
\end{equation}
which is of course the same as

\begin{equation}\label{cosmo2}
Ric-(1/2)Rg=8\pi[T-(1/8\pi)\Lambda g],
\end{equation}
which means we view the equation here as having a modified source energy
momentum stress tensor

\begin{equation}\label{cosmo3}
T_{\Lambda}=T-(1/8\pi)\Lambda g.
\end{equation}

We then have $c(T_{\Lambda})=c(T)-(1/2\pi)\Lambda,$ so the
effective  source energy momentum stress tensor of the gravitational field
is

\begin{equation}\label{cosmo4}
(T_{\Lambda})_g=T-c(T)g+(3/8\pi)\Lambda g=T_g+(3/8\pi)\Lambda g,
\end{equation}
and the effective total energy momentum stress tensor serving as
source is

\begin{equation}\label{cosmo5}
H_{\Lambda}=2T-c(T)g+(1/4\pi)\Lambda g=H+(1/4\pi)\Lambda g.
\end{equation}
In any case, as $div~g=0,$ it follows that our conclusions about
the gravitational source energy-momentum flow of the gravitational field from
(\ref{divgravenergy1}) and (\ref{divgravenergy2}) remain valid,
even in the presence of a cosmological constant.  Equations
(\ref{cosmo4}) and (\ref{cosmo5}) are corrections of equations
(8.4) and (8.5) of \cite{DUPRE} where the numerical coefficients
of the $\Lambda g$ terms were incorrectly given as $1/2 \pi,$ in
both cases.

\med


\section{QUASI LOCAL MASS}

\med

The problem of defining the energy contained in a space-like
hyper-surface has led to many different definitions of the mass
enclosed by a closed space-like surface contained in an arbitrary
spacetime manifold, and these go by the general name quasi-local
mass. Typically, they are defined by some kind of surface integral
and give an indication of the mass enclosed by the space-like
surface. One of the oldest is known as the Tolman integral and is
advocated by Fred Cooperstock \cite{COOPERSTOCK}, \cite{C&D}, \cite{C&DANNALS},
\cite{TOLMAN} (see also \cite{L&L}, equation (100.19), as well as
\cite{MITRA1}, \cite{MITRA2}). For an extensive survey of these we
refer the interested reader to \cite{SZ}. In particular, the
results of \cite{TIPLER} on the Penrose quasi-local mass show that
the results can be interesting when the space-like surface is not
the boundary of a space-like hyper-surface.

A list of desirable properties of any definition of quasi-local
mass is given in \cite{YAU3}, where in particular it is shown that
for their definition, the quasi-local mass enclosed by a
space-like surface $S$ is non-negative provided that the dominant
energy condition holds and the surface $S$ is the boundary of a
hyper-surface, $\Omega.$  It is further assumed that the boundary
surface $S$ has positive Gauss curvature and space-like mean
curvature vector, and consists of finitely many connected
components. The local energy condition assumed (equivalent to the
dominant energy condition) is framed in terms of the second
fundamental form of the hyper-surface, and in particular, we can
see that for a geodesic hyper-surface it reduces to the condition
that the scalar curvature of the hyper-surface, $\Omega,$ is
non-negative, since in that case the second fundamental form
vanishes (extrinsic curvature zero). But, in this case, the scalar
curvature of the space-like hyper-surface $\Omega$ is
$2Einstein(u,u)=16 \pi T(u,u),$ where $u$ is a time-like future
pointing unit normal field on $\Omega.$  So if the energy momentum
stress tensor satisfies the weak energy condition in this case,
then the energy density as seen by observers riding the
hyper-surface is non-negative, and we would simply integrate $(1/8
\pi)Einstein(u,u)$ over the hyper-surface to find the energy
inside, which is clearly non-negative.

The amazing result in \cite{YAU3} is that the quasi-local mass
defined there, which is defined in terms of integrals over the
boundary $S,$ is non-negative under the dominant energy condition.
For instance, their results show if the energy inside any one
component of $S$ vanishes, then $S$ is connected and $\Omega$ is
flat (\cite{YAU3}, Theorem 1, page 183), and thus the result shows
that the energy in $\Omega$ is in some sense determined by the
geometry of the boundary and its mean curvature vector under the
assumptions stated above.

The small scale and large scale asymptotic properties are analyzed
in \cite{YU}, and in particular, in the vacuum the result is that
to fifth order the quasi-local mass for small spheres is
asymptotic to the Bel-Robinson tensor whereas in general to third
order it is asymptotic to the energy momentum stress tensor of
matter times volume. Unfortunately, there are drawbacks to this
definition of quasi-local mass, as pointed out in \cite{MURCH},
and it seems the situation is improved with the later treatments
of Wang and Yau, in \cite{WY1}, \cite{WY2}, \cite{WY3}.

Let us use our total energy momentum stress tensor to formulate an
invariant approach to quasi-local energy momentum stress.  If we have an open
subset $U$ of $M$ and vector fields $v$ and $w$ defined on
$U,$ we can think of the integral curves of $u$ as being the
histories of a field of observers.  We can then form $H(v,w)$ as a
function on $U$ and assuming orientability of $U$ we can choose a
normalized volume form $\mu_U$ so that $\mu_U(u,e_1,e_2,e_3)=1,$
for $(u,e_1,e_2,e_3)$ any local positively oriented orthonormal
frame. The natural way to proceed here seems to be to form a type
of action integral which we can call, following \cite{C&D} and \cite{C&DANNALS}, the {\it spacetime source total energy momentum stress integral (including that due to the gravitational field itself)}, as in (\ref{integraltensorfield})

\begin{equation}\label{action1}
A(v,w,U)=(\int_U H \mu )(v,w)=\int_U H(v,w) \mu_U=\frac{1}{4 \pi G} \int_U Ric(v,w)
\mu_U.
\end{equation}
The strong energy condition says $Ric(v,v) \geq 0$ for any
time-like vector $v,$ and thus if this condition is satisfied,
then clearly the only way that the action integral for $v=u=w$ can vanish is
for $Ric(u,u)$ to vanish on $U.$ But, this does not seem to
obviously allow us to conclude that $Ric=0$ on $U.$  However, on
physical grounds, it should allow us to conclude $Ric=0$ on $U.$
That is, if we fill spacetime with observers everywhere, then if
nobody observes any gravitating energy, there should be none.  On the other hand, if $A(v,v,V)=0$ for every timelike vector field on any open subset $V$ of $U,$ then $Ric=0$ on $U$ by the observer principle (corollary \ref{integraltensorfieldobserverprinciple}) and we conclude that there is no gravitational source in $U.$

In order to make use of the total gravitational source energy momentum stress tensor, $H,$ in a
setting similar to that of Liu and Yau, \cite{YAU3} or Wang and
Yau \cite{WY1},\cite{WY2}, one would assume an appropriate energy
condition, and then for a space-like hyper-surface $K$ with future
time-like unit normal field $u,$ it is natural to consider
$H(u,u)\mu_K$ where $\mu_K$ is the volume form due to the
Riemannian metric induced on $K.$ The integral of $H(u,u)\mu_K$
over all of $K$ should be the total energy inside $K.$

More generally, if we assume that $H$ is dominantly non-negative,
that is, it satisfies the analogue of the dominant energy
condition for $T,$ then given another reference future pointing
time-like vector field $k,$ one might then integrate $H(u,k)
\mu_K$ over $K.$ If a 2-form $\alpha$ can be found on $K$
satisfying $d \alpha = H(u,k)\mu_K,$ and if $K$ is a 3-submanifold
with boundary $B,$ then by Stokes' theorem, the total energy
inside $K$ is related to the integral of $\alpha$ over the
boundary $B$ of $K.$

In particular, we say that $K$ is {\it instantaneously static} if
there is an open set $U \subset M$ containing $K$ and a vector
field $k$ on $U$ which is future pointing and orthogonal to $K$
and which satisfies Killing's equation, at each point of $K.$ If
$\omega=k^*$ is the dual 1-form to $k,$ so $\omega(v)=g(k,v)$ for
all vectors $v,$ then this is equivalent to requiring $Sym(\nabla
\omega)|K=0$ or equivalently that $(d\omega)|K=2\nabla \omega|K,$
which to be perfectly clear means that the difference
$d\omega-2\nabla \omega$ as calculated on $U$ in fact is zero at
each point of $K.$ Then as in the Komar \cite{KOMAR} integral (see
\cite{PINTONETOSOARES}, \cite{WALD}, pages 287-289 or
\cite{POISSON}, pages 149-151) it follows that

\begin{equation}\label{quasilocalmass0}
(-1/8 \pi)d*d\omega=(1/4 \pi)Ric(u,k)\mu_K=H(u,k)\mu_K.
\end{equation}
Here, $*$ denotes the Hodge star operator on $M.$ Thus, $(-1/8
\pi)*d\omega$ is a potential for the total energy on $K.$ For any
closed 2-submanifold $S$ of $K$ we define the quasi-local total
energy $H(S,k)$ by

\begin{equation}\label{quasilocalmass1}
H(S,k)=-\frac{1}{8 \pi}\int_S *d(k^*).
\end{equation}
Thus, if $K_0 \subset K$ is a submanifold with boundary $S=\del
K_0,$ then by Stokes' Theorem, (\ref{quasilocalmass1}) becomes

\begin{equation}\label{quasilocalmass2}
H(S,k)=-\frac{1}{8 \pi}\int_{K_0} d*d(k^*)=\int_{K_0} H(u,k)
\mu_K,
\end{equation}
which is then non-negative if the strong energy condition holds.
Thus, if $H(S,k)=0,$ with $S=\del K_0,$ then by
(\ref{quasilocalmass2}), under the assumption that the strong
energy condition holds, we would conclude that $Ric(u,k)=0$ on
$K_0.$ But, this means that $Ric(u,u)=0$ on $K_0,$ which means
that none of the observers in the field detect any energy.

Notice that if we have an asymptotically flat spacetime with a
global time-like Killing vector field orthogonal to a spacelike
slice, normalized to be a unit vector at spatial infinity, then
our definition of the quasi-local total energy would be exactly
the Komar mass which is well known in the literature \cite{SZ}.
Thus in the expression $H(S,k),$ the normalization for $k$ is
determined by requiring that it be of unit length at the event at
which the observer is located.  If the observer is located so that
$S$ is in the observer's causal past, then it would seem we must
assume that the domain of $k$ contains this past light cone.

In general, if $k$ is a Killing field on all of the open set $U,$
then being orthogonal to $K$ means (\cite{WALD}, page 119,
(6.1.1)) that also $\omega \wedge d\omega =0,$ where $\omega=k^*.$
Then (see \cite{WALD}, page 443, (C.3.12)) we find, using
$f=ln(|g(k,k)|),$

\begin{equation}\label{quasilocalmass4}
d\omega=-\omega \wedge df,
\end{equation}
and using the fact that here $*[\omega \wedge df]=-(e^{f/2}D_nf)
\mu_S,$ where $n$ is the outward unit normal to $S=\del K_0,$ and
$\mu_S=dA$ is the area 2-form on $S,$ we obtain finally,

\begin{equation}\label{quasilocalmass5}
H(S,k)=-\frac{1}{8 \pi} \int_S e^{f/2}D_nf dA.
\end{equation}
In particular, for the vacuum Schwarzschild solution with mass
parameter $\M,$ taking the Killing field $k=\del_t,$ we see easily
that the mass calculated using the integral
(\ref{quasilocalmass5}) gives the value $\M$ for the mass enclosed
by any sphere centered at the "origin" when we normalize the
Killing field to be a unit vector at infinity.  On the other hand,
if we calculate that value of the integral by normalizing to make
the Killing vector a unit at radial coordinate $r_0,$ as $H(S,k)$
is homogeneous in $k,$ the normalizing constant comes out
resulting in

\begin{equation}\label{quasilocalmass3}
\M_{r_0}=\frac{\M}{[1-\frac{2 \M}{r_0}]^{1/2}}.
\end{equation}
Keeping in mind this is now the total gravitational source energy, gravitational and
massive, this indicates a problem develops as $r_0 \rightarrow 2
\M,$ even though we know it is not a real problem for the
spacetime. The problem is probably due to the normalization
involving the Schwarzschild radial coordinate which obviously
breaks down at $r_0=2 \M.$ After all, what we are integrating is
equivalent by Stokes' Theorem to integrating $H(u,k)\mu_K,$ when
$S=\del K_0,$ and we really want to be integrating $H(u,u)\mu_K.$
We do not have the actual potential. On the other hand, this does
seem to reflect correctly the fact that as one approaches the
horizon of a black hole it takes infinite force to keep from
falling in.

Let us now use these results to compute the spacetime gravitational source integral
(\ref{action1}). To do this, let us assume that $U$ is foliated by
spacelike submanifolds determined by the Killing parameter $t$ on
$U,$ so the leaves are the level manifolds of $t,$ and that $u$ is
orthogonal to each leaf of this foliation. We assume that $k=hu$
is the Killing vector field on all of $U,$ where $h$ is the
redshift factor. Assume now that $K$ is a compact 4-submanifold of
$U$ with boundary $\del K$ and that the intersection of $K$ with
the leaf at time $t$ is $K_t$ with boundary $\del K_t$ which is
the intersection of $\del K$ with the leaf at time $t,$ for $t_1
\leq t \leq t_2.$ Then $k^*=hu^*$ and we see the volume form on
$K$ can be expressed as $\mu_K=\mu_{K_t}hdt.$ This means that the
action integral can be expressed as

\begin{equation}\label{action2}
A(u,u,K)=\int_{t_1}^{t_2} \int_{K_t}H(u,u)h
\mu_{K_t}dt=\int_{t_1}^{t_2}\int_{K_t}H(u,k)\mu_{K_t}dt=\int_{t_1}^{t_2}H(K_t,k)dt.
\end{equation}
In the particular case of the Schwarzschild solution, this leads
immediately to

\begin{equation}\label{action3}
A(u,u,K)=\M \Delta t,
\end{equation}
with the Killing vector normalized so the redshift factor is 1 at
infinity.

We can now see that the real quasi-local mass problem is the fact that in
integrating over a spatial slice, the proper time is elapsing at
different rates at different parts of space, so that in general,
the quasi-local mass definitions have to contend with this problem
whether they like it or not \cite{MITRA1}, \cite{MITRA2}.  Thus,
in general, if we have no Killing vector field, if $t$ is an
arbitrary "time" function on $U,$ and if $K(t_1,t_2)$ is the
submanifold of $U$ given by $t_1 \leq t \leq t_2,$ then we should
simply think of $A(u,u,K(t_1,t_2))=\M_{av}\Delta t,$ where now
$\M_{av}$ is the average quasi-local mass over the given time
interval. This naturally leads to taking

\begin{equation}\label{masst}
\M(t)=\frac{d}{dt}A(u,u,K(t_1,t)),
\end{equation}
as the mass at time $t.$ Thus for the Schwarzschild solution we
now find that the mass is $\M,$ the mass parameter, which
indicates that the mass parameter is the total gravitating source mass including that
due to gravitational field energy.

As pointed out above in the discussion of the Jacobi curvature operator as the energy momentum stress of everything including gravity, the total potential of all the energy as seen by $u$ appears to be $trace(|\J(u,u)|)$ which only vanishes where spacetime is flat.  We should therefore regard the integral

$$\int_K trace(|\J|) \mu$$ as the total energy momentum stress of everything including gravity and {\it all its potential} in the region $K$ of spacetime.  For if the quadratic form of this multilinear map vanishes, then by Corollary \ref{integraltensorfieldobserverprinciple}, we would have $trace(|\J|)=0$ and hence $\J=0,$ so spacetime is flat.

\med


\section{GRAVITATIONAL RADIATION}

\med

Frank Tipler has pointed out that due to the definition of the gravitational source
energy momentum stress tensor of the gravitational field in terms
of equation (\ref{gravenergydensity}), it follows that the speed
of sound in the gravitational field equals the speed of light, for
a vacuum electromagnetic field. More specifically, he points out
that the gravitational source gravitational field energy density tensor equals the
electromagnetic energy density tensor exactly, according to
(\ref{gravenergydensity}) as the contraction of the latter is
zero. Thus the speed of sound in the gravitational field due to a
vacuum electromagnetic field is equal to the speed of sound in a
vacuum electromagnetic field, which is of course the speed of
light. This certainly seems reasonable given our way of viewing
the gravitational source energy density in terms of electromagnetic
fields (the laser light box). On the other hand, in the vacuum there is no gravitational source energy of the
gravitational field, and consequently from this point of view, a
gravitational wave carries no gravitational source field energy through
the vacuum.  That is there is no part of a vacuum gravitational wave solution which could appear to gravitate.

This point of view has been elaborated previously
\cite{COOPERSTOCK}, in what has become known in the literature as
the Cooperstock hypothesis, purely on mathematical and somewhat
philosophical grounds that the equations for the various pseudo
tensors have no content in the vacuum, and as well, on the basis
of his detailed computation \cite{COOPERSTOCK} involving an
example of a capacitor in a gravitational wave. On the other hand,
the gravitational source energy momentum stress tensor of the gravitational field is
not divergence free which means it can dissipate in one place and
appear in another. That is, the time varying matter tensor causes
gravitational source energy to disappear into the vacuum and then
reappear elsewhere where there is matter.  Since it is really $(1/4 \pi)\J$ that should be viewed as the total energy momentum stress, we view this as the effect of the tracial part of the Jacobi curvature going over into off diagonal components in the vacuum, with diagonal components balancing to zero in the vacuum, and then in the presence of matter, becoming unbalanced with possibly some off diagonal components feeding back over into the diagonal components. This of course is a computational
difficulty for analyzing gravitational radiation, and it means
that using a coordinate conserved pseudo-tensor or any other
device which is conserved in some useful sense is certainly
justified if it aids in calculation.

For instance, Hayward \cite{HAYWARD}, in analyzing gravitational
radiation in a quasi-spherical approximation defines an energy
density tensor for the gravitational radiation which carries
positive energy and in the second approximation reacts on the
solution when included in the source of the truncated Einstein
equation. This means that one can in special circumstances use
special definitions in a way that can be usefully interpreted
physically, even if it is technically a fiction. On the other
hand, to quote \cite{BORISSOVA}, "At the present time there are
many solutions of the gravitational wave problem, but none of them
are satisfactory...another difficulty: there is no general
covariant d'Alembertian, which being in its clear form, could be
included into the Einstein equations."

However, in answer to this last statement from \cite{BORISSOVA}, we can now look to our Jacobi curvature operator and recall (\ref{dalembertian1}), (\ref{dalembertian2}), and Proposition \ref{dalembertian3}.  To review,  according to our remarks immediately following (\ref{Jacobi3}), it is the case that 
$-\J(u,u)=\nabla_u^2,$ the proof preceding (\ref{Jacobi3}) actually being valid on any semi-Riemannian manifold.  Notice that $\J(u,u)$ is a tensor field as is $\J,$ but $\nabla_u^2$ is not a tensor field.  We can view this as saying that $\J(u,u)$ is a "tensorized" version of $\nabla_u^2.$  Thus, if $(u=e_0,e_1,e_2,...,e_n )$ is a local orthonormal frame field, then we can apply this to each of the frame fields and we see that

\begin{equation}\label{blockJ}
-g^{\alpha \beta}\J(e_{\alpha},e_{\beta})={\bf \square}_J
\end{equation} 
can be thought of as a tensorized version of the d'Alembertian operator, where we have given it a subscript in order to distinguish it from the actual d'Alembertian operator, $\square.$  But this last equation makes sense in any local frame field.  In fact, according to Proposition \ref{dalembertian3} or equation (11.32) on page 286 of \cite{MTW}, it is immediately obvious that in Riemann normal coordinates, at the point at the coordinate system origin,

\begin{equation}\label{blockg}
(3/2) \square g_{\mu \nu}=g({\bf \square}_J e_{\mu}, e_{\nu}   )=g(-g^{\alpha \beta} \J(e_{\alpha}, e_{\beta})e_{\mu}, e_{\nu})=-g^{\alpha \beta} g(\J(e_{\alpha}, e_{\beta})e_{\mu}, e_{\nu})
\end{equation}
But, $\J$ satisfies the exchange property $\J^{\dagger}=\J,$ by Proposition \ref{pc8}, which means that just like for the Riemann curvature tensor, we have

\begin{equation}\label{exchange}
g(\J(w,x)y,z)=g(\J(y,z)w,x), \mbox{ for all tangent vectors } w,x,y,z \mbox{ at any give point.}
\end{equation}

This means that in any local frame field $(e_{\alpha})$ on $M,$ we have

\begin{equation}
-g({\bf \square}_J e_{\mu}, e_{\nu}   )=g^{\alpha \beta} g(\J(e_{\alpha}, e_{\beta})e_{\mu}, e_{\nu})=g^{\alpha \beta}g(\J(e_{\mu},e_{\nu})e_{\alpha}, e_{\beta} )=trace [\J(e_{\mu},e_{\nu})]=R_{\mu \nu}
\end{equation}
where as usual, in index notation, as in \cite{MTW},

$$R_{\mu \nu}=Ric(e_{\mu}, e_{\nu}).$$
Suppose now that $(M,g,T)$ is a spacetime model.  The Einstein equation is then

$$Ric=4 \pi G[2T-c(T)g].$$
Put $S= [2T-c(T)g],$ for convenience.  Thus, $S$ is the source of curvature through the Einstein equation $Ric= [4 \pi G] S.$  Thus, we can chose a smooth tensor transformation $P: TM \lra TM$ which is a tensor field with

$$g(Pv,w)=-S(v,w),\mbox{ for all } (v,w) \in TM \oplus TM.$$
It is thus natural to think of $P$ as the total source momentum transformation field including that due to gravity.

We then have,
$$-g({\bf \square}_J e_{\mu},e_{\nu})=Ric(e_{\mu},e_{\nu})=-[4 \pi G] g(Pe_{\mu},e_{\nu}),$$
hence finally

\begin{equation}\label{GRVWAVE}
{\bf  \square}_J=[4 \pi G] P,
\end{equation}
which is again the Einstein equation, but now in the form of a wave equation with source $P$ which is effectively the curvature source as momentum density of all matter and fields including gravity itself.  In normal coordinates, using (\ref{blockg}) or (\ref{dalembertian3}), this last equation becomes

$$\square g_{\mu \nu}=(2/3)[4 \pi G] S_{\mu \nu}, \mbox{ at the point of origin of the normal coordinates.}$$
{\bf Thus the Einstein equation for gravity is equivalent to requiring that at each point, in Riemann normal coordinates with that point as origin, the wave equation  for the metric tensor with source $(2/3)[4 \pi G]S$ is satisfied at that point.}

We must keep in mind here, that our view of the gravitational source energy density of
the gravitational field is in complete agreement with the Einstein
equation, so it cannot contradict any of its results and likewise,
no result of solving the Einstein equation can possibly contradict
our view of the gravitational source energy density of the gravitational field. In
particular, both Carl Brans and Frank Tipler (in personal
communication) have expressed concerns about how the view
expressed here on the gravitational source gravitational energy momentum stress tensor
relates to the analysis of the energy dissipation from binary
pulsars, an issue also addressed in \cite{COOPERSTOCK} in relation
to the Cooperstock hypothesis. Particularly relevant here are the
calculations in \cite{COOPERSTOCK2} and \cite{COOPERSTOCK3} of the
gravitational radiation due to a rotating rod, showing the general
relativistic calculation to be consistent with the Cooperstock
Hypothesis. The idea that gravitational radiation carries energy
away may be a useful idea for keeping track of the various
"energies", or conserved quantities, in the system, but the
calculations always involve a choice of reference background
metric which produces the apparent "energy".  However, our view here is that in the vacuum there is no gravitational source gravitational energy momentum stress, as  the Ricci tensor vanishes, but the full jacobi curvature operator can be non-vanishing in the vacuum, so the gravity wave would be carrying gravitational energy which does not serve as gravitational source.  Putting a detector in the path of the wave destroys the vacuum and therefore converts the non-source energy into source energy which can be measured.  Alternately, it seems
that there is no mathematical vacuum in realistic models of the
universe, because of background radiation and possibly dark
energy, so there is background matter to carry even source gravitational
energy.

On the other hand, there certainly must be something going on in the vacuum, and as our previous discussion of the Jacobi curvature operator indicated, we can only have $trace(|\J|)=0$ in flat spacetime, so any model in which there is nonvanishing curvature, which would certainly include any non-static model with any gravity waves of any kind would definitely have a non-vanishing of $trace(|\J|).$

It would seem from this view of the Einstein equation as a tensorized wave equation, that the general theoretical treatment of gravity waves should follow, but we will leave this topic for future research.  One can notice that a number of the formulas in the subject relating curvature to gravitational radiation in \cite{MTW}  are rather immediate from the Jacobi curvature operator and the geometrization formula for algebraic Jacobi curvature operators presented in the preceding sections.

\med


\section{BLACK HOLES}

\med

Since the vacuum has no gravitational source energy density, it follows that the
assignment of mass to black holes or to cosmological solutions is
heavily influenced by boundary conditions assumed for the solution
to the Einstein field equations (see e.g. \cite{NEST}). For
instance, in the case of a Schwarzschild black hole, if we try to
integrate over a region enclosed by a sphere, we find that it is
not the boundary of any compact spacelike slice.  The preceding
analysis leading to (\ref{quasilocalmass3}) would have to be
modified to include also an inner boundary as a cutoff so that the
region bounded is compact.  On the other hand, as it stands, for
the Schwarzschild case, the mass is $\M$ no matter what matter
resides in the interior as long as the matter is not all inside
the Schwarzschild radius, which indicates that it is reasonable to
assign the artificial mass $\M$ to the Schwarzschild black hole
with mass parameter $\M,$ as a reflection of a boundary condition,
the boundary being the black hole horizon.  Thus, in the general
black hole case, one of the various definitions of quasi-local
mass must be adopted. As far as we can see, the actual gravitational source energy
momentum stress tensor of the gravitational field cannot help
here. On the other hand, as the curvature operator is non-vanishing, this means that $trace(|\J|)$ is non-vanishing inside the black hole horizon, which would seem to indicate again that we should think of this as the real energy momentum stress of everything including any potential energy of the gravitational field beyond mere gravitational source energy.

\med


\section{SUPERENERGY}

The main tensors, excluding pseudo-tensors and teleparallel gravity frame dependent tensors, which have been considered as important for the description of energy density in General Relativity are the Bel and Bel Robinson tensors which have been generalized to what are called {\bf Superenergy Tensors} in very general settings.  For some of the relevant literature, we refer the reader to the work of Bergqvist \cite{BERGQVIST1}, \cite{BERGQVIST2}, Bonilla and Senovilla \cite{BONILLA1},  Dereli and Tucker \cite{DERELI1}, Deser \cite{DESER1}, Robinson \cite{ROBINSON1}, Pozo \cite{POZO1}, Senovilla \cite{SENOVILLA1}, and Teyssandier \cite{TEYSSANDIER1}.   The most important aspect of these tensors is their positivity property.  In the main case of interest, they are fourth rank tensors which are positive when all four vector inputs are in the forward light cone.  In fact, if such a tensor vanishes on putting in a single timelike vector for all four inputs, then the whole tensor vanishes.  This is highly desirable for any energy tensor, as it says in effect if any single observer can see no energy, then there is none, and spactime is flat.  A treatment of these positivity properties is most efficiently done with Clifford Algebra and spinors, but this would take us too far afield.  However, we will look to see the basic properties of the most important example which is the Bel Tensor, and look at how the various terms relate to the Jacobi and Riemann curvature tensors.

 It is clear from the formulas for these tensors as expressed in terms of the Riemann curvature tensor that as they are quadratic functions of the Riemann operator, they are likewise quadratic functions of the Jacobi operator.  However, the main drawback of these tensors is that in the natural units, they have the units of the square of the energy, and thus some effort has gone into developing a square root of these tensors, \cite{BONILLA1}.  But, in our opinion, since the Bel tensor for instance is a quadratic function of the Jacobi curvature operator, it would seem to lend more weight to our contention that the Jacobi curvature operator should be considered the true carrier of all the energy, momentum, and stress.  We can take for the basic definition of the Bel tensor, $\B =\B_{\R}$, in its purely covariant form, the expression

\begin{equation}\label{BEL TENSOR DEF}
\begin{aligned}
\B(w,x,y,z) = {} &  \B_0(w,x,y,z) + \B_0(w,x, z,y)     \\
                        &  - \frac{1}{2} [g(w,x) \B_0(e^{\gamma}, e_{\gamma}, y,z) +g(y,z) \B_0(w,x, e^{\gamma}, e_{\gamma}) ] \\
                        & +\frac{1}{8} g(w,x)g(y,z) \B_0(e^{\gamma},e_{\gamma}, e^{\lambda}, e_{\lambda}),
\end{aligned}
\end{equation}
where

\begin{equation}\label{BEL 0}
\B_0(w, x, y, z) = g( w,  \R( y, e_{\sigma}) e_{\rho}) g( x,  \R(z, e^{\sigma}) e^{\rho}).
\end{equation}
Here we are using the MTW \cite{MTW} conventions, but because of the symmetries of the Riemann tensor and the quadratic nature of the expression for the Bel tensor, the result is really independent of the conventions chosen for the Riemann tensor and its index notation.  Here of course, $(e_{\alpha})$ is any frame for $V$ and $(e^{\alpha})$ is the corresponding dual frame under the natural identification of $V$ and its dual $V^*=L(V;\bR)$ via the non-degenerate inner product $g$ on $V.$  We take $\R$ to be any (algebraic) metric Riemann curvature operator in $PC(V) = L(V,V; L(V;V)).$  In particular, if we take the Weyl tensor in place of the Riemann curvature tensor, the result is the Bel-Robinson tensor.

Because $\R^* = -\R = \tilde{\R},$ we can write

$$\B_0(w, x, y, z) = g(  \R(  e_{\sigma}, y)w, e_{\rho}) g(e^{\rho}, \R(e^{\sigma}, z) x ) =     g(  \R(  e_{\sigma}, y)w, \R(e^{\sigma}, z) x )  = g(\R^{\#}(w, y) e_{\sigma}, \R^{\#}(x, z) e^{\sigma}) .   $$
Thus,

\begin{equation}\label{B00}
\B_0(w, x, y, z) =  trace( \R^{\#}(x, z)^* \R^{\#}(w,y)).
\end{equation}

\medskip

Now, it is natural to define a bilinear product $m:PC(V) \times PC(V) \lra PC(V),$ which we will denote as $m(A,B) = \langle A | B \rangle,$ with

$$m(A,B) = \langle A | B \rangle = trace(A^*B), \mbox{ where } g(m(A,B)(w,x) y ,z) =trace(A(w,x)^* B(y,z)).$$
The last term of course is the trace of the composition of the operators, which of course does not depend on the order. 
Moreover, the trace of the adjoint of a linear operator is the same as the trace of the operator and adjoint reverses multiplicative order.  Explicitly, the ordinary Hilbert Schmidt  non degenerate inner product on $L = L(V;V)$ is given by

$$\langle A | B \rangle = trace(A^*B) = \langle B | A \rangle \mbox{ since } trace(AB) = trace(BA) \mbox{ and } trace(A^*) = trace(A), \mbox{ for all } A,B \in L.$$
It follows that in $PC(V)$ we have

$$\langle A | B \rangle^{\dag} = \langle B | A \rangle.$$
In detail, we have

$$\langle A | B \rangle^{\dag} (w,x,y,z) = \langle A | B \rangle (y,z,w,x) = trace( A(y,z)^* B(w,x) ) = trace( B(w,x)^* A(y,z)) = \langle B | A \rangle (w,x,y,z). $$

Thus,

$$\B_0 (w,x,y,z) = \langle \R^{\#} | \\R^{\#} \rangle (x,z,w,y) = \langle \R^{\#} | \\R^{\#} \rangle (w,y,x,z) = \B_0(x    , w  , z  , y)   $$ 
that is to say, $\B_0$ has the property

\begin{equation}\label{BEL1}
\B_0(w,x,y,z)= \B_0(x,w,z,y).
\end{equation}

With $[2,3]$ denoting the transposition permutation of the integers which interchanges 2 and 3, we have now simply

$$\B_0 = [2,3] \langle \R^{\#} | \\R^{\#} \rangle.$$

\medskip

From Proposition \ref{nuQT} we know that $\R^{\#}$ has the metric exchange property and also $\R^{\#}(x,y)^*= \R^{\#}(y,x),$ for all $x,y \in V.$

For $v \in V,$ let $v^*$ denote the member of $V^*$ given by $v^*(x)=g(v,x),~x \in V.$  Then going back to (\ref{BEL 0}) we have

$$\B_0(w, x, y, z) = w^*(\R^{\#}(e_{\rho}, e_{\sigma}) y) \cdot x^*(\R^{\#}( e^{\rho}, e^{\sigma}) z) = [w^* \otimes x^*][\R^{\#}(e_{\rho}, e_{\sigma}) \otimes \R^{\#}(e^{\rho}, e^{\sigma})] (y \otimes z).$$
This means that considering $\B_0 \in PC(V)$ means $\B_0(w,x)$ is the composition of linear operators

$$\B_0(w,x) = [w^* \otimes x^*] [\R^{\#}(e_{\rho}, e_{\sigma}) \otimes \R^{\#}(e^{\rho}, e^{\sigma})],$$
and using $g$ for the inner product  induced on $V \otimes V$ by $g$ on $V,$ we have

$$g(\B_0(w,x)y,z) = g(w \otimes x , [\R^{\#}(e_{\rho}, e_{\sigma}) \otimes \R^{\#}(e^{\rho}, e^{\sigma})] (y \otimes z)).$$
But, as $\R^{\#}(a,b)^* = \R^{\#}(b,a)$, it follows from the relation $[A \otimes B]^*=A^* \otimes B^*$ for operators, that if

$$T = T_{\R} = [\R^{\#}(e_{\rho}, e_{\sigma}) \otimes \R^{\#}(e^{\rho}, e^{\sigma})], \mbox{ then } T = T^*,$$
and

$$\B_0(w, x, y, z) = g(w \otimes x , [T_{\R}][ y \otimes z]).$$

This proves that $\B_0$ has the metric exchange property,

$$\B_0^{\dag} = \B_0.$$

\medskip

As the Jacobi curvature operator is

$$\J = Sym( \R^{\#} ),$$
it follows that

$$ \B_0(w,w) = \langle \J | \J \rangle (w,w),  \mbox{ so } Sym_2 (\B_0) = \langle \J | \J \rangle.$$
This also means that

$$\B_0(w,w,w,w) = trace([ \J(w,w)]^2).$$
But, if $w$ is timelike, then as  $\J(w,w)w = 0$ and $\J(w,w)$  is self-adjoint, it is then completely determined as a self-adjoint operator on the Euclidean orthogonal complement of $w$ if $g$ is a Lorentz metric.  Thus:

\begin{equation}\label{B0 POS}
\mbox{If } g \mbox{ is Lorentz and $w$ is timelike,  then } \B_0(w,w,w,w) \geq 0 \mbox{ and } \mbox{ if }  \B_0(w,w,w,w)=0 \mbox{ then } \J(w,w) = 0.
\end{equation}

\bigskip

Now, returning to the definition of the Bel tensor (\ref{BEL TENSOR DEF})  , we see that the second term is $\B_0^*(w,x,y,z),$ and therefore the sum of the first two terms is self-adjoint.  Let

$$\B_1 = \B_0 + \B_0^*.$$
Using property given in (\ref{BEL1}), we have

$$\B_1(x,w,y,z) = \B_0(x,w, y, z) + \B_0(x, w, z, y) = \B_0( w, x, z, y) + \B_0(w,x,y,z) = \B_1(w,x,y,z).$$  
This means that in $PC(V)$ the curvature operator $\B_1$ is symmetric as well as self-adjoint.  Since $\B_0$ has the metric exchange property, now, using that and again using (\ref{BEL1}),

$$\B_1(w,x,y,z) = \B_0(w,x,y,z) + \B_0(w, x,z, y) = \B_0(y, z,w, x) + \B_0(z, y, w, x)$$$$ = \B_0(y,z,w,x) + \B_0( y,z,x,w) = \B_1(y,z,w,x),$$
and therefore $\B_1$ has the metric exchange property.  Thus, $\B_1$ is symmetric and self-adjoint and has the metric exchange property.

\medskip

Now, considering the two terms of (\ref{BEL TENSOR DEF}) inside the bracket in the middle, the first of these is $g(w,x)\B_0(e^{\gamma}, e_{\gamma}, y,z).$  But we now know that $\B_0$ has the metric exchange property, so this term can be expressed as

$$g(w,x)\B_0(e^{\gamma}, e_{\gamma}, y,z) = g(w,x)\B_0(y, z, e^{\gamma}, e_{\gamma}) = g(w,x) trace( \B_0(y,z)) = [g \otimes trace(\B_0)](w,x,y,z).$$
We now see that the second term inside the brackets in (\ref{BEL TENSOR DEF}) is simply $[g \otimes trace(\B_0)]^{\dag}$ which means that we have 

\begin{equation}\label{BEL TENSOR 1}
\B = \B_1 - \frac{1}{2}[ [ g \otimes  trace(\B_0)] + [g \otimes  trace(\B_0)]^{\dag}] +\frac{1}{8} [ g \otimes g] [ \B_0(e^{\gamma},e_{\gamma}, e^{\lambda}, e_{\lambda})].
\end{equation}
Since $\B_0$ is symmetric, it follows that $trace(\B_0)$ is also symmetric, and therefore that $\B$ is symmetric as a member of $PC(V),$ meaning that

$$\B(w,x)=\B(x,w),$$
and as well now, $\B$ has the metric exchange property as the sum of the two terms in brackets do, and already we know $\B_1$ has the metric exchange property, and as all terms are self-adjoint, it follows that $\B$ has all the properties of an abstract Jacobi curvature operator except that it need not be acyclic (that is it need not satisfy the first Bianchi identity).  In particular, as $\R$ is a linear function of $\J,$ we see that $\B$ is a quadratic function of $\J$ as $\B_0$ is.

While the positivity properties of the Bel tensor are most efficiently analyzed with Clifford algebra, we can see directly some aspects of this directly from the properties of $\J$ and $\R.$

If $g$ is a Lorentz metric and $u$ is timelike, then $g(u,u) < 0,$ so that the sign of the second term is the sign of $trace(\B_0)(u,u)=trace[\B_0(u,u)] =\B_0(u,u, e^{\rho} , e_{\rho}).$
Since $\R^{\#}(x,y)^* = \R^{\#}(y,x),$ from (\ref{B00}) we see that

$$trace[\B_0(u,u)]=\B_0(u,u, e^{\rho} , e_{\rho}) = trace(\R^{\#}( u, e_{\rho})\R^{\#}(u, e^{\rho})).$$

\bigskip

Now, in general,  by (\ref{B00}) we have 

$$\B_0(w, x, y, z) = \langle \R^{\#}(x, z) | \R^{\#}(w, y) \rangle = \langle \R^{\#}(w, y) |  \R^{\#}(x, z) \rangle.$$
By  Corollary \ref{RRJJ}, we have

\begin{equation}\label{RhashJhash}
\R^{\#} = \J + \frac{1}{2} \R \mbox{ and } \J^{\#} = \frac{3}{4} \R -\frac{1}{2} \J. 
\end{equation}
However, in $L(V;V)$ with the Hilbert Schmidt inner product defined by the trace, the self adjoint operators are orthogonal to the anti-adjoint operators, and the values of $\J$ are all self adjoint, whereas all the values of $\R$ are anti-adjoint.  Thus,

\begin{equation}\label{B00 DECOMP}
\B_0( w, x, y, z ) = \langle \J(w,y) | \J(x, z) \rangle  +  \frac{1}{4} \langle \R( w, y ) | \R(x, z ) \rangle,
\end{equation}
and

\begin{equation}\label{TRACEB00}
trace[ \B_0(w,w) ] =  \B_0(w,w, e_{\rho} , e^{\rho}) =       \langle \J ( w, e_{\rho} ) | \J( w, e^{\rho} ) \rangle + \frac{1}{4} \langle \R( w, e_{\rho} ) | \R( w, e^{\rho} ) \rangle.
\end{equation}

We need to look more closely at $\J(x,y)$ as a self adjoint operator on $V,$ especially in case that $V$ is Lorentz.  First, we note that

$$\J(u,x)u = \J(x,u)u = Sym(\R^{\#})(x,u)u =(1/2)  [ \R^{\#}(x,u)u +  \R^{\#}(u,x)u]  $$$$ = (1/2)   [ \R(u,u)x + \R(u,x)u] = (1/2) \R(u,x)u,$$
for any vectors $u, x \in V.$   But, $\R(u,x)u = - \R(x, u )u = - \R^{\#}(u,u)x = - \J(u, u)x.$  Thus,

\begin{equation}\label{JSWITCH}
\J(u,x)u = (1/2) \R(u,x) u = - (1/2) \J(u,u)x,
\end{equation}
and in particular, since $\R(u,u) = 0,$ we have $\J(u,u)u = 0,$ for all $u \in V.$  As $\R$ is anti symmetric and has the exchange property, it follows that

$$g(\R(x,u)u,u) = g(\R(u,u)x,u) = 0,$$
and therefore always $\R(u, x)u = 2 \J(u,x)u$ is orthogonal to $u.$   Let us restrict our attention to the case of unit vectors, timelike or spacelike.   Thus, assume $|g(u,u)| = 1.$   Let $Q_u$ be the orthogonal projection in $L(V;V)$ with $Q_u V = \bR u \subset V,$ so $Q_u$ is the self adjoint projection operator projecting onto the line through $u.$  Let $P_u$ be the orthogonal complementary projection to $Q_u,$ so $P_u = 1_V - Q_u.$  Then, as $u$ cannot be null, being as $|g(u,u)| = 1,$

$$Q_u x = g(u,u) g(x,u)u, \mbox{ for all } x \in V.$$
Since $\J(u,x)u$ is orthogonal to $u,$ it follows that

$$Q_u \J(u,x) Q_u = 0 \mbox{ and } \J(u,x) Q_u = P_u \J(u,x) Q_u,$$
and therefore,

\begin{equation}\label{EXPAND JX}
\J(u,x) = P_u \J(u,x) P_u + P_u \J(u,x) Q_u + Q_u \J(u,x) P_u = P_u \J(u,x) P_u + \J(u,x) Q_u + Q_u \J(u,x) P_u.
\end{equation}
Consequently, putting $J_x = \J(u,x), ~~ P = P_u,$ and $Q=Q_u,$ as $PQ = 0 = QP,$

\begin{equation}\label{EXPAND JX SQR}
J_x^2 = (PJ_x P)^2 +P J_x P J_x Q     +  P J_x Q J_x P + Q J_x P J_x P + Q J_x P J_x Q.  
\end{equation}
In particular, since already $0 = \J(u,u)u = J_u u,$   we have $J_u Q = 0,$ so now from (\ref{EXPAND JX}), it follows that

$$J_u = PJ_u P + Q J_u P .$$
But, $J_u =\J(u,u)$ is self adjoint, so as the first term on the right side of the previous equation is self adjoint, the second term must also be, and thus again using $ J_u Q = 0,$

$$ Q J_u P =  [ Q J_u P]^*   =  P J_u Q =0.$$
Alternately, the fact that $J_u Q = 0$ means that $J_u Q$ is self adjoint, but as $J_u$ is self adjoint, taking adjoints gives $Q J_u = 0$ as well, which means that $Q$ commutes with $J_u,$  and therefore so must $P = P_u$ also commute with $J_u$ which immediately gives $P J_u Q = 0,$ since $PQ = 0.$  
Thus, finally, 

\begin{equation}\label{JU}
\J(u,u) = P_u \J(u,u) P_u, \mbox{ for all non null } u \in V,
\end{equation}
and therefore also

\begin{equation}\label{JUSQR}
[\J(u,u)]^2 =  [ P_u \J(u,u) P_u   ]^2, \mbox{ for all non null } u \in V.
\end{equation}

In particular, again, as $\B_0(u,u,u,u) = trace([\J(u,u)]^2),$ if $P_u V$ is spacelike so $g$ on $P_uV$ is positive definite, then $\B_0(u,u,u,u) \geq 0$ and if $\B_0(u,u,u,u) $ vanishes, then $\J(u,u) = 0.$

Returning now to the more general case of $J_x = \J(u,x),$ we have by ( \ref{EXPAND JX SQR}) and the fact that $trace(AB) = trace(BA),$ again, as $PQ = 0,$

$$trace(J_x^2) = trace( (PJ_x P)^2     +  P J_x Q J_x P  + Q J_x P J_x Q) = \langle PJ_x P | P J_x P \rangle + \langle Q J_x P | Q J_x P \rangle + \langle P J_x Q | P J_x Q \rangle,$$
and also

$$trace(J_x^2) = trace( (PJ_x P)^2     +  P J_x Q J_x P  + Q J_x P J_x Q)  =  trace( (PJ_x P)^2     +  P J_x Q J_x   +  J_x P J_x Q) .          $$
For any $y \in V,$ since $PJ_x Q = J_x Q,$ which means $J_x u = P J_x u,$

$$PJ_xQJ_x y = g(u,u) g(J_x y , u) PJ_x u = g(u,u) g(J_x u, y) J_x u = g(u,u) g(PJ_x u, y) J_x u = g(u,u) g(PJ_x u, Py) PJ_x u.$$
It follows that if the frame is chosen so that $u$ is one of the frame vectors in an orthonormal frame, then 

$$trace( PJ_x Q J_x ) = g(u,u) g(PJ_x u, Pe_{\rho} ) g(P J_x u , e^{\rho}) =g(u,u) g^{\sigma \rho} g( PJ_x u, Pe_{\sigma}) g( PJ_x u, Pe_{\rho}),$$
so if $g$ on $PV$ is positive definite, then 

$$trace(P J_x Q J_x) = g(u,u) g(PJ_x u, PJ_x u), \mbox{ and }   g(PJ_x u, PJ_x u)  \geq 0. $$

On the other hand, since $P$ and $J_x$ are self adjoint and $P^2 = P,$

$$Q J_x P J_x u = g(u,u) g(J_xP J_x u, u) u = g(u,u) g(PJ_x u, J_x u)u = g(u,u) g(P J_x u, P J_x u)u$$
and if $g$ on $PV$ is positive definite, then we have also

$$trace(J_xPJ_xQ) = [g(u,u)]^2 g(PJ_x u, J_x u) =  g(P J_x u, P J_x u).$$
Thus, if $g$ on $PV$ is positive definite, then $trace(J_x^2) \geq 0.$





\vskip .2in

Let us first concentrate on the first term on the right hand side of (\ref{TRACEB00}).  Suppose that we take $w=u$ a timelike or spacelike vector in $V.$  Then, $\J(u,u)u = 0,$ so $u$ is an eigenvector with eigenvalue zero, and as $\J$ has all self adjoint values, it follows that $\J(u,u)$ acts as a self-adjoint operator on the orthogonal complement of $u.$  But, if $e$ is any vector, then by the exchange property, $g(\J(u,e)u, u) = g(\J(u, u) u, e) = 0.$  Thus, in case $u$ is timelike, $\J(u,e)u$ is spacelike for all $e \in V,$ spacelike or not.  Let $P_u$ be the orthogonal projection In $L(V;V)$ which projects $V$ onto the orthogonal complement of $u.$  Then $P_u$ is self adjoint and 

$$P_u \J(u,x)u = \J(u,x)u, \mbox{ for all } x \in V, \mbox{ if } |g(u,u)| = 1.$$
 In what follows, $u$ will always denote a unit vector in $V,$ meaning that $|g(u,u)| = 1.$   We will try to remind the reader of this fact but in case of an ommision, please keep this in mind.  In general, for any unit vector $u \in V,$ spacelike or timelike, using again $P_u$ to denote the orthogonal projection operator in $L(V;V)$ of $V$ onto the orthogonal complement of $u,$ we have 

$$trace(A) = g(u,u)g(Au,u) + trace( P_u A P_u ), \mbox{ for all } A \in L(V;V),$$
and therefore if $A , B \in L(V;V).$

$$\langle A | B \rangle = trace(A^*B) =   g(u, u) g(Au, Bu) + trace(   [AP_u]^*  B P_u) =  g(u, u) g(Au, Bu) + \langle   AP_u  | B P_u \rangle.$$

Now,  $[P_u \J(x,y)]^* = \J(x,y) P_u$, and $P_u^2 = P_u.$  To simplify notation, in what follows now set $P_u=P,~Q_u=Q,$ and $\J(x,y)=J.$





Notice that if $A,B \in L(V;V),$ then $PA$ and $QB$ are orthogonal in $L(V;V).$  Indeed, as $PQ=0=QP,$

$$\langle PA | QB \rangle = trace([PA]^*QB)=trace((APQB)=0.$$

Thus, interchanging roles of $A$ and $B,$ we see that likewise

$$\langle QA | PB \rangle =0.$$

So,

$$\langle J | J \rangle = \langle JP |JP \rangle +g(u,u) g( Ju,Ju) $$$$  =  \langle (P+Q)JP |(P+Q)JP \rangle   +g(u,u) g( Ju,Ju) $$

$$   = \langle PJP | PJP \rangle   + \langle QJP | QJP \rangle +  g(u,u)g( Ju,Ju). $$

But,

$$  \langle QJP | QJP \rangle = trace( PJQJP) = trace(P^2JQJ)=trace(PJQJ)=trace( QJPJ)=trace(JPJQ)$$
and $JPJQ$ is a rank one operator, so, assuming that $u$ is one of the frame vectors in the orthonormal frame,

$$trace(JPJQ)= g(e^{\rho},e^{\sigma})g(JPJQe_{\rho},e_{\sigma}) = g(u,u)g(JPJu,u)=g(u,u)g(PJu,PJu) . $$
On the other hand,

$$ g(Ju,Ju)=g([P+Q]Ju,Ju)=g(PJu,Ju)+g(QJu,Ju)=g(PJu,PJu)+g(QJu,QJu),  $$
and
$$QJu=g(u,u)g(Ju,u)u,$$
from which we conclude that

$$g(QJu,QJu)=[g(u,u)]^2[g(Ju,u)]^2g(u,u)=g(u,u)[g(Ju,u)]^2.   $$

Thus, for all $x,y \in V,$

\begin{equation}\label{JANGLE EQ 1}
\langle \J(x,y) | \J(x,y) \rangle  = \langle  P_u  \J(x,y) P_u | P_u \J(x,y) P_u  \rangle   + 2 g(u,u) g(P_u  \J(x,y)u,P_u \J(x,y)u)  +[g(u,u) g( \J(x,y)u,u)]^2
\end{equation}
$$  = \langle  P_u  \J(x,y) P_u | P_u \J(x,y) P_u  \rangle   + 2 g(u,u) g(P_u  \J(x,y)u,P_u \J(x,y)u)  +[g( \J(x,y)u,u)]^2.   $$
We can note here, that if either $x$ or $y$ coincides with u, then by the exchange property, the last square term actually vanishes, as $\J(u,u)u=0$ and $\J^*=\J.$
Thus,

$$\langle \J(x,u) | \J(x,u) \rangle  = \langle  P_u  \J(x,u) P_u | P_u \J(x,u) P_u  \rangle   + 2 g(u,u) g(P_u  \J(x,u)u,P_u \J(x,u)u), ~x \in V. $$

\bigskip

Thus, if $V$ is Lorentz and $u$ is a timelike unit vector, since then $P_u \J(x,y) P_u$ is a self adjoint operator on the Euclidean space $P_u V \subset V,$ it follows that 

$$  \langle  P_u \J(x,y) P_u | P_u \J(x,y) P_u  \rangle \geq 0, $$
and therefore, as the last term in \ref{JANGLE EQ 1} is a square,

\begin{equation}\label{JANGLE INEQ0}
\langle \J(x,y) | \J(x,y) \rangle   \geq   2g(u,u) g(P_u  \J(x,y)u,P_u \J(x,y)u)   \mbox{ for all } x,y  \in V. 
\end{equation}
\vskip .3in

A consequence of Corollary \ref{RRJJ}, is that

$$\J^{\#} = \frac{3}{4} \R - \frac{1}{2} \J  \mbox{ and } Sym( \J^{\#} ) = - \frac{1}{2} \J,$$
and therefore,  

$$\J(u,x)u = \J(x,u ) u = \J^{\#}(u,u)x = Sym( \J^{\#} ) (u, u ) x = - \frac{1}{2} \J(u,u) x,$$
We therefore conclude from \ref{JANGLE INEQ0} that

\begin{equation}\label{JANGLE INEQ1} 
 \langle \J(u,x) | \J(u,x) \rangle   \geq  \frac{ g(u,u)}{2} g( \J(u,u)x,  \J(u,u)x) \mbox{ for all } x \in V.
\end{equation}
\medskip

As a consequence we have for $u$ a timelike unit vector in a Lorentz vector space $V,$

$$  \langle \J(u, e_{\rho} ) | \J(u, e^{\rho} ) \rangle   \geq   \frac{g(u,u)}{2} g( \J(u,u)e_{\rho}), \J(u,u) e^{\rho} )) = \frac{g(u,u)}{2} trace( [\J(u,u)]^2), $$
so

\begin{equation}\label{JANGLE INEQ2}
\langle \J(u, e_{\rho} ) | \J(u, e^{\rho} ) \rangle   \geq   \frac{g(u,u)}{2} trace( [\J(u,u)]^2), \mbox{ for any timelike unit vector } u \in V,
\end{equation}
and for any vector  unit vector $u \in V,$

\begin{equation}\label{JANGLE EQ2}
 \langle \J(u, e_{\rho} ) | \J(u, e^{\rho} ) \rangle = \langle  P_u  \J(u,e_{\rho}) P_u | P_u \J(u,e^{\rho}) P_u  \rangle +  \frac{g(u,u)}{2} trace( [\J(u,u)]^2).
\end{equation}

\bigskip

In the case of $\langle \R( x, y ) | \R( x, y ) \rangle, $ we can instead use the antisymmetry of $\R$ and the exchange property.  But here, $\R( x, y ) y = \R^{\#} (y, y )x = \J(y,y)x.$  Thus, as $\R(y, y) = 0,$ again, we have using the exchange property of $\R$ that $P_u \R(u,x ) u = \R(u,x)u,$  and therefore 

$$P_u \R(u,x) Q_u = \R(u,x)Q_u.$$

Thus, since $(-1)^2 = 1,$ the same arguments as above apply here on replacing $\J$ by $\R$ throughout giving for all $x,y \in V$

\begin{equation}\label{RANGLE EQ1}
\langle \R( x, y ) | \R( x, y ) \rangle  = \langle   P_u \R( x, y ) P_u | P_u \R( x, y ) P_u  \rangle  + 2g(u,u) g(P_u \R( x, y)u, P_u \R(x, y)u) + [g(\R(x,y)u,u)]^2.
\end{equation}
But now, applying the exchange property of $\R$ to the last square term in this expression gives zero as $\R$ is antisymmetric, so we have

\begin{equation}\label{RANGLE EQ0}
\langle \R( x, y ) | \R( x, y ) \rangle  = \langle   P_u \R( x, y ) P_u | P_u \R( x, y ) P_u  \rangle  + 2g(u,u) g(P_u \R( x, y)u, P_u \R(x, y)u).
\end{equation}

$$\langle \R( u, x ) | \R( u, x ) \rangle  = \langle   P_u \R(  u, x ) P_u | P_u  \R( u, x ) P_u  \rangle  +2g(u,u) g(\J( u, u)x, \J(u, u)x)  \mbox{ for all } x  \in V,$$

\begin{equation}\label{RANGLE INEQ1}
\langle \R(u,x) | \R(u,x) \rangle   \geq         2  g(u,u) g( \J(u,u)x, \J(u,u)x)  \mbox{ for all } x \in V, \mbox{ for a timelike unit vector } u \in V,
\end{equation}

\begin{equation}\label{RANGLE INEQ2}
\langle \R(u, e_{\rho} ) | \R(u, e^{\rho} ) \rangle   \geq   2 g(u,u) trace( [\J(u,u)]^2),
\end{equation}
and

\begin{equation}\label{RANGLE EQ2}
 \langle \R(u, e_{\rho} ) | \R(u, e^{\rho} ) \rangle = \langle  P_u  \R(u,e_{\rho}) P_u | P_u \R(u,e^{\rho}) P_u  \rangle +  2 g(u,u) trace( [\J(u,u)]^2), \mbox{ for all unit vectors } u \in V.
\end{equation}

\bigskip

Combining these results with (\ref{B00 DECOMP}) and (\ref{TRACEB00}) we have for all $u,~x \in V,$ with $u$ a unit vector,

\begin{equation}\label{B00 DECOMP1}
\B_0(u,u,x,x) =  \langle  P_u  \J(u,x) P_u | P_u \J(u,x) P_u  \rangle + \frac{1}{4} \langle  P_u  \R(u,x) P_u | P_u \R(u,x) P_u  \rangle  + g(u,u) g(\J(u,u)x, \J(u,u)x),
\end{equation}
and if $u$ is also timelike, then choosing the frame to be orthonormal with $u$ as one of the frame vectors, since $\R(u,u) = 0,$ and $P_u \J(u,u) P_u = \J(u,u),$ and since $u^*=-u,$ we then have

$$\langle P_u \J(u,u) P_u | P_u \J(u,u) P_u \rangle = g(u,u) trace([\J(u,u)]^2),$$
it follows that

\begin{equation}\label{TRACEB01}
trace( \B_0(u,u)) =  \langle  P_u  \J(u,e_{\rho}) P_u | P_u \J(u,e^{\rho}) P_u  \rangle + \frac{1}{4} \langle  P_u  \R(u,e_{\rho}) P_u | P_u \R(u,e^{\rho}) P_u  \rangle +  g(u,u) trace( [\J(u,u)]^2)
\end{equation}

$$ = 2g(u,u) trace( [\J(u,u)]^2 ) + \langle P_u \J( u, P_u e_{\rho}) P_u  |   P_u \J( u, P_u e^{\rho}) P_u     \rangle  +  \frac{1}{4} \langle  P_u  \R(u,e_{\rho}) P_u | P_u \R(u,e^{\rho}) P_u  \rangle.$$

\bigskip

Finally, the last term of the Bel tensor, $\B,$ is simply $g \otimes g \cdot trace(\B_0(e_{\gamma}, e^{\gamma})).$  Thus, if $u$ is a timelike unit vector, the last term in the expression for $\B(u,u,u,u)$ is then $trace(\B_0(e_{\gamma}, e^{\gamma})).$  Thus, if $V$ is a Lorentz vector space and $u$ is a timelike unit vector, then we can choose an orhtonormal frame so that $u$ is one of the frame vectors, and then, since $P_u u = 0, ~~ e^{\gamma} = - e_{\gamma}$ if $u = e_{\gamma},$ and $g^{\alpha \beta} = 0, $ if $\alpha \neq \beta,$

$$trace(\B_0(e_{\gamma}, e^{\gamma})) = -trace(\B_0(u,u) ) +   g^{\alpha \beta} trace(\B_0(P_u e_{\alpha}, P_u e_{\beta})).$$
In view of the factor $1/8$ in front of the last term in $\B,$ this means that for $u$ a timelike unit vector, so then $g(u,u) = -1$ and therefore we have

$$\B(u,u,u,u) = 2 \cdot trace( [\J(u,u)]^2) + \frac{7}{8} trace(\B_0(u,u)) + \frac{1}{8}g^{\alpha \beta} trace(\B_0(P_u e_{\alpha}, P_u e_{\beta}))$$

$$= 4 \cdot trace([\J(u,u)]^2)   +  \frac{7}{8} \langle P_u \J(u, P_u e_{\rho} ) P_u |  P_u \J(u, P_u e_{\rho} ) P_u  \rangle        $$

$$ + \frac{7}{32} \langle  P_u  \R(u,e_{\rho}) P_u | P_u \R(u,e^{\rho}) P_u  \rangle  +     g^{\alpha \beta} trace(\B_0(P_u e_{\alpha}, P_u e_{\beta})),    $$
and in the last expression, every term is non-negative except for possibly the last term $ g^{\alpha \beta} trace(\B_0(P_u e_{\alpha}, P_u e_{\beta})).$  We therefore only have to examine the case of 
$trace(\B_0(e,e))$ when $e$ is a spacelike unit vector, specifically, one of the basis vectors.  This is then the sum of terms of the form $\B_0(e,e,v,v)$ where $v$ is another spacelike vector, specifically, one of the basis vectors.  But then, 

$$\B_0(e,e,v,v)= \langle \J(e,v) | \J(e,v) \rangle +\frac{1}{4} \langle \R(e,v) | \R(e,v) \rangle,$$

$$= trace( [\J(e,v)]^*[\J(e,v)] )  + \frac{1}{4} trace( [\R(e,v)]^*[ \R(e,v)] ) ,                   $$
so

$$\B_0(e,e,v,v) = trace(P_u  [\J(e,v)]^*[\J(e,v)] P_u ) +g(u,u) g(\J(e,v)u, \J(e,v)u)$$$$ + \frac{1}{4} trace(P_u  [\R(e,v)]^*[ \R(e,v)] P_u ) + \frac{1}{4}g(u,u)g(\R(e,v)u, \R(e,v)u).$$

$$=   trace(P_u  [\J(e,v)][\J(e,v)] P_u ) +g(u,u) g(\J(e,v)u, \J(e,v)u)$$$$ + \frac{1}{4} trace(P_u  [\R(e,v)]^*[ \R(e,v)] P_u ) + \frac{1}{4}g(u,u)g(\R(e,v)u, \R(e,v)u) .$$
Now, let $Q_u$ be the orthogonal projection on the line through $u,$ so $Q_u w = g(u,u) g(u,w) u,$ for all $w \in V,$ and $P_u + Q_u =id_V =1.$  Also, both $P_u$ and $Q_u$ are self-adjoint idempotents.

Thus,

$$trace(P_u \J(e,v) \J(e,v) P_u) = trace( P_u \J(e,v) [P_u +Q_u] \J(e,v) P_u) $$$$= trace(P_u \J(e,v) P_u \J(e,v) P_u )  +    trace(P_u \J(e,v) Q_u \J(e,v) P_u )     $$

$$= \langle [ P_u \J(e,v) P_u ]^*  | P_u \J(e,v) P_u   \rangle  + \langle  [P_u \J(e,v) Q_u]^* |   Q_u \J(e,v) P_u   \rangle           $$

$$=  \langle  P_u \J(e,v) P_u  | P_u \J(e,v) P_u   \rangle  + \langle  [Q_u \J(e,v) P_u |   Q_u \J(e,v) P_u   \rangle.              $$

For any $A \in L(V;V),$ 

$$ Q_u A x = g(u,u) g(Ax,u)u        $$
so
$$Q_u A P_u x = g(u,u )g(A P_u x, u) u =g(u,u) g(P_u x, A^*u) u.$$
Therefore

$$  Q_u \J(e,v)  P_u x = g(u,u) g(P_u x, \J(e,v)u) = g(u,u) g(x, P_u \J(e,v) u)u,          $$
but since $P_u^2 = P_u$ and $[\J(e,v)]^* = \J(e,v),$
 
$$  Q_u \J(e,v)  P_u x = Q_u \J(u,v) P_u P_u x = g(u,u) g(P_u x, P_u \J(e,v) u ) u = g(u,u) g( \J(e,v) P_u x, u) u       $$

$$= g(u,u) g(\J(P_u x, u) e, v) u = g(u,u) g( \J(u, P_u x) e,v) u= g(u,u) g(\J(e,v) u, P_u x) u.  $$
Since $(g(u,u))^2=1,$ this means that

$$[Q_u \J(e,v)P_u]^*[Q_u \J(e,v)P_u]x = P_u \J(e,v) Q_u (g(u,u) g(\J(e,v)u, P_u x) u$$$$ = g(u,u) g(\J(e,v)u, P_u x) P_u \J(e,v)u,$$
and this means that $[Q_u \J(e,v)P_u]^*[Q_u \J(e,v)P_u]$ is only a rank one operator mapping all of $V$ into $Q_uV,$ and therefore

$$[Q_u \J(e,v)P_u]^*[Q_u \J(e,v)P_u] = Q_u [Q_u \J(e,v)P_u]^*[Q_u \J(e,v)P_u] = Q_u P_u \J(e,v) Q_u Q_u \J(e,v) P_u = 0,$$
as $Q_u P_u = 0.$

We therefore have simply

$$trace(P_u \J(e,v) \J(e,v) P_u) = trace( P_u \J(e,v) P_u)^2 ) =  \langle  P_u \J(e,v) P_u  | P_u \J(e,v) P_u   \rangle \geq 0.$$
Of course, the exact same argument applies to give

$$ trace(P_u  [\R(e,v)]^*[ \R(e,v)] P_u ) =   \langle  P_u \R(e,v) | P_u  | P_u \R(e,v) P_u   \rangle \geq 0,$$
since the only change is $[R(e,v)]^* = - \R(e,v),$ so as $-0 = 0 = Q_u P_u,$ the outcome is the same.

Thus, now the last two  terms to deal with are simply the terms which combine to give

$$ g(u,u) [ g(\J(e,v)u, \J(e,v)u) + \frac{1}{4} g(\R(e,v)u, \R(e,v)u)] $$

$$= g(u,u) [g(u, [\J(e,v)]^*\J(e,v)u) +  \frac{1}{4} g(u, [\R(e,v)]^*\R(e,v)u)] $$

$$ =  g(u,u)[g(Q_u u, [\J(e,v)]^*\J(e,v)u) + \frac{1}{4} g(Q_u u, [\R(e,v)]^*\R(e,v)u)] $$

$$ =  g(u,u) g(Q_u u, [\J(e,v)]^*\J(e,v) + \frac{1}{4} [ \R(e,v)]^*\R(e,v))]u) $$

$$= g(u,u) g(Q_u u, Q_u [\J(e,v)]^*\J(e,v) +  \frac{1}{4} [\R(e,v)]^*\R(e,v))] Q_u u)$$

$$=  g(u,u) trace (Q_u [\J(e,v)]^*\J(e,v) +  \frac{1}{4} [\R(e,v)]^*\R(e,v))]Q_u).$$
This results in the fact that

\begin{equation}\label{B0SIMP1}
\B_0(e,e,v,v) = \langle  P_u \J(e,v) P_u  | P_u \J(e,v) P_u  \rangle +  \langle  P_u [\R(e,v)]^* P_u  | P_u \R(e,v) P_u   \rangle
\end{equation}
$$+  g(u,u)[ trace (Q_u [\J(e,v)]^*\J(e,v)Q_u)] +  \frac{1}{4} g(u,u) [ trace(     Q_u   [ \R(e,v)]^*\R(e,v))]Q_u).$$
Now. 

$$Q_u [\J(e,v)]^*\J(e,v)Q_u = Q_u [\J(e,v)]^*[P_u + Q_u]\J(e,v)Q_u $$$$ = Q_u [\J(e,v)]^*P_u \J(e,v)Q_u + Q_u [\J(e,v)]^*Q_u\J(e,v)Q_u,$$

whose trace is

$$=\langle P_u \J(e,v)Q_u | P_u \J(e,v)Q_u \rangle  +  \langle  Q_u\J(e,v)Q_u |  Q_u\J(e,v)Q_u \rangle.$$
and similarly,

$$Q_u [\R(e,v)]^*\R(e,v)Q_u = Q_u [\R(e,v)]^*[P_u + Q_u]\R(e,v)Q_u $$$$ = Q_u [\R(e,v)]^*P_u \R(e,v)Q_u + Q_u [\R(e,v)]^*Q_u\R(e,v)Q_u,$$ 
whose trace is

$$=\langle P_u \R(e,v)Q_u | P_u \R(e,v)Q_u \rangle  +  \langle  Q_u\R(e,v)Q_u |  Q_u\R(e,v)Q_u \rangle.$$

This gives finally,

\begin{equation}\label{B0SIMP2}
\B_0(e,e,v,v) = \langle P_u \J(e,v)P_u | P_u \J(e,v)P_u \rangle  +  \langle  Q_u\J(e,v)Q_u |  Q_u\J(e,v)Q_u \rangle
\end{equation}
$$~~+ g(u,u) [\langle P_u \J(e,v)Q_u | P_u \J(e,v)Q_u \rangle  +  \langle  Q_u\J(e,v)Q_u |  Q_u\J(e,v)Q_u \rangle$$
$$~~+\frac{1}{4} g(u,u)[\langle P_u \R(e,v)Q_u | P_u \R(e,v)Q_u \rangle  +  \langle  Q_u\R(e,v)Q_u |  Q_u\R(e,v)Q_u \rangle].$$

As $$Q_u A Q_u$$ only has timelike values, for any $A \in L(V;V),$ it follows that both

$$g(u,u)  \langle  Q_u\J(e,v)Q_u |  Q_u\J(e,v)Q_u \rangle \geq 0 \mbox{ and } g(u,u)  \langle  Q_u\R(e,v)Q_u |  Q_u\R(e,v)Q_u \rangle]  \geq 0.  $$

Now supposing that $e$ and $v$ are vectors in an orthonormal basis of eigenvectors of $J(u,u),$ we have

$$ \langle  Q_u\J(e,v)Q_u |  Q_u\J(e,v)Q_u \rangle= g(u,u)g(Q_u\J(e,v)u,Q_u\J(e,v)u),$$
and
$$ \langle  Q_u\J(e,v)Q_u |  Q_u\J(e,v)Q_u \rangle= g(u,u)g(Q_u\J(e,v)u,Q_u\J(e,v)u),$$

but

$$g(Q_u\J(e,v)u,u)=g(\J(e,v)u,u)=g(\J(u,u)e,v)=0, \mbox{ if } e \neq v,$$
and

$$g(Q_u \R(e,v)u,u)=g(\R(e,v)u,u)=g(\R(u,u)e,v)=0.$$
This shows that in this situation of an orthonormal basis with $u$ timelike and eigenvectors of $\J(u,u),$ that the terms of the form $Q_uAQ_u$
 in the expression for $\B_0(e,e,v,v)$ will vanish unless $e=v$ in which case, only the $Q_u \J(e,e)Q_u$ terms survive.  At this point, the positivity of the Bel tensor seems very plausible, but we will leave this to later work to better explicate the relation between the Bel tensor and the Jacobi curvature operator.

\vfill
\break



\section{APPENDIX I:  MATHEMATICAL PRELIMINARIES}

In this section we will establish our basic mathematical notation and framework and point out some mathematical results which may be unfamiliar to the reader.  Our treatment here covers material which is in many references, the most comprehensive being \cite{GREUB}, but our treatment is possibly simpler in some ways, since we avoid connections on principal bundles.  If $V_{\alpha},~\alpha \in A$ and $W$ are vector spaces, then we denote by $$\bigoplus_{\alpha \in A}V_{\alpha}$$ the direct sum of the family of vector spaces $(V_{\alpha})_{\alpha \in A}$ whereas, $$\Pi_{\alpha \in A}V_{\alpha}$$ denotes the cartesian product, just as for general sets.  These are actually the same if the index set $A$ is a finite set, that is, for any finite direct sum or cartesian product.  A direct sum of infinitely many vector spaces would be a vector subspace of the cartesian product consisting of those vectors in the cartesian product having only finitely many non-zero entries.  In case $A=\{1,2,...,m\},$ we denote by $$V_1 \otimes V_2 \otimes ... \otimes V_m$$ the {\it tensor product} of the vector spaces.  There is a multilinear map
$$\tau:V_1 \oplus V_2 \oplus ... \oplus V_m \lra V_1 \otimes V_2 \otimes ... \otimes V_m$$ with the {\it universal property} that for any vector space $W$ and any multilinear map
$$f:V_1 \oplus V_2 \oplus ... \oplus V_n \lra W$$ there is a unique linear map $$g:V_1 \otimes V_2 \otimes ... \otimes V_n \ra W$$ such that $$f=g \circ \tau.$$  If $v_k \in V_k,$ for each $k \in A,$ then it is customary to denote $$\tau(v_1,v_2,...,v_n)=v_1 \otimes v_2 \otimes ... \otimes v_n,$$ and call this vector the {\it tensor product} of the vectors $(v_k)_{k \in A}.$  All vectors in $V_1 \otimes V_2 \otimes ... \otimes V_n$ are called {\it tensors}, whereas those that are tensor products of families of vectors $(v_k)$ are called {\it elementary tensors}.  Thus, all the tensors in the tensor product are sums of elementary tensors.

We denote by $$L(V_1,V_2,...,V_n;W)$$ the vector space of all multilinear maps $V_1 \times V_2 \times ... \times V_n \lra W,$ and thus by the universal property of the tensor product, we have a unique natural vector space isomorphism
$$\Psi:L(V_1,V_2,...,V_n;W) \cong L(V_1 \otimes V_2 \otimes ... \otimes V_n;W).$$

If $V$ is any vector space, then we denote its {\it dual} by $V^*,$ so $$V^*=L(V;\bR).$$  There is a natural isomorphism $\Theta $ of $V$ onto a vector subspace of $V^{**}$ such that $$[\Theta (v)](f)=f(v),~ v \in V,~f \in V^*.$$  In case $V$ is finite dimensional, this natural isomorphism is an isomorphism of $V$ onto $V^{**}.$  In particular, for a family of finite dimensional vector spaces, this gives isomorphisms
$$V_1 \otimes V_2 \otimes ...\otimes V_n \cong (V_1 \otimes V_2 \otimes ...\otimes V_n)^{**} \cong (L(V_1,V_2,...,V_n;\bR))^*.$$  If all the vector spaces in the family are given non-degenerate inner products, then each vector space has a preferred isomorphism with its dual, so that the inner product determined on the vector space of real valued multilinear maps then allows one to identify tensors with multilinear maps, as is usually done in physics.  Thus, if $V$ is a given vector space with non-degenerate inner product, and if each $V_k$ in the family is either $V$ or $V^*,$ then the resulting tensor product is naturally isomorphic to the vector space of multilinear maps which are classically called tensors in physics.  In general, if  $V$ is a vector space and $V_k=V$ for each $k\leq m,$ then we define $L^m(V;W)=L(V_1,V_2,...,V_m;W).$  We denote by $L^m_{alt}(V;W)$ the vector subspace of alternating or anti-symmetric members of $L^m(V;W)$ and denote by $L^m_{sym}(V;W)$ the vector subspace of symmetric members.

For simplicity, let us call $X$ a {\it space} if $X \subset N,$ for some manifold $N.$  Allowing manifolds modeled on Banach spaces will mean that our simple definition includes all metrizable spaces.  We will generally deal only with finite dimensional manifolds, but much here is true for infinite dimensional Banach manifolds as well.    If $N_1$ and $N_2$ are manifolds, $X_k \subset N_k,~~k=1,2,$ and if $f:X_1 \lra X_2,$ is a map, we say $f$ is {\it smooth} provided that for each $x \in X_1$ there are open subsets $W_k \subset N_k,~~k=1,2$ and a genuine smooth map $g:\W_1 \lra W_2,$ so that $f$ and $g$ agree on the overlap $X_1 \cap W_1.$  Such a $g$ is called a local smooth extension of $f$ at $x \in X_1.$  If $X$ is a submanifold of $N,$ then a smooth map on $X$ as a space is easily seen to agree with the definition of smooth as an ordinary map defined on a manifold.  Moreover, by Lemma 2.4 of \cite{DUPREGLAZE1}, together with the tubular neighborhood theorem, it follows that $X \subset N$ is a smooth submanifold of the finite dimensional smooth manifold $N$ if and only if there is an open subset $U \subset N$ with $X \subset U$ and a smooth retraction $r: U \lra U$ such that $r(U)=X$ and here we remind the reader that to say $r$ is a retraction is to say $r \circ r=r.$  It then follows immediately, that $[Tr]$ is a retraction of $TU \subset TN$ onto $TX \subset TU,$ and so on, for all the higher tangent bundles. 

A vector space is a {\it topological vector space} provided that it is a space and a vector space and the operations of addition and scalar multiplication are continuous maps.  Any finite dimensional vector space is a topological vector space in a unique way, since it has a unique manifold structure for which the vector operations are continuous.  In particular, this is true of finite dimensional vector spaces which have semi-Riemannian metrics.

We say that $B$ is a {\it bundle} over $X$ provided that $B$ is also a space and there is a given smooth map $p=p_B:B \lra X,$ called the { \it projection} of the bundle.  Thus technically, the bundle consists of the triple $(p,B,X).$  We call $B$ the {\it total space} of the bundle and call $X$ the {\it base space} of the bundle.  If $B$ and $C$ are bundles over $X$ and $Y$ respectively, then by a {\it bundle map} $h:B \lra C$ {\it covering} $f:X \lra Y$ we mean a pair of maps $(h;f)$ so that $p_C \circ h=f \circ p_B.$  In case $X=Y,$ we say that $h$ is over $X$ to mean it covers the identity map, $id_X,$ of $X.$    If $B$ is a bundle over $X$ and $A \subset S,$ then we denote by $B|A$ the restriction of $B$ to $A$ whose total space is simply $p_B^{-1}(A),$ so $B|A$ is a bundle over $A.$ By a section $s$ of $B$ we mean a smooth map $s:X \lra B$ such that $p \circ s=id_X.$  We denote by $\Gamma(B)$ the set of all sections of $B.$  If $x \in X,$ then $B_x=p^{-1}(x)$ is the fiber over $x \in X.$  For any space $F,$ we denote by $\epsilon(X;F)$ the product bundle over $X$ with fiber $F$ whose total space is $X \times F$ and whose projection is simply the natural projection of the product onto its first factor.  We say that $B$ is trivial over $A \subset X$ provided that $B|A$ is isomorphic over $A$ to a product bundle.  We say that $B$ is locally trivial provided that its base can be covered by open subsets over which $B$ is trivial.  If all these trivializations have the same fiber $F,$ when we say that $B$ is locally trivial with fiber $F.$  The bundle isomorphisms of the trivial bundle $\epsilon(X;F)$ are easily seen to be determined by maps of $X$ into the diffeomorphism group of $F.$  When $G$ is a Lie group and all the overlap trivializations of $B$ are given by smooth maps into the Lie group $G,$ then we say the bundle has group $G.$  Thus, a locally trivial bundle with fiber $F$ and group $G$ is what is usually called a {\it fiber bundle} as defined by Steenrod \cite{STEENROD}.  A section $s$ of the trivial bundle with fiber $F$ over $X$ is simply a pair $(id_X,f)$ where $f:X \lra F$ is an ordinary smooth map, which is called the {\it principal part} of $s.$

If each fiber is a vector space and the vector operations in the fibers all together give smooth bundle maps over the base, then the bundle is called a vector bundle.  If all the fibers of the vector bundle $E$ are of the same finite dimension, then the bundle is locally trivial and is in fact a fiber bundle with group a general linear group. Putting a metric on such a vector bundle then amounts to reducing the group of the bundle to a Lie subgroup of the general linear group. All the above constructions for vector spaces and their natural maps and isomorphisms then extend to constructions on vector bundles over the space $X,$ with the obvious notation.  For instance, if $E_1,E_2,...,E_m$ are all vector bundles over $X,$ then $E_1\otimes E_2 \otimes ... \otimes E_m$ is the tensor product of these vector bundles over $X,$
$$(E_1\otimes E_2 \otimes ... \otimes E_m)_x=(E_1)_x\otimes (E_2)_x \otimes ...\otimes (E_m)_x,$$ and likewise
$$[L(E_1,E_2,...,E_m;E)]_x=L([E_1]_x,[E_2]_x,...,[E_m]_x;E_x),~~x \in X.$$  For $V$ a  topological vector space, we note that $\epsilon(N,V),$ the product bundle over $N$ with fiber $V,$ is a vector bundle. In case the base space is a manifold and each $E_k$ is either the tangent bundle of the base or its dual, sections of $L(E_1,E_2,...,E_m;E)$ are called $E-$valued tensor fields.  Thus the ordinary tensor fields of classical physics are simply $\epsilon(N,\bR)-$valued tensor fields.  Sections of $L^p_{alt}(TN;E)$ are called $E-$valued $p-$forms.  Of course the main example of a vector bundle is the tangent bundle of a manifold. It is useful to know that for any vector bundle with all finite dimensional fibers, say $E_1$ over $X,$ there is always another such vector bundle $E_2$ over $X$ such that $E_1 \oplus E_2$ is trivial, and thus all these vector bundles can be very profitably thought of as simply vector subbundles of trivial vector bundles.  In fact, the same is true with infinite dimensional fibers under suitable restrictions.  In the case of the tangent bundle to a manifold $N,$ when we embed it in a high dimensional Euclidean space, the normal bundle does the trick.  In particular, if $E \subset \epsilon(X,V),$ then we have for each $x \in X$ a vector subspace $E(x)$ of $V$ such that $E_x=\{x\}\times E(x).$ Putting a Euclidean metric on $V,$ we can form a bundle map $P$ of the trivial bundle to itself whose image is $E,$ and which is simply orthogonal projection of $V$ onto $E(x)$ for each $x \in X.$  Thus $P$ is determined by its principal part $P_{pr}:X \lra L(V;V)$ via $P(x,w)=(x,P_{pr}(x)v).$  We have then $P^2=P=P^*,$ since $P_{pr}(x)^2=P_{pr}(x)=P_{pr}(x)^*,$ for each $x \in X.$  We call $P$ so constructed the {\it orthogonal bundle projection} onto the subbundle $E$ of the trivial bundle.  The principal part of $P$ is simply a smooth map into a Grassmann manifold of self adjoint idempotent operators on the high dimensional Euclidean space or semi-Riemanniann vector space as the case may be, and iss called the {\bf classifying map} of the bundle.  It is easy to see that applying the various functors to the classifying maps gives the classifying map of the bundle resulting from applying a functor to construct new bundles, such as tensor product  bundles or dual bundles or bundles of multilinear maps and so forth.

For general references on differential geometry, semi-Riemannian
and Lorentz geometry, we refer to \cite{LANG}, \cite{GREUB}, \cite{KRIELE}, \cite{HAWKINGELLIS},
\cite{MTW}, \cite{WALD}, and \cite{ONEIL}. To begin, we say that $E$ is a connected vector bundle or vector bundle with connection provided that the base of $E$ is a smooth manifold $N$ and $E$ has a specified Kozul connection, $\nabla.$ This means that for each tangent vector field $v$ on the base and for each section $s \in \Gamma(E),$ we have the covariant derivative of $s$ along $v$ given by $\nabla_v s.$  By definition, this covariant differentiation operator for fixed $v$ is linear on sections and satisfies the Leibniz Rule:
$$\nabla_v (fs)=(D_vf)s+f\nabla_v s$$ for any scalar function $f$ defined on the base manifold $N.$  Enforcing the Leibniz rule over all evaluations and tensor products allows all the constructions for vector bundles to apply to connected vector bundles.  The trivial bundle with fiber $V$ over $N$ will always be assumed to have the obvious trivial connection where covariant differentiation simply amounts to ordinary differentiation of the principal part of a section.  To do this, restrict $V$ to be a Banach space, even though the theory of differential calculus can be applied to more general topological vector spaces, the theory becomes technical.  As any convex combination of Kozul connections is again a Kozul connection, it is easy to see using a smooth partition of unity and local trivailizations, that any vector bundle can be given some connection.  Alternately, we can simply embed $E$ in a trivial bundle $\epsilon$ with fiber $V$ a high dimensional Euclidean space. With the trivial connection on this trivial bundle denoted $D$ and with the vector bundle map $P:\epsilon \lra \epsilon$ taken to be orthogonal projection onto $E,$ in effect, a classifying map for the bundle, we simply define

$$\nabla_v s=PD_v s.$$  
We can note here, that for a Riemannian submanifold of a Euclidean space of any dimension, as the tangent bundle is a subbundle of the tangent bundle of the Euclidean space which is trivial, the previous projection procedure always gives the Levi-Civita connection on the manifold.  By the Nash Embedding Theorem, all Riemannian manifolds can be isometrically embedded in a Euclidean space, so Riemannian geometry can be thought of as the geometry of submanifolds of Euclidean space.  The same is also true for semi-Riemannian manifolds in general \cite{GREENE}.  This means that we can usefully think of any semi-Riemannian manifold as a submanifold of a finite dimensional semi-Riemannian vector space.  As far as producing Kozul connections on various functors of vector bundles is concerned, we simply use the Leibniz rule in all cases, which is easily seen to be consistent with the construction using classifying maps as noted above as to their use in constructing the various functors of vector bundles.  In particular, if $E_1,...E_n,F$ are all smooth vector bundles over $M,$ if $t$ is a section of $L(E_1,...E_n;F)$ all bundles having given connections all denoted 
$\nabla,$ then for any sections $s_1 \in \Gamma(E_1),...,s_n,$ and any smooth vector field $v$ in $\Gamma(TM),$ we have

$$\nabla_v [t(s_1,...,s_n)]=[\nabla_v t](s_1,...,s_n)+t(\nabla_v s_1, s_2,...,s_n)+...+t(s_1,...,s_{n-1},\nabla_v s_n),$$
by the Leibniz rule, which serves to determine $\nabla_v t.$  We then define the covariant derivative of $t$ denoted $\nabla t$ to e the section of $L(TM,E_1,...E_n;F)$ via

$$[\nabla t](v,s_1,s_2,...s_n)=[\nabla_v t](s_1,...s_n).$$
In particular, this shows that if $E$ and $F$ both have connections, then the connection on $E \otimes F$ is simply determined by the requirement that for any smooth section $s$ of $E$ and any smooth section $t$ of $F,$ and any smooth vector field $v$ of $M,$ we have

$$\nabla_v (s \otimes t)=(\nabla_v s) \otimes t + s \otimes (\nabla_v t).$$
We can note here that in general, $\nabla(s \otimes t)$ is not simply $[\nabla s] \otimes t + s \otimes [\nabla t],$ as in general, this last summation is undefined, the first term being a section of $L(TM;E) \otimes F$ and the second term being a section of $E \otimes L(TM;F).$  In order to fix this in cases of interest, one must introduce permutations acting on tensors.

We denote by $P(m)=Sym(m)$ the symmetric group of all permutations of $\{1,2,...,m\}$ and by $Cycl(m)$ the subgroup of order $m$ consisting of cycles.   For $t$ any tensor or tensor field in $L^m(E;F)$ and $\sigma \in Sym(m),$ we define
$$\sigma(t)(v_1,v_2,...,v_m)=t(v_{\sigma(1)},v_{\sigma(2)},...,v_{\sigma(m)}).$$  For $\tau \in Sym(m),$ we denote by $sgn(\tau)$ its parity as plus one for even and negative one for odd permutations.  Thus $sgn$ is a group homomorphism of $Sym(m)$ into the group $\{-1,1\}.$  If $m$ is odd and $\tau \in Cycl(m)$ then $sgn(\tau)=1.$  Define the symmetrization operator $Sym$ and $Alt,$ the alternation operator, by
$$Sym(t)=\frac{1}{m!}\sum_{\tau \in Sym(m)} \tau(t) \mbox{ and } Alt(t)=\frac{1}{m!}\sum_{\tau \in Sym(m)} sgn(\tau)\tau(t)$$ and if $m$ is odd, define $Cycl$ the cyclic permutation operator by
$$Cycle(t)=\sum_{\tau \in Cycl(m)} \tau(t).$$

If $A$ is a smooth section of $L^r(E;\bR) $ and $B$ is a smooth section of $L^s(E;\bR)$ and $\sigma_r=[1,2,3,....,r+1]$ is the cycle which increase each index except the last which is sent to 1, then we can now say that $sgn(\sigma_r)=r,$ and

$$\nabla[A \otimes B]=[\nabla A] \otimes B + \sigma( A \otimes [\nabla B]).$$
Consequently,

$$Alt(\nabla[A \otimes B])=Alt([\nabla A] \otimes B) +(-1)^r Alt(A \otimes [\nabla B]).$$

 Thus, for the ordinary tensor fields of classical physics, the covariant differentiation of tensor fields is that due to the Levi-Civita connection due to the metric tensor on tangent vector fields as it naturally extends to all tensor fields in a unique way preserving the Leibniz rule, and more generally, for $E-$valued tensor fields, the connection on the $E-$valued tensor bundle is defined using the Levi-Civita connection on the tangent bundle and its dual and the specified Kozul connection on $E.$  For the general vector bundle $E$ over $N$ with Kozul connection $\nabla,$ we define the curvature operator $\R_E$ as an $L(E;E)-$valued second rank tensor field given by
$$\R_E(u,v)=[\nabla_u,\nabla_v]-\nabla_{[u,v]}.$$  In general for $E-$valued alternating tensor fields on $M$ we can define the exterior covariant derivative $d_E$ as acting on $E-$valued $p-$forms on $M$ by the formula
$$d_E\omega=(p+1)(-1)^pAlt( \nabla \omega).$$  We then have the general Bianchi identity
$$d_{L(E;E)}\R_E=0.$$  We define the third rank $E-$valued tensor $R_E$ by
$$R_E(u,v,w)=\R(u,v)w$$ when $u$ and $v$ are tangent vector fields and $w$ is a differentiable section of $E.$  In case that $E=TN$ with the Levi-Civita connection from a metric tensor, then the Bianchi identity is called the second Bianchi identity as there is the first Bianchi identity which says that in fact
$$Cycle(R_{TN})=0.$$  Obviously, the first Bianchi identity cannot be formulated for Kozul connections on general vector bundles, as $Cycl(R_E)$ is nonsense unless $E=TN.$

For any third rank $E-$valued tensor field, $\omega,$ letting $\sigma$ denote the permutation which simply interchanges one and two and fixes three, we have
$$(3!)Alt(\omega)=Cycle(\omega)-Cycle(\sigma(\omega))$$ and therefore if $\omega$ is an alternating $E-$valued 2-form, then as $$\sigma(\nabla \omega)=-\nabla \omega,$$
$$(3!)Alt(\nabla \omega)=2Cycl(\nabla \omega)$$ and therefore
$$d_E\omega=Cycl(\nabla \omega).$$  Thus the Bianchi identity is equivalent to $$Cycl(\nabla \R_E)=0.$$  This is the form in which the second Bianchi identity is usually stated when $E=TM,$ which somewhat obscures the fact that it is really saying that the exterior covariant derivative of the curvature operator vanishes.  The vanishing of an exterior covariant derivative is a kind of generalized conservation law since in the case of ordinary forms, we can apply Stokes Theorem for their integrals.  The proof of the general Bianchi identity can be reduced to an application of the Jacobi identity of Lie algebra by working at a specific point of $M$ and choosing sections and vector fields which have vanishing covariant derivative at the specific point.  Notice that if $w$ and $z$ are sections of $E$ and $E^*,$ respectively, then for any $L(E;E)-$valued $p-$form  $\lambda$ we can form the ordinary $p-$form $z(\lambda(w)).$

Of course, the sections of $E$ are considered to be $E-$valued 0-forms on $N,$ and obviously if $s$ is a section of $E,$ then
$$d_Es=\nabla s.$$  For $\omega$ an $E-$valued 1-form, and $v_1,v_2$ any tangent vector fields on $N,$
$$[d_E\omega](v_1,v_2)=\nabla \omega (v_2,v_1)-\nabla \omega (v_1,v_2)=[\nabla_{v_1}\omega](v_2)-[\nabla_{v_2}\omega](v_1)$$
$$=\nabla_{v_1}[\omega (v_2)]-\nabla_{v_2}[\omega(v_1)]-\omega(\nabla_{v_1}v_2-\nabla_{v_2}v_1)$$
$$=\nabla_{v_1}[\omega (v_2)]-\nabla_{v_2}[\omega(v_1)]-\omega([v_1,v_2]),$$  where we used the fact that the Levi-Civita connection on $TN$ is torsion free in the last step.  Thus,
$$[d_E\omega](v_1,v_2)=\nabla_{v_1}[\omega (v_2)]-\nabla_{v_2}[\omega(v_1)]-\omega([v_1,v_2]),$$ for any $E-$valued 1-form $\omega$ on $N.$  More generally, we find that due to the antisymmetry of the exterior covariant derivative, when expanded in terms of covariant derivatives of tangent vector fields using the Levi-Civita connection on the tangent bundle, all the terms with Levi-Civita covariant derivatives combine in pairs as above so as to be expressible as Lie brackets.  This means the exterior covariant derivative is actually independent of the choice of semi-Riemannian metric on $N$ as it can be expressed purely in terms of covariant derivatives of sections of $E$ with the Kozul connection on $E$ and Lie brackets of tangent vector fields.  Moreover, if $s$ is any section of $E,$ then combining the two previous formulas, we have
$$d^2_E s=\R_E s.$$

Suppose $E$ and $F$ are both vector bundles with Kozul connection over $N$ and $H=E \otimes F$ has the unique Kozul connection described above.   If $\omega$ is an $E-$valued $p-$form and $\lambda$ is an $F-$valued $q-$form, then we can form the $H-$valued wedge product, denoted $\omega \wedge \lambda$ with the obvious generalization of the ordinary wedge product of ordinary forms, by taking $\omega \otimes \lambda$ and applying the alternation operator and using the same normalization factors as with the ordinary wedge product.  We have by the Leibniz rule
$$\nabla_v(\omega \otimes \lambda)=([\nabla_v \omega] \otimes \lambda)+(\omega \otimes [\nabla_v \lambda]),$$ and therefore
$$d_H(\omega \wedge \lambda)=d_E\omega \wedge \lambda +(-1)^p \omega \wedge d_F \lambda.$$  On the other hand, if $B$ is a section of $L(E;F),$ and instead we take $H=L(E;F)$ with the unique Kozul connection described above, then $B\omega$ is a section of $F$ and by the Leibniz rule
$$\nabla_v[B\omega]=[\nabla_vB]\omega+B[\nabla_v \omega].$$  Of course, $B$ is itself an $H-$valued 0-form, and we have here
$$d_F[B\omega]=[d_H B]\wedge \omega +B[d_E \omega].$$  If $E$ itself is a tensor product of connected vector bundles on $N,$ say $E=E_1 \otimes E_2,$ then $B$ amounts to a bilinear map, so applied to a wedge product gives a $F-$ valued wedge product depending on the choice of $B$ which we denote by $\wedge_B.$  For instance, if $\omega$ is an $L(E;F)-$valued $p-$form and $s$ is a section of $E,$ then evaluation of bundle maps on sections give a section $ev$ of $H=L(L(E;F)\otimes E;F)$ so that
$$ev(\omega \otimes s)=\omega(s).$$  By the Leibniz rule we find $d_H ev=\nabla ev=0,$ so
$$d_F[\omega s]=ev(d_T [\omega \otimes s])=ev([d_L \omega] \wedge s + \omega \wedge d_E s)=[d_L \omega]s+\omega \wedge_{ev} d_Es.$$  Thus, if $\lambda$ is a section of $F^*,$ then as $\lambda(\omega s)$ is an ordinary $p-$form we have
$$d[\lambda(\omega s)]=[d_{F^*}\lambda]\wedge [\omega s]+\lambda(d_F[\omega s])=[d_{F^*}\lambda] \wedge_{ev} [\omega s]
+\lambda([d_H \omega]s+(-1)^p \omega \wedge_{ev} d_Es)$$
$$=[d_{F^*}\lambda]\wedge_{ev} [\omega s] +\lambda([d_H \omega]s)+(-1)^p \lambda (\omega \wedge_{ev}[d_Es]).$$  In case $E=F$ and $\omega=\R_E,$ by the Bianchi identity, we have
$$d[\lambda(\R_E s)]=[d_{E^*}\lambda] \wedge_{ev} [\R_E s]
+(-1)^p \R_E \wedge_{ev} d_Es).$$  If $\omega$ is any $E-$valued $p-$form, then our previous formula $d^2_Es=\R_Es$ generalizes to give
$$d^2_E \omega=\R_E \wedge_{ev} \omega.$$ So actually, in the obvious sense, $$d_E^2=\R_E \wedge_{ev}.$$

\med



\begin{thebibliography}{99}

\bibitem{ALRAWAF}
Al-Rawaf, A. S., Taha, M. O., Cosmology and general relativity
without energy-momentum conservation, General Relativity and
Gravitation, Vol. 28, No. 8, 1996, 935-952.

\bibitem{HARKO&b&L}
Bohmer, C. G., Harko, T., Lobo, F. S. N., Dark matter as a
geometric effect in $f(R)$ gravity, Astroparticle Phys., 29(2008),
386-392.

\bibitem{BAEZ&MUNIAIN}
Baez, J., Muniain, J. P., Gauge Fields, Knots, and Gravity, World Scientific, 1994, republished 2011.

\bibitem{BERGQVIST1}
Bergqvist, G., Positivity properties of the Bel-Robinson tensor,
J. Math. Phys., Vol. 39, No. 4, April 1998, 2141-2147.

\bibitem{BERGQVIST2}
Bergqvist, G., Positivity of superenergy tensors,
Commun. Math. Phys., 207, 467-479(1999).

\bibitem{BLEECKER}
Bleecker, D.,  Gauge Theory and Variational Principles, Dover 1981.


\bibitem{BONDI}
Bondi, H., Conservation and non-conservation in general
relativity, Proc. Royal Soc. London, Series A, Math. and Phys.
Sci., vol. 427, No. 1873, (1990)249-258.

\bibitem{BONILLA1}
Bonilla, M. A. G., Senovilla, J. M. M., Some properties of the Bel and Bel-Robinson tensors,
General Relativity and Gravitation, Vol. 29, No. 1, 1997.


\bibitem{BORISSOVA}
Borissova, L., Gravitational waves and gravitational inertial
waves in the general theory of relativity: a theory and
experiments, Progress in Physics, July 2005, 30-62.

\bibitem{BELTITA}
Beltita, D., Smooth Homogeneous Structures in Operator Theory,
Monographs and Surveys in Pure and Applied Math., 137, Chapman and
Hall/CRC, Boca Raton, FL, 2006.

\bibitem{CW}
Ciufolini, I.,  Wheeler, J. A., Gravitation and Inertia, Princeton
Series in Physics, Princeton University Press, Princeton, NJ,
1995.

\bibitem{COOPERSTOCK}
Cooperstock, F. I., Energy localization in general relativity,
Foundations of Phys., vol. 22, No. 8,(1992)1011-1024.

\bibitem{COOPERSTOCK2}
Cooperstock, F. I., The role of energy and a new approach to
gravitational waves in general relativity, Annals of Physics,
282(2000), 115-137.

\bibitem{COOPERSTOCK3}
Cooperstock, F. I. and Tieu, S., The energy of a dynamical
wave-emitting system in general relativity, Foundations of
Physics, 33(7), 1033-1059.

\bibitem{C&D}
Cooperstock, F. I. and Dupre, M. J., Covariant energy-momentum and
an uncertainty principle for general relativity, arXiv:0904.0469v1
[gr-qc] 2 Apr 2009.

\bibitem{C&DANNALS}
Cooperstock, F. I. and Dupre, M. J., Covariant energy-momentum and an uncertainty principle for general relativity, Annals of Physics, Vol. 339, December, 2013, 531-541.

\bibitem{C&DFOOP}
Cooperstock, F. I. and Dupre, M. J., Energy and Uncertainty in General Relativity, Foundations of Physics, 2018, 48(4), 387-394.

\bibitem{CURIEL1}
Curiel, E., On geometric objects, the non-existence of a gravitational stress-energy tensor, and the uniqueness of the Einstein field equation, Studies in History and Philosophy of Modern Physics, 66(2019) 90-102.

\bibitem{DERELI1}
Dereli, T. and Tucker, R. W., On the energy-momentum density of gravitational plane waves,
Class. Quantum Grav. 21 (2004) 1459-1464.

\bibitem{DESER1}
Deser, S., The Immortal Bel-Robinson Tensor,
arXiv:gr-qc/9901007v1 5 Jan 1999.

\bibitem{D}
Dirac, P. A. M., General Theory of Relativity, Princeton Landmarks
in Physics, Princeton University Press, Princeton, N. J., 1996.

\bibitem{DUPRE0}
Dupr\'e, M. J., The classification and structure of C*-algebra
bundles, Mem. A. M. S., 21(222), Providence, RI, 1979.

\bibitem{DUPRE}
Dupr\'e, M. J., The Einstein equation and the energy density of the
gravitational field, arXiv:0803.1684v1[math-ph] 11 Mar 2008.



\bibitem{DUPRE2}
Dupr\'e, M. J., The fully covariant energy momentum stress tensor
for the gravitational field and the Einstein equation for gravity
in general relativity, arXiv:0903.5225v1 [gr-qc] 30 Mar 2009.

\bibitem{DUPRE3}
Dupr\'e, M. J., An observer principle for general relativity, arXiv:1403.2618 [gr-qc], 11 Mar 2014.

\bibitem{DUPRE4}
Dupr\'e, M. J. The structure of Local theories of gravity, arXiv:1403.3356 [gr-qc] 12 Mar 2014. 

\bibitem{DUPREGLAZE1}
Dupr\'e, M. J., Glazebrook, J. F., Infinite dimensional manifold
structures on principal bundles, J. Lie Theory, 2000, 359-373.

\bibitem{DUPREGLAZE2}
Dupr\'e, M. J., Glazebrook, J. F., The Stiefel bundle of a Banach
algebra, Integral Equations Operator Theory,41, No. 3, 2001,
264-287.

\bibitem{DUPREGLAZE3}
Dupr\'e. M. J.. Glazebrook, J. F., Relative Inversion and Embeddings,  Georgian Mathematical Journal, Vol. 11(2004), Number 3, 425-448.

\bibitem{DUPGLAZEPREV}
Dupr\'e, M. J., Glazebrook, J. F., Previato, E., Curvature of universal bundles of Banach algebras,  Operator Theory: Advances and Applications, Vol. 202, 195-222, Birkhauser Verlag Basel/Switzerland, 2009.

\bibitem{DUPRE&ROSENCRANS}
Dupr\'e, M. J. and Rosencrans, S. I., Classical and relativistic vorticity in a semi-Riemannian manifold,  Journal of Mathematical Physics 19(7), 1532-1535, July, 1978.


\bibitem{EINSTEIN1}
Einstein, A. A., Uber den Einfluss der Schwerkraft auf die
Ausbreitung des Lichtes (On the influence of gravitation on the
propogation of light), Annalen der Physik, 35, 1911. (for English
translation see:  A Stubbornly Persistent Illusion, The Essential
Scientific Works of Albert Einstein, edited with commentary by
Stephen Hawking, Running Press, Philadelphia, PA, 2007.)

\bibitem{EINSTEIN2}
Einstein, A. A., Die Grundlage der allgemeinen Relativitatstheorie
(The foundation of the general theorey of relativity), Annalen der
Physik, 49, 1916. (for English translation see:  A Stubbornly
Persistent Illusion, The Essential Scientific Works of Albert
Einstein, edited with commentary by Stephen Hawking, Running
Press, Philadelphia, PA, 2007.)

\bibitem{ELLIS}
Ellis, G. F. R., van Elst, H., Cosmological Models, Cargese
Lecctures, 1998, arXiv: gr-qc/9812046v5 2 Sep 2008.

\bibitem{FRANKEL}
Frankel, T. T., Gravitational Curvature An Introduction to
Einstein's Theory, W. H. Freeman and Company, San Francisco, 1979.

\bibitem{HARKO&CHENG}
Harko, T., Cheng, K. S., Viral theorem and the dynamics of
clusters of galaxies in the brane world models, Phys. Rev. D,
76(4)/044013(14), 2007.

\bibitem{GILKEY1}
Gilkey, P. B., Geometric Properties of Natural Operators Defined by the Riemann Curvature Tensor, World Scientific Publishing Co Pte. Ltd., New Jersey, 2002, ISBN 981-02-4752-4.


\bibitem{GREENE}
Greene, R. E., Isometric embeddings of Riemannian and Pseudo-Riemannian manifolds,  Mem A.M.S. (97), Providence, RI, 1970.

\bibitem{GREUB}
Greub, W., Halperin, S., Vanstone, R.,  Connections, Curvature and Cohomology, Volumes I, II, III, Academic Press, New York and London, 1973.

\bibitem{HAMILTON}
Hamilton, R. S.,  The Inverse Function Theorem of Nash and Moser, BAMS, Vol 7, No. 1, July 1982, 65-222.

\bibitem{HARKO&CHENG}
Harko, T., Cheng, K. S., Virial theorem and the dynamics of
clusters of galaxies in the brane world models, Phys. Rev. D,
76(4)/044013(14), 2007.



\bibitem{HARKO&MAK}
Harko, T., Mak, M. K., Conformally symmetric vacuum solutions of
gravitational field equations in the brane-world models, Annals of
Phys., 319(2005), 471-492.

\bibitem{HAWKINGELLIS}
Hawking, S. W., Ellis, G. F. R., The Large Scale Structure of
Space-time, Cambridge University Press, Cambridge, Great Britain,
1973.

\bibitem{HAYWARD}
Hayward, S. A., Gravitational radiation, energy and reaction on
quasi-spherical black holes, arXiv:gr-qc/0012077v2 5 Feb 2001.

\bibitem{JLKELLY}
Kelly, J. L., General Topology, Van Nostrand.

\bibitem{KOMAR}
Komar, A., Covariant conservation laws in general relativity,
Phys. Rev., 113(1959), 934-936.

\bibitem{KRIELE}
Kriele, M., Spacetime Foundations of General Relativity and
Differential Geometry, Lecture Notes in Physics, Springer, Berlin
Heidelberg, Germany, 1999.

\bibitem{KRIGL&MICHOR}
Kriegl, A. and Michor, P. W., The Convenient Setting of Global
Analysis, Math. Surveys and Monographs 53, Amer. Math. Soc., 1997.

\bibitem{LANG}
Lang, S., Fundamentals of Differential Geometry, Graduate Texts in
Mathematics, 191, Springer-Verlag, New York, NY, 1999.

\bibitem{L&L}
Landau, L. D., Lifshitz, E. M., The Classical Theory of Fields,
Revised Second Edition, Pergamon Press, Addison-Wesley Publishing
Co., Reading Mass., 1962.

\bibitem{LOVELOCK}
Lovelock, D., The Einstein Tensor and its generalizations, J.
Math. Phys., Vol. 12, No. 3, 1971, 498-501.

\bibitem{YAU3}
Liu, C.-C. M., Yau, S.-T., Positivity of quasi-local mass II,
Journal of the American Math. Soc., Vol. 19(2005), Number 1,
181-204.


\bibitem{MTW}
Misner, C. W., Thorne, K. S., Wheeler, J. A., Gravitation, W. H.
Freeman and Company, San Francisco, 1973.


\bibitem{MITRA1}
Mitra, A., Does pressure increase or decrease active gravitational
mass density?, arXiv:gr-qc/0607087v4 27 Oct 2006.

\bibitem{MITRA2}
Mitra, A., Einstein energy associated with the
Friedmann-Robertson-Walker metric, preprint, 2009.

\bibitem{MOFFAT1}
Moffat, J. W., Scalar-tensor-vector gravity theory, J. of Cosm.
and Astropart. Phys., 03(2006)004.

\bibitem{MOFFAT2}
Moffat, J. W., Late-time inhomogeneity and acceleration without
dark energy, arXiv:astro-ph/05032v6 10 Apr 2006.

\bibitem{MURCH}
Murchadha, N. O., The Liu-Yau mass as a quasi-local energy in
general relativity, arXiv:0706.116[gr-qc] 8 Jun 2007.

\bibitem{NEST}
Nester, J. M., So, L. L., Vargas, T., On the energy of homogeneous
cosmologies, arXiv:0803:0181v2[astro-ph] 23 Mar 2008, Phys. Rev. D
78, 044035 (2008).

\bibitem{ONEIL}
O'Neil, B., Semi-Rimannian Geometry, Academic Press, Harcourt
Brace Jovanovich, Publishers, New York, NY, 1983.

\bibitem{OHANIAN&RUFFINI}
Ohanian, H. C., Ruffini, R., Gravitation and Spacetime, Second
Edition, W. W. Norton and Co., New York, N. Y., 1994.

\bibitem{PINTONETOSOARES}
Pinto-Neto, N., Soares, D., Gravitational energy in asymptotically
anti-de Sitter spacetimes, Phys. Rev. D, Vol. 52(1995), Number 10,
51-56

\bibitem{POISSON}
Poisson, E., A Relativist's Toolkit, the Mathematics of Black-Hole
Mechanics, Cambridge University Press, Cambridge, UK, 2004.

\bibitem{POZO1}
Pozo, J. M., Parra, J. M., Positivity and conservation of superenergy tensors,
arXiv:gr-qc/0110050v3 25 Feb 2002.

\bibitem{JABBARI}
Sheikh-Jabbari, M. M., Lovelock gravity at the crossroads of
Palatini and metric formulations, Phy. Letters B, 661(2008),
158-161.

\bibitem{RANDALL&SUNDRUM}
Randall, L., Sundrum, R., An alternative to compactification,
Phys. Rev. Letters, Vol. 83, No. 23, 1999, 4690-4693.

\bibitem{RICKART}
Rickart, Banach Algebras, van Nostrand.

\bibitem{ROBINSON1}
Robinson, I.,  On the Bel-Robinson tensor,
Class. Quantum Grav. (1997) A331-A333, UK.

\bibitem{SENOVILLA1}
Senovilla, J. M. M., Remarks on superenergy tensors,
arXiv:gr-qc/9901019v1 7 Jan 1999. 





\bibitem{STEENROD}
Steenrod, N.,  The Topology of Fiber Bundles, Princeton University Press.

\bibitem{TEYSSANDIER1}
Teyssandier, P., Can one generalize the concept of energy-momentum tensor?
Annales de la Fondation Louis de Broglie, Vol. 26, n special, 2001.

\bibitem{TIPLER}
Tipler, F. J., Penrose's quasi-local mass in the Dantowski-Sachs
closed universe, Class. Quantum Grav., 2(1985), L99-L103.

\bibitem{TIPELR2}
Tipler, F. J., The structure of the world from pure numbers, Rep.
Prog. Phys., 68(2005), 897-964.


\bibitem{TOLMAN}
Tolman, R. C., Relativity Thermodynamics and Cosmology, Oxford
University Press, Oxford, England, 1934.

\bibitem{SACHSWU}
Sachs, R. K., Wu, H., General Relativity for Mathematicians,
Springer-Verlag, New York, 1977.

\bibitem{SZ}
Szabados, L. B., Quasi-local energy momentum in GR: a review
article, Living Rev. Relativity, 7(2004), 4,
http://www.livingreviews.org/Irr-2004-4.

\bibitem{UPMEIER}
Upmeier, H., Synnetric Banach Manifolds and Jordan C*-Algebras, Math. Studies 104, North-Holland, Amsterdam-New York-Oxford, 1985.

\bibitem{WALD}
Wald, R. M., General Relativity, University of Chicago Press,
Chicago, 1984.

\bibitem{WY1}
Wang, M.-T., Yau, S.-T., A generalization of Liu-Yau's quasi-local
mass, Comm. in Analysis and Geometry, vol. 15, No. 2, 2007, pp.
249-282.

\bibitem{WY2}
Wang, M.-T., Yau, S.-T., Isometric embeddings in the Minkowski
space and new quasi-local mass, arXiv: 0805.1370v2[math.DG]10 May
2008.

\bibitem{WY3}
Wang, M.-T., Yau, S.-T., Local mass in general relativity,
arXiv:0804.1174v2[gr-qc] 1 Jul 2008.

\bibitem{YU}
Yu, P. P., The limiting behavior of the Liu-Yau quasi-local
energy, arXiv:0706.1081[gr-qc] 7 Jun 2007.



\end{thebibliography}
\end{document}